\documentclass[a4paper, UKenglish, cleveref, autoref, thm-restate]{lipics-v2021}
\hypersetup{
  pdfborder={0 0 0}
}
\usepackage[T1]{fontenc}
\usepackage{lmodern} 
\usepackage{microtype}
\usepackage{graphicx}
\usepackage[dvipsnames]{xcolor}
\usepackage{hyperref}
\usepackage{amsmath}
\usepackage{amsthm}
\usepackage[normalem]{ulem}
\usepackage{todonotes}
\usepackage[skip=.5cm]{caption}
\usepackage{subcaption}
\usepackage{placeins}
\usepackage[export]{adjustbox}
\usepackage{soul}
\usepackage{xspace}
\usepackage{enumitem}
\usepackage[utf8]{inputenc}
\usepackage{booktabs}   
\usepackage{tabularx}  
\usepackage{makecell}  
\usepackage{tcolorbox}
\tcbuselibrary{skins}
\usepackage{booktabs}
\usepackage{tabularx}
\usepackage{makecell}
\usepackage{multirow}
\setlist[description]{
    labelindent = 0.5cm,
    leftmargin=0.8cm,
    format=\textcolor{black},
    itemsep=5pt,               
    font=\normalfont\bfseries\color{black},
}

\newcommand{\scup}[1]{\textnormal{\textsc{#1}}}
\DeclareMathOperator{\vc}{vc}
\DeclareMathOperator{\tw}{tw}
\newcommand{\pwfull}{\scup{Pairwise Distance Preserver}\xspace}
\newcommand{\pwshort}{\scup{PDP}\xspace}
\newcommand{\swfull}{\scup{Subsetwise Distance Preserver}\xspace}
\newcommand{\swshort}{\scup{SDP}\xspace}

\newcommand{\sw}{subsetwise}

\newcommand{\rsafull}{\scup{Rectilinear Steiner Arborescence}\xspace}
\newcommand{\rsashort}{\scup{RSA}\xspace}
\newcommand{\mccfull}{\scup{Multi--Colored Clique}\xspace}
\newcommand{\mccshort}{\scup{MCC}\xspace}
\newcommand{\alcfull}{\scup{3-Agreement List Coloring}\xspace}
\newcommand{\alcshort}{\scup{3-ALC}\xspace}
\newcommand{\mwcfull}{\scup{3-Multiway Cut}\xspace}
\newcommand{\mwcshort}{\scup{3-MWC}\xspace}
\newcommand{\bipwfull}{\scup{Vertex Cover-3 Bipartite PDP}\xspace}
\newcommand{\bipwshort}{\scup{VC3-BiPDP}\xspace}
\newcommand{\bicqfull}{\scup{Bipartite Multi--Colored Clique}\xspace}
\newcommand{\bicqshort}{\scup{BMCC}\xspace}

\newcommand{\nph}{NP-hard}

\newcommand{\wh}[1]{W$[{#1}]$-hard}
\newcommand{\fpt}{\textsc{FPT}}
\newcommand{\yes}{\textsc{Yes}}
\newcommand{\no}{\textsc{No}}

\newcommand{\defbox}[4]{
\begin{tcolorbox}[ enhanced, attach boxed title to top left={xshift=.3cm, yshift*=-4mm},  colback=white, colframe=black, colbacktitle=white, boxrule=0.6pt,
  title={\color{black}{#1} (#2)}, fonttitle=\Large, boxed title style={frame hidden, colframe=black, colback=lipicsYellow}]
	
    \textbf{Input:}  {#3}

    \medskip
	\textbf{Question:} {#4}
\end{tcolorbox}
}

\newcommand{\defboxsmalltitle}[4]{
\begin{tcolorbox}[ enhanced, attach boxed title to top left={xshift=.3cm, yshift*=-4mm},  colback=white, colframe=black, colbacktitle=white, boxrule=0.6pt,
  title={\color{black}{#1} (#2)}, fonttitle=\large, boxed title style={frame hidden, colframe=black, colback=lipicsYellow}]
	
    \textbf{Input:}  {#3}

    \medskip
	\textbf{Question:} {#4}
\end{tcolorbox}
}

\newcommand{\p}{\mathcal{P}}
\newcommand{\N}{\mathbb{N}}

\title{Finding Minimum Distance Preservers: A Parameterized Study}
\author{Kirill Simonov}{Department of Informatics, University of Bergen, Norway}{kirill.simonov@uib.no}{https://orcid.org/0000-0001-9436-7310}{}
\author{Farehe Soheil}{Hasso Plattner Institute (HPI), University of Potsdam, Germany}{farehe.soheil@hpi.de}{https://orcid.org/0000-0002-0504-8834}{Supported by the HPI Research School on Foundations of AI (FAI).}

\author{Shaily Verma}{Indian Institute of Technology Jodhpur,  India}{shailyverma@iitj.ac.in}{https://orcid.org/0000-0003-0076-6308}{}
\authorrunning{K. Simonov, F. Soheil, and S. Verma} 
\Copyright{Kirill Simonov, Farehe Soheil, and Shaily Verma}

\ccsdesc[500]{Theory of computation~Fixed parameter tractability}

\keywords{Distance Preservers, Grids, Vertex Cover, Parameterized Algorithms}

\nolinenumbers

\usepackage{tcolorbox}
\begin{document}
\maketitle
\begin{abstract}
    For a given graph $G$ and a subset of vertices $S$, a \emph{distance preserver} is a subgraph of $G$ that preserves shortest paths between the vertices of $S$. 
    We distinguish between a \emph{subsetwise} distance preserver, which preserves distances between all pairs in $S$, and a \emph{pairwise} distance preserver, which preserves distances only between specific pairs of vertices in $S$, given in the input. While a large body of work is dedicated to upper and lower bounds on the size of distance preservers and, more generally, graph spanners, the computational complexity of finding the minimum distance preserver has received comparatively little attention.
    
    We consider the respective \swfull (\swshort) and \pwfull (\pwshort) problems and initiate the study of their computational complexity.
    We provide a detailed complexity landscape with respect to natural parameters, including the number of terminals, solution size, vertex cover, and treewidth.
    Our main contributions are as follows:
    \begin{itemize}
        \setlength{\itemsep}{0.5em}
        \item Both \pwshort and \swshort are \nph even on subgraphs of the grid. Moreover, when parameterized by the number of terminals, the problems are \wh{1}\ on subgraphs of the grid, while they become \fpt\ on full grids.
        \item \pwshort is \nph on graphs of vertex cover $3$, while \swshort is \fpt when parameterized by the vertex cover of the graph.  Thus, the vertex cover parameter distinguishes the two variants.
        \item Both problems are \fpt when parameterized by the number of terminals and the treewidth of the graph.
    \end{itemize}
\end{abstract}

\section{Introduction}

Graph spanners, introduced by Peleg and Sch\"{a}ffer~\cite{PelegS89}, aim to capture the distance structures of the graph in a smaller subgraph. 
Graph spanners can be used to speed up virtually any algorithm that relies on shortest-path distances~\cite{Cohen00,AwerbuchBCP98}, and have diverse domain-specific applications, for example in distributed computing~\cite{AwerbuchBCP98,Elkin05} or computational geometry~\cite{PaulChew89,Dobkin1990}.
Hence, finding sparse or low-weight spanners has remained a desirable research target~\cite{AHMED2020100253, BBC2003,bodwin2016better,kogan2025having,gajjar2017,CDKL20,abdolmaleki2020minimum}.

The ideal type of a spanner is a \emph{distance preserver}: a (preferably small) subgraph, preserving the shortest paths in the original graph exactly. It is easy to observe that no savings can be made if one aims to preserve all pairwise distances. In this case, each edge has to be included in the subgraph in order to retain the distance between its endpoints. However, the question becomes intriguing again if one only considers the distances between a specified subset of vertices, called the \emph{terminals}. Formally, we say that a \emph{subsetwise distance preserver} of the graph $G$ with a given set of terminals $S$, is a subgraph of $G$ such that for any pair of vertices in $S$ the distance is the same in the subgraph as in the whole graph $G$. In case only certain pairs of terminals are important, one may refine the definition further: For a given graph $G$ and a collection of vertex pairs $\mathcal{P}$, a \emph{pairwise distance preserver} is a subgraph of $G$ that preserves the distance between each pair of vertices in $\mathcal{P}$. For consistency between the two variants, we call the vertices of $\mathcal{P}$ terminals, and denote $S = \bigcup \mathcal{P}$.

Distance preservers were first introduced by Bollob\'{a}s, Coppersmith and Elkin~\cite{BBC2003} in the setting of preserving all sufficiently large pairwise distances. The more general definitions above were coined and studied by Coppersmith and Elkin~\cite{CD2006}.
Distance preservers share similar applications to graph spanners~\cite{BBC2003,CD2006}, providing general-purpose compression of the graph while keeping the relevant distances unchanged. For this reason, they have seen considerable attention within the literature on spanners.
Moreover, sparse distance preservers have been extensively used in the construction of sparse spanners~\cite{Pet09,ElkinFN17,KoganP22,kogan2025having} as well as distance oracles~\cite{ElkinS16,ElkinS23}. Hence, the search for small distance preservers is also meaningful for more general types of spanners and other objects.

Similar to general graph spanners, the majority of results in the literature on distance preservers focuses on obtaining tight lower and upper bounds on the size of the preserver. In their original work, Coppersmith and Elkin~\cite{CD2006} showed that there exists a distance preserver with linearly many edges whenever $|S| = O(n^{1/4})$ (in the subsetwise case) or $|\mathcal{P}| = O(n^{1/2})$ (in the pairwise case), where $n$ is the number of vertices in the graph. Over the last two decades, these bounds were improved and generalized~\cite{bodwin2016better,bodwin2021new,kogan2025having}.
In a related direction, Krauthgamer et al.\ study preserving terminal distances using minors, instead of subgraphs~\cite{Krauthgamer2014}. They show that $\Omega(|S|^2)$ vertices are necessary for a subsetwise distance preserving minor even when the graph is a grid, while a simple upper bound provides an $O(|S|^4)$-sized minor for any graph.

While the existing bounds provide strong guarantees on the size of the distance preserver in terms of the graph size $n$ or the number of terminals $|S|$, it is also meaningful to ask the following: \emph{What is the smallest distance preserver for the given instance?} This algorithmic question is the central focus of our work. Formally, we define the following two computational problems.
In \swfull (\swshort), we are given a graph $G$, a terminal subset $S$ and an integer $k$, and the task is to decide whether there exists a subgraph of $G$ with at most $k$ edges that preserves the distances between the vertices of $S$. 
Analogously, we define the \pwfull (\pwshort) problem where a collection of pairs of terminals $\mathcal{P}$ is given in addition. 
Note that \pwshort is a strict generalization of \swshort since the collection $\mathcal{P}$ may contain all pairs in $S$.

In contrast to the extensive work on bounds, the computational problems \pwshort and \swshort have received comparatively little attention. Gajjar and Radhakrishnan show that \swshort, and hence \pwshort, is NP-hard on general graphs, providing additional results for interval graphs in terms of the number of branching vertices in the preserver~\cite{gajjar2017}.
From an optimization viewpoint, Chlamt\'{a}\v{c} et al. provide a tight $O(n^{3/5+\epsilon})$-factor approximation for minimum-weight pairwise distance preservers~\cite{CDKL20}. Abdolmaleki et al. provide similar results for approximating the difference between the total distance upper bound and the size of the distance preserver~\cite{abdolmaleki2020minimum}.
Overall, despite the general-case NP-hardness, the question of computing a minimum distance preserver remains widely open.

Before we move on to our results, we also remark on the similarity between \pwshort/\swshort and other well-studied path-packing problems. \pwshort can be interpreted as finding $|\mathcal{P}|$ shortest paths in $G$, connecting the respective terminal pairs, so that the resulting collection of paths has the largest possible intersection, in terms of total number of edges used. The natural counterpart is then the \textsc{$k$-Disjoint Shortest Paths} problem, where the task is to find $k$ shortest paths between the respective terminal pairs that are vertex-disjoint, i.e., avoid any intersections. Lochet has recently shown that this problem is W[1]-hard when parameterized by the number of terminal pairs, but admits an XP algorithm~\cite{Lochet21}. A related problem of packing $k$ vertex- or edge-disjoint shortest cycles was studied by Bentert et al., who have shown that the problem admits an FPT algorithm on planar graphs, while a generalization of the problem becomes W[1]-hard on general graphs~\cite{BentertFGKLPRSS26}.

\subparagraph*{Our contribution}
In this work, our focus is to characterize the tractability of computing the minimum distance preserver depending on the properties of the instance, following the perspective of parameterized complexity. Our results are summarized in Table~\ref{tab:complexity-summary}.

Our starting point is the number of terminals $|S|$: Since the distance preserver is obtained by choosing a shortest path between each pair of terminals, the value of $|S|$ naturally influences the complexity of the problem. In particular, the number of terminals/pairs of terminals enters the known existential bounds, both on the size of the distance-preserving subgraphs~\cite{CD2006,bodwin2016better,bodwin2021new,kogan2025having} and minors~\cite{Krauthgamer2014}. On the other hand, related terminal-connectivity problems are often polynomial-time solvable for constantly many pairs, for example  \textsc{$k$-Disjoint Shortest Paths}~\cite{Lochet21}.

From the known structural observations~\cite{Krauthgamer2014,CD2006}, it can be quickly observed that \pwshort and \swshort admit a polynomial-time algorithm when the number of terminals is constant. In other words, the problems are XP when parameterized by $|S|$. But can this running time be improved? Our first main result answers this negatively, even for graphs that are quite restricted in terms of their planar embedding---induced subgraphs of the grid.

\begin{restatable}{theorem}{SwTerminalsWHard}
    \label{thm:sw-w1-terminals}
    \swshort\ is \wh{1} parameterized by the number of terminals, even on induced subgraphs of the grid.
\end{restatable}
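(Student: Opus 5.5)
We reduce from \mccfull (\mccshort) parameterized by the number of colors $k$. This is the W[1]-hard source used throughout the paper, since its macros are already defined. Given an instance with color classes $V_1,\dots,V_k$, each of size $n$, we build an induced subgraph $G$ of the grid, a terminal set $S$ of size $O(k^2)$, and a budget $b$. The goal is that a preserver of size at most $b$ exists iff a multicolored clique exists.

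The guiding principle is the structure of shortest paths in grid subgraphs. Between two terminals placed at opposite corners of a rectangular region, the shortest paths are exactly the monotone staircase paths, provided the region is a full subgrid. If we carve holes, i.e. delete vertices so the graph is only an induced subgraph, we can force each shortest path to pass through one of $n$ narrow ``doors''. The choice of door encodes the choice of vertex in a color class.

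The construction is a $k\times k$ arrangement of gadgets, in the spirit of the standard grid-tiling reductions.

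\textbf{Selection.} For each color $i$ there is a pair of terminals $s_i,t_i$, one on the left boundary and one on the right, whose shortest paths form a bundle of $n$ disjoint horizontal corridors of equal length. Symmetrically, vertical corridors carry the choice for color $i$ in the columns. Obstacles arranged as long walls with gaps keep all corridors at equal length while preventing shortcuts between them.

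\textbf{Consistency between row $i$ and column $i$.} We link row $i$ and column $i$ by a diagonal gadget. It lets a monotone path turn from horizontal corridor $a$ to vertical corridor $a$ cheaply, but makes any mismatched turn longer. This can be done with a staircase of holes.

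\textbf{Edge check.} At cell $(i,j)$ the horizontal corridor $a$ (a vertex of $V_i$) meets the vertical corridor $b$ (a vertex of $V_j$). We put a small ``bonus'' structure there. When $ab\in E$, the two crossing paths can share a segment of length $L$. Otherwise they can only cross at a single vertex.

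Since the only thing we minimize is the total number of edges, we set
$$b = \Big(\sum \text{individual path lengths}\Big) - \binom{k}{2}L.$$
The budget is then achievable iff every cell gets its bonus, that is, iff all chosen pairs are adjacent.

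The correctness proof has two directions. The clique-to-preserver direction is direct: take the corridor paths given by the clique. For the converse we need to argue the following:

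(i) Every terminal pair's shortest paths lie within its designated corridors. This follows by a distance computation: any deviation through a gap or around an obstacle costs extra length.

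(ii) Paths of different terminal pairs can share edges only at the designed crossing gadgets. Each extra shared segment saves at most $L$. With $L$ chosen larger than all incidental overlaps combined, the budget forces all $\binom{k}{2}$ bonuses.

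(iii) The corridor of $s_i$ used in row $i$ equals the one used in column $i$. This comes from the diagonal gadget, which may need its own bonus to penalize inconsistent pairs.

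To keep the terminals at $O(k^2)$ we may need one terminal pair per cell rather than per row. In that case consistency along a row is enforced by making consecutive cells share their boundary terminals.

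The main obstacle is step (ii), together with its interaction with subsetwise semantics. In \swshort all pairs in $S$ must be preserved, not just the designed ones. Unintended pairs, such as $s_i$ with $t_j$, impose additional shortest paths that might either add unavoidable edges or create unintended sharing. I would first try to handle this by placing terminals so that every unintended pair's shortest paths are forced through the designed corridors anyway. One way is to put all sources on a monotone ``staircase'' boundary, so that every shortest path between any two terminals is a concatenation of designed segments. Otherwise I would first prove W[1]-hardness for \pwshort and then add long pendant paths and spacing to make the extra pairs' contribution a fixed, choice-independent constant.

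Finally, everything must fit as an induced subgraph of the grid. We achieve this by scaling all coordinates by a constant factor, so that corridors never touch except at gadgets.
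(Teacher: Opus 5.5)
\paragraph{Main gap: unintended terminal pairs.} Your architecture matches the paper's in outline: one bundle of equal-length corridors per color class, horizontal bundles crossing vertical ones, a crossing that shares edges iff the two vertices are adjacent, and a budget equal to total corridor length minus the maximal savings. However, the step you yourself call the main obstacle is left open, and your fallback for it does not work. Attaching long pendant paths to terminals adds the same constant to every distance from that terminal. So it changes neither the set of unintended pairs nor which internal paths are shortest between them. Pairs such as a row terminal and a column terminal, whose shortest routes may turn at a crossing with a freely chosen corridor, still impose demands that break the count in your step (ii).

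The missing idea is the paper's frame. It has four connector vertices $c^{\text{top}},c^{\text{left}},c^{\text{bot}},c^{\text{right}}$ joined in a cycle by paths of length $2\Delta$, with every gadget endpoint adjacent to the connector of its side. Each corridor has buffers of length $\alpha$ at both ends, and $\ell<\Delta$. The resulting distances are:
\begin{itemize}
\item same-side terminals are at distance $2$ through their connector;
\item adjacent-side terminals are at distance $2\Delta+2$ along the cycle, since any route through the gadgets has length at least $2\Delta+2\alpha$;
\item opposite-side terminals such as $l_i^{\text{top}},l_j^{\text{bot}}$ are at distance $2\Delta+\ell+2$, via one full corridor of gadget $i$ or $j$ plus two connector edges.
\end{itemize}
Hence every pair other than the designed ones is served either by edges that every preserver must contain anyway (a fixed $8\Delta+4k$) or by corridors already chosen. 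Combined with the fact that a minimal preserver keeps exactly one corridor per gadget, crossings become the only overlaps. This is what makes your budget argument in (ii) go through.

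\paragraph{Consistency gadget.} Your consistency mechanism is unspecified and, as described, problematic. A path that turns from horizontal corridor $a$ into vertical corridor $a$ is one more demand whose corridor is freely chosen. You would still need to design bonuses tying it to both the full row path and the full column path. The paper avoids this by reducing from \bicqfull\ rather than \mccshort. Columns select from $L_i$, rows select from $R_j$, and all $k^2$ cells, including $(i,i)$, are ordinary edge checks. Consistency is pushed into the reduction from \mccshort: take two copies of $V$, each vertex adjacent to its own copy and to the copies of its neighbours. Independence of each color class then forces the two selections for class $i$ to coincide. In your language, the diagonal gadget should simply be a crossing that shares edges iff the two corridor indices are equal.

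\paragraph{Two smaller points.} First, one terminal pair per cell with shared boundary terminals would destroy consistency along a row, because a terminal vertex does not remember which corridor reached it. It is also unnecessary, since one pair per row and per column already gives $4k$ terminals. Second, in the grid, walls with gaps do not by themselves make the corridors of a gadget equidistant from its terminal. The paper achieves this with grid-embedded binary trees of equal root-to-leaf length plus zig-zag padding.
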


\wh{1}ness for \pwshort follows immediately from Theorem~\ref{thm:sw-w1-terminals}, since \swshort is a special case of \pwshort. Note that our reduction also implies NP-hardness of \pwshort and \swshort on induced subgraphs of the grid.

On the other hand, our next result shows that the problems become FPT when the given graph is the full grid.

\begin{restatable}{theorem}{PwTerminalsGrid}
    \label{thm:pw-terminals-grid}
    \pwshort admits a $4^{|S|^2} \cdot |V(G)|^{O(1)}$-time algorithm when the input graph is a grid.
\end{restatable}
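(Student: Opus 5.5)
The plan is to show that the problem on a full grid reduces to a constant-size problem on the \emph{Hanan grid} of the terminals, and then to brute-force over edge subsets of that grid. Let $G$ be the grid, and let $X$ and $Y$ be the sets of $x$- and $y$-coordinates of the terminals in $S$, so $|X|,|Y|\le |S|$. The Hanan grid $\mathcal{H}$ is the union of the grid lines $x=c$ for $c\in X$ and $y=r$ for $r\in Y$. Contract every maximal path of $\mathcal{H}$ whose internal vertices have degree $2$ into a single weighted edge, with weight equal to its length. The result is a compressed grid with at most $|S|^2$ vertices and at most $2|S|^2$ edges. We use two facts about grids. First, $\mathrm{dist}_G(u,v)$ is the $\ell_1$ distance. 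Second, the shortest $u$--$v$ paths in $G$ are exactly the $x$- and $y$-monotone staircase paths inside the bounding box of $u$ and $v$.

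\textbf{Structural lemma.} For every instance $(G,\mathcal{P},k)$ there is a minimum pairwise distance preserver whose edges all lie in $\mathcal{H}$, and which is a union of compressed edges. We start from an optimal preserver $H$, written as a union of monotone paths $P_{uv}$, one for each $\{u,v\}\in\mathcal{P}$.

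Suppose some column $x=c$ with $c\notin X$ carries vertical edges of $H$. Let $a<c<b$ be the neighbouring values of $X$, or the grid border.
\begin{itemize}
\item Every path $P_{uv}$ that uses a vertical edge on column $c$ has its bounding box strictly containing column $c$ in its $x$-range, because no terminal lies on column $c$.
\item Hence such a path is monotone through column $c$: it enters from column $c-1$ and leaves to column $c+1$, and both columns lie inside its box.
\end{itemize}
We compare two modifications. One moves all vertical segments on column $c$ to column $c-1$; the other moves them to column $c+1$. Each moved path is adjusted by shortening or extending its horizontal edges, which keeps it monotone and hence shortest. Let $h_L$ and $h_R$ be the numbers of $H$-edges between columns $c-1,c$ and between columns $c,c+1$. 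Moving left saves the $h_L$-type edges that become redundant and may add edges on the other side, and symmetrically for moving right. The net changes of the two moves are negatives of each other up to edges already present, so one of them does not increase $|E(H)|$.

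Repeating the move strictly decreases a potential, for instance the number of vertical edges on non-Hanan columns weighted by their distance to $X$. We then do the same for rows. After that, every edge lies on $\mathcal{H}$. Finally, any partially used compressed edge can be dropped or completed: a shortest path entering a degree-$2$ chain must traverse it entirely, because the chain's interior contains no terminal or junction.

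\textbf{Algorithm.} We enumerate all subsets $F$ of the at most $2|S|^2$ compressed edges, which gives at most $4^{|S|^2}$ choices. For each $F$ we check whether its total weight is at most $k$. We also check, by Dijkstra's algorithm in the weighted subgraph $F$, that $\mathrm{dist}_F(u,v)=|u_x-v_x|+|u_y-v_y|$ for every pair in $\mathcal{P}$. Each check runs in polynomial time. Correctness follows from the structural lemma, since distances in $F$ equal distances in the corresponding subgraph of $G$. This gives the claimed $4^{|S|^2}\cdot|V(G)|^{O(1)}$ running time.

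\textbf{Main obstacle.} The hard part is the exchange step in the structural lemma. We must verify rigorously that shifting all vertical segments of column $c$ simultaneously keeps every path monotone and within its bounding box. We must also get the edge count exactly right when several paths share segments. I would first try the cleaner formulation of collapsing the whole open strip $a<x<b$: map every path through a monotone projection that keeps the strip's width but pushes all vertical movement to the boundary column $a$ or $b$, whichever is cheaper. Then comparing the two choices gives a direct averaging argument.
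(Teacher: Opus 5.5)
Your overall route matches the paper's: a Hanan-grid lemma proved by a one-step shift whose two directions have opposite net cost, followed by brute force over the at most $2|S|^2$ Hanan segments. Your exchange moves all vertical edges of a non-Hanan column, counted by $h_L,h_R$. It is the transposed, global version of the paper's shift of one maximal horizontal segment through a non-intersection branching vertex, counted by $|A|,|B|$. Because it places every edge on Hanan lines directly, you do not need the paper's elementary-path decomposition and rerouting step. The bounds $|E(H')|\le |E(H)|+h_R-h_L$ and $|E(H'')|\le |E(H)|+h_L-h_R$ do hold with the monotone rerouting you describe.

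The step that does not go through as written is termination. From ``one of the two moves does not increase $|E(H)|$'' you cannot conclude that your potential decreases (vertical edges on non-Hanan columns, weighted by distance to $X$). The admissible move may point away from the nearest Hanan column, and repeated moves may oscillate between columns $c$ and $c+1$. You need the observation the paper makes explicitly. Since $H$ is minimum and both $H'$ and $H''$ are preservers, you get $h_R-h_L\ge 0$ and $h_L-h_R\ge 0$. Hence $h_L=h_R$ and both moves preserve the size, so you may always shift toward the nearest Hanan column, and then the potential strictly drops.

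Three smaller points:
\begin{itemize}
\item In the row phase, argue that new vertical edges appear only in columns where the shifted row segment already met a vertical edge of $H$. After the column phase these are Hanan columns, so the column phase is not undone.
\item The strip-collapse alternative is not a standalone averaging. Take two crossing paths that enter the strip at rows $0,10$, exit at rows $10,0$, and share a long piece of row $5$ in the middle. They are cheaper than both collapses, so the collapse only works once the per-column balance is known.
\item When compressing, keep every Hanan intersection as a vertex rather than contracting all degree-$2$ chains. A terminal at a corner of the Hanan grid can have degree $2$.
\end{itemize}
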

By inclusion, Theorem~\ref{thm:pw-terminals-grid} implies the same result for \swshort.

While the case of the grid may seem simple, as only the position of the terminals defines the instance, we observe that this is deceptive: \pwshort generalizes the \rsafull problem and hence is NP-hard even on grids~\cite{RSA}.
Additionally, the known bounds for sizes of distance-preserving minors are not better on grids than on general graphs~\cite{Krauthgamer2014}, hence the grids remain a challenging special case in the context of distance preservers.

Having established the boundary of tractability for grid-like graphs, we turn our attention to the next natural question: Can we hope to compute distance preservers efficiently on graphs that avoid large grids?
In particular, we may consider graphs of bounded treewidth---recall that a graph has bounded treewidth if and only if it excludes a grid minor of constant size, with a polynomial dependency between the treewidth and the maximum size of the grid minor~\cite{ChekuriC16}. We answer the question positively, showing that both \swshort and \pwshort are FPT when parameterized by $|S|$ plus the treewidth of the input graph.

\begin{restatable}{theorem}{pairwiseFPTbyTWK}
\label{thm:pw-tw-k-fpt}
    \pwshort admits an algorithm with running time $2^{O\bigl((\tw+|S|)^2\bigr)} \cdot |V(G)|^{O(1)}$, where $\tw$ is the treewidth of $G$.
\end{restatable}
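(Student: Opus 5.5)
We plan a dynamic program over a nice tree decomposition of $G$ of width $O(\tw)$, computed in $2^{O(\tw)}\cdot|V(G)|^{O(1)}$ time. First we preprocess distances. For every pair $\{s,t\}\in\p$, an edge $uv$ can lie on a shortest $s$--$t$ path only if $d(s,u)+1+d(v,t)=d(s,t)$ (up to orientation). Hence every feasible solution is a union of paths, one per pair, each lying in the DAG $D_{st}$ of such edges. We add all terminals to every bag, which raises the width to $\tw+|S|$. This step lets the DP locally verify the shortest-path property via the precomputed labels $d(s,\cdot)$.

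The key structural observation, which is the main obstacle, concerns the pair-specific information the DP must carry. By a standard exchange argument we may assume that for each pair $\{s,t\}$ the chosen path $P_{st}$ is a geodesic in $D_{st}$ and that the solution $H$ is exactly $\bigcup_{st}P_{st}$. A plain connectivity DP does not suffice: knowing that $H$ contains some shortest $s$--$t$ path requires the path to be monotone in the labels $d(s,\cdot)$ for the right pair. So the state at a bag $X$ records, for each pair $\{s,t\}\in\p$, how $P_{st}$ interacts with $X$. Since $s,t\in X$, the path $P_{st}$ restricted to the processed part is a collection of segments with endpoints in $X$. We do not store these endpoints. Instead, we store for each vertex $x\in X$ and each pair a flag saying whether $P_{st}$ passes through $x$. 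We also store, for each consecutive pair of flagged vertices (ordered by $d(s,\cdot)$, which is forced along the geodesic), whether that segment has already been realized below or remains to be realized above. Because $P_{st}$ is monotone in $d(s,\cdot)$, the order of its visited bag vertices is determined by the labels, so the segment structure is a subset of consecutive pairs.

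This naively gives $2^{O(|X|\cdot|\p|)}$ states, which is too many since $|\p|$ can be $|S|^2$. We therefore compress by grouping pairs: for a fixed bag, only the set of bag vertices used matters per source--target pair, not per pair individually. We bound the number of distinct path types via the ordering $(d(s,x))_{x\in X}$ and charge the state as $2^{O((\tw+|S|)\cdot|S|)}$ per source with $|S|$ sources. This yields the claimed $2^{O((\tw+|S|)^2)}$ bound if we store, for each terminal $s$ and each $x\in X$, the set of partners $t$ for which $P_{st}$ uses $x$: that is $|S|^2\cdot|X|$ bits. If that is too large, we fall back on the fact that $|\p|\le|S|^2$ and accept exponent $O(|X||S|^2)$. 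To genuinely reach the stated exponent, we would try to show that one can choose the paths consistently: a single shortest-path tree-like structure per source $s$ serves all $t$, with paths from $s$ merging and never diverging again (the uniqueness/consistency tie-breaking argument from Coppersmith--Elkin). Then per source we only need to store which bag vertices are in $s$'s structure and their connectivity pattern to $s$, giving $|X|\cdot|S|$ bits in total.

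The DP itself is routine. Introduce nodes guess membership flags. Forget nodes check that every segment touching the forgotten vertex is closed. Edge-introduce nodes decide whether to buy edge $uv$ (cost $1$, counted once, even if used by many pairs) and, for each source, whether it advances a segment consistently with the labels. Join nodes combine compatible flag vectors, taking the union of realized segments. At the root we require all segments realized for every pair in $\p$ and minimize cost. Correctness follows from monotonicity plus the consistency lemma, and the running time is the number of states squared times polynomial factors.
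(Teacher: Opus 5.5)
\textbf{Your DP does not reach the claimed running time as written, and the step that would get it there is missing.}

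The per-pair state you describe is sound in principle: for every $\{s,t\}\in\mathcal{P}$ you store membership flags on $X$ plus a realized/unrealized bit for each consecutive pair of flagged bag vertices. But it has $2^{O(|X|\cdot|\mathcal{P}|)}$ values. Once terminals are put into every bag this is $2^{O((\tw+|S|)\cdot|S|^2)}$. Your remark that storing $|S|^2\cdot|X|$ bits ``yields the claimed $2^{O((\tw+|S|)^2)}$ bound'' is an arithmetic slip. For $\tw=O(1)$ that is $\Theta(|S|^3)$ bits.

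So the theorem hinges entirely on the per-source compression, which you only sketch (``we would try to show''). Its structural half is actually easy and true, provided you apply it inside a fixed optimal $H$ rather than in $G$. For each source $s$, a BFS tree of $H$ from $s$ contains a geodesic to every partner. Tie-breaking in $G$, as in Coppersmith--Elkin, fixes a family of paths whose union need not be optimal.

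The real gap is the DP state. Your bookkeeping via ``consecutive flagged vertices ordered by $d(s,\cdot)$'' only makes sense for a single path. For a branching tree, the segments between bag vertices form a forest (each bag vertex's nearest bag ancestor) that the labels do not determine. ``Connectivity pattern to $s$'' is never defined. One bit per $(x,s)$ meaning ``$x$ is joined to $s$ by a geodesic inside $H_t$'' is not enough either. A geodesic from $s$ can leave and re-enter $G_t$ through the bag several times, so already-realized pieces between two bag vertices must also be remembered.

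\textbf{How the paper avoids this, and how to repair your route.} The paper never stores per-pair information. Its state at node $t$ is a symmetric table $\sigma$ on $U_t=X_t\cup S_t^{\downarrow}$ with $\sigma(x,y)=1$ iff $d_{H_t}(x,y)=d_G(x,y)$. This records geodesic connectivity among interface vertices. It is independent of $|\mathcal{P}|$ and has $2^{O((\tw+|S|)^2)}$ values. Any shortest path of the final solution splits at bag vertices and terminals into geodesic pieces, which is why these bits suffice.

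If you want to keep your per-source idea, the missing ingredient is a \emph{local} certificate. $H$ preserves all pairs iff for every source $s$ there is a set $V_s\ni s$ with the following properties:
\begin{itemize}
\item $V_s$ contains all partners of $s$;
\item every $v\in V_s\setminus\{s\}$ has an $H$-neighbour $u\in V_s$ with $d_G(s,u)=d_G(s,v)-1$.
\end{itemize}
One direction is by induction on $d_G(s,\cdot)$. For the other, take $V_s$ to be the vertices of chosen geodesics from $s$.

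Two bits per (bag vertex, source) then suffice: membership in $V_s$, and ``parent edge already bought''. The transitions are:
\begin{itemize}
\item introduce-edge sets the parent bit when a bought edge joins consecutive levels of $d_G(s,\cdot)$;
\item join requires equal membership and ORs the parent bits;
\item forget checks that the forgotten vertex has its parent, and that it lies in $V_s$ if it is $s$ or a partner of $s$.
\end{itemize}
This gives $2^{O(\tw\cdot|S|)}\cdot|V(G)|^{O(1)}$, within the claimed bound, with no segment tracking and no need to add terminals to every bag. Without this or the paper's pairwise geodesic-connectivity table, your proposal only proves FPT with exponent $O((\tw+|S|)|S|^2)$.
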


The result of Theorem~\ref{thm:pw-tw-k-fpt} in turn raises a question of whether treewidth alone may be sufficient for an FPT algorithm, with an arbitrary number of terminals.
Unfortunately, our next result answers this negatively for \pwshort, even for a much more restrictive parameter.

\begin{restatable}{theorem}{VcParaNpHardness}
    \label{thm:pdp-vc-paraNphardness}
    \pwshort\ is \nph\ even on bipartite graphs with vertex cover $3$.
\end{restatable}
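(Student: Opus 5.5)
We prove this by a polynomial reduction from a three-label assignment problem, namely \alcshort; \mwcshort would serve equally well as the source. The host graph is a star-like bipartite graph whose vertex cover consists of three hub vertices $h_1,h_2,h_3$. Let the source instance be a graph $H$ where every vertex $v$ carries a list $L(v)\subseteq\{1,2,3\}$. In \alcshort each vertex $v$ receives a nonempty set of colours $C(v)\subseteq L(v)$, the two endpoints of every edge of $H$ must \emph{agree}, i.e.\ share a colour, and we minimise $\sum_v |C(v)|$. Our first step is to fix this definition precisely and to show it is NP-hard (next paragraph). The construction of $G$ is then:
\begin{itemize}
  \item for each $v\in V(H)$, add a vertex $x_v$ adjacent to exactly the hubs $h_i$ with $i\in L(v)$;
  \item for every edge $uv\in E(H)$, add the pair $(x_u,x_v)$ to $\p$;
  \item set $k=$ the target value of the \alcshort instance.
\end{itemize}
Every edge of $G$ joins a hub to a non-hub, so $G$ is bipartite with vertex cover $\{h_1,h_2,h_3\}$. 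We may assume the lists of adjacent vertices of $H$ intersect, since otherwise the instance is trivially \no. Then every requested pair is at distance exactly $2$ in $G$.

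The correspondence rests on one observation. A subgraph $G'$ preserves $d(x_u,x_v)=2$ if and only if some hub $h_i$ is adjacent in $G'$ to both $x_u$ and $x_v$. The number of edges of $G'$ is $\sum_v \deg_{G'}(x_v)$, because no edges run between hubs. Hence a preserver with at most $k$ edges is the same as an assignment $C(v)=\{i : x_vh_i\in E(G')\}\subseteq L(v)$ of total size at most $k$ in which adjacent vertices agree. If a vertex is not in any pair we may simply delete it, so nonemptiness of $C(v)$ is forced for all remaining vertices. Both directions of the equivalence are then immediate.

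The main obstacle is establishing NP-hardness of this agreement version. I would derive it from node \mwcshort with three terminals $t_1,t_2,t_3$, which is NP-hard. Give $t_i$ the list $\{i\}$ and every other vertex the list $\{1,2,3\}$. If a node set $X$ separates the terminals, colour each component of $H-X$ by the colour of the terminal it contains, or arbitrarily if it contains none, and give each vertex of $X$ all colours it needs. The difficulty is that a vertex of $X$ may touch components of all three colours and so cost $3$ rather than $2$. It may also need to agree with other vertices of $X$. To control this, I would first transform the multiway cut instance so that solutions consist of vertices with bounded, structured neighbourhoods. One option is to subdivide edges and replicate vertices as weighted copies, so that an optimal cut only ever needs two colours per cut vertex. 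The target would then be $|V(H)|+|X|$.

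Conversely, given a valid assignment, the vertices with $|C(v)|\ge 2$ must form a multiway cut. A monochromatic path, where every vertex has a single colour, cannot join two terminals of different colours, because agreement propagates a single colour along the path. This direction I expect to be routine. The cost-accounting in the forward direction is where the gadgetry is needed, and if it fails I would instead reduce from \mwcshort in its edge version, where each edge gadget has a controlled palette.
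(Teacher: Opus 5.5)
Your second step (three hubs, one vertex $x_v$ per vertex of $H$ adjacent to the hubs in $L(v)$, one pair per edge of $H$) is exactly the paper's reduction from \alcshort to \pwshort. Your argument for it is correct: a length-$2$ path must use a common hub, and the edge count equals $\sum_v \deg(x_v)$.

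The gap is the first step. You name it as the main obstacle and then leave it open, so the NP-hardness of \alcshort is asserted, not proved. The reduction from node \mwcshort with target $|V(H)|+|X|$ fails in the forward direction. Take a single non-terminal $c$ adjacent to $t_1,t_2,t_3$. Then $X=\{c\}$ is a cut of size $1$, but every valid assignment must set $C(c)=\{1,2,3\}$, so the optimum is $|V(H)|+2$. In general a node cut $X$ only gives cost $|V(H)|+\sum_{x\in X}(|C(x)|-1)$, which can reach $|V(H)|+2|X|$, and adjacent cut vertices add further agreement constraints. Your repair (``subdivide edges and replicate vertices as weighted copies'') and your fallback (``an edge gadget with a controlled palette'') are not specified. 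So neither the target value nor either direction of the equivalence is established for them.

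The missing gadget is the paper's reduction from \emph{edge} \mwcshort on $G$ with $n$ vertices and $m$ edges. You may assume $k<m$, since otherwise cutting all edges works.
\begin{itemize}
\item \emph{Construction.} Replace each vertex $v$ by a block $K_v$ of $m$ copies (the paper makes it a clique, which is not essential). Replace each edge $e=uv$ by one vertex $w_e$ adjacent to \emph{all} of $K_u\cup K_v$. Copies of $s_i$ get list $\{i\}$, every other vertex gets $\{1,2,3\}$, and the target is $(n+1)m+k$.
\item \emph{Forward.} Colour every copy of $v$ by the terminal colour of its component in $G-F$ (colour $1$ if there is none). Set $C(w_e)$ to the union of its two endpoint colours. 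Since $w_e$ sees only two blocks, it needs at most two colours, and two only if $e\in F$. The cost is therefore at most $nm+m+|F|$.
\item \emph{Backward.} Because $w_e$ is adjacent to every copy of both endpoints, each such copy must contain $w_e$'s colour. Take an $s_a$--$s_b$ path whose edge vertices are all single-coloured. The colour must switch at some block $K_v$, and that block then costs at least $2m$. This pushes the total to at least $(n+2)m>(n+1)m+k$, a contradiction. Hence the edges with $|C(w_e)|\ge 2$ form a multiway cut, and there are at most $k$ of them.
\end{itemize}
Your ``subdivide and replicate'' idea becomes exactly this once made precise. Subdividing keeps each cut element adjacent to only two colour classes and makes cut elements pairwise non-adjacent. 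Replication makes switching colour at an original vertex prohibitively expensive. These are precisely the points the proof needs, and the proposal does not supply them.
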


On the other hand, by a branching argument we observe that \swshort is FPT when parameterized by the vertex cover number of the graph.
\begin{restatable}{theorem}{SwVCFPT}
\label{thm:sw-vc-fpt-thm}
    \swshort admits a
        $2^{O(2^{\vc})} \cdot |V(G)|^{O(1)}$-time algorithm, where $\vc$ is the size of the minimum vertex cover of $G$.
\end{restatable}
Thus, the case of bounded vertex cover draws a clear line between \swshort and \pwshort.

Finally, one may ask whether small-sized distance preservers may be computed efficiently.
Since the solution size is always at least the number of terminals, this is a strictly more restrictive parameter, hence the result of Theorem~\ref{thm:sw-w1-terminals} is not sufficient. By a simple reduction from \textsc{Clique}, we observe that \swshort and \pwshort are both \wh{1} when parameterized by the solution size, in general graphs. 
\begin{restatable}{theorem}{SwSolnWHard}
    \label{thm:sw-w1-hard-soln}
    \swshort\ is \wh{1} parameterized by solution size. 
\end{restatable}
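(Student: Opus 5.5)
We reduce from \textsc{Multicolored Clique} parameterized by the number of colors $\ell$, producing an equivalent \swshort instance whose budget $k$ is bounded by a function of $\ell$. Since the solution size is at least the number of edges a feasible preserver must contain, this yields \wh{1}ness for the parameter $k$.

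The instance is a graph $G$ with color classes $V_1,\dots,V_\ell$, and we ask for a multicolored clique. We build $G'$ as follows.
\begin{itemize}
\item Keep all vertices of $G$ and all edges of $G$ between distinct color classes; delete edges inside a class.
\item Add one terminal $t_i$ for every color $i$, adjacent to every vertex of $V_i$.
\item Set $S=\{t_1,\dots,t_\ell\}$.
\end{itemize}

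First we compute the distances between terminals. For $i\neq j$ we have $d_{G'}(t_i,t_j)\ge 3$, because $t_i$ and $t_j$ have no common neighbour and are not adjacent. Equality $d_{G'}(t_i,t_j)=3$ holds exactly when some edge of $G$ joins $V_i$ and $V_j$. By preprocessing we may assume every pair of classes has at least one edge between them; otherwise the answer is trivially \no. Hence every terminal pair is at distance exactly $3$, and any shortest $t_i$--$t_j$ path has the form $t_i,u,v,t_j$ with $u\in V_i$, $v\in V_j$ and $uv\in E(G)$.

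Set $k=\ell+\binom{\ell}{2}$. If $G$ has a multicolored clique $\{v_1,\dots,v_\ell\}$, the subgraph consisting of the $\ell$ edges $t_iv_i$ and the $\binom{\ell}{2}$ clique edges preserves all terminal distances. So it is a solution with exactly $k$ edges.

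Conversely, let $H$ be a preserver with at most $k$ edges. For each pair $\{i,j\}$, $H$ contains a path $t_i,u,v,t_j$ as above, so its middle edge is an edge of $G$ between $V_i$ and $V_j$. These middle edges are distinct for distinct pairs, since an edge between $V_i$ and $V_j$ determines the pair $\{i,j\}$. Thus $H$ contains at least $\binom{\ell}{2}$ original edges, and at most $\ell$ edges remain incident to terminals. Each terminal $t_i$ must have at least one incident edge in $H$, because it must reach the other terminals. So each $t_i$ has exactly one neighbour $v_i$ in $H$.

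Every shortest path from $t_i$ in $H$ must pass through $v_i$. Therefore, for each pair $\{i,j\}$, the path $t_i,v_i,v_j,t_j$ is present, which means $v_iv_j\in E(G)$. Hence $\{v_1,\dots,v_\ell\}$ is a multicolored clique.

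The reduction runs in polynomial time and $k=O(\ell^2)$, which gives the theorem.

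The only delicate step is the counting argument in the converse direction, namely showing that the budget forces exactly one terminal edge per color. It relies on two facts established above: middle edges of different pairs are distinct, and all terminal distances equal $3$.
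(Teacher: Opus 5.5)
Your proof is correct and follows the paper's reduction essentially verbatim: one terminal per color class adjacent to all of that class, budget $\ell+\binom{\ell}{2}$, and the counting argument. That argument uses distinct middle edges to force each terminal to have exactly one neighbour $v_i$ in $H$, and then reads off the clique. Your explicit preprocessing step, which outputs a trivial \no-instance when some pair of classes has no edge between them, is the right way to handle the distance-$\ge 4$ case and is cleaner than the paper's in-proof remark. Without it, the constructed instance could be a \yes-instance of \swshort\ while the \mccshort\ instance is not.
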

It is clear, however, that the result of Theorem~\ref{thm:sw-w1-hard-soln} cannot be strengthened to the same graph class as in Theorem~\ref{thm:sw-w1-terminals}. Indeed, if the solution size is bounded, then both the number of terminals and the distance between them is bounded, hence only vertices in small balls around each terminal are relevant. On subgraphs of the grid, these balls only contain a bounded number of vertices, which would immediately lead to a polynomial kernel for the problem.

\begin{table}[t]
\centering
\caption{Parameterized complexity of pairwise and subsetwise distance preservers on unweighted undirected graphs.}
\label{tab:complexity-summary}
\begin{tabular}{|l|p{3cm}|p{3cm}|}
\hline
\textbf{Parameter}
& \multicolumn{1}{c|}{\textbf{PDP}} 
& \multicolumn{1}{c|}{\textbf{SDP}} \\
\hline

\multirow{3}{*}{Number of terminals}
& \multicolumn{2}{c|}{\makecell{W[1]-hard on subgraphs of grid\\ (Thm.~\ref{thm:sw-w1-terminals})}}\\
\cline{2-3}

& \multicolumn{2}{c|}{\makecell{FPT on grids \\ (Thm.~\ref{thm:pw-terminals-grid})}}\\
\hline

\makecell[l]{Treewidth $+$ Number of terminals}
& \multicolumn{2}{c|}{\makecell{FPT \\ (Thm.~\ref{thm:pw-tw-k-fpt})}}\\
\hline

Vertex cover
& \makecell{NP-hard for $\vc=3$ \\ (Thm.~\ref{thm:pdp-vc-paraNphardness})}
& \makecell{FPT \\ (Thm.~\ref{thm:sw-vc-fpt-thm})} \\
\hline

Solution size
& \multicolumn{2}{c|}{\makecell{W[1]-hard \\ (Thm.~\ref{thm:sw-w1-hard-soln})}}\\
\hline

\end{tabular}
\end{table}

\subparagraph*{Related work}
For a comprehensive survey of literature on spanners, we refer to the work of Ahmed et al.~\cite{AHMED2020100253}. Following the standard definitions, distance preservers can be interpreted as $(1, 0)$-spanners (no stretch, no additive error).

Bollob\'as et al.\ introduced $D$-preservers, which aim to preserve all distances above the threshold $D$, and established tight asymptotic bounds for several variants, including Steiner and directed versions~\cite{BBC2003}.
Coppersmith and Elkin then introduced the now-standard terminal-pair variants: pairwise preservers and sourcewise/subsetwise preservers, and developed general upper bounds together with matching lower-bound regimes showing when linear-size preservers are possible~\cite{CD2006}.
Bodwin and Williams proved tightness of earlier upper bounds on weighted instances and established improved general upper bounds for the pairwise setting, sharpening the dependence on the number of terminal pairs~\cite{bodwin2016better}.
More recently, Bodwin obtained further bounds for directed preservers and for dense terminal-pair sets in undirected unweighted graphs (in terms of the Ruzsa--Szemer\'edi function), and refined lower bounds for subsetwise preservers~\cite{bodwin2021new}.
Kogan and Parter connect preservers with the study of spanners and hopsets, giving additional evidence that understanding preservers can inform the design of other distance-approximation and distance-sparsification objects~\cite{kogan2025having}.

The problem of finding the minimum $t$-spanner, i.e., the spanner with a multiplicative distortion of $t$, has been studied more extensively, also from the parameterized complexity perspective. 
The NP-hardness of the problem is due to Cai~\cite{Cai94} and Peleg and Sch\"{a}ffer~\cite{PelegS89}. Venkatesan et al. \cite{VenkatesanRMMP97} studied \textsc{Minimum $t$-Spanner} on various graph classes, including chordal graphs, convex bipartite graphs and split graphs, showing NP-hardness or polynomial-time algorithms depending on the value of $t$. Brandes and Handke showed NP-hardness of \textsc{Minimum $t$-Spanner} for $t \geq 5$ on planar graphs~\cite{BrandesH97}, and Kobayashi extended the hardness to $t \in \{2, 3, 4\}$, providing also an FPT algorithm when the parameter is the edge difference between the input graph and the spanner~\cite{KOBAYASHI201888}. Fomin et al. extended this perspective into the directed setting, showing an FPT algorithm when parameterized by $t$ plus the edge difference, and also W[1]-hardness for finding the minimum $t$-additive spanner with the same parameter~\cite{FominGLMSS22}.

\subparagraph*{Organization of the paper.} We introduce the necessary preliminaries in Section~\ref{sec:prelims}. Then, we present our results for grids and subgraphs of the grid in Section~\ref{sec:grids}. Section~\ref{sec:tw} is dedicated to the FPT algorithm parameterized by the treewidth and the number of terminals. Section~\ref{sec:vc} continues with the vertex cover results. The hardness proof for solution size is presented in Section~\ref{sec:solution_size}. We finish with a conclusion and open problems in Section~\ref{sec:conclusion}.

\section{Preliminaries}\label{sec:prelims}
We begin by introducing standard notation and concepts used throughout the paper.
\smallskip
All graphs considered here are simple, undirected, and unweighted. 
For a graph $G=(V,E)$, we additionally use $V(G)$ and $E(G)$ to denote its vertex and edge sets, respectively.
For a vertex subset $X \subseteq V(G)$, we write $G[X]$ for the subgraph of $G$ induced by $X$, i.e.,
$V(G[X]) = X$ and $E(G[X]) = \{uv \in E(G) \mid u,v \in X\}$.
For a vertex $v \in V(G)$, let $N_G(v)$ denote its neighborhood, and let $\delta_G(v)$ denote the set of edges incident to $v$ in $G$.
For an edge $e \in E(G)$, $G - e$ denotes the graph obtained by removing $e$ from $G$, that is $(V, E\setminus\{e\})$.
For vertices $u, v \in V(G)$, let $d_G(u,v)$ denote the length of a shortest path between $u$ and $v$ in $G$. 
When the graph $G$ is clear from the context, we drop the subscript. 
For a path $\pi$, $|\pi|$ denotes its length.
We denote an edge with endpoints $u$ and $v$ by $uv$.

For integers $n \ge 1$, we write $[n] := \{1,2,\dots,n\}$, and for a set $X$, we denote its cardinality by $|X|$.
Formally, $(\mathcal{F}_v)_{v \in V(G)}$ denotes a vector of sets where the $v$-th component is $\mathcal{F}_v$.
We adopt the parentheses and subscript notation to emphasize the index set clearly.

For a subgraph $H \subseteq G$, \emph{size} of $H$, denoted by $|H|$, is the number of edges in $H$. 

We also recall the relevant concepts from parameterized complexity.
For more thorough explanations and details, see~\cite{CyganFKLMPPS15}.
\subparagraph*{Parameterized complexity.} 
A \emph{parameterized problem} is a decision problem where each instance comes with a parameter $k$. An algorithm is \emph{fixed-parameter 
tractable (\fpt)} if it solves the problem in time $f(k)\cdot \mathrm{poly}(|I|)$ for some computable function $f$. 
A problem is $W[1]$-hard if it is unlikely to admit an FPT algorithm. 

A \emph{parameterized reduction} from a parameterized problem $A$ to a parameterized problem $B$ is a polynomial-time transformation that maps an instance $(I,k)$ of $A$ to an instance $(I',k')$ of $B$ such that 
(i) $(I,k)$ is a \yes-instance if and only if $(I',k')$ is a \yes-instance, and 
(ii) $k' \le g(k)$ for some computable function $g$.
The key difference from classical polynomial-time reductions is that the parameter in the target instance is bounded by a function of the original parameter.

Many \fpt\ algorithms exploit \emph{bounded search trees}, where recursive  branching is guided by the parameter to ensure that the total number of branches and the depth of the tree is bounded by $f(k)$ for some computable function $f$.

With the notations in place, we formally define the pairwise and subsetwise distance preserver problems.

\defbox{\pwfull}{\pwshort}
{An undirected graph $G=(V, E)$, a set of unordered terminal pairs
$\mathcal{P} \subseteq \bigl\{ \{u,v\} \,\bigm|\,  u,v \in V,\, u \neq v \bigr\}$, and an integer $k$.}
{Does there exist a subgraph $H \subseteq G$ such that for every $p=\{u, v\} \in \mathcal{P}$, $d_G(u, v) = d_H(u, v)$, and $|E(H)|\leq k?$}

\defbox{\swfull}{\swshort}
{An undirected graph $G=(V, E)$, a subset of vertices $S \subseteq V $ called \emph{terminals}, and an integer $k$.}
{Does there exist a subgraph $H\subseteq G$ such that for every pair of vertices $u,v \in S$, $d_G(u, v) = d_H(u, v)$, and $|E(H)|\leq k?$}
From now on, we denote an instance of \pwfull\ by $(G, \mathcal{P})$, and an instance of \swfull\ by $(G, S)$.
We refer to \pwfull\ and \swfull\ as \pwshort\ and \swshort, respectively.

Note that \swshort\ is a special case of \pwshort\ where 
$\mathcal{P} = \Bigl\{ \{u,v\} \,\bigm|\, \forall u,v \in S,\, u \neq v \Bigr\}$. 
The key difference is that \pwshort\ allows an arbitrary set of terminal pairs, possibly encoding a complex underlying structure, while \swshort\ always encodes a clique on $S$. 

Unlike the \textsc{Steiner Tree} problem, which only preserves connectivity among terminals, a \pwshort\ preserves shortest paths and may contain cycles.

Finally, we record a few simple observations about optimal preservers that will be used throughout the paper.

\begin{restatable}{observation}{AllEdgesNeeded}
    \label{obs:all-edges-needed}
    Let $H$ be a feasible solution (distance preserver) for an instance of \pwshort\ (resp.\ \swshort) of size at most $h$.
    If there exists an edge $e \in E(H)$ that does not lie on any shortest path between a pair of terminals, then $H'=\bigl(V(H), E(H) \setminus \{e\}\bigr)$ is also a feasible solution of size at most $h-1$.
\end{restatable}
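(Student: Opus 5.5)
The plan is to show directly that removing such an edge $e$ does not change any terminal distance in the subgraph. Let $H' = H - e$. Since $H' \subseteq H \subseteq G$, every path in $H'$ is a path in $H$ and in $G$. Hence
\[
d_G(u,v) \le d_H(u,v) \le d_{H'}(u,v)
\]
for every pair $u,v$. The size bound is immediate: $|E(H')| = |E(H)| - 1 \le h-1$. So the only thing to prove is the reverse inequality $d_{H'}(u,v) \le d_G(u,v)$ for every required pair, meaning every $\{u,v\} \in \mathcal{P}$ for \pwshort, or every $u,v \in S$ for \swshort.

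First, fix a required pair $\{u,v\}$. Because $H$ is feasible, $d_H(u,v) = d_G(u,v)$, so $H$ contains a $u$--$v$ path $\pi$ with $|\pi| = d_G(u,v)$. This $\pi$ is a shortest path in $G$ between a pair of terminals.

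Second, I use the hypothesis. Here I read ``lies on a shortest path between a pair of terminals'' as referring to the pairs the instance requires to be preserved: the pairs of $\mathcal{P}$ for \pwshort, or pairs in $S$ for \swshort. I also read ``shortest'' as shortest in $G$; since $H$ preserves these distances, shortest in $H$ gives the same notion. Under this reading, $e$ lies on no shortest $u$--$v$ path, so in particular $e \notin \pi$. Therefore $\pi \subseteq H'$, which gives $d_{H'}(u,v) \le |\pi| = d_G(u,v)$. Since the pair was arbitrary, $H'$ is feasible.

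The only delicate point is fixing this interpretation of the hypothesis. It must quantify over shortest paths between required pairs, so that the argument applies pair by pair. Once that is settled, the remaining steps are routine.
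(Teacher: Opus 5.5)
Your proof is correct and is exactly the routine argument behind this observation, which the paper states without proof: deleting $e$ can only increase distances and removes one edge, and each required pair keeps its shortest path $\pi \subseteq H$ because $e \notin \pi$. Quantifying over shortest paths between \emph{required} pairs is the weakest form of the hypothesis, so it also covers the literal ``any pair of terminals'' reading; it is also the form the paper relies on later, for instance in the \rsashort\ reduction, where it concludes that every edge of $H$ lies on a shortest $p_0$--$p_i$ path.
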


By Observation~\ref{obs:all-edges-needed}, we may assume without loss of generality that every edge of an optimal \pwshort\ (resp.\ \swshort) lies on a shortest path between terminals.

\begin{observation}
\label{obs:remove-nonterminals-not-on-shortest-path}
Let $H$ be a feasible solution (distance preserver) for an instance of \pwshort\ (resp.\ \swshort) of size at most $h$.
If a non-terminal vertex $v \in V(H)$ does not lie on any shortest path between terminals, then $H' := (V(H) \setminus \{v\}, E(H) \setminus \delta_H(v))$ is
also a feasible solution of size at most $h$.
\end{observation}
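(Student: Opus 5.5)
The statement to prove is Observation~\ref{obs:remove-nonterminals-not-on-shortest-path}. The plan is to show directly that deleting $v$ together with its incident edges changes no terminal-pair distance. We treat both variants at once by letting $\mathcal{P}$ denote the relevant set of pairs; for \swshort it is all pairs of $S$.

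First, the size bound. $H'$ is a subgraph of $H$, hence of $G$, and $E(H') = E(H)\setminus \delta_H(v) \subseteq E(H)$. Therefore $|E(H')| \le |E(H)| \le h$.

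Second, feasibility. Fix a pair $\{a,b\} \in \mathcal{P}$. Since $H' \subseteq G$, every path in $H'$ is a path in $G$, so $d_{H'}(a,b) \ge d_G(a,b)$. For the reverse inequality:
\begin{enumerate}
\item Because $H$ is feasible, $d_H(a,b) = d_G(a,b)$, so there is a path $\pi$ in $H$ from $a$ to $b$ with $|\pi| = d_G(a,b)$.
\item Since $H \subseteq G$, the path $\pi$ is also a shortest $a$--$b$ path in $G$.
\item By hypothesis, $v$ lies on no shortest path between terminals. It is also not an endpoint of $\pi$, because it is a non-terminal. Hence $v \notin V(\pi)$.
\item Consequently no edge of $\pi$ belongs to $\delta_H(v)$, so $\pi$ lies entirely in $H'$.
\end{enumerate}
This gives $d_{H'}(a,b) \le |\pi| = d_G(a,b)$, and equality follows.

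The only point needing care is how to read ``shortest path between terminals.'' The argument uses the weakest form of the hypothesis, namely that $v$ avoids the $G$-shortest paths of the pairs in $\mathcal{P}$. The chosen $\pi$ is such a path, since $H$-shortest paths realizing $G$-distances are $G$-shortest. Any stronger reading of the hypothesis (for instance, avoiding shortest paths between all terminal pairs) only makes the claim easier. No real obstacle is expected.
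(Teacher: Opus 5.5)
Your proof is correct and is the direct argument the paper leaves implicit: the paper states this observation without proof. Each pair in $\mathcal{P}$ keeps a witnessing path $\pi$ in $H$ of length $d_G$; this $\pi$ avoids $v$, so it survives in $H'$. One small correction: the weakest reading of the hypothesis is that $v$ avoids the $H$-shortest terminal paths, not the $G$-shortest ones. Your argument still covers this case, because your $\pi$ is $H$-shortest as well as $G$-shortest.
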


Hence, we can safely remove non-terminal vertices that do not participate in any shortest terminal path.

\section{Grids and subgraphs of the grid}\label{sec:grids}

This section is dedicated to our results for grids and subgraphs of the grid. We start with the W[1]-hardness for subgraphs of the grid when parameterized by the number of terminals (Theorem~\ref{thm:sw-w1-terminals}), which is the main result of this section. Then, we show the FPT algorithm for grids with the same parameter (Theorem~\ref{thm:pw-terminals-grid}). Finally, we observe that \pwfull is NP-hard on grids by a reduction from \rsafull.

\subsection{\wh{1}ness by the number of terminals}
In this section, we show that \swshort\ is \wh{1} when parameterized by the number of terminals.
We give a parameterized reduction from the \bicqfull (\bicqshort) problem, which is  \wh{1} parameterized by the solution size~\cite{CyganFKLMPPS15}.
We then extend the reduction to obtain \wh{1}ness when parameterized by the number of terminal pairs,
even when the input graph is restricted to be an induced subgraph of the grid.

More formally, we prove Theorem~\ref{thm:sw-w1-terminals}, restated for convenience.
\SwTerminalsWHard*
Since \swshort is a special case of \pwshort, the next corollary follows immediately.
\begin{restatable}
{corollary}{PwTerminalsWHard}
    \label{coll:pw-w1-hard}
    \pwshort\ is \wh{1} parameterized by the number of terminal pairs, even on induced subgraphs of the grid.
\end{restatable}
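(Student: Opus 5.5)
The corollary is a direct consequence of Theorem~\ref{thm:sw-w1-terminals}. I would give the trivial parameterized reduction from \swshort to \pwshort. Given an instance $(G,S,k)$ of \swshort, where $G$ is an induced subgraph of the grid, output the \pwshort instance $(G,\mathcal{P},k)$ with the same graph and budget and with
\[
\mathcal{P} = \bigl\{\{u,v\} \bigm| u,v\in S,\ u\neq v\bigr\}.
\]

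Correctness follows from the definitions. A subgraph $H\subseteq G$ preserves $d_G(u,v)$ for every pair in $\mathcal{P}$ exactly when it preserves all distances between vertices of $S$. So the two instances have the same feasible solutions, and one is a \yes-instance exactly when the other is. The graph is unchanged, so it is still an induced subgraph of the grid.

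The construction runs in polynomial time. The new parameter is $|\mathcal{P}| = \binom{|S|}{2}$, which is bounded by a computable function of the old parameter $|S|$. This is therefore a valid parameterized reduction. Since \swshort is \wh{1} parameterized by $|S|$ on induced subgraphs of the grid, \pwshort is \wh{1} parameterized by the number of terminal pairs on the same class.

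There is no real obstacle here. The only point to check is that the parameter blow-up from $|S|$ to $\binom{|S|}{2}$ is allowed, and it is, because it is a computable function of $|S|$. If the intended parameter is instead the number of terminals $|\bigcup\mathcal{P}| = |S|$, the parameter is unchanged and the conclusion is immediate.
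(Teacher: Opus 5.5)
Your proposal is correct and matches the paper, which obtains this corollary immediately from Theorem~\ref{thm:sw-w1-terminals} by viewing \swshort as the special case of \pwshort in which $\mathcal{P}$ contains all pairs of $S$. Your explicit check that the new parameter $|\mathcal{P}|=\binom{|S|}{2}$ is a computable function of $|S|$ is the one detail the paper leaves implicit.
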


We begin by first reminding the reader of the \bicqfull (\bicqshort) problem.

\defbox{\bicqfull}{\bicqshort}
{A bipartite graph $G=(L \cup R, E)$, an integer $k$, and partitions of $L$ and $R$ each into $k$ independent sets $(L_1, L_2, \dotso, L_k)$ and $(R_1, R_2, \dotso, R_k)$.}
{Does there exist a set $X \subseteq L \cup R$ such that $|X \cap L_i| = 1$ and $|X \cap R_i| = 1$ for every $i \in [k]$, and $G[X]$ is bipartite clique (biclique)?}

The \bicqshort\ problem parameterized by solution size is known to be \wh{1}.
This follows directly via a standard parameterized reduction from \mccfull\ (\mccshort), which is \wh{1}~\cite{CyganFKLMPPS15}.
Given an instance of \mccshort, we create two copies of the vertex set of the input graph and connect every vertex to its copy and the copies of its neighbors.
Clearly a $2k$-colored clique in the \bicqshort instance maps to a $k$-colored clique in the \mccshort instance, and vice versa.

Note that the partition of the vertex set into independent sets can be viewed as a proper vertex coloring, where each independent set corresponds to a distinct color.
A solution to the problem is therefore a biclique containing exactly one vertex of each color, which justifies the name ``multi-colored biclique''.
Throughout this reduction, we will use the terms \emph{color class} to refer to the sets $L_i$ and $R_j$ for $i,j \in [k]$.
Without loss of generality, we assume that $|L_i|=|R_j|=p$ for every $i,j \in [k]$ and $p \in \N$, 
as otherwise we can add isolated dummy vertices to the smaller sets as needed.
Our proof of Theorem~\ref{thm:sw-w1-terminals} proceeds in two steps.
We first, in~\autoref{sec:core-reduction}, describe a \emph{core reduction} that maps an instance $I_{\bicqshort}$ to an instance of \swshort\ on a \emph{planar} graph $G'$ together with a terminal set $S$, in polynomial time.
Then, in a separate \emph{embedding step}, we show how to modify the construction so that $G'$ is a subgraph of a grid graph while preserving the relevant distance properties.

\subsubsection{Core reduction}
\label{sec:core-reduction}
In this subsection, we first describe the core reduction from \bicqshort\ to \swshort.
Then, we prove several important properties about the structure of shortest paths 
in the constructed graph along with  properties of a subsetwise distance preserver.
Finally we prove the correctness of the reduction.

\paragraph*{Gadget construction.}
Let $I_{\bicqshort}=(G=(L\cup R, E),k)$ be an instance of \bicqshort\ with $|L \cup R|=n$, $L=\bigcup_{i=1}^k L_i$ and $R=\bigcup_{j=1}^k R_j$.
We construct an instance of \swshort,\ $I_{\swshort}=(G', S)$ such that $|S|=4k$ and $|V(G')|,|E(G')| \in O(n^2k+nk^2)$, as follows.

We create two types of gadgets to represent the color classes: 
for each $i\in[k]$ an \emph{$L_i$-column gadget}, oriented vertically, and an \emph{$R_i$-row gadget}, oriented horizontally.
These gadgets host the vertices of the original graph and their boundaries are defined by terminal vertices. 
The interaction between column and row gadgets will encode the edges of $G$.

Let $\Delta$, $\alpha$, and $\ell$ be length parameters to be fixed later such that:
\[\alpha \ll \ell \ll \Delta \quad \text{ with } \quad  \alpha,\ell,\Delta \in O(k+|V(G)|+|E(G)|).\]
For each $i \in [k]$, we initialize the $L_i$-column gadget by adding two terminal vertices $l^{\text{top}}_i$ and $l^{\text{bot}}_i$ to $G'$.
Then for every vertex $v \in L_i$, we introduce two vertices $w^{\text{top}}_v$ and $w^{\text{bot}}_v$, and connect them with a path of length $\ell$.
Next we connect $w^{\text{top}}_v$ to $l^{\text{top}}_i$ and $w^{\text{bot}}_v$ to $l^{\text{bot}}_i$ each via a path of length $\Delta$.

\smallskip
Similarly for each $i \in [k]$, we initialize the $R_i$-row gadget by adding two terminal vertices $r^{\text{left}}_i$ and $r^{\text{right}}_i$ to $G'$.
For every vertex $u \in R_i$, we introduce two vertices $w^{\text{left}}_u$ and $w^{\text{right}}_u$, and connect them with a path of length $\ell$.
We connect $w^{\text{left}}_u$ to $r^{\text{left}}_i$, and $w^{\text{right}}_u$ to $r^{\text{right}}_i$ each via a path of length $\Delta$.

We refer to the terminals $\{l_i^{\text{top}},l_i^{\text{bot}},r_i^{\text{left}},r_i^{\text{right}}\mid i\in[k]\}$ as \emph{endpoints}, since they serve as the endpoints of the row and column gadgets.

\smallskip
To ensure control over shortest paths and facilitate the subsequent grid embedding, 
we introduce four \emph{connector vertices}: $c^{\text{top}}, c^{\text{bot}}, c^{\text{left}}$, and $c^{\text{right}}.$
We then make the gadget endpoints adjacent to their corresponding connector, by an edge. 
That is, for each $i\in[k]$:
\begin{itemize}[leftmargin=2em]
    \item We connect $c^{\text{top}}$ to $l^{\text{top}}_i$ and $c^{\text{bot}}$ to $l^{\text{bot}}_i$ via an edge.
    \item We connect $c^{\text{left}}$ to $r^{\text{left}}_i$ and $c^{\text{right}}$ to $r^{\text{right}}_i$ via an edge.
\end{itemize}

Finally, we connect the connector vertices in a cyclic manner:
$c^{\text{top}}-c^{\text{left}}-c^{\text{bot}}-c^{\text{right}}-c^{\text{top}}$
with paths of length $2\Delta$.

The set of terminals for $I_{\swshort}$ is defined as
\[
    S
    = \{ l_i^{\text{top}}, l_i^{\text{bot}}, r_i^{\text{left}}, r_i^{\text{right}} \mid i \in [k] \},
\]
in other words,  for each color class in $G$ there are two terminals in $G'$.
This completes the gadgets framework and the vertex set of $G'$.
Refer to Figure~\ref{fig:sw-w1-hard-terminals-core-reduction-overview} for an illustration of the gadgets and an overview of the construction.

    \begin{figure}[!htb]
        \centering
        \includegraphics[width=.9\textwidth, ]{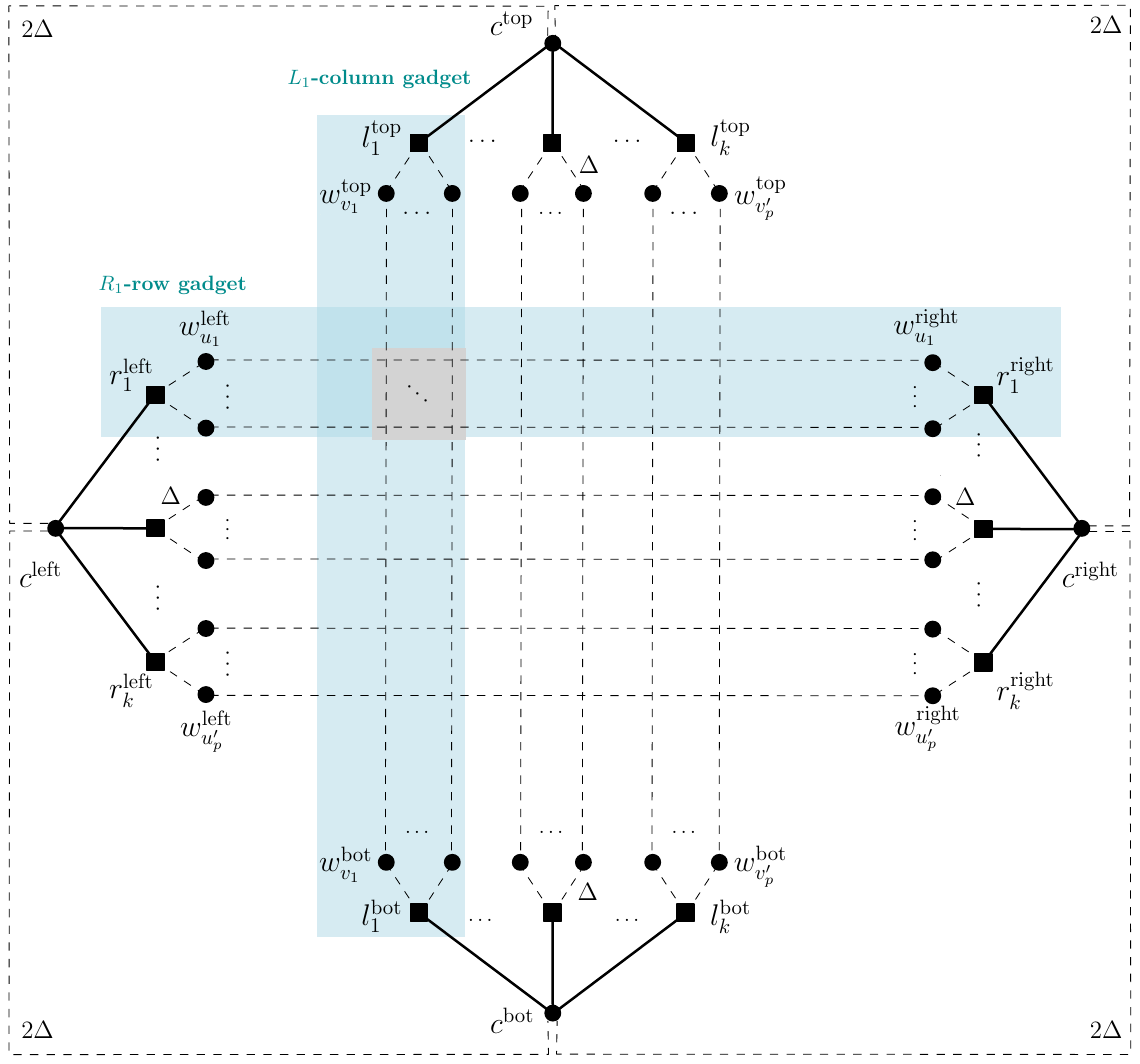}       
        \caption{
            The general overview of the construction.
            Terminal vertices are represented as squares, while the non-terminal vertices are represented as circles.
            Solid lines represent an edge while the dashed lines represent paths of length greater than one.
            For each color class in the \bicqshort\ instance, we create a corresponding gadget in the \swshort\ instance.
            The horizontal $R_k$-row and the vertical $L_k$-column gadgets are highlighted in light blue.
            Each $L_i$-column gadget has two terminals $l_i^{\text{top}}$, and $l_i^{\text{bot}}$ at its endpoints,
            and each $R_i$-row gadget similarly has two terminals $r_i^{\text{left}}$ and $r_i^{\text{right}}$ at its endpoints.
            All paths between \emph{connector} vertices have length $2\Delta$.
            Within each $L_i$-column gadget, for every $v\in L_i$, 
            the path between $w^{\text{top}}_v$ and $w^{\text{bot}}_v$ has length $\ell$, 
            the path between $w^{\text{top}}_v$ and $l^{\text{top}}_i$ has length $\Delta$, 
            and the path between $w^{\text{bot}}_v$ and $l^{\text{bot}}_i$ has also length $\Delta$.
            Analogously, within each $R_i$-row gadget, the same length properties hold for every $u \in R_i$. 
            Notice that endpoints of the gadgets are connected to the corresponding connector vertex via an edge.
            By the illustration, in $G$ it holds that $u \in R_1$, $u' \in R_k$, $v \in L_1$, and $v' \in L_k$.
            The gray highlighted area indicates the region where the paths in the $R_k$-row gadget 
            and the $L_k$-column gadget intersect, which
            corresponds to the edges between $R_k$ and $L_k$ in the original graph $G$.
        }
        \label{fig:sw-w1-hard-terminals-core-reduction-overview}
    \end{figure}
\smallskip
Next, we describe how these gadgets interact to represent the edge set of $G$.

For $u \in R$ and $v \in L$, let $\pi_u$ be the horizontal path from $w_u^{\text{left}}$ to $w_u^{\text{right}}$ 
and let $\pi_v$ be the vertical path from $w_v^{\text{top}}$ to $w_v^{\text{bot}}$, in $G'$.

In our construction, every horizontal path $\pi_u$ intersects every vertical path $\pi_v$.
We define the local graph structure at these intersection areas based on the adjacencies in $G$.

For every pair $(u,v)\in R\times L$, the vertices $z^1_{uv}, z^2_{uv}, z^3_{uv}$
are defined to be three \emph{consecutive} vertices on the vertical path $\pi_v$ in a bottom-to-top order.
Then, the intersections between $\pi_u$ and $\pi_v$ are defined as follows:
\begin{itemize}
    \item \textbf{Case $uv \in E(G)$}: 
    In this case, $\pi_u$ and $\pi_v$ share a common segment of length 2.
    Specifically, they intersect at all the three consecutive vertices $z^1_{uv}, z^2_{uv}, z^3_{uv}$ and
    share the edges $z^1_{uv}z^2_{uv}$ and $z^2_{uv}z^3_{uv}$.
    After this shared segment, $\pi_u$ continues two edges outside $\pi_v$.
    
    \smallskip
    \item \textbf{Case $uv \notin E(G)$}: 
    In this case, $\pi_u$ is routed to avoid collecting edges from $\pi_v$.
    Specifically, $\pi_u$ goes ``two edges up'' right before $\pi_v$, intersects $\pi_v$ at exactly one vertex $z^2_{uv}$, 
    and then continues one edge outside $\pi_v$. 
    This ensures that no edges are shared between $\pi_u$ and $\pi_v$ when $uv \notin E(G)$.
\end{itemize}
We refer to this triple, $z^1_{uv}, z^2_{uv}, z^3_{uv}$, as the \emph{local intersection area} of paths $\pi_u$ and $\pi_v$.
If $uv\in E(G)$ then $\pi_u$ collects the two edges $z^1_{uv}z^2_{uv}$ and $z^2_{uv}z^3_{uv}$ (so $\pi_u$ and $\pi_v$ share a length-2 segment),
whereas if $uv\notin E(G)$ then $\pi_u$ meets $\pi_v$ only at the middle vertex $z^2_{uv}$.

\medskip
Fix $v\in L$ and consider the order in which the horizontal paths $\{\pi_u : u\in R\}$ intersect $\pi_v$ from top to bottom.
For any two \emph{consecutive} horizontal paths $\pi_{u_1}$ (above) and $\pi_{u_2}$ (below) in this order, we subdivide $\pi_v$ by inserting a path of length $\alpha$
 between $z^3_{u_1v}$ and $z^1_{u_2v}$. 
Equivalently, along $\pi_v$ the local intersection areas appear as disjoint triples, separated by internally vertex-disjoint subpaths of length $\alpha$.

Moreover, before the first local intersection area and after the last one,
we subdivide each $\pi_u$ and each $\pi_v$ by inserting a path of length $\alpha$.

See figure~\ref{fig:sw-w1-hard-terminals-core-reduction-interaction} for an illustration of the gadget intersections and lengths.

Crucially, we construct these local intersections such that each horizontal path $\pi_u$ advances by exactly five edges 
to cross or pass one vertical path of $\pi_v$, regardless of whether the edge $uv$ exists in $G$ or not.
Moreover, the segments of length $\alpha$ along each vertical $\pi_v$ ensure that shortest paths do not benefit
from switching between different horizontal/vertical paths and gadgets.
Note that the presence of an edge in $G$ affects only the \emph{topology} of the intersection (shared edges vs. single vertex crossing),
not the length of each vertical or horizontal path.

\begingroup

\begin{figure}[!htb]
    \centering
    \begin{subfigure}[b]{0.24\textwidth}
        \centering
        \includegraphics[width=0.7\linewidth]{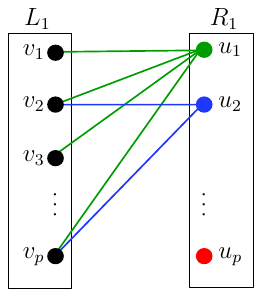}
        \caption{$G[L_1,R_1]$ with \\ $L_1=\{v_1,\dots,v_p\}$ and \\ $R_1=\{u_1,\dots,u_p\}$.}
        \label{fig:sw-w1-hard-terminals-core-reduction-bip-graph}
    \end{subfigure}
    \hfill
    \begin{subfigure}[b]{0.74\textwidth}
        \centering
        \includegraphics[width=\linewidth]{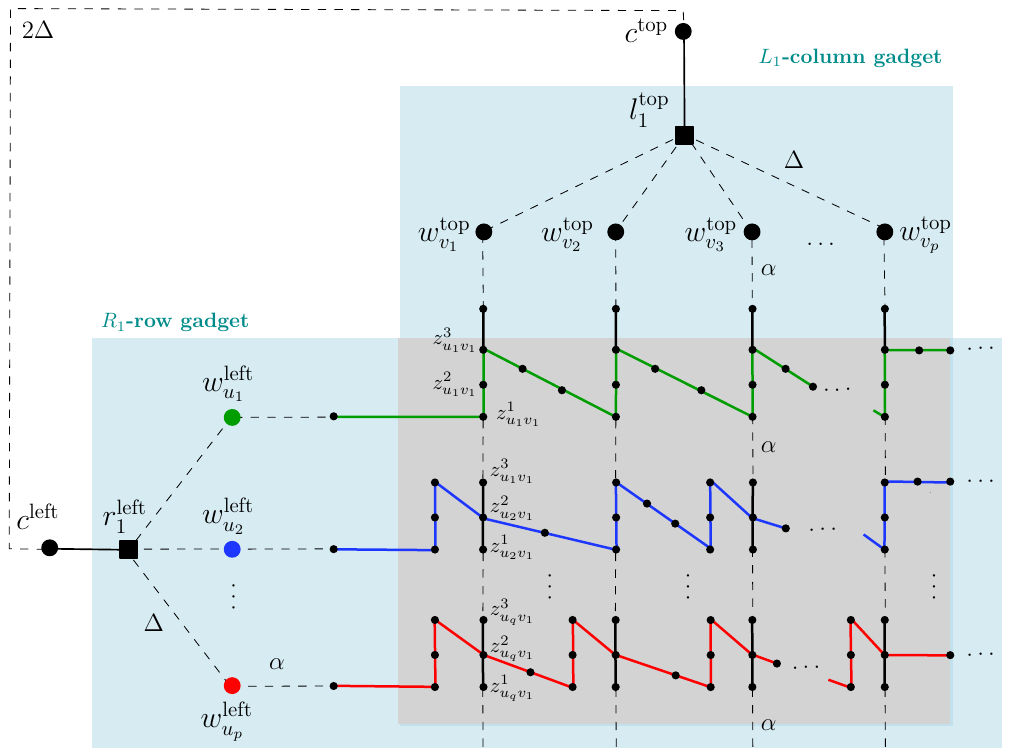}
        \caption{Local intersection pattern between the $R_1$-row gadget and the $L_1$-column gadget.}
        \label{fig:sw-w1-hard-terminals-core-reduction-interaction}
    \end{subfigure}

    \caption{
        Two views of the same $L_1$--$R_1$ interaction: 
        \textbf{(a)} the underlying induced bipartite graph $G[L_1,R_1]$, and \textbf{(b)} the corresponding local intersection structure in $G'$.
        Note that all horizontal paths intersect with all vertical paths at their corresponding local intersection areas.
        As $u_1$ is connected to all the vertices of $L_1$, in $G$, its corresponding horizontal path in $G'$ shares segments of length 2 with all vertical paths in the $L_1$-column gadget.   
        In contrast, $u_p$ is not connected to any vertex of $L_1$, so its horizontal path only intersects each vertical path at a single vertex.
        Vertex $u_2$ has a mixed interaction pattern, sharing segments of length 2 with vertical paths corresponding to $v_2$ and $v_p$
        while only intersecting at single vertices with others.
        Along each vertical path, the local intersection areas are separated by paths of length $\alpha$.
        Moreover, every vertical and horizontal path is subdivided by $\alpha$ at the beginning (and at the end).
        }
    \end{figure}
\endgroup
This concludes the construction of the graph $G'$ and terminal set $S$ for the \swshort\ instance $I_{\swshort}=(G', S)$.

To see that the reduction is polynomial in the size of $I_{\bicqshort}$, first we set the parameters $\alpha$, $\ell$, and $\Delta$.

First note that, $\alpha$ is inserted along every vertical path between consecutive local intersection areas, to prevent horizontal paths
from switching (using subpaths of vertical paths) to shorten their lengths.
Since each horizontal path pays $5$ edges per vertical path, by setting $\alpha \gg 5$, we ensure that switching to subpaths of vertical paths
makes the distance larger.
So $\alpha \in O(1)$ suffices.

Remember that $|R_i|=|L_i|=p$ for every $i \in [k]$ and let $n:=|V(G)|$.
In total, since there are $n$ paths and $2k$ gadgets, $\ell \in O(n+k)$ holds for each vertical path.
Without loss of generality, we can extend each horizontal path at the beginning and end to match this length.
Thus, setting $\ell \in O(n+k)$ suffices.
Finally, we set $\Delta \in O(n+k) \gg \ell$.

Now we bound $|S|$ and the size of $G'$.
By construction,
\[
    |S|=\bigl|\{l_i^{\text{top}},l_i^{\text{bot}},r_i^{\text{left}},r_i^{\text{right}} : i\in[k]\}\bigr|=4k.
\]

In every gadget, there are $p$ paths each of length $2\Delta + \ell \in O(n+k)$, which gives
$p \cdot O(n+k) = O(n(n+k))$ vertices and edges per gadget.
Since there are $2k$ gadgets, the total size of all gadgets is $O(k\cdot n(n+k)) = O(n^2k+nk^2)$.
By $\Delta \in O(n+k)$ we get $|V(G')|, |E(G')| \in O(n^2k+nk^2)$.

\medskip
\paragraph*{Structure of shortest paths in $G'$.} Before we prove the correctness of the reduction, we prove some important properties that describe
the shortest path structure in $G'$ between terminal pairs in $S$.

In the sequel, as mentioned in the construction, we assume $\alpha$ is chosen sufficiently large so
that no shortest path between terminals benefits from
detouring through multiple paths or gadgets.
Moreover, we assume $\ell<\Delta$ so that traversing a single row/column gadget
(length $2\Delta+\ell$ between its endpoints) is strictly shorter than going around half of the connector cycle (length $4\Delta$).

\begin{observation}
    \label{obs:sw-w1-hard-terminals-adjacent-same-connector}
    For every two terminals $s,t\in S$ adjacent to the same connector, every shortest $s$--$t$ path traverses the
    two edges connecting $s$ and $t$ to their common adjacent connector.
    
\end{observation}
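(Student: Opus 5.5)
Two terminals $s,t$ adjacent to the same connector $c$ are both endpoints of the same orientation: both are top endpoints $l_i^{\text{top}}, l_j^{\text{top}}$ (common connector $c^{\text{top}}$), or both bottom, left or right endpoints. The path $s$--$c$--$t$ has length $2$, so $d_{G'}(s,t)\le 2$. The plan is to show that every other $s$--$t$ path has length at least $3$. Then a shortest $s$--$t$ path has length exactly $2$ and must be $s c t$; equivalently, it uses both edges $sc$ and $ct$.

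First I record the neighbourhood of an endpoint, say $s=l_i^{\text{top}}$. By construction its neighbours are the connector $c^{\text{top}}$ and the first internal vertices of the length-$\Delta$ paths towards $w_v^{\text{top}}$, $v\in L_i$. These $\Delta$-paths are internally vertex-disjoint from everything else until they reach $w_v^{\text{top}}$, because the local intersection areas only lie on the subpaths $\pi_v$ between $w_v^{\text{top}}$ and $w_v^{\text{bot}}$ (and on $\pi_u$ between $w_u^{\text{left}}$ and $w_u^{\text{right}}$). The same holds for the other endpoint types.

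Now take any $s$--$t$ path $P$ that avoids $c$, or that reaches $c$ without using one of the edges $sc$ or $ct$. Since $s\ne t$ and $s,t$ are not adjacent (no edge joins two endpoints), $P$ has length at least $2$. If $P$ does not use the edge $sc$, it leaves $s$ along a $\Delta$-path, and that path has length $\Delta\ge 2$ before it can meet any other vertex of the graph. Symmetrically, if $P$ does not use $ct$, it enters $t$ along a $\Delta$-path. In either case $|P|\ge \Delta+1>2$.

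The only other possibility is that $P$ uses both $sc$ and $ct$. Then $P$ is exactly $s c t$, since $P$ is a simple path from $s$ to $t$.

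The only real work is the structural claim that the $\Delta$-paths attached to endpoints are pendant subdivided paths until $w^{\text{top}}_v$ (resp. $w^{\text{bot}}_v$, $w^{\text{left}}_u$, $w^{\text{right}}_u$). This needs a careful reading of the construction, and requires $\Delta\ge 2$, which holds since $\Delta\gg\ell$.
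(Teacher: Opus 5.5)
Your proof is correct and follows the same route as the paper. The paper just notes that $d_{G'}(s,t)=2$ via the common connector, and each terminal is adjacent to exactly one connector. Your observation that the other neighbours of an endpoint lie on pendant length-$\Delta$ paths supplies the step the paper leaves implicit: $s$ and $t$ have no common neighbour other than $c$.
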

The above observation follows directly from the construction, as each terminal is adjacent to exactly one connector vertex by an edge and $d_{G'}(s,t)=2.$ (Refer to \cref{fig:sw-w1-hard-terminals-core-reduction-overview})

We call two connector vertices \emph{consecutive} if they are consecutive on the connector--connector cycle. 
Then we have the following proposition:
\begin{proposition}
    \label{prop:sw-w1-hard-terminals-adjacent-consecutive-connectors}
    For every two terminals $s,t\in S$ adjacent to consecutive connectors, 
    every shortest $s$--$t$ path traverses the part of the connector cycle between the adjacent connectors.
\end{proposition}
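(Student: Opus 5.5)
We prove the proposition by comparing the length of the path through the connector cycle against every other route. Fix terminals $s,t$ adjacent to consecutive connectors $c,c'$. For instance, $s=l_i^{\text{top}}$ adjacent to $c=c^{\text{top}}$ and $t=r_j^{\text{left}}$ adjacent to $c'=c^{\text{left}}$; the other cases are symmetric. The candidate path goes $s$--$c$, then along the connector-cycle segment between $c$ and $c'$, then $c'$--$t$. It has length $1+2\Delta+1=2\Delta+2$. We show that every other $s$--$t$ walk is strictly longer. This gives both that $d_{G'}(s,t)=2\Delta+2$ and that every shortest path contains that cycle segment.

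First we record the structure of $G'$ relevant to leaving $s$. The vertex $s$ has degree $p+1$: one edge goes to $c$, and $p$ edges start paths of length $\Delta$ into its gadget. Likewise, $t$ is entered either from $c'$ or from inside its own gadget.

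We then classify an $s$--$t$ path $P$ by how it leaves $s$ and how it reaches $t$.

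\textbf{Case 1: $P$ leaves $s$ into the $L_i$-column gadget.} Then it must travel the length-$\Delta$ path to some $w^{\text{top}}_v$ before meeting any vertex of degree $\ge 3$. From there it must still reach $t$.
\begin{itemize}
\item If $P$ arrives at $t$ through $t$'s own gadget, it also spends a final $\Delta$ edges on a path from some $w^{\text{left}}_u$. Since $\pi_v$ and $\pi_u$ cross, we need a lower bound on the distance from $w^{\text{top}}_v$ to $w^{\text{left}}_u$ of at least $3$ (both are separated from any intersection by an initial $\alpha$-subdivision). This gives total length $\ge 2\Delta+2\alpha>2\Delta+2$, since $\alpha\gg 5$.
\item Otherwise, $P$ leaves the gadget through some other endpoint, and that costs at least another $\Delta$. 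This gives total length at least $2\Delta$ plus the detour to $t$. Reaching $t$ from the gadget requires passing through another endpoint or connector at additional cost $\ge 2$, strictly, so the total exceeds $2\Delta+2$.
\end{itemize}

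\textbf{Case 2: $P$ leaves via $c$ but does not take the cycle segment to $c'$.} Then from $c$ it either goes the other way around the connector cycle, of length $\ge 6\Delta$, or into another gadget endpoint and then through a gadget. Crossing a full column gadget costs $2\Delta+\ell$, so this exceeds $2\Delta+2$. Entering a gadget and exiting at a crossing row gadget costs at least $2\Delta+2\alpha$, as in Case 1. The symmetric argument applies at the $t$ end.

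Assembling the cases: any path avoiding the cycle segment has length $>2\Delta+2$, so every shortest path uses it.

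The main obstacle is the bound in Case 1, that routes switching from a column gadget to a row gadget through intersection areas cannot reach length $2\Delta+2$. This requires precise accounting of the $\alpha$-subdivisions placed at the beginning and end of each $\pi_u$ and $\pi_v$. Since the paper only states $\alpha\gg 5$ qualitatively, I would make explicit that $w^{\text{top}}_v$ is at distance $\ge\alpha$ from every intersection vertex, and likewise for $w^{\text{left}}_u$. That yields the clean bound $2\Delta+2\alpha+2>2\Delta+2$.
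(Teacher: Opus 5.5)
Your proof is correct and takes the same approach as the paper. Both arguments compare the length-$(2\Delta+2)$ route through the connector cycle against routes through the gadgets. Such routes must switch from a vertical to a horizontal path and hence cost at least $2\Delta+2\alpha$, because of the $\alpha$-subdivisions at the ends of each $\pi_u$ and $\pi_v$. Your explicit case analysis (the degree-two structure at $s$, at $w^{\text{top}}_v$, and along the cycle paths) spells out what the paper leaves implicit.

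One small nit: in the second bullet of Case~1, the strict inequality comes from the positive-length traversal inside the gadgets between the two $\Delta$-paths, not from the final cost of $2$ to reach $t$.
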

\begin{proof} 
        Consider terminals $s$ and $t$ adjacent to $c^{\text{top}}$ and $c^{\text{left}}$ connectors.
        For any $i,j\in[k]$, traversing the part of the connectors cycle between their adjacent connectors gives the path
        \[
            l_i^{\text{top}}\to c^{\text{top}}\to c^{\text{left}}\to r_j^{\text{left}},
        \] 
        which has length $2\Delta+2$.
        Any other path between this pair that goes through gadgets has to change from a vertical to a horizontal direction,
        incurring a length at least $2\Delta+\alpha+\alpha$ 
        (as every horizontal and vertical path is subdivided by $\alpha$ at the beginning and the end)
        which is strictly larger.
        Similarly by symmetry:
        \[
            d_{G'}(l_i^{\text{top}},r_j^{\text{right}})=
            d_{G'}(l_i^{\text{bot}},r_j^{\text{left}})=
            d_{G'}(l_i^{\text{bot}},r_j^{\text{right}})=2\Delta+2.
        \]
\end{proof}

\begin{proposition}
    \label{prop:sw-w1-hard-terminals-same-gadget}
     For every two terminals $s=l_i^{\text{top}},\ t=l_i^{\text{bot}}$ or $s=r_i^{\text{left}},\ t=r_i^{\text{right}}$ where $i\in [k]$,
    every shortest $s$--$t$ path traverses one full internal path completely inside the corresponding column or row gadget.
\end{proposition}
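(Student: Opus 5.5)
By the symmetry of the construction under swapping rows and columns, I treat only the column case $s=l_i^{\text{top}}$, $t=l_i^{\text{bot}}$. The plan has three steps: an upper bound, a lower bound for every other route, and a rigidity argument forcing a shortest path onto a single internal path.

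\textbf{Upper bound.} For any $v\in L_i$, the route $l_i^{\text{top}}\to w_v^{\text{top}}\to w_v^{\text{bot}}\to l_i^{\text{bot}}$ along the two $\Delta$-paths and $\pi_v$ has length exactly $2\Delta+\ell$. This uses that all vertical paths have the same length $\ell$, which holds because each local intersection area and each $\alpha$-subdivision contributes the same number of edges. Hence $d_{G'}(s,t)\le 2\Delta+\ell$.

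\textbf{Lower bound.} I classify the remaining $s$--$t$ paths by how they leave $s$. The terminal $s$ has degree $p+1$: one edge goes to $c^{\text{top}}$, and the others start the $\Delta$-paths into the column gadget.
\begin{itemize}
\item If the path uses $c^{\text{top}}$, it must reach $c^{\text{bot}}$ or re-enter some gadget. Going around the connector cycle costs at least $4\Delta+2>2\Delta+\ell$, since $\ell<\Delta$.
\item Passing through another gadget from $c^{\text{top}}$ requires entering some $L_j$ column (with $j\neq i$) through $l_j^{\text{top}}$ and exiting through $l_j^{\text{bot}}$. That costs $2+(2\Delta+\ell)+2>2\Delta+\ell$.
\item A route through a row gadget needs two turns between vertical and horizontal directions, which is too long by the same argument as in Proposition~\ref{prop:sw-w1-hard-terminals-adjacent-consecutive-connectors}.
\end{itemize}
So a shortest path must enter the $L_i$-column gadget through some $w_v^{\text{top}}$ and leave through some $w_{v'}^{\text{bot}}$, paying $2\Delta$ on the two end segments. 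That leaves a budget of exactly $\ell$ for the walk from $w_v^{\text{top}}$ to $w_{v'}^{\text{bot}}$ inside the grid of intersecting paths.

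\textbf{Rigidity.} I expect this step to be the main obstacle: showing that the middle walk of length $\ell$ stays on $\pi_v$ and that $v'=v$. The argument I would try first is a potential argument on vertical progress. The vertical coordinate must advance by the full height of the column. Only vertical paths $\pi_{v''}$ advance it, at one unit per edge; horizontal paths advance it by at most the local detour of about two units. Any switch to a horizontal path $\pi_u$ and later onto a different vertical path $\pi_{v''}$ must traverse at least one full crossing block. Such a block costs roughly five horizontal edges plus reaching the next vertical path, with an $\alpha$-separated segment along the way. The gained vertical progress is at most $O(1)$, so with $\alpha\gg 5$ the detour is strictly longer. The same bound excludes leaving the column entirely.

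Consequently every shortest path coincides with $\pi_v$ between $w_v^{\text{top}}$ and $w_v^{\text{bot}}$, and $v'=v$. A cleaner way to organize this would be to assign grid coordinates to the embedding and note that vertical paths are monotone with no slack, while any horizontal excursion has positive slack. This gives $d_{G'}(s,t)=2\Delta+\ell$, and every shortest path traverses one full internal path $w_v^{\text{top}}\!-\!w_v^{\text{bot}}$ of the gadget.
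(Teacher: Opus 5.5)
Your upper bound and your exclusion of routes through $c^{\text{top}}$, the connector cycle, or other gadgets are fine and match the paper. The gap is in the rigidity step. In the column case, the critical competing route leaves $\pi_v$ inside the crossing area of some horizontal path $\pi_u$, follows $\pi_u$ to its crossing area on another vertical path $\pi_{v'}$, and continues down $\pi_{v'}$. The $\alpha$-subdivisions lie on the vertical paths between consecutive crossing areas; on $\pi_u$ they appear only before its first and after its last crossing. Moreover, the triple of $\pi_u$ sits at the same depth on every vertical path. So this route and the straight route along $\pi_v$ traverse the same number of $\alpha$-segments, and $\alpha\gg 5$ says nothing about this detour. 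Likewise, ``roughly five horizontal edges against $O(1)$ vertical progress'' does not by itself give a strict inequality. What has to be checked is an exact local count, which is what the paper's case analysis (i)/(ii) addresses. You need that the stretch of $\pi_u$ between the two vertical paths has strictly more edges than the vertical offset it can gain, which is at most two (the spread of one triple $z^1_{uv},z^2_{uv},z^3_{uv}$). Your closing remark is the right tool for this case: draw the vertical paths straight and of equal height, so every edge off them is pure slack. But it is not the argument you actually carry out.

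Second, the construction is not symmetric under exchanging rows and columns, so the row case does not follow from the column case. Vertical paths are straight and carry $\alpha$-segments between crossings. Horizontal paths have no $\alpha$ between crossings but pay a five-edge detour at each one. In particular, the slack argument above fails for rows, because horizontal paths are not slack-free. For the pair $r_i^{\text{left}}$, $r_i^{\text{right}}$, the relevant mechanism is exactly the one you misapplied. Switching from $\pi_u$ to $\pi_{u'}$ forces a walk along some $\pi_v$ through at least one full $\alpha$-segment, while it can save only a constant number (about five) of detour edges. This yields the paper's penalty of at least $\alpha-5$. So you need two arguments, one per gadget type, rather than one argument and a symmetry claim.
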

\begin{proof}   
        For each $i\in[k]$, traversing a full path inside the gadget gives
        \[
            l_i^{\text{top}}\to(\text{one vertical path in }L_i)\to l_i^{\text{bot}}, \quad \text{ and } \quad r_i^{\text{left}}\to(\text{one horizontal path in }R_i)\to r_i^{\text{right}}
        \]
        both of which have length $2\Delta+\ell$.
        We prove the proposition for terminals in $L_i$-column gadget; the case for $R_i$-row gadget is symmetric.
        
        Let the above path be denoted by $\pi$, and let $\pi^*$ be a shortest $l_i^{\text{top}}\to l_i^{\text{bot}}$ path in $G'$.

        Assume $\pi^*$ switches between different vertical paths instead of traversing one full path.
        By construction, each such switching either
        (i) advances two horizontal edges while making (in the best case) only one unit of vertical progress, or
        (ii) advances two horizontal edges while making (in the best case) only two units of vertical progress.
        Both cases increase the path length by at least $1$ edge, hence $|\pi^*| \ge |\pi| + 1$, which is a contradiction.
        So, switching between different vertical paths does not yield a shorter $l_i^{\text{top}}$--$l_i^{\text{bot}}$ path 
        than staying on a single vertical path.

        For horizontal paths, this is even worse, as the vertical subpath between them is of length $\alpha \gg 5$, 
        so switching between different horizontal paths increases the path length by at least $\alpha-5$ edges, which is a contradiction as $\alpha$ is sufficiently large.
        
        Moreover by construction, all vertical paths between $l_i^{\text{top}}$ and $l_i^{\text{bot}}$ 
        have the same length $2\Delta+\ell$.

        Finally, any path that does not go via any gadget goes to a connector vertex, then takes half of the connector cycle to reach the opposite connector vertex and then enters the 
        other terminal; this has length at least $4\Delta+2$, which is strictly larger than $2\Delta+\ell$ since $\ell<\Delta$.

        Therefore, the proposition holds: traversing one full path inside the $L_i$-column gadget (analogously $R_i$-row gadget) 
        is the only shortest path and has length $2\Delta+\ell$.
\end{proof}

\begin{proposition}
\label{prop:sw-w1-hard-terminals-different-gadgets}
For every two terminals $s=l_i^{\text{top}},\ t=l_j^{\text{bot}}$ or $s=r_i^{\text{left}},\ t=r_j^{\text{right}}$ where $i,j\in [k]$,
every shortest $s$--$t$ path traverses one full internal path completely inside one of the two corresponding column gadgets or row gadgets.

\end{proposition}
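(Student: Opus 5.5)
We need the shortest $l_i^{\text{top}}$--$l_j^{\text{bot}}$ paths for $i\neq j$ (the case $i=j$ is Proposition~\ref{prop:sw-w1-hard-terminals-same-gadget}, and the row case is symmetric). The plan is to exhibit a candidate path, compute its length, and then show that every path of a different shape is strictly longer.

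\textbf{Candidate paths.} First consider
\[
l_i^{\text{top}}\to c^{\text{top}}\to l_j^{\text{top}}\to(\text{full vertical path in }L_j)\to l_j^{\text{bot}} .
\]
It has length $2+2\Delta+\ell$. Symmetrically, the path that traverses a full vertical path of $L_i$ to reach $l_i^{\text{bot}}$ and then takes $l_i^{\text{bot}}\to c^{\text{bot}}\to l_j^{\text{bot}}$ also has length $2\Delta+\ell+2$. Both paths traverse a full internal path of one of the two column gadgets, as the statement requires. It remains to show $d_{G'}(l_i^{\text{top}},l_j^{\text{bot}})=2\Delta+\ell+2$ and that every shortest path has one of these two forms.

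\textbf{Lower bound.} Take a shortest path $\pi^*$ and classify it by how it crosses from the ``top side'' to the ``bottom side.'' The top side consists of $c^{\text{top}}$ and the top halves of the column gadgets; the bottom side is defined analogously. There are three possibilities.
\begin{enumerate}
\item $\pi^*$ never enters a column gadget interior. It must then go around the connector cycle from $c^{\text{top}}$ to $c^{\text{bot}}$, or cross a row gadget. Going around the cycle costs at least $4\Delta+2>2\Delta+\ell+2$, since $\ell<\Delta$. Crossing a row gadget forces a change between horizontal and vertical directions, which costs at least $2\alpha$ extra, exactly as in the argument of Proposition~\ref{prop:sw-w1-hard-terminals-adjacent-consecutive-connectors}. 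In every subcase the path is strictly longer.
\item $\pi^*$ descends through some column gadget $L_m$. To get from the top to the bottom of that gadget it needs at least the top-to-bottom distance through the gadget, which is $2\Delta+\ell$ by Proposition~\ref{prop:sw-w1-hard-terminals-same-gadget}. It also needs to reach the gadget: entering $L_m$ with $m\neq i$ costs at least $2$ via $c^{\text{top}}$, and exiting at $l_m^{\text{bot}}$ with $m\neq j$ costs at least $2$ via $c^{\text{bot}}$. So if $m\notin\{i,j\}$ the length is at least $2\Delta+\ell+4$. If $m\in\{i,j\}$, the length equals $2\Delta+\ell+2$ only if the descent is a full single vertical path, by the uniqueness part of Proposition~\ref{prop:sw-w1-hard-terminals-same-gadget}.
\item $\pi^*$ enters column gadgets and mixes with row gadgets. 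Then the $\alpha$-subdivisions at the start and end of each horizontal and vertical path add at least $2\alpha$, which again makes the path too long.
\end{enumerate}

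\textbf{Main obstacle.} The hardest step is making the second case rigorous. A path might leave the gadget midway, for example into a row gadget, and re-enter another column. I would handle this with a cut argument: any top-to-bottom route must have length at least $2\Delta$ in the $\Delta$-segments of some column, plus $\ell$ in the middle part. Every deviation through a horizontal path pays at least $\alpha-5>2$, because of the $\alpha$-spacing along vertical paths, the same accounting used in Proposition~\ref{prop:sw-w1-hard-terminals-same-gadget}. That surplus exceeds the $+2$ slack, so no such deviation can occur on a shortest path.
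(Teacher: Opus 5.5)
Your plan is the same as the paper's: two connector-assisted traversals give the upper bound $2\Delta+\ell+2$, and then the cycle route and all detours are ruled out. The gap is in the step you use for your ``main obstacle''.

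\textbf{Why the $\alpha$-accounting does not apply.} The $\alpha$-segments lie on the \emph{vertical} paths, between consecutive crossing triples and at the two ends. They penalize a traversal that moves between \emph{horizontal} paths, which is why Proposition~\ref{prop:sw-w1-hard-terminals-same-gadget} obtains $\alpha-5$ only in the row case. Now consider this path: from $l_i^{\text{top}}$ go down some $\pi_v$ with $v\in L_i$ to the triple of some $u$. Slide along $\pi_u$ to the triple of $u$ on some $\pi_{v'}$ with $v'\in L_j$, then continue down $\pi_{v'}$ to $l_j^{\text{bot}}$. This path has the following properties.
\begin{itemize}
\item It meets exactly one vertical path's worth of $\alpha$-segments.
\item It never touches the end buffers of $\pi_u$.
\item It uses no connector.
\end{itemize}
Its length is $2\Delta+\ell$, plus the local cost of the slide along $\pi_u$, minus the at most two triple edges it skips. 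For a switch inside one gadget, Proposition~\ref{prop:sw-w1-hard-terminals-same-gadget} bounds this surplus below only by $1$. The construction, as described, puts nothing extra on $\pi_u$ between the last vertical path of $L_i$ and the first vertical path of $L_{i+1}$. So for $j=i+1$ the slide is locally identical to such a switch.

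None of your three cases excludes this path. Case~2 assumes the path enters and leaves a single column gadget through its endpoints, which is what has to be proved. Case~3 relies on the end buffers of horizontal paths, which this path never meets.

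\textbf{What is missing.} You need a lower bound of at least $3$ on the surplus of any sideways move from a vertical path of one column gadget to a vertical path of another; your slack is exactly $2$, namely the two connector edges. With surplus exactly $2$, the sideways path ties with the connector routes, so ``every shortest path'' fails. With surplus $1$, you even get $d_{G'}(l_i^{\text{top}},l_{i+1}^{\text{bot}})<2\Delta+\ell+2$. The paper's own proof does not settle this either; it only asserts that detours through other gadgets cost at least $2\Delta+\ell+2$. To close the gap you have two options:
\begin{itemize}
\item derive the bound from an exact specification of the local crossing pattern; or
\item modify the construction, e.g.\ by also inserting an $\alpha$-segment on every horizontal path between consecutive column gadgets.
\end{itemize}
The second option lengthens all horizontal paths equally and changes no shared edges, so after padding to a common $\ell$ the count in Lemma~\ref{lem:sub-terminal-wh-core-correctness} is unaffected.
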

\begin{proof}
    In particular, for all $i\neq j$,
\begin{align*}
d_{G'}(l_i^{\text{top}},l_j^{\text{bot}})\leq 2\Delta+\ell+2
    \quad
    &\text{via}\quad l_i^{\text{top}}\to(\text{traverse }L_i\text{-gadget})\to l_i^{\text{bot}}\to c^{\text{bot}}\to l_j^{\text{bot}}, \\
    &\text{or }\quad l_i^{\text{top}}\to c^{\text{top}} \to l_j^{\text{top}}\to(\text{traverse }L_j\text{-gadget})\to l_j^{\text{bot}},
\end{align*}
and likewise,
\begin{align*}
    d_{G'}(r_i^{\text{left}},r_j^{\text{right}})\leq 2\Delta+\ell+2
    \quad
    &\text{via}\quad r_i^{\text{left}}\to(\text{traverse }R_i\text{-gadget})\to r_i^{\text{right}}\to c^{\text{right}}\to r_j^{\text{right}}\\
    &\text{or }\quad r_i^{\text{left}}\to c^{\text{left}}\to r_j^{\text{left}}\to(\text{traverse }R_j\text{-gadget})\to r_j^{\text{right}}.
\end{align*}
The same holds, by symmetry, for the pairs $(l_i^{\text{bot}},l_j^{\text{top}})$ and $(r_i^{\text{right}},r_j^{\text{left}})$.

The proof is analogous to \autoref{prop:sw-w1-hard-terminals-same-gadget}:
the displayed paths give the upper bounds, while any alternative path either goes around the connector cycle (cost $\ge 4\Delta+2$)
or detours through additional gadgets/buffers (cost $\ge 2\Delta+\ell+2$ plus at least one $\alpha$ segment for horizontal gadgets), 
hence cannot be shorter under the standing parameter regime.
\end{proof}

With all the above properties of shortest paths between terminal pairs in $S$, 
we now state two important properties of a subsetwise preserver for $I_{\swshort}$.
\begin{corollary}
    \label{corr:sw-w1-hard-terminals-obvious-edges}
    Let $H$ be a distance preserver for the $I_{\swshort}$ instance.
    Then, by~\autoref{obs:sw-w1-hard-terminals-adjacent-same-connector} and~\autoref{prop:sw-w1-hard-terminals-adjacent-consecutive-connectors},
    $H$ must contain:
    \begin{itemize}
        \item all edges on the connector--connector cycle, and
        \item all edges between connector vertices  and gadget endpoints $S=\{ r_i^{\text{left}}, r_i^{\text{right}}, l_i^{\text{top}}, l_i^{\text{bot}} \mid i\in [k]\}$.
    \end{itemize}
\end{corollary}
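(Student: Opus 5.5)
The corollary follows directly from the two shortest-path statements above: any distance preserver $H$ must contain at least one shortest $s$--$t$ path for every pair $s,t\in S$. So it suffices to show that every edge of the two listed kinds lies on \emph{every} shortest path of some terminal pair. Such an edge then lies in $H$ regardless of which shortest path $H$ realises.

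\textbf{Connector--endpoint edges.} Take an endpoint $s\in S$, say $s=l_i^{\text{top}}$ with connector $c^{\text{top}}$. Since $k\ge 2$ (we may assume this, or else pad the instance), there is another terminal $t=l_j^{\text{top}}$ with $j\neq i$ adjacent to the same connector. By Observation~\ref{obs:sw-w1-hard-terminals-adjacent-same-connector}, every shortest $s$--$t$ path uses the edges $sc^{\text{top}}$ and $tc^{\text{top}}$. Hence $H$ contains $sc^{\text{top}}$. The same argument applies to every terminal and its connector.

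\textbf{Connector-cycle edges.} Each of the four segments of length $2\Delta$ on the connector cycle joins two consecutive connectors, for example $c^{\text{top}}$ and $c^{\text{left}}$. Choose terminals $s=l_1^{\text{top}}$ and $t=r_1^{\text{left}}$, which are adjacent to these two connectors. By Proposition~\ref{prop:sw-w1-hard-terminals-adjacent-consecutive-connectors}, every shortest $s$--$t$ path traverses the part of the connector cycle between $c^{\text{top}}$ and $c^{\text{left}}$. That part is exactly this segment, because the other way around the cycle has length $6\Delta$. So every edge of the segment lies in $H$. The pairs $(c^{\text{left}},c^{\text{bot}})$, $(c^{\text{bot}},c^{\text{right}})$ and $(c^{\text{right}},c^{\text{top}})$ are handled symmetrically, using the four distance identities listed in the proof of that proposition.

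\textbf{Main obstacle.} The only delicate point is that ``traverses the part of the connector cycle'' has to mean this specific segment, with no detour through gadgets. That is precisely the content of Proposition~\ref{prop:sw-w1-hard-terminals-adjacent-consecutive-connectors}: any route through gadgets costs at least $2\Delta+2\alpha>2\Delta+2$. So nothing further needs to be checked.
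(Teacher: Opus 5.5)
Your proposal is correct and follows the paper's own reasoning. The paper justifies the corollary only by citing Observation~\ref{obs:sw-w1-hard-terminals-adjacent-same-connector} for the connector--endpoint edges and Proposition~\ref{prop:sw-w1-hard-terminals-adjacent-consecutive-connectors} for the cycle segments, which is exactly your split. One minor remark: you do not need the assumption $k\ge 2$. A shortest path of length $2\Delta+2$ that contains the $2\Delta$-segment between consecutive connectors is forced to use both connector--endpoint edges, so the proposition alone already covers those edges.
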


(Terminology.) A subsetwise distance preserver $H$ is \emph{minimal} if no proper subgraph of $H$ is still
 a subsetwise distance preserver for $(G',S)$.

\begin{restatable}{lemma}{swWonePathPerGadget}
\label{lem:sw-w1-hard-terminals-min-preserver-one-path-per-gadget}
Let $H\subseteq G'$ be a \emph{minimal} subsetwise distance preserver for $I_{\swshort}$.
Then for every $i\in[k]$, $H$ contains  
\begin{itemize}
    \item the edges of exactly one  $l_i^{\text{top}} \to l_i^{\text{bot}}$ internal path inside the $L_i$-column gadget, and
    \item the edges of exactly one $r_i^{\text{left}} \to r_i^{\text{right}}$ internal path inside the $R_i$-row gadget.
\end{itemize}
\end{restatable}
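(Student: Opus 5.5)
Existence and uniqueness are handled separately, with everything resting on the shortest-path characterizations in Propositions~\ref{prop:sw-w1-hard-terminals-same-gadget} and~\ref{prop:sw-w1-hard-terminals-different-gadgets}. Fix $i\in[k]$ and consider the $L_i$-column gadget; the $R_i$-row gadget is symmetric.

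\emph{At least one path.} Since $H$ preserves $d_{G'}(l_i^{\text{top}},l_i^{\text{bot}})=2\Delta+\ell$, $H$ contains some shortest $l_i^{\text{top}}$--$l_i^{\text{bot}}$ path. By Proposition~\ref{prop:sw-w1-hard-terminals-same-gadget}, every such path consists of the two length-$\Delta$ connecting paths and one full vertical path $\pi_v$ for some $v\in L_i$. Hence $H$ contains the edges of at least one internal path of the $L_i$-column gadget.

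\emph{At most one path.} By minimality and Observation~\ref{obs:all-edges-needed}, every edge of $H$ lies on a shortest path \emph{in $H$} between some terminal pair; otherwise deleting it keeps $H$ feasible and contradicts minimality. Suppose $H$ contains edges of two internal paths of the $L_i$-gadget, corresponding to $v\neq v'\in L_i$. Partition the terminal pairs whose shortest paths may enter the $L_i$-gadget:
\begin{itemize}
  \item pairs involving adjacent or consecutive connectors, whose shortest paths avoid gadgets by Observation~\ref{obs:sw-w1-hard-terminals-adjacent-same-connector} and Proposition~\ref{prop:sw-w1-hard-terminals-adjacent-consecutive-connectors};
  \item $(l_i^{\text{top}},l_i^{\text{bot}})$;
  \item $(l_i^{\text{top}},l_j^{\text{bot}})$ and $(l_j^{\text{top}},l_i^{\text{bot}})$;
  \item row pairs $(r_a^{\text{left}},r_b^{\text{right}})$, whose shortest paths stay inside row gadgets by Propositions~\ref{prop:sw-w1-hard-terminals-same-gadget} and~\ref{prop:sw-w1-hard-terminals-different-gadgets}.
\end{itemize}
Row paths only share intersection-area edges with column paths, so they never need the non-shared edges of a column path.

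Now fix one full path, say through $v$, inside $H$. Every column pair that needs an $L_i$-gadget path in $H$ can be served by this same path, since all internal paths have length exactly $2\Delta+\ell$. Specifically, $(l_i^{\text{top}},l_j^{\text{bot}})$ is served by the route through $v$, then $l_i^{\text{bot}}$, then $c^{\text{bot}}$, using the connector edges from Corollary~\ref{corr:sw-w1-hard-terminals-obvious-edges}. So the subgraph $H'$ obtained by deleting all edges of the $L_i$-gadget's other internal paths that are not shared with the chosen path, and not used by row paths kept in $H$, still preserves all terminal distances. If $H'\neq H$, this contradicts minimality.

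\emph{Main obstacle.} The delicate step is verifying that no edge of a second vertical path $\pi_{v'}$ is indispensable. Such an edge could lie on a shortest row path, namely an intersection segment $z^1_{uv'}z^2_{uv'}z^3_{uv'}$ shared with some $\pi_u$ in $H$. In that case those edges remain in $H$, but they do not form a \emph{full} internal path of the $L_i$-gadget. I would therefore read the lemma as "exactly one full internal path is present." To establish this, I would argue that the $\alpha$-segments and the two length-$\Delta$ ends of $\pi_{v'}$ are never on shortest row paths, by the switching argument in Proposition~\ref{prop:sw-w1-hard-terminals-same-gadget}. They are removable whenever another full column path is present, so they cannot survive in a minimal $H$.
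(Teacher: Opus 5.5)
Your proposal is correct and uses the same exchange argument as the paper. Existence follows from Proposition~\ref{prop:sw-w1-hard-terminals-same-gadget}; for uniqueness, you delete the redundant second internal path, route every column pair through the retained one (all internal paths have equal length and the same endpoints), and contradict minimality. Your version is in fact more careful than the paper's on one point. You keep the intersection-area edges used by horizontal paths in $H$, and you read the lemma as ``exactly one \emph{full} internal path''. The paper instead says to delete all edges belonging exclusively to $\pi_1$, which read literally would also remove shared intersection edges that row-pair distances depend on.
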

\begin{proof}
Fix $i\in[k]$. Since $H$ preserves terminal distances, it must preserve $d_{G'}(l_i^{\text{top}},l_i^{\text{bot}})=2\Delta+\ell$.
By \autoref{prop:sw-w1-hard-terminals-same-gadget}, every shortest $l_i^{\text{top}}\to l_i^{\text{bot}}$ path 
is obtained by traversing one full internal path inside $L_i$,
so $H$ must contain the edges of (at least) one such internal path. The same argument applies to $r_i^{\text{left}}$ and $r_i^{\text{right}}$.

For uniqueness, suppose $H$ contains edges of two distinct internal $l_i^{\text{top}}\to l_i^{\text{bot}}$ paths $\pi_1$ , $\pi_2$ inside $L_i$.
Delete from $H$ all edges that belong exclusively to one of these two paths, namely $\pi_1$, obtaining $H'$.
Distances between terminals remain preserved: the pair $(l_i^{\text{top}},l_i^{\text{bot}})$ is still connected by $\pi_2$
(by \autoref{prop:sw-w1-hard-terminals-same-gadget}).
Any other terminal pair whose shortest path used $\pi_1$, can still use $\pi_2$ without changing length, as all internal paths in $L_i$ 
have the same length and the same endpoints by construction.

Thus $H'$ is still a distance preserver, contradicting minimality of $H$.
Hence exactly one internal path is present in each column gadget. The row-gadget case is symmetric.
\end{proof}

\paragraph*{Correctness of the reduction.}
With all the relevant shortest paths set, we now prove the correctness of the reduction.
\begin{restatable}{lemma}{sub-terminal-wh-core-correctness}
    \label{lem:sub-terminal-wh-core-correctness}
    Let $I_{\bicqshort}=\bigl(G=(L\cup R, E), k\bigr)$ be an instance of \bicqshort\ with
    $L=\bigcup_{i=1}^k L_i$ and $R=\bigcup_{j=1}^k R_j$.
    Let $I_{\swshort}=(G', S, k')$ be the instance of \swshort\ constructed as above.    
    Then, $I_{\bicqshort}$ has a biclique with $2k$ vertices if and only if $I_{\swshort}$ has a 
    subsetwise distance preserver of size at most $k' = 8\Delta + 4k + 4k\Delta + 2k\ell - 2k^2$.
\end{restatable}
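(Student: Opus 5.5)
The plan is a counting argument. Corollary~\ref{corr:sw-w1-hard-terminals-obvious-edges} gives a fixed set of edges that every preserver must contain. Lemma~\ref{lem:sw-w1-hard-terminals-min-preserver-one-path-per-gadget} says that a minimal preserver additionally consists of exactly one internal path per gadget. The only freedom left is which path is chosen in each gadget, and the overlaps between chosen row paths and chosen column paths are exactly what encodes the edges of $G$.

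First we compute the mandatory part. The connector cycle contributes $4\cdot 2\Delta = 8\Delta$ edges, and the edges between connectors and endpoints contribute $4k$. Each chosen internal path (including its two $\Delta$-legs) has $2\Delta+\ell$ edges, and there are $2k$ of them, for a total of $4k\Delta+2k\ell$. These paths are edge-disjoint except where a chosen $\pi_u$ (with $u\in R_j$) crosses a chosen $\pi_v$ (with $v\in L_i$). By construction, the overlap is exactly the two edges $z^1_{uv}z^2_{uv}$ and $z^2_{uv}z^3_{uv}$ when $uv\in E(G)$, and no edges otherwise. We also check that the $\Delta$-legs and the $\alpha$-buffers of different gadgets never share edges. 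Hence the size of a minimal preserver determined by choices $(v_i)_i$, $(u_j)_j$ equals
\[ 8\Delta+4k+4k\Delta+2k\ell-2\,\bigl|\{(i,j): u_jv_i\in E(G)\}\bigr|. \]

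For the forward direction, let $X=\{v_1,\dots,v_k,u_1,\dots,u_k\}$ be a multicolored biclique. We take $H$ to be the mandatory edges together with $\pi_{v_i}$ and $\pi_{u_j}$ and their legs. All $k^2$ crossings are edges of $G$, so $|E(H)|=k'$. We then verify that $H$ preserves all terminal distances, using the case analysis of Observation~\ref{obs:sw-w1-hard-terminals-adjacent-same-connector} and Propositions~\ref{prop:sw-w1-hard-terminals-adjacent-consecutive-connectors}, \ref{prop:sw-w1-hard-terminals-same-gadget} and~\ref{prop:sw-w1-hard-terminals-different-gadgets}. In every case the witnessing shortest path uses only connector edges and one full internal path of a gadget, and every gadget has a chosen path in $H$. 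The remaining pairs, such as $l_i^{\text{top}}$ to $r_j^{\text{left}}$, were handled in Proposition~\ref{prop:sw-w1-hard-terminals-adjacent-consecutive-connectors} via the cycle. Pairs that sit at opposite connectors, such as $l_i^{\text{top}}$ and $l_j^{\text{bot}}$, go through a gadget, as covered above.

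For the backward direction, take a preserver of size at most $k'$ and pass to a minimal one $H$ by deleting edges; this keeps the size at most $k'$. By Lemma~\ref{lem:sw-w1-hard-terminals-min-preserver-one-path-per-gadget} and the count above, $|E(H)|=k'+2k^2-2m$, where $m$ is the number of pairs $(i,j)$ with $u_jv_i\in E(G)$. The bound $|E(H)|\le k'$ then forces $m\ge k^2$, so $m=k^2$ and the chosen vertices form a multicolored biclique.

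The main obstacle is making the edge count exact. We must argue carefully that a minimal $H$ contains no edges beyond the mandatory ones and the $2k$ chosen paths. Minimality together with Observation~\ref{obs:all-edges-needed} should handle this: any extra edge would have to lie on a shortest terminal path, and by the propositions every such path uses only the edges already counted. We must also confirm that two chosen paths overlap in at most the two designed edges. In particular, two column paths never share edges, and a row path meets each column path in only one local intersection area. I would read both facts off the layout in Figure~\ref{fig:sw-w1-hard-terminals-core-reduction-interaction} and the disjointness of the gadget legs.
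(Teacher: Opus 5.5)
Your proposal is correct and follows the paper's route: the forced edges of Corollary~\ref{corr:sw-w1-hard-terminals-obvious-edges}, one internal path per gadget via Lemma~\ref{lem:sw-w1-hard-terminals-min-preserver-one-path-per-gadget}, and the count showing that all $k^2$ row--column pairs must share two edges. The ``main obstacle'' you flag is not needed, because the backward direction only uses the lower bound $|E(H)| \ge 8\Delta+4k+2k(2\Delta+\ell)-2m$. (If you do want the exact count, derive it from minimality: the forced edges plus the chosen paths already form a preserver. Do not derive it from the propositions, since an extra edge could lie on an unchosen internal path, which is still a shortest terminal path in $G'$.)
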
 
\begin{proof}
    To see the forward direction, suppose that $I_{\bicqshort}$ has a solution $X \subseteq L \cup R$
    with $|X|=2k$ vertices forming a biclique $C$.
    We construct a subsetwise distance preserver $H \subseteq G'$ for $I_{\swshort}$ as follows.

    \smallskip
     For each $i \in [k]$, let $v^*_i \in L_i \cap X$ and $u^*_i \in R_i \cap X$ be the vertices selected in the biclique. 
    Also, let $\pi_{u^*_i}$ be the horizontal path connecting $r_i^{\text{left}}$ to $r_i^{\text{right}}$ via $w^{\text{left}}_{u^*_i}$ 
          and $w^{\text{right}}_{u^*_i}$ in the $R_i$-row gadget and
          let $\pi_{v^*_j}$ be the vertical path connecting $l_j^{\text{top}}$ to $l_j^{\text{bot}}$ via $w^{\text{top}}_{v^*_j}$ 
          and $w^{\text{bot}}_{v^*_j}$ in the $L_j$-column gadget.
    
    We include in $H$ all the edges in $\pi_{u^*_i}$ and $\pi_{v^*_j}$, for all $i,j \in [k]$.
    Moreover, we take into $H$ all edges on the connector--connector cycle and the edges connecting the connector vertices to the gadget endpoints.
    
        To see that $H$ preserves all distances between terminal pairs in $S$,
        note that by~\autoref{obs:sw-w1-hard-terminals-adjacent-same-connector} and
        \autoref{prop:sw-w1-hard-terminals-adjacent-consecutive-connectors},
        the connector cycle together with the connector--endpoint edges, say $cx$, where $c$ is a connector and $x\in S$,
        preserves all shortest paths between terminals on the same side
        or on adjacent sides.
        By~\autoref{prop:sw-w1-hard-terminals-same-gadget} and \autoref{prop:sw-w1-hard-terminals-different-gadgets},
        shortest paths between terminals on opposite sides are realized via a full traversal of a single row/column gadget. Let $F$ be the induced subgraph on connector vertices and terminal set $S$, that is, $F$ exactly contains the connector cycle and the edges between connectors and terminals (Refer \cref{fig:sw-w1-hard-terminals-core-reduction-overview}). 
        Since $H$ contains the subgraph $F$ and for $i,j \in [k]$, it has $\pi_{u^*_i}$ and $\pi_{v^*_j}$ for traversal of
        row and column gadgets $R_i$ and $L_j$ , all terminal distances are preserved in $H$.

        By construction, the connector--connector cycle contributes $4\cdot 2\Delta = 8\Delta$ edges in total,
        and the connections between connectors and gadget endpoints contribute $4k$ edges in total to $H$.
        For every $i,j \in [k]$, since $u^*_i v^*_j \in E$ (as they form a biclique),
        $\pi_{u^*_i}$ shares two edges with the vertical path $\pi_{v^*_j}$ in each $L_j$-column gadget, with $j \in [k]$.
        
        Let $E^{\cap}$ be the set of all shared edges between the paths $\pi_{u^*_i}$ and $\pi_{v^*_j}$,
        for all $i,j \in [k]$.
        Then $|E^{\cap}| = 2k^2$, as each pair $\{i, j\}$ contributes two shared edges.

        Moreover, each $\pi_{u^*_i}$ (similarly $\pi_{v^*_j}$) has $2\Delta+\ell$ edges.         
        When summing over the lengths of all $\pi_{u^*_i}$ and $\pi_{v^*_j}$ paths, $E^{\cap}$ is counted twice.

        Thus, the total number of edges in $H$ is
        \[
            |E(H)|
                = 8\Delta + 4k + 2k(2\Delta+\ell) - |E^{\cap}|
                = 8\Delta + 4k + 4k\Delta + 2k\ell - 2k^2.
        \]
    For the backward direction, suppose that $I_{\swshort}$ has a subsetwise distance preserver $H \subseteq G'$ with at most 
    $k' = 8\Delta + 4k + 4k\Delta + 2k\ell - 2k^2$ edges.
    Without loss of generality, we can assume that $H$ is inclusion-wise minimal, as otherwise we could remove redundant edges.
    We show that $I_{\bicqshort}$ has a solution with $2k$ vertices forming a biclique.
    \smallskip

    By~\autoref{corr:sw-w1-hard-terminals-obvious-edges}, $H$ must include all edges on the connector--connector cycle and 
    the edges connecting the connector vertices to the gadget endpoints (or terminals), which is exactly the subgraph $F$ defined above. 
    This gives already $8\Delta + 4k$ edges in $H$.

    Thus, $H$ can include at most $4k\Delta + 2k\ell - 2k^2$ additional edges beyond the edges contributed by subgraph $F$.
    By~\autoref{lem:sw-w1-hard-terminals-min-preserver-one-path-per-gadget}, for each $i,j \in [k]$,
    $H$ contains exactly one horizontal path $\pi_{u^*_i}$ in the $R_i$-row gadget and exactly one vertical path $\pi_{v^*_j}$ 
    in the $L_j$-column gadget connecting the corresponding gadget endpoints.
    Each such path contributes at least $2\Delta+\ell$ edges, so the total contribution of these paths is at least $4k\Delta + 2k\ell$ edges.
    The $- 2k^2$ term in the budget $k'$ indicates that there must be at least $2k^2$ shared edges between these horizontal and vertical paths.
    There are $k$ horizontal and $k$ vertical paths in $H$, giving $k^2$ pairs in total, and each of these pairs can share at most 
    two edges (by the construction).
    Thus, in $H$, every horizontal path $\pi_{u^*_i}$ must share two edges with every vertical path $\pi_{v^*_j}$, for $i \in [k]$.
    By the construction of $G'$, this is possible only if for every $i,j \in [k]$, the edge $u^*_i v^*_j$ exists in $G$.
    Thus, the set $X = \{u^*_i, v^*_j \mid i,j \in [k]\}$ forms a biclique in $G$ with $2k$ vertices.
\end{proof}

\subsubsection{Embedding into the grid}
In this subsection, we show how to embed the planar reduction explained in subsection~\ref{sec:core-reduction} into a subgraph of a grid while
preserving the necessary distance properties.
This embedding is crucial for transferring hardness results from general planar graphs
to grid-like structures.

We work with the (integer) grid graph $\Gamma$ whose vertex set is $\mathbb{Z}^2$ and where two vertices are adjacent 
iff their $\ell_1$-distance is $1$.
Our embedding produces a finite subgraph $G^{\square}\subseteq \Gamma$. Note that a subgraph of the grid can always be 2-subdivided to become an induced subgraph of the grid twice the size. This modification multiplies all distances by two, and does not change the collection of shortest paths between any two original vertices. Hence it is sufficient to show a reduction that constructs a (not necessarily induced) subgraph of the grid.

\medskip
Remember that initially, in the \bicqshort problem, we have a bipartite graph $G=(R\cup L, E)$ 
with $R= \bigcup_{i=1}^k R_i$ and $L= \bigcup_{i=1}^k L_i$ 
such that $|L_i|=|R_j|=p$ for every $i,j \in [k]$.
Throughout, we assume both $p$ and $k$ are powers of two as otherwise we can add dummy isolated vertices to the smaller sets as needed.

Let $G'$ be the planar graph constructed in the core reduction (subsection~\ref{sec:core-reduction}), and let $G^{\square}$ denote the grid-embedded graph 
that we construct below.

First we briefly recall the main features of the construction of $G'$ that are relevant for the embedding step.

\medskip
The graph $G'$ contains $k$ row gadgets and $k$ column gadgets, hence $2k$ gadgets in total.
The \emph{gadget endpoints} in $G'$ are precisely the terminal vertices
\[
    \{\,l_i^{\text{top}},\, l_i^{\text{bot}},\, r_i^{\text{left}},\, r_i^{\text{right}} \mid i\in[k]\,\},
\]
i.e., the two endpoints of each $L_i$-column gadget and the two endpoints of each $R_i$-row gadget.

For every $u\in R$ the construction of $G'$ introduces two non-terminal vertices $w_u^{\text{left}}$ and $w_u^{\text{right}}$,
and for every $v\in L$ it introduces non-terminals $w_v^{\text{top}}$ and $w_v^{\text{bot}}$; 
we refer to these vertices collectively as the \emph{$w$-vertices}.

\smallskip
Within each $R_i$-row gadget, the relevant traversals between $r_i^{\text{left}}$ and $r_i^{\text{right}}$ are realized by 
(planar) paths that are intended to behave horizontally,
whereas within each $L_i$-column gadget, the traversals between $l_i^{\text{top}}$ and $l_i^{\text{bot}}$ are realized by (planar) 
paths intended to behave vertically.

\medskip
More importantly recall from the core reduction that terminal-to--$w$ distances in $G'$ are uniform within each gadget: for every $i\in[k]$,
every $u\in R_i$, and every $v\in L_i$,
\[
    d_{G'}\!\bigl(r_i^{\text{left}}, w_u^{\text{left}}\bigr)=d_{G'}\!\bigl(r_i^{\text{right}}, w_u^{\text{right}}\bigr)=\Delta
    \qquad\text{and}\qquad
    d_{G'}\!\bigl(l_i^{\text{top}}, w_v^{\text{top}}\bigr)=d_{G'}\!\bigl(l_i^{\text{bot}}, w_v^{\text{bot}}\bigr)=\Delta.
\]
This uniformity is crucial: it ensures that, when a shortest terminal-to-terminal path traverses a gadget, 
the choice of which horizontal/vertical path to use inside the gadget does not affect the resulting terminal distances.
In a grid graph, however, we cannot realize a length-$\Delta$ adjacency by a single long edge.
Accordingly, in the construction of $G^{\square}$, to achieve this uniformity, we realize only the endpoint--to--$w$ connection paths
by grid-embedded binary trees of equal root-to-leaf lengths; we do \emph{not} embed an entire row/column gadget as a binary tree.

\smallskip
Formally, for each gadget endpoint $t\in S$ and its associated set of $w$-vertices $W_t$ (on the same side), 
we replace the terminal--to--$w$ connections of $G'$ by a rooted binary tree in the grid, with root $t$ and leaves exactly $W_t$.
This implies that all $w$-vertices associated with the same endpoint are equidistant from it.
The only side-specific differences below are the placement (spacing) of the leaves and the resulting height/width bounds of the embedding region,
which we treat separately for the row and column gadgets.

\smallskip
To avoid ambiguity between the planar construction and the grid embedding, we adopt the following convention throughout this section:
unprimed symbols (e.g., $\pi_u,\pi_v$) refer to paths in $G'$, and their grid-embedded counterparts in $G^{\square}$ are denoted by the same symbol with superscript~$\square$
(e.g., $\pi_u^{\square},\pi_v^{\square}$).
Moreover, we set $\Delta' \gg 6kp$ to be a length parameter throughout this section.

\paragraph*{Terminal--to--$w$ embedding.}
\begin{description}[leftmargin=2em]
        \item[Row gadgets.]
        Let $W^{\text{left}}=\bigl\{w^{\text{left}}_{u} \in G' \mid u \in R\bigr\}$, and 
        $W^{\text{right}}=\bigl\{w^{\text{right}}_{u} \in G' \mid u \in R\bigr\}$
        be the set of all $w^{\text{left}}$-vertices and $w^{\text{right}}$-vertices on the left and right ends of the row gadgets in $G'$.

        \smallskip
        We place all vertices of $W^{\text{left}}$ on the grid column with first coordinate $0$, i.e., at points of the form $(0,y)$.
        Ordering $W^{\text{left}}$ from top to bottom as $\bigl(w^{\text{left}}_{u_1},\dots,w^{\text{left}}_{u_{kp}}\bigr)$, 
        we assign them coordinates $(0,y_1),\dots,(0,y_{kp})$ with $y_1=6kp$
        such that $y_1>y_2>\dots>y_{kp}$ and $y_i-y_{i+1}=6$ for every $i\in[kp-1]$, equivalently
        \[
            d_{\Gamma}\!\bigl((0,y_i),(0,y_{i+1})\bigr)=6.
        \]
        Let $\beta_1=\Delta'+4kp$; analogously, we place all vertices of $W^{\text{right}}$ on the grid column with first coordinate $\beta_1$,
        i.e., at points of the form $(\beta_1,y)$.
        Ordering $W^{\text{right}}$ from top to bottom as $\bigl(w^{\text{right}}_{u_1},\dots,w^{\text{right}}_{u_{kp}}\bigr)$,
        we assign them coordinates $(\beta_1,y_1),\dots,(\beta_1,y_{kp})$.
        Note that the $y$-coordinates of the vertices in $W^{\text{right}}$ are the same as those of the vertices in $W^{\text{left}}$, i.e., 
        for $i \in [kp]$, $w^{\text{left}}_{u_i}$  and $w^{\text{right}}_{u_i}$ are placed at the same grid rows.

        \smallskip
        This way, since $p$ is a power of two and, in the above top-to-bottom order, 
        the vertices corresponding to any fixed $R_i$-row gadget form a contiguous block of exactly $p$ consecutive vertices 
        in $W^{\text{left}}$ (and similarly in $W^{\text{right}}$), we may treat each such block independently.
        In particular, for every $i\in[k]$ we embed the \emph{connections} between $r_i^{\text{left}}$ and the $p$ vertices of the $i$-th block of 
        $W^{\text{left}}$ with a rooted binary tree contained in a grid region of height $\lceil \log(6p) \rceil$ and width $6(p-1)$,
        whose root is $r_i^{\text{left}}$ and whose leaves are exactly these $p$ $w^{\text{left}}$-vertices; 
        analogously, we embed the \emph{connections} between $r_i^{\text{right}}$ and the $i$-th block of $W^{\text{right}}$ 
        with a rooted binary tree
        with root $r_i^{\text{right}}$ and the same height and width.
        
        Figure~\ref{fig:grid-embedding-binary-tree-row} illustrates an example of the above embedding for two terminals $r_i^{\text{left}}$ 
        and $r_i^{\text{right}}$ each with $8$ $w$-vertices.

        Consequently, for every $i\in[k]$ and all $u,u'\in R_i$,
        \begin{align}
                \label{eq:grid-row-gadget-tree-distance}
                &d_{G^{\square}}\!\bigl(r_i^{\text{left}}, w_u^{\text{left}}\bigr)=d_{G^{\square}}\!\bigl(r_i^{\text{left}}, w_{u'}^{\text{left}}\bigr) = 6(p-1)/2+ \lceil \log(6p) \rceil
            \qquad\text{and}\qquad  \notag\\
            &d_{G^{\square}}\!\bigl(r_i^{\text{right}}, w_u^{\text{right}}\bigr)=d_{G^{\square}}\!\bigl(r_i^{\text{right}}, w_{u'}^{\text{right}}\bigr) = 6(p-1)/2+ \lceil \log(6p)\rceil.
        \end{align}

        Note that since all the $r_i^{\text{left}}$-vertices (similarly, all the $r_i^{\text{right}}$-vertices) are roots of binary trees of the same height
        and their corresponding $w$-vertices are placed at the same grid column, we have that all of the $r_i^{\text{left}}$-vertices 
        (similarly, all of the $r_i^{\text{right}}$-vertices) are placed at the same grid column.

        We connect all the $r_i^{\text{left}}$-vertices (similarly, all the $r_i^{\text{right}}$-vertices) with a vertical straight grid path.

    \item [Column gadgets.]
        Let $W^{\text{top}}=\bigl\{w^{\text{top}}_{v} \in G' \mid v \in L\bigr\}$ and
        $W^{\text{bot}}=\bigl\{w^{\text{bot}}_{v} \in G' \mid v \in L\bigr\}$
        be the sets of all $w^{\text{top}}$-vertices and $w^{\text{bot}}$-vertices on the top and bottom ends of the column gadgets in $G'$.

        Let $x_1=1+\Delta'/2$, and $\beta_2=3+y_1+\Delta'/2$.
        We place all vertices of $W^{\text{top}}$ on the grid row with second coordinate $\beta_2$, i.e., at points of the form $(x,\beta_2)$.
        Ordering $W^{\text{top}}$ from left to right as $\bigl(w^{\text{top}}_{v_1},\dots,w^{\text{top}}_{v_{kp}}\bigr)$,
        we assign them coordinates $(x_1,\beta_2),\dots,(x_{kp},\beta_2)$ such that $x_1<x_2<\dots<x_{kp}$ and
        $x_{i+1}-x_i=4$ for every $i\in[kp-1]$; equivalently, 
        \[
            d_{\Gamma}\!\bigl((x_i,\beta_2),(x_{i+1},\beta_2)\bigr)=4
        \]

        Let $\beta_3= \beta_2-\Delta'-6kp$; analogously, we place all vertices of $W^{\text{bot}}$ on the grid row with second coordinate $\beta_3$, i.e., at points of the form $(x,\beta_3)$.
        Ordering $W^{\text{bot}}$ from left to right as $\bigl(w^{\text{bot}}_{v_1},\dots,w^{\text{bot}}_{v_{kp}}\bigr)$,
        we assign them coordinates $(x_1,\beta_3),\dots,(x_{kp},\beta_3)$ such that $x_1<x_2<\dots<x_{kp}$.
        Note that the $x$-coordinates of the vertices in $W^{\text{bot}}$ are the same as those of the vertices in $W^{\text{top}}$, i.e.,
        for $i \in [kp]$, $w^{\text{top}}_{v_i}$ and $w^{\text{bot}}_{v_i}$ are placed on the same grid column.
        
        With the same argument as for the row gadgets, the vertices corresponding to any fixed $L_i$-column gadget form a contiguous block
        of exactly $p$ consecutive vertices in $W^{\text{top}}$ (and similarly in $W^{\text{bot}}$).
        For every $i\in[k]$ we embed the \emph{connections} between $l_i^{\text{top}}$ and the $p$ vertices of the $i$-th block of 
        $W^{\text{top}}$ with a rooted binary tree contained in a grid region of height $\lceil \log(4p) \rceil$ and width $4(p-1)$,
        whose root is $l_i^{\text{top}}$ and whose leaves are exactly these $p$ $w^{\text{top}}$-vertices; 
        analogously, we embed the \emph{connections} between $l_i^{\text{bot}}$ and the $i$-th block of $W^{\text{bot}}$ with a rooted binary tree
        with root $l_i^{\text{bot}}$ and the same height and width.
        Figure~\ref{fig:grid-embedding-binary-tree-column} illustrates an example of the above embedding for two terminals 
        $l_i^{\text{top}}$ and $l_i^{\text{bot}}$ each with $8$ $w$-vertices.

        Consequently, for every $i\in[k]$ and all $v,v'\in L_i$,
        \begin{align*}
            &d_{G^{\square}}\!\bigl(l_i^{\text{top}}, w_v^{\text{top}}\bigr)=d_{G^{\square}}\!\bigl(l_i^{\text{top}}, w_{v'}^{\text{top}}\bigr) = 4(p-1)/2+ \lceil \log(4p) \rceil
            \qquad\text{and}\qquad \\
            &d_{G^{\square}}\!\bigl(l_i^{\text{bot}}, w_v^{\text{bot}}\bigr)=d_{G^{\square}}\!\bigl(l_i^{\text{bot}}, w_{v'}^{\text{bot}}\bigr) = 4(p-1)/2+ \lceil \log(4p)\rceil.
        \end{align*}
        To make the terminal--to--$w$ distances equal to those in the row gadgets (\cref{eq:grid-row-gadget-tree-distance}), 
        for the binary trees corresponding to a column gadget, we attach each root by a straight vertical grid path of appropriate length.
        This way, all $l_i^{\text{top}}$-vertices (similarly, all $l_i^{\text{bot}}$-vertices) are placed at the same grid row.

        We connect all the $l_i^{\text{top}}$-vertices (similarly, all the $l_i^{\text{bot}}$-vertices) with a horizontal straight grid path.

        With the same reasoning as for the "row gadget" case, all $r_i^{\text{top}}$-vertices 
        (similarly, all $r_i^{\text{bot}}$-vertices) are placed at the same grid row.

\end{description}

\begingroup
\begin{figure}[!htb]
    \centering
    \begin{subfigure}[t]{0.6\textwidth}
        \centering
        \includegraphics[width=\linewidth]{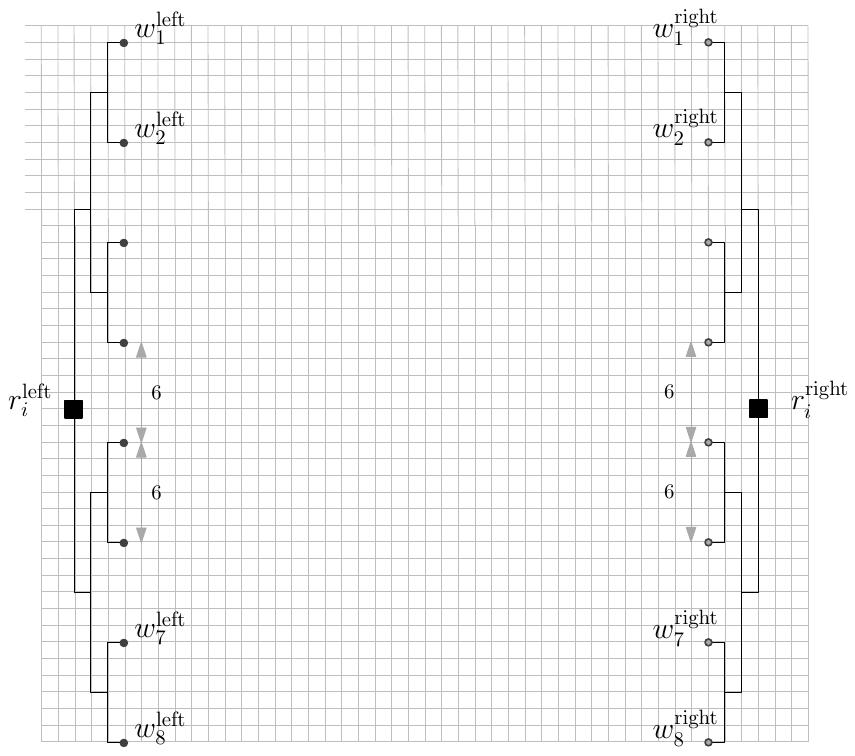}
        \caption{
            Example of the binary-tree embedding for the connections between $r_i^{\text{left}}$ and $r_i^{\text{right}}$
            and their corresponding $w$-vertices. 
        }
        \label{fig:grid-embedding-binary-tree-row}
    \end{subfigure}
    \hfill
    \begin{subfigure}[t]{0.38\textwidth}
        \centering
        \includegraphics[width=\linewidth]{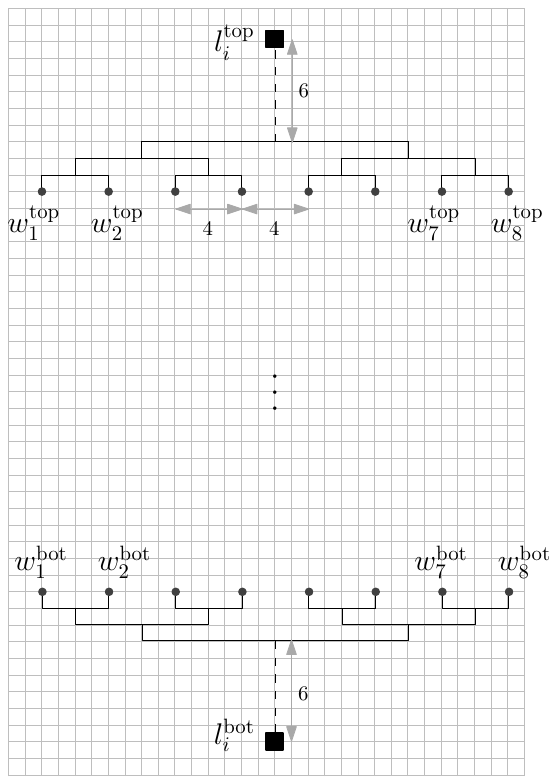}
        \caption{
            Example of the binary-tree embedding for the connections between $l_i^{\text{top}}$ and $l_i^{\text{bot}}$
            and their corresponding $w$-vertices.
        }
        \label{fig:grid-embedding-binary-tree-column}
    \end{subfigure}
    \caption{Binary-tree embeddings used to realize the terminal--to--$w$ connections in the grid embedding.
    In this example $|L_i|=|R_i|=8$. 
    In the horizontal trees, all leaves have the same distance of $23$ to their corresponding terminals $r_i^{\text{left}}$ and $r_i^{\text{right}}$.
    For root--to--leaf distances in the vertical trees to be also $23$, straight vertical paths of length $6$ are attached to connect the roots 
    to the remainder of the trees. 
    This equalizes the common terminal--to--$w$ distance.}
\end{figure}\endgroup

\medskip
    With the terminal--to--$w$ connections embedded as above, finally, we connect all the following pairs of gadget endpoints
    by simple grid paths of length $1.5 \Delta'$ to complete the main frame of $G^{\square}$:
    \[
       r_1^{\text{left}} \to l_1^{\text{top}}, 
       \quad l_k^{\text{top}} \to r_1^{\text{right}}, 
       \quad r_k^{\text{right}} \to l_k^{\text{bot}}, 
       \quad l_1^{\text{bot}} \to r_k^{\text{left}}
    \]
    
    Note that, by the binary tree embeddings and~\cref{eq:grid-row-gadget-tree-distance}, the grid distance between any of the above pairs is 
    at most $\Delta'+6kp+2\lceil \log(6kp) \rceil$.
    By assumption $\Delta' \gg 6kp$, so we can connect the above pairs by grid paths of the intended length $1.5 \Delta'$.
    Refer to~\cref{fig:grid-embedding-overview} for an overview of the whole resulting grid embedding.
\begingroup
\begin{figure}
    \centering
    \includegraphics[scale=0.6]{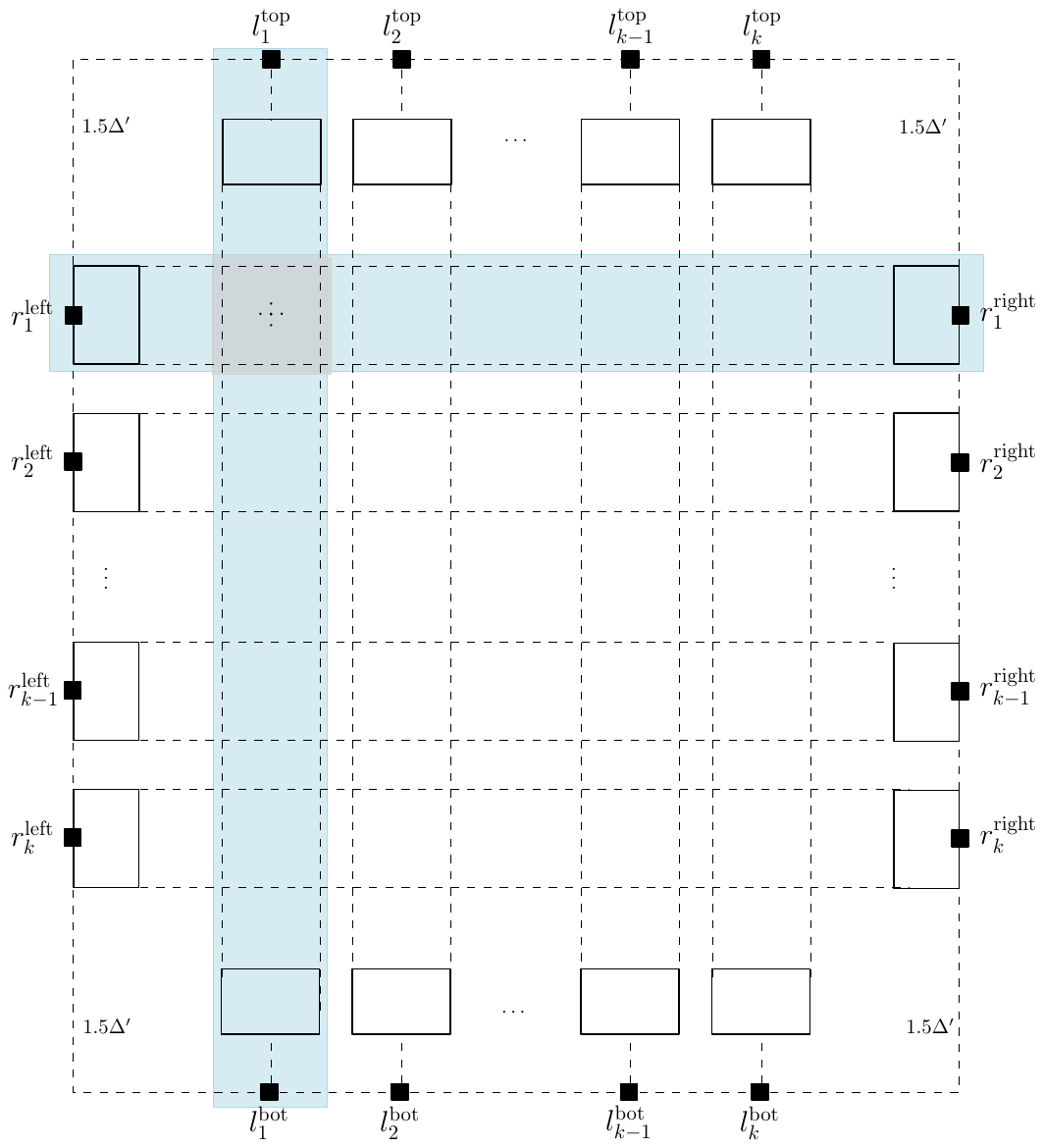}
    \caption{
        The general overview of the grid embedding of $G'$.
        Terminal vertices are represented as squares, while the non-terminal vertices are represented as circles.
        Dashed lines represent paths of length greater than one.
        The horizontal $R_k$-row and the vertical $L_k$-column gadgets are highlighted in light blue.
        Each rectangle represents a binary tree rooted at a terminal with leaves being the $w$-vertices, on the corresponding side:
        left, top, right and bottom.
        The top and bottom binary trees are connected to their corresponding root/terminal via paths of 
        appropriate lengths to equalize distances.
        All the $r_i^{\text{left}}$-vertices (similarly, all the $r_i^{\text{right}}$-vertices) are connected via a vertical straight grid path.
        All the $l_i^{\text{top}}$-vertices (similarly, all the $l_i^{\text{bot}}$-vertices) are connected via a horizontal straight grid path.
        All pairs $r_1^{\text{left}} \to l_1^{\text{top}}, l_k^{\text{top}} \to r_1^{\text{right}}, r_k^{\text{right}} \to l_k^{\text{bot}} $, and
        $ l_1^{\text{bot}} \to r_k^{\text{left}}$ are connected by grid paths of length $1.5 \Delta'$.
        }
    \label{fig:grid-embedding-overview}
\end{figure}
\endgroup

\medskip

With the main frame of the gadgets in place, we now embed the internal structure of each row/column gadget inside the corresponding grid region.
Before doing so, we first introduce a simple tool for realizing the length-$\Delta'$ paths in $G'$ by grid paths of the same length.
For convenience, assume $\Delta'$ is a multiple of $4$,
otherwise we increase it by at most $3$, which does not affect the asymptotic bounds and preserves the reduction.

\smallskip
\begin{definition}[rightward horizontal zig-zag path]
    A \emph{rightward horizontal zig-zag path} of length $\Delta'$ from a grid vertex $a$ is any simple grid path obtained by repeating the 
        pattern $(\rightarrow,\uparrow,\rightarrow,\downarrow)$, one edge in each direction, exactly $\Delta'/4$ times.
\end{definition}

This path has total length $\Delta'$ and its other endpoint lies exactly $\Delta'/2$ units to the right of $a$ (with the same $y$-coordinate),
i.e., their $\ell_1$-distance in $\Gamma$ is $\Delta'/2$.
Analogously, a leftward horizontal zig-zag path repeats $(\leftarrow,\uparrow,\leftarrow,\downarrow)$.
Similarly, a \emph{downward vertical zig-zag path} of length $\Delta'$ (resp.\ upward) is obtained by repeating $(\downarrow,\rightarrow,\downarrow,\leftarrow)$
(resp.\ $(\uparrow,\rightarrow,\uparrow,\leftarrow)$) exactly $\Delta'/4$ times; it has total length $\Delta'$ and its endpoint lies exactly $\Delta'/2$ units below (resp.\ above) the start.

\medskip
To embed the vertical paths in $G^{\square}$, for each $v\in L$, 
we first attach to $w^{\text{top}}_v$ a downward vertical zig-zag path of length $\Delta'$ and denote its endpoint by $t^{\text{top}}_v$.
Symmetrically, we attach to $w^{\text{bot}}_v$ an upward vertical zig-zag path of length $\Delta'$ and denote its endpoint by $t^{\text{bot}}_v$.
Thus,
\[
d_{G^{\square}}\!\bigl(w^{\text{top}}_v,t^{\text{top}}_v\bigr)=d_{G^{\square}}\!\bigl(w^{\text{bot}}_v,t^{\text{bot}}_v\bigr)=\Delta'.
\]
Then we connect $t^{\text{top}}_v$ and $t^{\text{bot}}_v$ by a vertical simple grid path $\pi_v^{\square}$ of length $8kp$ that repeats the pattern
 $(\downarrow,\downarrow,\downarrow,\downarrow, \leftarrow,\downarrow,\rightarrow,\downarrow)$ of length $8$ for each $u \in R$, which is
 exactly $kp$ times.

Note that the above pattern has length $8$, but advances only $6$ vertical units on the grid.
By construction, for every $v\in L$, its corresponding $w$-vertices, $w^{\text{top}}_v$ and $w^{\text{bot}}_v$, are placed on 
the same grid column, with vertical distance of $\Delta'+6kp$, which accommodates the zig-zag paths and the vertical path $\pi_v^{\square}$.

\medskip
To embed the horizontal paths in $G^{\square}$, for each $u\in R$, 
we first attach to $w^{\text{left}}_u$ a rightward horizontal zig-zag path of length $\Delta'$ and denote its endpoint by $t^{\text{left}}_u$.
Symmetrically, we attach to $w^{\text{right}}_u$ a leftward horizontal zig-zag path of length $\Delta'$ and denote its endpoint by $t^{\text{right}}_u$.
Thus,
\[
d_{G^{\square}}\!\bigl(w^{\text{left}}_u,t^{\text{left}}_u\bigr)=d_{G^{\square}}\!\bigl(w^{\text{right}}_u,t^{\text{right}}_u\bigr)=\Delta'.
\]
Then we connect $t^{\text{left}}_u$ and $t^{\text{right}}_u$ by a horizontal simple grid path $\pi_u^{\square}$ of length $8kp$ as follows.
Let $\{v_1, \dotso, v_{kp}\}$ be the vertices of $L$ ordered from left to right according to the order of their corresponding 
$w^{\text{top}}$-vertices in $W^{\text{top}}$ on the grid.
Fix $u \in R$. For each $i\in[kp]$,
 
\begin{itemize}
    \item If $uv_i \in E(G)$, we extend $\pi_u^{\square}$ by a path of length $8$ that goes one edge 
    for each of the following directions in order: (refer to~\cref{fig:grid-embedding-edge-intersection})
     \[
        (\rightarrow,\downarrow,\rightarrow,\uparrow, \uparrow,\rightarrow,\downarrow,\rightarrow)
    \]
    \item If $uv_i \notin E(G)$, we extend $\pi_u^{\square}$ by a path of length $8$ 
    that goes one edge for each of the following directions in order: (refer to~\cref{fig:grid-embedding-non-edge-intersection})
     \[
        (\rightarrow,\downarrow,\rightarrow,\uparrow,\rightarrow,\downarrow,\rightarrow, \uparrow)
    \]
\end{itemize}
        Note that each such pattern has total length $8$ but yields a net horizontal progress of $4$.
        Since $\pi_u^{\square}$ concatenates one such pattern for each of the $kp$ vertices of $L$, the total “core” horizontal displacement between
        $t_u^{\text{left}}$ and $t_u^{\text{right}}$ is $4kp$.
        By construction, $w_u^{\text{left}}$ and $w_u^{\text{right}}$ lie on the same grid row at horizontal distance $\Delta'+4kp$,
         which implies that their horizontal distance accommodates both zig-zag paths and the core path $\pi_u^{\square}$.
    \begingroup
    \begin{figure}[!htb]
        \centering
        \begin{subfigure}[b]{.49\textwidth}
            \centering
            \includegraphics[width=0.5\linewidth]{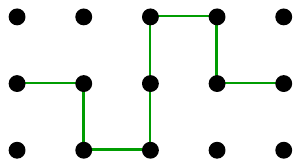}
            \caption{Path of length $8$ attached to $\pi_u$ when $uv_i\in E(G)$.}
            \label{fig:grid-embedding-edge-intersection}
        \end{subfigure}
        \hfill
        \begin{subfigure}[b]{0.49\textwidth}
            \centering
            \includegraphics[width=0.5\linewidth]{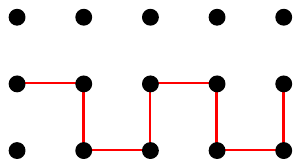}
            \caption{Path of length $8$ attached to $\pi_u$ when $uv_i\notin E(G)$.}
            \label{fig:grid-embedding-non-edge-intersection}
        \end{subfigure}

        \caption{
            The subpaths attached to construct $\pi_u$ for each $v_i\in L$ depending on whether $uv_i$ is an edge or not in $G$.
            The path patterns are
            $(\rightarrow,\downarrow,\rightarrow,\uparrow, \uparrow,\rightarrow,\downarrow,\rightarrow)$ and 
            $(\rightarrow,\downarrow,\rightarrow,\uparrow,\rightarrow,\downarrow,\rightarrow, \uparrow)$ respectively.
            }
        \end{figure}
    \endgroup        
        \medskip
        The above length-$8$ patterns are chosen so that, for every fixed $u\in R$, 
        the resulting grid path $\pi_u^{\square}$ meets the column-gadget paths
        \(\{\pi_{v}^{\square} : v\in L\}\) in the same left-to-right order as the corresponding planar path $\pi_u$ meets \(\{\pi_v : v\in L\}\) in $G'$.
        
        Moreover, for each $i\in[kp]$, the local intersection between $\pi_u^{\square}$ and $\pi_{v_i}^{\square}$ 
        depends only on whether $uv_i\in E(G)$:
        if $uv_i\in E(G)$ then the intersection subgraph contains exactly two shared grid edges, whereas if $uv_i\notin E(G)$ 
        then it contains exactly one shared grid edge.

        This is enforced by the fixed horizontal spacing of $4$ between consecutive vertices in $W^{\text{top}}$ (similarly in $W^{\text{bot}}$), and the
        vertical spacing of $6$ between consecutive vertices in $W^{\text{left}}$ ( similarly in $W^{\text{right}}$) 
        which align with each other  and provides sufficient room for the two distinct length-$8$ routing patterns.
        Figure~\ref{fig:grid-embedding-path-intersection} illustrates the local intersections explained above in $G^{\square}$.
\begingroup
\begin{figure}[!htb]
    \centering
    \begin{subfigure}[b]{0.22\textwidth}
        \centering
        \includegraphics[width=0.7\linewidth]{content/figures/sw-w1-hard-terminals-core-reduction-bip-graph.pdf}
        \caption{$G[L_1,R_1]$ with \\ $L_1=\{v_1,\dots,v_p\}$ and \\ $R_1=\{u_1,\dots,u_p\}$.}
        \label{fig:grid-embedding-bip-graph}
    \end{subfigure}
    \hfill
    \begin{subfigure}[b]{0.76\textwidth}
        \centering
        \includegraphics[width=\linewidth]{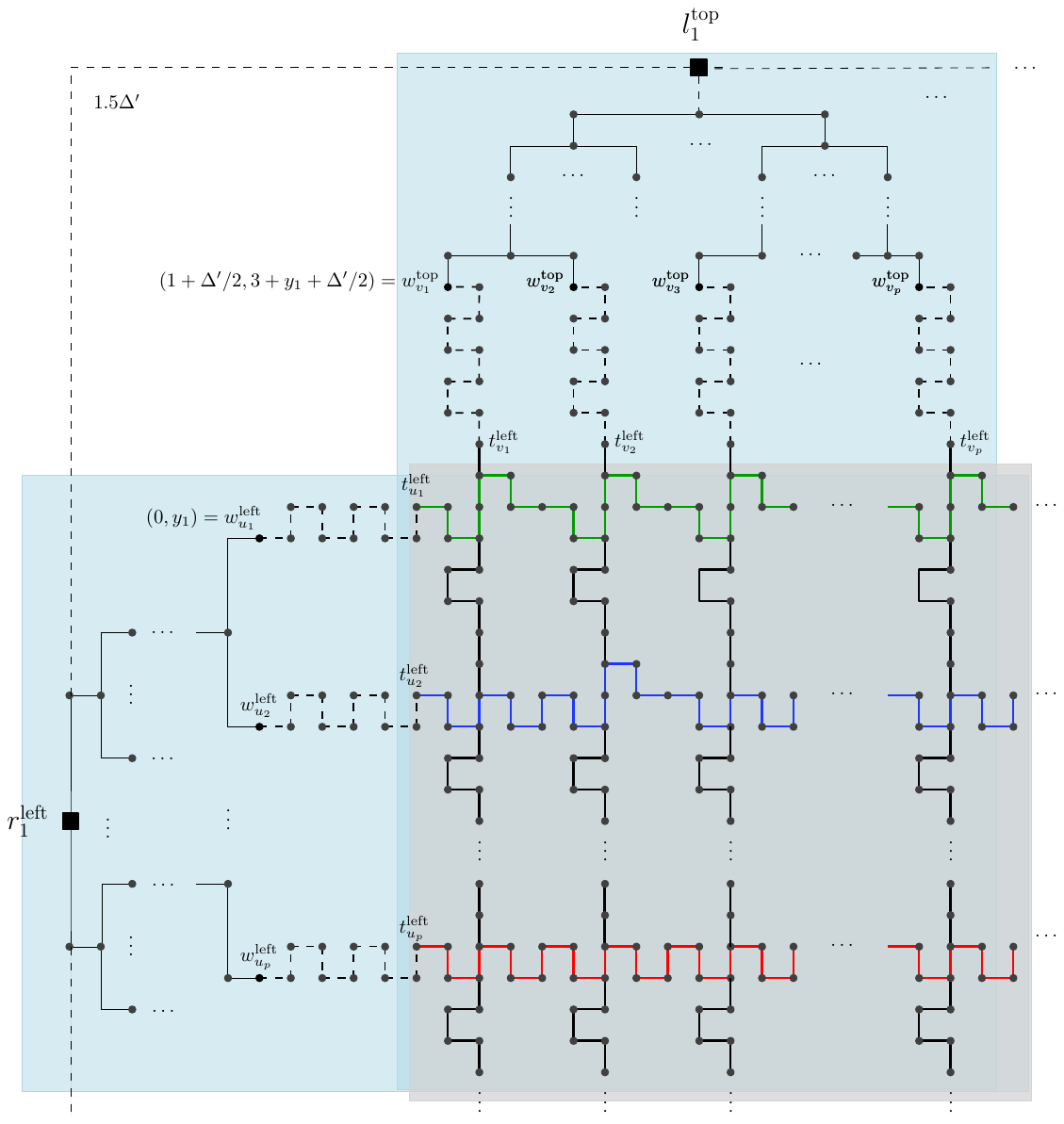}
        \caption{Local intersection pattern between the $R_1$ horizontal gadget and the $L_1$ vertical gadget in $G^{\square}$.}
        \label{fig:grid-embedding-path-intersection}
    \end{subfigure}
    \caption{
        Two views of the same $L_1$--$R_1$ interaction: 
        \textbf{(a)} the underlying induced bipartite graph $G[L_1,R_1]$, and \textbf{(b)} the corresponding local intersection structure in $G^{\square}$.
        Note that all horizontal paths intersect with all vertical paths at their corresponding local intersection areas.
        As $u_1$ is connected to all the vertices of $L_1$, in $G$, its corresponding horizontal path in $G'$ shares segments of length 2 with all vertical paths in the $L_1$-column gadget.   
        In contrast, $u_p$ is not connected to any vertex of $L_1$, so its horizontal path only shares one edge with each vertical path.
        Vertex $u_2$ has a mixed interaction pattern, sharing segments of length 2 with vertical paths corresponding to $v_2$ and $v_p$
        while only one edge with the others.
        The $t$-vertices are the endpoints of the zig-zag paths attached to the $w$-vertices.
        All the horizontal and vertical paths have the same length of $8kp+ 2\Delta'$.
        }
    \end{figure}
\endgroup

The set of terminals in $G^{\square}$ is exactly the same as that of $G'$, which 
are precisely 
\[
S=\{\,l_i^{\text{top}},\, l_i^{\text{bot}},\, r_i^{\text{left}},\, r_i^{\text{right}} \mid i\in[k]\,\},
\]
and this completes the construction of the grid embedding $G^{\square}$.

To see that the reduction is polynomial in the size of $I_{\bicqshort}$, first we set the parameter $\Delta'$.
In our construction we only need $\Delta' \gg c\cdot kp$ for some $c >6$.
So $\Delta' \in O(n)$.

Further note that the depth of all the binary trees
is of $O(\log(6kp))=O(\log(n))$.
So the number of vertices in each tree is $O(p)=O(n)$, and the size of the binary trees is polynomial in $n$.

All paths are of length $O(\Delta')+O(kp) \in O(n)$, so the size of the grid embedding is polynomial in $n$.

\paragraph*{Structure of Shortest Paths in $G^{\square}$.}
As the next step, we prove important properties that describe
the shortest path structure in $G^{\square}$ between terminal pairs in $S$.
We define $I^{\square}_{\swshort}=(G^{\square}, S)$ to be the \emph{instance} of \swshort\ induced by the embedded graph $G^{\square}$ and terminal set $S$.

\smallskip
In the sequel, we assume $\Delta' \gg 10kp$.
We define sides of terminals to be `left', `top', `right', and `bottom' according to their position in the construction, as illustrated in \cref{fig:grid-embedding-overview},
and by adjacent sides, we mean pairs of sides that are adjacent in the clockwise order (e.g., left and top are adjacent, while left and right are not).
Moreover, we call the paths of length $1.5 \Delta'$ connecting the four pairs 
    \[
       r_1^{\text{left}} \to l_1^{\text{top}} 
       \quad l_k^{\text{top}} \to r_1^{\text{right}} 
       \quad r_k^{\text{right}} \to l_k^{\text{bot}} 
       \quad l_1^{\text{bot}} \to r_k^{\text{left}}
    \]
the \emph{side connector} paths.

Analogously, we call every vertical gadget in $G^{\square}$ corresponding to $L_i$ an $L_i$-column gadget, 
and every horizontal gadget in $G^{\square}$ corresponding to $R_i$ an $R_i$-row gadget.

\begin{observation}
    \label{obs:grid-embedding-same-side-terminals}
    For every two terminals $s,t\in S$ on the same side, every shortest $s$--$t$ path traverses the straight grid path on the
    common row/column connecting them in $G^{\square}$.
\end{observation}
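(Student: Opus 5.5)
The straight segment joining the terminals of one side is a geodesic of the grid $\Gamma$, and every other route between two such terminals has to make a long detour. I will treat the left side; the other three sides are symmetric. The only change for the top and bottom sides is that the tree depths are replaced by the depths plus the attached equalizing paths.

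Let $s=r_i^{\text{left}}$ and $t=r_j^{\text{left}}$ with $i<j$. All left terminals lie on one grid column, and the straight vertical path $\sigma$ between them belongs to $G^{\square}$ by construction. Every path in $G^{\square}\subseteq\Gamma$ has length at least the $\ell_1$-distance of its endpoints. Since $\sigma$ is a straight segment, its length equals $d_\Gamma(s,t)$. Hence $d_{G^{\square}}(s,t)=|\sigma|\le 6kp$.

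Next I show that every $s$--$t$ path of that length stays on $\sigma$. A path of length exactly $d_\Gamma(s,t)$ is monotone. Because $s$ and $t$ share an $x$-coordinate, such a path can never move horizontally, so it stays in the column and coincides with $\sigma$.

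If one prefers a concrete check instead of this monotonicity argument, I would rule out each way of leaving $\sigma$ explicitly. Leaving through a binary tree of some left terminal towards $W^{\text{left}}$ forces the path either to return through the same tree, which costs at least two extra edges, or to continue through a zig-zag path of length $\Delta'$. Leaving through the side connector at $r_1^{\text{left}}$ or $r_k^{\text{left}}$ costs at least $1.5\Delta'$. Both costs exceed $6kp$ because $\Delta'\gg 10kp$.

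The main obstacle is to confirm that the embedding is faithful. The binary trees hanging off the left terminals must not create grid edges lying in the column of $\sigma$ between two terminals. Also, the roots must all sit on one common column, which the construction guarantees because all trees have equal height. Once this is checked, the monotone-geodesic argument settles the claim immediately.
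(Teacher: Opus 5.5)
Your proof is correct and follows the paper's argument: the straight segment $\sigma$ lies in $G^{\square}$ and has length $d_\Gamma(s,t)$, and $G^{\square}\subseteq\Gamma$ gives the matching lower bound. Your monotonicity step (a $\Gamma$-geodesic between two points with the same $x$-coordinate contains no horizontal steps) is what actually proves the ``every shortest path'' part, which the paper only shows in the weaker form ``$\sigma$ is a shortest path''; since that step uses nothing beyond $G^{\square}\subseteq\Gamma$ and $\sigma\subseteq G^{\square}$, you can drop both the faithfulness check and the explicit detour analysis.
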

The above observation follows directly from the construction.
Indeed, for each side, all terminals on that side lie on a single grid row or a single grid column, and we explicitly include in $G^{\square}$
the straight grid segment connecting consecutive terminals on that line (see \cref{fig:grid-embedding-overview}).
Hence the straight $s$--$t$ subpath along that row/column is present in $G^{\square}$ and has length $d_{\Gamma}(s,t)$.
Since $G^{\square}\subseteq \Gamma$, every $s$--$t$ path in $G^{\square}$ has length at least $d_{\Gamma}(s,t)$, and therefore this straight subpath is
a shortest $s$--$t$ path in $G^{\square}$ (cf.~\cref{fig:grid-embedding-overview}).

\begin{proposition}
    \label{prop:grid-embedding-adjacent-side-terminals}
    For every two terminals $s,t\in S$, on the adjacent sides,
    every shortest $s$--$t$ path traverses their side connector path.
\end{proposition}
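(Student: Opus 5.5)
We follow the template of Proposition~\ref{prop:sw-w1-hard-terminals-adjacent-consecutive-connectors}: exhibit a path through the side connector, then show every other route is strictly longer. By symmetry of the construction it suffices to treat one pair of adjacent sides, say $s=r_i^{\text{left}}$ on the left side and $t=l_j^{\text{top}}$ on the top side. The canonical route runs from $r_i^{\text{left}}$ along the vertical straight path on the left side to $r_1^{\text{left}}$, then along the side connector of length $1.5\Delta'$ to $l_1^{\text{top}}$, then along the horizontal straight path on the top side to $l_j^{\text{top}}$. By the placement of the binary trees, the left side's straight path has length at most about $6kp$ and the top side's at most about $4kp$. The canonical route therefore has length at most $1.5\Delta' + 10kp + O(\log(kp))$.

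For the lower bound we classify any $s$--$t$ path $P$ avoiding the side connector $r_1^{\text{left}}\to l_1^{\text{top}}$. The four side connectors together with the four straight side paths form an outer frame. The only other connections between terminals are through the gadget interiors. Every gadget path leaves a terminal through a binary tree, one full zig-zag of length $\Delta'$, and then enters the core grid region of horizontal paths $\pi_u^{\square}$ and vertical paths $\pi_v^{\square}$. We distinguish two cases.

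\textbf{Case (a): $P$ stays on the frame.} Then $P$ must go the other way around, using at least three side connectors. Its length is at least $4.5\Delta'$.

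\textbf{Case (b): $P$ enters a gadget.} Then $P$ leaves $s$'s side (or some side) through a tree and a zig-zag, costing at least $\Delta'$. It must eventually reach the top side. Since the core region is surrounded by zig-zags on all four sides, $P$ can exit it only through another zig-zag of length $\Delta'$ (into the top trees, or into some other side followed by frame travel). This already gives length at least $2\Delta'$, which exceeds $1.5\Delta' + 10kp + O(\log kp)$ because $\Delta' \gg 10kp$.

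The key fact behind Case (b) is that the core region touches the rest of $G^{\square}$ only through the endpoints $t_u^{\text{left}}, t_u^{\text{right}}, t_v^{\text{top}}, t_v^{\text{bot}}$ of the zig-zag paths. Each of these is connected to the frame only via a zig-zag of length exactly $\Delta'$ followed by a binary tree. So any path entering and leaving the core region pays at least $2\Delta'$.

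Having shown that every path not using this side connector has length at least $2\Delta' > 1.5\Delta' + O(kp)$, we conclude that every shortest path uses the side connector. Since the frame is the unique remaining route, a shortest path traverses the connector together with straight side segments.

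The main obstacle is making the connectivity claim rigorous from the embedding: we must check that no grid edges of the trees, zig-zags, or side paths accidentally touch each other or the core region. The layout is designed so they occupy disjoint regions, and I would verify this using the coordinates $\beta_1, \beta_2, \beta_3, x_1, y_1$ together with $\Delta' \gg 10kp$.
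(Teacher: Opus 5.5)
Your proposal is correct and follows essentially the paper's argument: the route through the side connector has length $1.5\Delta'+O(kp)$, while any route entering a gadget must cross two zig-zag attachment paths and so has length at least $2\Delta'$. Your explicit case for the route going the other way around the frame (at least $4.5\Delta'$) covers a case the paper's dichotomy leaves out, and the paper likewise assumes, without checking it, the disjointness of the embedded trees, zig-zags and side paths that you flag.
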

\begin{proof} 
        Consider one such terminal pair $s=r^{\text{left}}_{kp}$ and $t=l^{\text{top}}_{kp}$ 
        with maximum distance.
        By symmetry there are $8$ such pairs, for all of which the same proof applies.

        Any $s$--$t$ path either uses the corresponding side-connector path or enters (and later exits) at least one gadget region.

        On the one hand, by construction there is an $s$--$t$ walk using the side-connector path whose length is $1.5\Delta'+O(kp)$.

        On the other hand, any route that goes through gadgets must traverse at least two zig-zag attachment paths of length $\Delta'$ (one to enter some
        horizontal/vertical core and one to exit), and hence has length at least $2\Delta'$.
        Since $\Delta' \gg 10kp$, we have $2\Delta' > 1.5\Delta' + O(kp)$, so such a detour cannot be shortest.        
    \end{proof}

\begin{proposition}
    \label{prop:grid-embedding-same-gadget}
     For every two terminals $s=l_i^{\text{top}},\ t=l_i^{\text{bot}}$ or $s=r_i^{\text{left}},\ t=r_i^{\text{right}}$ where $i\in [k]$,
    every shortest $s$--$t$ path traverses one full internal path completely inside the corresponding vertical or horizontal gadget.
\end{proposition}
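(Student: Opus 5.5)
I will mirror the proof of Proposition~\ref{prop:sw-w1-hard-terminals-same-gadget}, now with grid lengths. Treat the column case $s=l_i^{\text{top}}$, $t=l_i^{\text{bot}}$; the row case is symmetric with the roles of the spacings $4$ and $6$ exchanged. The first step is the upper bound. Fix any $v\in L_i$ and consider the path made of four pieces: the tree path from $l_i^{\text{top}}$ to $w_v^{\text{top}}$ (common length $D:=3(p-1)+\lceil\log(6p)\rceil$, after the equalising vertical stubs), the downward zig-zag to $t_v^{\text{top}}$, the core path $\pi_v^{\square}$ of length $8kp$, and the upward zig-zag and tree path to $l_i^{\text{bot}}$. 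This gives $d_{G^{\square}}(s,t)\le 2D+2\Delta'+8kp$, and the bound is the same for every $v\in L_i$.

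For the matching lower bound, I would classify an arbitrary $s$--$t$ path $P$ by the regions it uses. \emph{Case 1:} $P$ leaves through the frame, i.e.\ along a side line or a side connector path. Then it must use at least two side connectors, or one side connector together with a full traversal of some other gadget. Either way its length is at least $3\Delta'$, or at least $1.5\Delta'+2\Delta'$. This exceeds $2\Delta'+8kp+2D$ because $\Delta'\gg 10kp$ and $D=O(p)$. \emph{Case 2:} $P$ stays inside the gadget region but uses some horizontal path $\pi_u^{\square}$, or its zig-zags, to move sideways. Reaching any $t_u^{\text{left}}$ or $t_u^{\text{right}}$ and then a left/right tree would cost an extra $\Delta'$, so the only relevant deviation is a lateral move inside the core rectangle, which switches from $\pi_v^{\square}$ to $\pi_{v'}^{\square}$. \emph{Case 3:} $P$ is one full vertical path. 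This is exactly the claimed form.

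The main step is Case 2, which is a local accounting argument in the core. The vertical extent of the core from the row of the $t^{\text{top}}$ vertices to that of the $t^{\text{bot}}$ vertices is exactly $6kp$, while each vertical path spends $8kp$, i.e.\ $8$ edges per $6$ units of descent. I would have to show that any switch between vertical paths via horizontal edges costs strictly more per unit of descent than the pattern $(\downarrow^4,\leftarrow,\downarrow,\rightarrow,\downarrow)$.

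My plan is a potential argument. Assign to each vertex of the core the value ``$y$-coordinate plus the remaining length along its own vertical path''. Then check on the two length-$8$ horizontal patterns that every horizontal edge, and every horizontal-pattern edge not shared with a vertical path, decreases the vertical progress per edge. Consecutive vertical paths are $4$ columns apart, so a switch uses at least $4$ net horizontal steps of horizontal patterns, and each pattern gives net horizontal progress $4$ for length $8$. During that switch, at most $2$ units of vertical progress are gained, coming from the $\downarrow/\uparrow$ wiggles. On the vertical path being left, the same $8$ edges would have bought $6$ units. The switch therefore loses, and since the path must also return to its exit column so that it can enter the bottom tree at a leaf, the total length strictly exceeds $2D+2\Delta'+8kp$.

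I expect the main obstacle to be making this local inequality airtight at the shared edges, where the horizontal and vertical patterns overlap in one or two edges, in particular around the $(\leftarrow,\downarrow,\rightarrow)$ notch. I would verify it by a finite case check on a single $4\times 6$ cell. After that, Case 3 follows, and so every shortest path is a full internal path.
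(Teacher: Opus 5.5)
Your proposal takes essentially the same route as the paper's proof: a full vertical traversal gives the upper bound, detours through the frame are excluded because they cost at least $3\Delta'$, and switching between vertical paths is excluded by the local count that moving $4$ columns sideways buys at most $2$ units of descent, with the row case treated as symmetric. Your version is, if anything, slightly more careful than the paper's: you include the common tree length $2D$ in the upper bound, which the paper's $2\Delta'+8kp$ omits, and you plan to make the switching inequality rigorous through a potential and a finite check of one $4\times 6$ cell, including the notch.
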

\begin{proof}   
        For each $i\in[k]$, traversing a full path inside the gadget gives
        \[
            l_i^{\text{top}}\to(\text{one vertical path })\to l_i^{\text{bot}}, \quad \text{ and } \quad r_i^{\text{left}}\to(\text{one horizontal path })\to r_i^{\text{right}}
        \]
        both of which have length $2\Delta'+8kp$.
        We prove the proposition for terminals in a vertical gadget; the case for a horizontal gadget is symmetric.
        
        Let the above path be denoted by $\pi$, and let $\pi^*$ be a shortest $l_i^{\text{top}}\to l_i^{\text{bot}}$ path in $G^{\square}$.

        Assume $\pi^*$ switches between different vertical paths instead of traversing one full path.
        By construction, each such switching advances $4$ edges horizontally while making (in the best case) only $2$ units of vertical progress,
        hence increasing the path length by at least $2$ edges, which is a contradiction as $\pi^*$ is a shortest path.

        For horizontal paths, this is even worse, as the vertical subpaths between them are of length $6$ without any horizontal progress, 
        so switching between different horizontal paths increases the path length by at least $5$ edges.
        
        Moreover by construction, all vertical paths between $l_i^{\text{top}}$ and $l_i^{\text{bot}}$ 
        have the same length $2\Delta'+8kp$.

        Finally, any path that does not go via any gadget traverses at least two side connector paths which gives the length at 
        least $3\Delta'$ and is strictly larger than $2\Delta'+8kp$ since $\Delta'$ is sufficiently large.

        Therefore, the proposition holds: traversing one full path inside the vertical gadget corresponding to $L_i$ 
        (analogously horizontal gadget corresponding to $R_i$) 
        is the only shortest path and has length $2\Delta'+8kp$.
\end{proof}

\begin{proposition}
\label{prop:grid-embedding-different-gadgets}
For every two terminals $s=l_i^{\text{top}},\ t=l_j^{\text{bot}}$ or $s=r_i^{\text{left}},\ t=r_j^{\text{right}}$ where $i,j\in [k]$,
every shortest $s$--$t$ path traverses one full internal path completely inside one of the two corresponding vertical gadgets or horizontal gadgets.

\end{proposition}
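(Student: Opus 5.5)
The plan mirrors the proof of \autoref{prop:sw-w1-hard-terminals-different-gadgets}: first exhibit explicit $s$--$t$ paths that give an upper bound, then show every other route is strictly longer. I treat $s=l_i^{\text{top}}$, $t=l_j^{\text{bot}}$; the row case is symmetric under swapping coordinates, and $i=j$ is \autoref{prop:grid-embedding-same-gadget}.

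\textbf{Upper bound.} Traverse one full internal path of the $L_i$-column gadget (length $2\Delta'+8kp$, by \autoref{prop:grid-embedding-same-gadget}). Then walk along the straight bottom line from $l_i^{\text{bot}}$ to $l_j^{\text{bot}}$, which has length $d_\Gamma(l_i^{\text{bot}},l_j^{\text{bot}})$ by \autoref{obs:grid-embedding-same-side-terminals}. Symmetrically, one can walk the top line first and then traverse the $L_j$-column gadget. Since all bottom (resp.\ top) terminals lie on a common grid row, both routes have length $D:=2\Delta'+8kp+d_\Gamma(l_i^{\text{top}},l_j^{\text{top}})$. The horizontal offset $d_\Gamma(l_i^{\text{top}},l_j^{\text{top}})$ is at most $4kp$.

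\textbf{Lower bound.} I classify an arbitrary $s$--$t$ path $\pi^*$ by how it crosses between the top and bottom sides.

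\emph{Case 1: $\pi^*$ avoids the column-gadget cores entirely.} It must then reach the bottom via the left or right side. This uses at least two side-connector paths, costing at least $3\Delta'$, or passes through a row gadget, which requires two zig-zag attachments plus connectors. Either way the length is far above $D$ because $\Delta'\gg 10kp$.

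\emph{Case 2: $\pi^*$ goes through column cores.} It must contain a top zig-zag attachment of length $\Delta'$ and a bottom one of length $\Delta'$. Between them it has to make the vertical progress of $6kp$ on core paths. Every switch between vertical paths costs at least $2$ extra edges, and every switch onto horizontal paths costs at least $5$, by the same local accounting as in \autoref{prop:grid-embedding-same-gadget}. A switch can produce horizontal displacement, but the required horizontal displacement to $t$ is at most the offset that the straight top/bottom lines already realize at cost exactly equal to grid distance.

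\textbf{Main obstacle.} The hard step is exactly this exchange argument: showing that moving horizontally inside the core region is never cheaper than moving along the terminal lines. The plan is to charge costs as follows. Each unit of horizontal displacement inside the cores costs at least one extra edge beyond the $\ell_1$ displacement, because each core pattern of length $8$ advances only $6$ vertically or $4$ horizontally. Moving along the top/bottom lines instead costs exactly the $\ell_1$ displacement. The binary-tree portions and the vertical root attachments sit between the terminal lines and the zig-zags. These are traversed root-to-leaf in any route, so they contribute the same fixed amount to every candidate path.

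Combining the two cases, $|\pi^*|\ge D$, with equality only if $\pi^*$ uses a single full internal vertical path of gadget $L_i$ or of gadget $L_j$. This gives the claim.
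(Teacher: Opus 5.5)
Your plan is the paper's plan: take the two routes through $L_i$ and $L_j$ as the upper bound, then invoke the switching accounting of \autoref{prop:grid-embedding-same-gadget}. The paper simply calls that lower bound ``analogous''. The gap is in the exchange argument you flag as the main obstacle, and it cannot be completed, because for the construction as written the bound $|\pi^*|\ge D$ is false.

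The faulty step is the claim that the binary trees ``contribute the same fixed amount to every candidate path''. They contribute the same \emph{length}, but the leaves of one tree span a width of $4(p-1)$. So each tree supplies up to $2(p-1)$ units of horizontal displacement at zero extra cost. Your charging scheme compares only the terminal lines (one edge per unit) with the cores (two edges per unit) and misses this free third source.

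Concretely, let $s=l_m^{\text{top}}$ and $t=l_{m+1}^{\text{bot}}$. Let $v$ be the rightmost vertex of $L_m$ and $v'$ the leftmost vertex of $L_{m+1}$. The spacing $x_{i+1}-x_i=4$ is uniform across block boundaries, so $\pi_v^{\square}$ and $\pi_{v'}^{\square}$ are only $4$ columns apart. Now take the following walk:
\begin{itemize}
\item descend the top tree of $L_m$ to $w_v^{\text{top}}$;
\item follow $\pi_v^{\square}$ down to its intersection with some horizontal core path $\pi_u^{\square}$;
\item walk the single length-$8$ pattern of $\pi_u^{\square}$ over to $\pi_{v'}^{\square}$;
\item continue down $\pi_{v'}^{\square}$ and climb the bottom tree of $L_{m+1}$ to $t$.
\end{itemize}
Since $\pi_{v'}^{\square}$ is a translate of $\pi_v^{\square}$ and all root-to-leaf distances in a tree are equal, this walk is only $8$ longer than a full gadget traversal. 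Both of your routes are longer than a full traversal by $d_\Gamma(l_m^{\text{top}},l_{m+1}^{\text{top}})=4p$, since each root sits above the centre of its block. Any path containing a full internal path also pays at least this $4p$ on the terminal lines.

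Hence for $p\ge 4$ we get $d_{G^{\square}}(s,t)<D$, and no shortest $s$--$t$ path traverses a full internal path. The row gadgets have the same defect: an offset of $6p$ against a constant switching cost. No sharper local accounting can rescue the statement; the construction itself must change, e.g.\ so that crossing between consecutive blocks inside the core costs at least the root offset.

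Separately, even if $|\pi^*|\ge D$ were established, your equality clause is too strong for $|i-j|\ge 2$. Take $m$ strictly between $i$ and $j$, and consider the route that goes along the top line from $l_i^{\text{top}}$ to $l_m^{\text{top}}$, through a full internal path of $L_m$, then along the bottom line to $l_j^{\text{bot}}$. It has length exactly $D$, because $l_m^{\text{top}}$ and $l_m^{\text{bot}}$ share a column and the terminal lines are straight. In $G'$ such a detour pays two extra edges at $c^{\text{top}}$ and $c^{\text{bot}}$, which is what excludes it in \autoref{prop:sw-w1-hard-terminals-different-gadgets}; the straight terminal lines of the embedding remove that penalty. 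The paper's proof, which stops after the two upper-bound routes and an appeal to analogy, has both of these holes as well.
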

\begin{proof}
    In particular, for all $i\neq j$,
\begin{align*}
d_{G^{\square}}(l_i^{\text{top}},l_j^{\text{bot}})&\leq 2\Delta'+\bigl(|i-j|\cdot(p+6)\bigr)+8kp \leq 2\Delta'+9kp+6k \\
    \quad
    &\text{via}\quad l_i^{\text{top}}\to(\text{traverse }L_i\text{-gadget})\to l_i^{\text{bot}}\to (\text{traverse on the common row})\to l_j^{\text{bot}}, \\
    &\text{or }\quad l_i^{\text{top}}\to (\text{traverse on the common row}) \to l_j^{\text{top}}\to(\text{traverse }L_j\text{-gadget})\to l_j^{\text{bot}},
\end{align*}
and likewise,
\begin{align*}
    d_{G^{\square}}(r_i^{\text{left}},r_j^{\text{right}})&\leq 2\Delta'+\bigl(|i-j|\cdot(p+4)\bigr)+8kp \leq 2\Delta'+9kp+4k \\
    &\text{via } r_i^{\text{left}}\to(\text{traverse }R_i\text{-gadget})\to r_i^{\text{right}}\to (\text{traverse on the common column})\to r_j^{\text{right}}\\
    &\text{or }\, r_i^{\text{left}}\to (\text{traverse on the common column}) \to r_j^{\text{left}}\to(\text{traverse }R_j\text{-gadget})\to r_j^{\text{right}}.
\end{align*}
The same holds, by symmetry, for the pairs $(l_i^{\text{bot}},l_j^{\text{top}})$ and $(r_i^{\text{right}},r_j^{\text{left}})$.

The proof is analogous to \autoref{prop:grid-embedding-same-gadget}.
\end{proof}

With all the above properties of shortest paths between terminal pairs in $S$, 
we now state two important properties of a subsetwise preserver for $I^{\square}_{\swshort}$.
\begin{corollary}
    \label{corr:grid-embedding-obvious-edges}
    Let $H$ be a distance preserver for $I^{\square}_{\swshort}$.
    Then, by~\autoref{obs:grid-embedding-same-side-terminals} and~\autoref{prop:grid-embedding-adjacent-side-terminals},
    $H$ must contain:
    \begin{itemize}
        \item all edges on the straight paths connecting terminals on the same sides, namely the two straight vertical paths 
        connecting 
        \[
            r_1^{\text{left}} \to r_{k}^{\text{left}} \quad \text{and} \quad r_1^{\text{right}} \to r_{k}^{\text{right}}
        \]  
        and two straight horizontal paths connecting
         \[
            l_1^{\text{top}} \to l_{k}^{\text{top}} \text{ and } l_1^{\text{bot}} \to l_{k}^{\text{bot}}
         \]
        \item all the connector paths between adjacent sides.
    \end{itemize}
\end{corollary}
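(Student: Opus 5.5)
The corollary follows from the two preceding structural results, so the plan is to combine them with the observation that if \emph{every} shortest $s$--$t$ path in $G^{\square}$ uses an edge $e$, then any preserver $H$ must contain $e$. Indeed, $H$ keeps $d_H(s,t)=d_{G^{\square}}(s,t)$, so $H$ contains some shortest $s$--$t$ path of $G^{\square}$. Every shortest path of $H$ is also a shortest path of $G^{\square}$, since $H\subseteq G^{\square}$ and the lengths agree. Hence that path contains $e$.

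For the first bullet, I take the outermost pair on each side, say $s=r_1^{\text{left}}$ and $t=r_k^{\text{left}}$. By \autoref{obs:grid-embedding-same-side-terminals}, every shortest $s$--$t$ path uses the straight grid segment on the common column. To make this airtight, I would argue that this segment has length exactly $d_\Gamma(s,t)$. Any $s$--$t$ path in $G^{\square}\subseteq\Gamma$ of that length must be monotone in the vertical coordinate and horizontally constant. The only such path in $\Gamma$ is the straight segment, so it is the unique shortest path. Therefore $H$ contains all of its edges. The other three sides are handled symmetrically, with $l_1^{\text{top}},l_k^{\text{top}}$ and $l_1^{\text{bot}},l_k^{\text{bot}}$ using a common row instead.

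For the second bullet, I take for each side connector path its two endpoints, e.g.\ $r_1^{\text{left}}$ and $l_1^{\text{top}}$. These are terminals on adjacent sides, so \autoref{prop:grid-embedding-adjacent-side-terminals} applies: every shortest path between them traverses that side connector path. Since the connector path is a single fixed path, all of its edges lie on every shortest path, and hence all of them lie in $H$. Repeating this for the four pairs $r_1^{\text{left}}\to l_1^{\text{top}}$, $l_k^{\text{top}}\to r_1^{\text{right}}$, $r_k^{\text{right}}\to l_k^{\text{bot}}$, $l_1^{\text{bot}}\to r_k^{\text{left}}$ gives the claim.

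The only step that needs care is reading ``traverses the side connector path'' as meaning the full path, rather than just touching it. I expect the main obstacle to be confirming that a shortest path between these endpoints cannot shortcut part of the connector. Since the connector is a simple path joining exactly these two terminals, any shortest path using it must run it end to end. Any alternative route was already excluded in the proposition by the $2\Delta'$ versus $1.5\Delta'+O(kp)$ comparison, so this case is covered.
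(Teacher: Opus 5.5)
Your proposal is correct and follows the paper's route: the paper derives the corollary directly from \autoref{obs:grid-embedding-same-side-terminals} and \autoref{prop:grid-embedding-adjacent-side-terminals}, using the fact that an edge lying on every shortest path between two terminals must belong to any preserver. Your $\ell_1$-geodesic argument is a genuine improvement. A path of length $d_\Gamma(s,t)$ between vertically aligned points must be the straight segment, which shows uniqueness. The paper's justification of the observation only shows that the straight segment is \emph{a} shortest path, not that it is the \emph{only} one.
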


Using the above properties, we can prove that every minimal subsetwise preserver for $I^{\square}_{\swshort}$ 
must contain exactly one full path inside each gadget, 
and no switching between different paths inside the same gadget is possible, that is:
\begin{lemma}
\label{lem:grid-embedding-min-preserver-one-path-per-gadget}
Let $H\subseteq G^{\square}$ be a \emph{minimal} subsetwise distance preserver for $I^{\square}_{\swshort}$.
Then for every $i\in[k]$, $H$ contains  
\begin{itemize}
    \item the edges of exactly one  $l_i^{\text{top}} \to l_i^{\text{bot}}$ internal path inside the $L_i$-column gadget, and
    \item the edges of exactly one $r_i^{\text{left}} \to r_i^{\text{right}}$ internal path inside the $R_i$-row gadget.
\end{itemize}
\end{lemma}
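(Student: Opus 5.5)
This lemma is the grid analogue of Lemma~\ref{lem:sw-w1-hard-terminals-min-preserver-one-path-per-gadget}, and I would prove it the same way. Existence uses Proposition~\ref{prop:grid-embedding-same-gadget}, and uniqueness uses an exchange argument that contradicts minimality. Fix $i\in[k]$ and treat the $L_i$-column gadget; the $R_i$-row gadget is symmetric.

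\emph{Existence.} Since $H$ is a distance preserver, it contains a shortest $l_i^{\text{top}}$--$l_i^{\text{bot}}$ path of $G^{\square}$. By Proposition~\ref{prop:grid-embedding-same-gadget}, every such path consists of three parts:
\begin{itemize}
\item the tree path from $l_i^{\text{top}}$ to some $w_v^{\text{top}}$, including the equalizing vertical segment;
\item the internal path made of the two zig-zag paths and $\pi_v^{\square}$;
\item the tree path from $w_v^{\text{bot}}$ to $l_i^{\text{bot}}$.
\end{itemize}
Here $v\in L_i$. Hence $H$ contains the edges of at least one full internal path of the gadget.

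\emph{Uniqueness.} Suppose $H$ contains edges of the internal paths of two distinct vertices $v_1,v_2\in L_i$, and that $H$ contains the complete internal path of $v_2$. I would build $H'$ from $H$ by deleting every edge that lies on the internal path of $v_1$ (from $w_{v_1}^{\text{top}}$ to $w_{v_1}^{\text{bot}}$) and on no other full internal path or tree edge kept in $H$. In particular, I would keep the edges shared with horizontal paths $\pi_u^{\square}$ that $H$ uses.

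To show $H'$ is still a preserver, take any terminal pair whose shortest path in $H$ used a deleted edge. By Observation~\ref{obs:grid-embedding-same-side-terminals} and Propositions~\ref{prop:grid-embedding-adjacent-side-terminals}, \ref{prop:grid-embedding-same-gadget} and~\ref{prop:grid-embedding-different-gadgets}, the only terminal shortest paths entering the interior of a column gadget are those of the form $l_a^{\text{top}}\to$ one full internal path of some $L_{a'}$-gadget $\to l_{a'}^{\text{bot}}$, possibly followed by a straight top or bottom segment. Such a path contains no edges of horizontal internal paths except the shared local-intersection edges, and it uses a full vertical internal path. So if it used the internal path of $v_1$, we can reroute it through the tree edges to $w_{v_2}^{\text{top}}$, then the internal path of $v_2$, then the tree edges back to $l_i^{\text{bot}}$. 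This reroute has the same length, because root-to-leaf tree distances are equal by construction and all internal paths have length $2\Delta'+8kp$. All its edges remain in $H'$.

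Row-gadget shortest paths never use a vertical subpath beyond the shared local-intersection edges, which we kept. Thus $H'$ is a proper subgraph that preserves all distances, contradicting minimality. If instead $H$ contains no complete internal path for $v_1$, the stray edges of $v_1$'s path lie on no shortest terminal path in $H$, and by Observation~\ref{obs:all-edges-needed} they can be deleted, again contradicting minimality.

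The main obstacle is the claim that the only shortest terminal paths meeting gadget interiors are the full traversals above. This is exactly what rules out shortest paths that mix a horizontal segment with part of a vertical path. I would justify it from the switching analysis in the proof of Proposition~\ref{prop:grid-embedding-same-gadget}: a switch costs at least $2$ extra edges, and a horizontal-to-vertical turn costs an extra zig-zag of length $\Delta'\gg 10kp$. I would also treat carefully the tree edges shared between $v_1$ and $v_2$; only edges strictly below the branching point are deleted.
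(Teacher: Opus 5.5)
Your existence step and your exchange step for two \emph{complete} internal paths follow the paper's route. The paper proves this lemma by deferring to Lemma~\ref{lem:sw-w1-hard-terminals-min-preserver-one-path-per-gadget}: existence comes from Proposition~\ref{prop:grid-embedding-same-gadget}, and uniqueness comes from deleting the edges exclusive to one path and rerouting through the other. You explicitly keep the edges that $\pi_{v_1}^{\square}$ shares with the horizontal paths used by $H$. That is more careful than the paper's ``delete all edges that belong exclusively to $\pi_1$'', which does not address those shared edges.

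Your final case, however, is wrong. There you claim that if $v_1$'s path is not complete in $H$, its edges in $H$ lie on no shortest terminal path, so deleting them contradicts minimality. By construction, every $\pi_u^{\square}$ shares one or two grid edges with every $\pi_v^{\square}$. Moreover, by the row half of your own existence argument, a minimal $H$ contains a complete horizontal path in each $R_j$-row gadget.

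Hence, for every $v_1\in L_i$, $H$ contains some edges of $\pi_{v_1}^{\square}$. These edges lie on a shortest $r_j^{\text{left}}$--$r_j^{\text{right}}$ path in $H$, so Observation~\ref{obs:all-edges-needed} does not apply to them. When they are the only edges of $v_1$'s path in $H$, there is no contradiction with minimality at all.

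So your supposition ``$H$ contains edges of the internal paths of two distinct $v_1,v_2$'' is always satisfied, and the literal statement ``edges of exactly one internal path'' is false. The lemma must be read as ``$H$ contains exactly one \emph{complete} internal path'', which is also how the backward direction of the reduction uses it. Under that reading, your two-complete-paths case already suffices, and the last case should be dropped. At most it shows that a non-complete vertical path meets a minimal $H$ only in edges shared with the complete horizontal paths of $H$.
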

Which is proved analogously to~\autoref{lem:sw-w1-hard-terminals-min-preserver-one-path-per-gadget}
 using the properties of shortest paths in $G^{\square}$ described above.

\medskip

The observations and propositions above are the exact analogues of the corresponding shortest-path statements for $G'$
(with the planar parameters replaced by their embedded counterparts, in particular $\Delta$ by $\Delta'$ and the internal gadget traversal length by $2\Delta'+8kp$).
Moreover, the grid embedding preserves the edge-vs-non-edge distinction at each horizontal/vertical intersection in the following quantitative sense:
for every $u\in R$ and $v\in L$, the embedded paths $\pi_u^{\square}$ and $\pi_v^{\square}$ share exactly two grid edges if $uv\in E(G)$,
and exactly one grid edge if $uv\notin E(G)$ (hence the two cases differ by precisely one shared edge).

Consequently, a biclique in $G$ yields a savings of $k^2$
edges in the resulting preserver subgraph, and conversely achieving this full $k^2$ savings
implies that $G$ has a biclique. This concludes the proof of Theorem~\ref{thm:sw-w1-terminals}.

\FloatBarrier 
\subsection{FPT algorithm on grids}

Here, we prove Theorem~\ref{thm:pw-terminals-grid}, restated next.

\PwTerminalsGrid*

The main component of the proof is the following structural lemma. For a grid graph $G = (V, E)$ and a terminal set $S \subseteq V$, let the Hanan grid be the subgraph of $G$ that contains only the edges and vertices on the rows and columns occupied by vertices of $S$. We may assume that the topmost row of $G$ contains a terminal, since otherwise no vertex of that row may be on any shortest path between two terminals, and can be safely removed. The same argument holds for the other three sides of the grid. 
We call the vertices lying on the intersections of the rows and columns containing terminals \emph{intersections} of the Hanan grid; observe that every vertex of the Hanan grid that is not an intersection has degree $2$, and that terminals are intersections by definition. 
Refer to~\autoref{fig:Hanan-grid} for an illustration.

The original motivation behind the Hanan grid is that the optimal rectilinear Steiner tree of a collection of terminals on the grid may be assumed to be contained in the Hanan grid of these points~\cite{Hanan66}. We show that the same holds for minimum distance preservers on the grid.

\begin{figure}[!ht]
    \centering
    \begin{subfigure}[t]{.49\textwidth}
        \centering
        \includegraphics[width=0.65\textwidth]{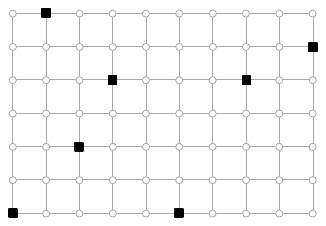}
        \caption{Grid graph $G=(E, V)$ with $|S|=6$ terminals.}
        \label{fig:Hanan-1}
    \end{subfigure}
        \hfill
    \begin{subfigure}[t]{.49\textwidth}
        \centering
        \includegraphics[width=0.65\textwidth]{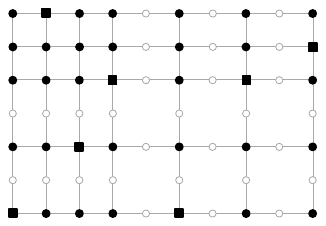}
        \caption{The Hanan grid of $G$ induced by $S$.}
    \label{fig:Hanan-2}
    \end{subfigure}
    \caption{An instance with a grid graph $G$ and its Hanan grid. Terminals are represented by black squares. The vertices of the grid graph $G$ are shown as circles with gray borders, while the intersections of the Hanan grid are represented as filled black circles. 
     Observe that the Hanan grid consists exclusively of the edges and vertices lying on rows and columns of $G$ that contain at least one terminal. Furthermore, any vertex of the Hanan grid that is not an intersection, has degree~$2$.}
    \label{fig:Hanan-grid}
\end{figure}
\begin{lemma}\label{lemma:Hanan}
    Let $G = (V, E)$ be a grid graph, and let $\mathcal{P}$ be a collection of pairs in $V$.
    There exists a minimum pairwise distance preserver of ($G$, $\mathcal{P}$) that is a subgraph of the Hanan grid of $S = \bigcup \mathcal{P}$.
\end{lemma}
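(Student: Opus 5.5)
We would prove Lemma~\ref{lemma:Hanan} by an exchange (sliding) argument, following Hanan's classical proof for rectilinear Steiner trees. Shortest paths in a grid are exactly the monotone staircase paths, so a subgraph $H$ is a preserver of $(G,\mathcal{P})$ if and only if, for each pair $\{s,t\}\in\mathcal{P}$, $H$ contains a monotone $s$--$t$ path. Among all minimum preservers, pick one that minimizes the number of edges not lying on the Hanan grid; a secondary potential, such as the total distance of off-Hanan edges to the Hanan lines, will be used if needed. We show that if some edge of $H$ lies outside the Hanan grid, then a preserver of no larger size with strictly smaller potential exists. This contradicts the choice of $H$.

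\textbf{Sliding step.} Suppose $H$ uses an edge outside the Hanan grid. By symmetry, assume it is a vertical edge on a column $c$ that contains no terminal. Let $c^-$ and $c^+$ be the nearest terminal-occupied columns to the left and right; they exist because the boundary columns contain terminals. Consider the ``strip'' $B$ of columns strictly between $c^-$ and $c^+$, which contains no terminal. Fix, for each pair, a monotone path in $H$. Every such path that enters $B$ crosses it completely from column $c^-$ to column $c^+$, since its endpoints lie outside $B$ and the path is monotone in the $x$-direction. Let $a$ be the number of vertical edges of $H$ on column $c$, let $a^-$ be the number of $H$-edges on the horizontal rungs between $c-1$ and $c$, and let $a^+$ be the number between $c$ and $c+1$. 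Now shift column $c$ of $H$ one unit left, and alternatively one unit right. Each vertical edge on column $c$ moves to the same rows on column $c\mp1$. The horizontal rungs are adjusted so that every monotone path through column $c$ is rerouted along the shifted column, which keeps it monotone and of the same length. The two shifts change $|H|$ by amounts whose sum is $\le 0$, because one shift deletes the rungs on one side while creating rungs on the other, and the vertical edges may merge with edges already present on column $c\mp1$, which only helps. Hence one of the two shifts does not increase $|H|$. Repeating this, we push all vertical material of $B$ onto $c^-$ or $c^+$. Rows are handled symmetrically.

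\textbf{Main obstacle.} The hard step is making the local shift rigorous: we must show that after shifting, every pair still has a monotone path, and that the counting really yields a non-increasing size. A cleaner way, which we would attempt first, is to shift the whole strip $B$ at once. Define a map $\phi$ that collapses the columns of $B$ onto $c^-$, namely $x\mapsto c^-$ for $x\in(c^-,c^+)$ restricted to the left part, or more simply the projection $x\mapsto \max(c^-, x - 1)$ within $B$. Then take $H'$ to be the image of $H$, dropping the horizontal edges that collapse to a point. Since $\phi$ is monotone in $x$ and the identity in $y$, monotone paths map to monotone walks. Moreover, the endpoints are fixed, because terminals are outside $B$, and the collapsed horizontal edges are exactly compensated by the horizontal edges of $B$ that every crossing path must traverse anyway. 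The length is therefore preserved exactly. The image has at most $|H|$ edges, because $\phi$ is edge-non-expanding. The cost is that the image could lose edges that a left-to-right crossing needs. To handle this, we would choose between collapsing onto $c^-$ or onto $c^+$ by the averaging argument above. Because repeated collapsing terminates and never increases $|H|$, the process ends with a minimum preserver lying entirely in the Hanan grid.
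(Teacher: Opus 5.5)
Your exchange step is sound, and it is essentially the paper's key counting. You shift a terminal-free column one unit left or right and reroute every monotone path through it; the two size changes are at most $a^+-a^-$ and $a^--a^+$, so they sum to at most $0$. Shifting the whole column $c$, rather than only the maximal segment through a branching vertex that is not an intersection (as the paper does), is a legitimate variant. It would even let you skip the paper's elementary-path replacement. The gap is the progress argument: from ``one of the two shifts does not increase $|H|$'' you cannot iterate.

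\textbf{The potential need not drop.} Neither your primary nor your secondary potential is guaranteed to decrease, and the good direction can alternate between steps. Concretely, take the single pair $s=(0,0)$, $t=(10,10)$ and the minimum preserver given by the monotone path through $(0,2)$, $(3,2)$, $(3,3)$, $(10,3)$, $(10,10)$. Its only off-Hanan vertical edge is on column $3$. Shifting column $3$ to the left gives the path through $(2,2),(2,3)$; shifting it to the right gives the path through $(4,2),(4,3)$. In all three cases there are exactly $11$ off-Hanan edges, with total distance $30$ to the Hanan lines. So the claim that the sliding step yields a preserver with strictly smaller potential is false as stated.

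\textbf{The missing observation.} What you need is the one the paper uses (its $|A|=|B|$). Since $H$ is minimum, neither shift can make $|H|$ smaller, and the two changes sum to at most $0$. Hence both shifts are exactly size-neutral, and you may always shift in a fixed direction.

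With that, a two-phase argument terminates:
\begin{enumerate}
\item Clear columns first by always shifting left, using the potential $\sum_e (x_e-c^-(x_e))$ over vertical edges on terminal-free columns. Each shift lowers it by the number of moved edges and creates no vertical edges elsewhere.
\item Then clear rows by always shifting up. A row shift only adds vertical edges in columns that already carry a vertical edge of $H$, and after the first phase these are terminal columns, so the first phase is not undone.
\end{enumerate}

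\textbf{The collapse map.} Your ``cleaner'' map should be dropped. On $B$ the map $x\mapsto\max(c^-,x-1)$ is just $x\mapsto x-1$, with the identity elsewhere. It sends every rung between columns $c^+-1$ and $c^+$ to two points at distance $2$. So $\phi$ is not edge-non-expanding, the image is not a subgraph, and the bound $|H'|\le|H|$ fails. Filling those gaps costs one edge per right-boundary rung, which brings you back to the same averaging as the local shift.
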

\begin{proof}
    Let $H$ be a minimum distance preserver for $(G, \mathcal{P})$. We say that a vertex is a \emph{branching vertex of $H$}, if it has degree at least $3$ in $H$ or if it is a terminal. We say that $P$ is an elementary path in $H$ if it is an induced path of $H$, its endpoints are branching vertices, and no internal vertices are branching vertices. First we show a technical claim about elementary paths.
    \begin{claim}
    \label{claim:Hanan-all-disjoint}
        The elementary paths of $H$ are internally vertex-disjoint, and cover the edges of $H$.
    \end{claim}
    \begin{proof}
        Assume two distinct elementary paths $P_1$ and $P_2$ share an internal vertex $v'$. 
        Find inclusion-wise maximal subpath $P'$ of $P_1$ containing $v'$ that is also a subpath of $P_2$.
        Let $u$, $v$ be the endpoints of $P'$. If $P'$ is equal to $P_1$, then either it is also equal to $P_2$, or one of $u$ or $v$ is both an internal vertex of $P_2$ and a branching vertex, which is a contradiction.
    
        Otherwise, one of $u$ and $v$ is not the endpoint of $P_1$ and hence is not a branching vertex. Without loss of generality, let $u$ be this vertex, and let $w$ be the neighbor of $u$ on $P_1$ that is not on $P'$. By construction, $u$ belongs to $P_2$, and hence is either an endpoint of $P_2$, which is a contradiction since $u$ is an internal vertex of $P_1$, or $P_2$ continues into a vertex $t$ after $u$, and $w \ne t$ since $P'$ is the maximal subpath contained in both $P_1$ and $P_2$. Then, $u$ has degree at least $3$ in $H$, so $u$ is a branching vertex, contradicting that $P_1$ is an elementary path.
        
        For the second part of the claim, consider an edge of $H$. If both its endpoints are branching vertices, it is an elementary path by definition. Otherwise, take an endpoint that is not a branching vertex, it has degree two. Extend the current path with the other edge incident to this vertex. Continue until both endpoints will become branching vertices, at which point the path is an elementary path. Note that we cannot obtain a cycle, since then this cycle contains at most one terminal, 
        and removing the cycle from $H$ results in a smaller distance preserver, contradicting the minimality of $H$.
    \end{proof}
    
    We argue that it is sufficient to show that every branching vertex of $H$ is an intersection of the Hanan grid. Indeed, assume that this holds, and consider an elementary path $P$ in $H$ connecting two branching vertices $u$ and $v$ which is not contained in the Hanan grid. Since $u$ and $v$ are intersections of the Hanan grid, there also exists a shortest path $P'$ that connects $u$ and $v$ along the Hanan grid. Replacing $P$ by $P'$ in $H$ yields a new distance preserver that has at most as many edges as $H$, since the length of $P'$ is at most the length of $P$. By repeating this operation with every elementary path, we obtain a new distance preserver that is not larger and is contained in the Hanan grid. The latter holds since by~\autoref{claim:Hanan-all-disjoint} every edge of $H$ is covered by some elementary path.
    
    Hence, it remains to show that any branching vertex of $H$ is an intersection of the Hanan grid. Assume there exists a branching vertex $w$ in $H$ that is not an intersection. Hence, either the row of $w$ does not contain a terminal, or the column of $w$; in particular, $w$ is not a terminal but has degree $3$ or $4$ in $H$. Without loss of generality, assume the row of $w$ contains no terminal. Let $u$ and $v$ be the endpoints of the inclusion-wise maximal path in $H$ containing $w$ that is contained in the row of $w$, and let $P_{uv}$ be that path. Observe that the row of $w$ is neither on top nor on the bottom of the grid, since those rows contain terminals. Let $A$ be the vertices on the row above that are adjacent to the vertices of $P_{uv}$ in $H$; similarly, let $B$ be the vertices on the row below that are adjacent to $P_{uv}$ in $H$. See Figure~\ref{fig:grid-fpt-A-B} for an illustration.
    
    \begin{figure}[!ht]
        \centering
        \begin{subfigure}[t]{.315\textwidth}
            \centering
            \includegraphics[width=\textwidth]{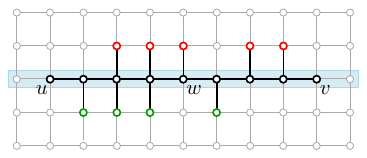}
            \caption{The vertex $w$, the path $P_{uv}$, and the sets $A$ and $B$.}
            \label{fig:grid-fpt-A-B}
        \end{subfigure}
            \hfill
        \begin{subfigure}[t]{.32\textwidth}
            \centering
            \includegraphics[width=\textwidth]{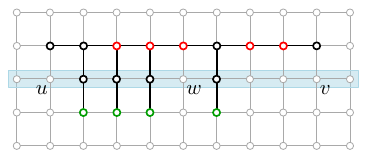}
            \caption{Constructing $H'$ by shifting $P_{uv}$ one row upwards.}
            \label{fig:grid-fpt-A}
        \end{subfigure}
        \hfill
        \begin{subfigure}[t]{.32\textwidth}
            \centering
            \includegraphics[width=\textwidth]{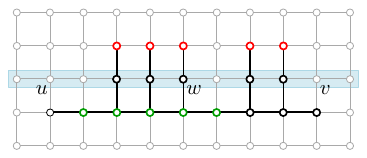}
            \caption{Constructing $H''$ by shifting $P_{uv}$ one row downwards.}
            \label{fig:grid-fpt-B}
        \end{subfigure}
        \caption{A branching vertex $w$ of $H$ that is not an intersection of the Hanan grid, the inclusion-wise maximal horizontal path $P_{uv}$ that contains $w$, sets $A$ and $B$, and modified graphs $H'$ and $H''$.
            The row of $w$ is highlighted in light blue. 
            The vertices of $A$ and $B$ are shown in red and green, respectively. 
            In \textbf{(a)}, the affected edges of $H$ are depicted in black, whereas in \textbf{(b)} and \textbf{(c)}, the edges of $H'$ and $H''$, respectively, are shown in black. 
            All remaining grid edges and vertices are shown in gray.
        }
        \label{fig:grid-fpt}
    \end{figure}
    
    We now consider two subgraphs of $G$, $H'$ and $H''$, illustrated in Figures~\ref{fig:grid-fpt-A} and~\ref{fig:grid-fpt-B}. $H'$ is obtained from $H$ by the following procedure: first, remove the edges of $P_{uv}$, and add edges parallel to them one row higher. Then, for each vertex of $B$ that has no vertex of $A$ in the same column, add an edge from the vertex one step above it to the vertex two steps above. For each vertex of $A$ that has no vertex of $B$ in the same column, remove the edge going downwards from this vertex. The construction of $H''$ is symmetric with edges of $P_{uv}$ shifted to the row below. 
    By construction, the size of $H'$ is at most $|E(H)| + |B| - |A|$, and the size of $H''$ is at most $|E(H)| + |A| - |B|$. We show that both $H'$ and $H''$ are also distance preservers.

    \begin{claim}
        $H'$ and $H''$ are pairwise distance preservers for $(G, \mathcal{P})$.
    \end{claim}
    \begin{proof}
        We show the claim for $H'$, the proof for $H''$ is analogous. Let $(s, t) \in \mathcal{P}$, and assume there is no shortest $(s, t)$-path in $H'$. Since there was a shortest $(s, t)$-path $P$ in $H$, it had to go through the vertices of $P_{uv}$, since only the edges adjacent to this path were removed in $H'$. The row of $w$, which contains $P_{uv}$, does not contain terminals, hence one of $s$ or $t$ is above it and one is below. Without loss of generality, let $s$ be above the row of $w$ and let $t$ be below. Let $s'$ be the last vertex of $P$ before intersecting $P_{uv}$, and let $t'$ be the first vertex on $P$ after leaving $P_{uv}$. Observe that $s'$ must be in $A$ and $t'$ must be in $B$. We now claim that $s'$ and $t'$ are shortest-path-connected in $H'$, which implies that $s$ and $t$ are also connected by the shortest path in $H'$. Indeed, consider the vertex $t''$ that is on the same column as $t'$ but two steps above, i.e., in the row of $s'$. $H'$ contains the two vertical edges between $t'$ and $t''$: the edge above $t'$ is contained by definition of $B$, then either $t''$ is in $A$ and the edge downwards from $t''$ thus belongs to $H$ and $H'$, or $t''$ is not in $A$, in which case this edge was added by construction of $H'$. Since $P_{uv}$ was lifted one row higher in $H'$, $t''$ and $s'$ are connected by the horizontal path in $H'$. This concludes the proof.
    \end{proof}

    Since both $H'$ and $H''$ are distance preservers, and since $H$ was a minimum distance preserver, it has to hold that $|A| = |B|$, and that $|E(H)| = |E(H')| = |E(H'')|$. Since this holds for any minimum distance preserver $H$ and any branching vertex $w$ that is not an intersection, we may apply the following procedure exhaustively. If $H$ has a branching vertex $w$ that is not an intersection, replace $H$ with $H'$ and repeat. (For $w$ that has no terminal in its column, we define $H'$ to be the one obtained by shifting the maximal path within the column leftwards.) Since these operations only shift branching vertices that are not intersections upwards and leftwards, but never in the opposite directions, the process is finite. Therefore, in the resulting distance preserver, all branching vertices lie on the intersections, which concludes the proof of the lemma.
\end{proof}

Finally, we now provide the proof of~\autoref{thm:pw-terminals-grid}.
\begin{proof}[Proof of~\autoref{thm:pw-terminals-grid}]
    By Lemma~\ref{lemma:Hanan}, there exists a solution that only uses the edges of the Hanan grid. Observe that the Hanan grid consists of $k^2$ intersections, connected by at most $2k^2$ straight-line segments. Clearly, any minimal distance preserver for $(G, \mathcal{P})$ either contains any of these straight-line segments completely or not at all. For each segment, branch whether it is included in the solution. By the above, one of the guesses will coincide with the minimum distance preserver. There are at most $2^{2k^2}$ choices for which segments to include, and each candidate solution can be verified in $|V(G)|^{O(1)}$ time. Hence, the algorithm that enumerates the choices and returns the smallest among those that are distance preservers, fulfills the statement of the theorem.
\end{proof}

For the running time of Theorem~\ref{thm:pw-terminals-grid}, observe that the grid lower bound construction of Krauthgamer et al.~\cite{Krauthgamer2014} shows that there is an instance where the minimum distance preserver is required to use $\Omega(k^2)$ vertices and edges of the Hanan grid.
\subsection{PDP is NP-hard on grids}
Here, we show that computing a minimum-size \pwshort\ is \nph, even when restricted to the special class of grid graphs.
Our proof proceeds by a polynomial-time reduction from the \rsafull\ problem, which is known to be NP-hard~\cite{RSA}.

\smallskip
Throughout this section, all points are assumed to lie on integer grid coordinates.
This restriction is without loss of generality.
Although integrality is not part of the original definition of \rsashort, the reduction from \textsc{3SAT}, given in~\cite{RSA}, produces only instances in which all terminal coordinates are integers.
Hence, the hardness result continues to hold under this assumption.

We recall the formal definition of the problem.
\defbox{\rsafull}{\rsashort}
{A set $P=\{p_1,p_2,\dots,p_n\} \subseteq \N^2$ of $n$ points, called terminals, in the first quadrant of the Cartesian plane, where $p_i=(x_i,y_i)$ for each $i$, and integer $k$.}
{Does there exist a directed Steiner tree $T=(V,E)$ rooted at the origin $(0,0)$ such that $P \cup \{(0,0)\} \subseteq V$, every edge in $E$ is a horizontal or vertical line segment directed left-to-right or bottom-to-top, and the total edge length of $T$
is at most $k$?}
Note that the underlying grid and the instance of \rsashort are undirected.
However, direction arises implicitly from the requirement that $T$ is a shortest-path tree rooted at $(0,0)$.
For any terminal $p=(x,y)\in P$, by definition, every shortest path from $(0,0)$ to $p$ consists of exactly $x$ horizontal steps to the right and $y$ vertical steps upwards.
Consequently, along any such path, each horizontal line segment connects a vertex with smaller $x$-coordinate to one with larger $x$-coordinate,
and each vertical line segment connects a vertex with smaller $y$-coordinate to one with larger $y$-coordinate.
We therefore orient every edge of $T$ in the direction induced by the root-to-terminal shortest paths, that is, left-to-right for horizontal edges and bottom-to-top for vertical edges.

Moreover, the key difference between an \rsashort\ and a rectilinear \textsc{Steiner} tree is that an \rsashort\ is additionally required to be a shortest-path tree with respect to the origin, whereas a Steiner tree is only required to preserve connectivity. 
~\autoref{fig:rsa-def} illustrates the edge directions implicitly induced by shortest paths from the root, as well as the distinction between an \rsashort
and a rectilinear \textsc{Steiner} tree.

\begin{observation}
\label{obs:rsa-shortest-path-length}
    Let $T$ be an \rsashort with terminal set $P=\{p_1,p_2,\dots,p_n\}$.
    For every terminal $p_i=(x_i,y_i)\in P$, the directed path from $(0,0)$ to $p_i$ in $T$ has length exactly $x_i+y_i$.
\end{observation}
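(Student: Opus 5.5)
We prove the statement by induction along the tree, using the monotonicity of the edge orientations and no global argument. Fix a terminal $p_i=(x_i,y_i)\in P$. Since $T$ is a tree containing both $(0,0)$ and $p_i$, there is a unique path $Q$ from the root to $p_i$ in $T$. By the definition of \rsashort, every edge of $T$ is a horizontal segment directed left-to-right or a vertical segment directed bottom-to-top. We read $Q$ as a directed path from $(0,0)$, so along $Q$ each horizontal edge raises the $x$-coordinate by its length and each vertical edge raises the $y$-coordinate by its length.

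Summing along $Q$, the total horizontal length of $Q$ equals the net change in $x$, namely $x_i-0=x_i$. Likewise the total vertical length equals $y_i$. No cancellation can occur, because no edge is traversed leftwards or downwards. Hence $|Q|=x_i+y_i$. In particular this equals the $\ell_1$-distance $d_\Gamma((0,0),p_i)$, so $Q$ is a shortest path in the grid. This matches the remark made just before the observation.

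The one step that needs care is justifying that the unique tree path $Q$ is actually consistent with the edge directions, i.e., that it is a directed path from the root. I would argue this as follows. $T$ is an arborescence rooted at $(0,0)$, so every vertex is reachable from the root by a directed path. The directed path to $p_i$ is then a path in $T$ between the root and $p_i$, and by uniqueness of paths in a tree it coincides with $Q$.

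If instead segments are considered with Steiner points as vertices, the same argument applies after subdividing each segment into unit grid edges. Subdivision preserves both the directions and the total lengths.
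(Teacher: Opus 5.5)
Your proof is correct and matches the reasoning the paper relies on: the paper gives no separate proof, treating the observation as immediate because every edge of $T$ points rightward or upward, so the root-to-$p_i$ path makes exactly $x_i$ units of rightward and $y_i$ units of upward progress. Your explicit appeal to the arborescence property, to ensure the unique tree path is a directed path, is the one point the paper leaves implicit, and it is genuinely needed. An undirected tree whose edges are merely oriented right or up need not have monotone root-to-terminal paths.
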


In this section, we use $d(p_1,p_2)$ without subscript to denote the length of a shortest path from $p_1$ to $p_2$ on the underlying integer grid graph. 

\begin{figure}[h]
    \centering
    \begin{subfigure}[t]{.49\textwidth}
        \centering
        \includegraphics[width=0.75\textwidth]{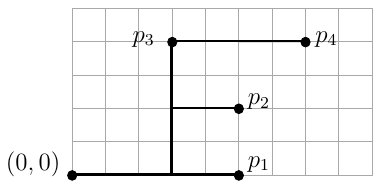}
        \caption{An \rsashort\ of total length $15$.}
        \label{fig:rsa}
    \end{subfigure}
        \hfill
    \begin{subfigure}[t]{.49\textwidth}
        \centering
        \includegraphics[width=0.75\textwidth]{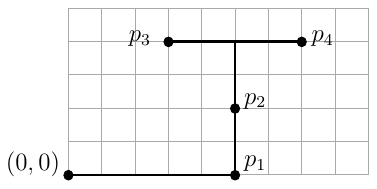}
        \caption{A rectilinear Steiner tree of total length $13$}
    \label{fig:steiner-tree}
    \end{subfigure}
    \caption{ \textbf{(a)} Illustration of an \rsashort\  and \textbf{(b)} illustration of a rectilinear \textsc{Steiner} tree for terminals $P = \{(5,0), (5,2), (3,4), (7,4)\}$.
    In the \rsashort, each edge on the unique path from the origin to a terminal
    $p_i = (x_i,y_i)$ is oriented left-to-right or bottom-to-top, and this path realizes the shortest possible length $x_i + y_i$. 
    In contrast, the rectilinear \textsc{Steiner} tree preserves connectivity but does not guarantee shortest paths; the path from $(0,0)$ to $p_3$
    has length $11 > d\bigl((0,0),p_3\bigr)=7$.}
    \label{fig:rsa-def}
\end{figure}

In~\cite{RSA}, authors show that \rsashort\ is \nph.
Using this result, we prove the following hardness theorem for \pwshort:

\begin{restatable}{theorem}{PwNpHardnessGrid}
\label{thm:pw-np-hardness-grid}
 \pwshort\ is \nph \ on grid graphs.
\end{restatable}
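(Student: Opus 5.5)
We reduce from \rsashort. Given an instance $(P,k)$ with $P=\{p_1,\dots,p_n\}\subseteq \N^2$, let $X=\max_i x_i$ and $Y=\max_i y_i$. Let $G$ be the full $(X+1)\times(Y+1)$ grid graph on $\{0,\dots,X\}\times\{0,\dots,Y\}$, which has polynomial size because the reduction of~\cite{RSA} produces polynomially bounded integer coordinates. Set $\mathcal{P}=\bigl\{\{(0,0),p_i\}\mid i\in[n]\bigr\}$ and keep the budget $k$. Every edge of $G$ has unit length, so the total edge length of a rectilinear tree embedded in the grid is its number of grid edges. The claim to prove is that $(P,k)$ is a \yes-instance of \rsashort if and only if $(G,\mathcal{P},k)$ is a \yes-instance of \pwshort.

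\emph{Forward direction.} Take an arborescence $T$ of length at most $k$. Subdivide its segments at integer points. This gives a subgraph $H\subseteq G$ with $|E(H)|\le k$, because segments between lattice points can be assumed to lie on grid lines; if needed we first apply the Hanan-type argument so that all Steiner points are integral. By Observation~\ref{obs:rsa-shortest-path-length}, each $p_i$ is reached from the origin in $H$ by a path of length $x_i+y_i=d_G((0,0),p_i)$. Hence $H$ is a pairwise distance preserver of size at most $k$.

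\emph{Backward direction.} Take a minimum preserver $H$ with $|E(H)|\le k$. For each $i$ fix a shortest path $Q_i$ in $H$ from $(0,0)$ to $p_i$. Each $Q_i$ is monotone (right and up), since its length equals the $\ell_1$-distance. By Observation~\ref{obs:all-edges-needed} we may assume $H=\bigcup_i Q_i$. The main obstacle is that $H$ need not be a tree: two monotone paths may split and later merge again. To repair this, run BFS from the origin in $H$ and take the shortest-path tree $T$. Every $p_i$ keeps distance $x_i+y_i$ in $T$, and $|E(T)|\le|E(H)|$. Every tree edge joins BFS layers $d$ and $d+1$, and in $G$ the $d$-th layer from the origin is exactly $\{x+y=d\}$. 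Each tree edge therefore goes right or up, so $T$ is a valid rectilinear Steiner arborescence of length at most $k$.

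The step needing the most care is the forward direction's integrality. An optimal arborescence may a priori use non-integer Steiner points or segments off the grid lines. We handle this by citing the classical fact that an optimal \rsashort lies on the Hanan grid of $P\cup\{(0,0)\}$, which consists of integer lines~\cite{Hanan66,RSA}. This ensures $T$ maps to a grid subgraph with exactly its length in edges.
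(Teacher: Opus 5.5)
Your reduction is the same as the paper's: pairs $\{(0,0),p_i\}$ with the same budget. The forward direction is also essentially the same. The real difference is in the backward direction, where your argument is correct and simpler.

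\textbf{How the paper argues.} The paper orients every edge of $H$ along a shortest root-to-terminal path. It shows in Claim~\ref{claim:no-down-left} that no edge points down or left. It then removes cycles one at a time in Claim~\ref{claim:no-cycle}. That step relies on the geometric assertion that every cycle splits into two monotone paths between its bottom-left and top-right corners, after which the edge $b_1b$ is deleted.

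\textbf{How you argue.} You restrict to $H=\bigcup_i Q_i$ and take a BFS tree from the origin. This preserves all root distances in one step and makes the shape of cycles irrelevant. It is more elementary and avoids the structural claim about cycles altogether.

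\textbf{One step to state explicitly.} The BFS layers are computed in $H$, not in $G$. What you actually need is $d_H((0,0),v)=x_v+y_v$ for every $v\in V(H)$. This holds because every vertex of $H=\bigcup_i Q_i$ lies on a monotone path from the origin. Without that restriction, a BFS tree edge could increase $d_H$ while stepping left or down, so the reduction to $\bigcup_i Q_i$ is doing real work.

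\textbf{Your forward direction is more careful than the paper's.} The paper silently assumes two things. First, that the arborescence lies on integer grid lines. Second, that the underlying grid is polynomially large. You close both points. For integrality, you invoke the Hanan-grid property of optimal arborescences; note that Hanan's original theorem concerns Steiner trees, so the arborescence version is the fact you actually want to cite. For the grid size, you use the polynomially bounded integer coordinates produced by the Shi--Su construction.
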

\begin{proof}
    We prove the theorem by a polynomial-time reduction from \rsashort.
    
    Let $I_{\rsashort}=(P, k)$ be an instance of \rsashort\ where $P=\{p_1,p_2,\dots,p_n\}$ and $p_i=(x_i, y_i)$.
    Let $p_0=(0,0)$ denote the origin in the Cartesian plane.
    We construct an instance $I_{\pwshort}=(\mathcal{P}', k')$ of \pwshort, by setting $k' = k$ and 
    \[
        \mathcal{P}' = \bigl\{\, \{p_0,p_i\} \mid i \in [n] \,\bigr\}.
    \]
    This completes the construction.
    
    We show that $I_{\rsashort}$ admits a solution of total length at most $k$ if and only if $I_{\pwshort}$ admits a solution of total length at most $k'$.
    
    To see the forward direction, assume that $I_{\rsashort}$ admits a solution $T$ of total length at most $k$.
    By definition, $T$ is a shortest-path tree rooted at $p_0$, and thus connects $p_0$ to every terminal $p_i\in P$ by a path of length $x_i+y_i=d(p_0,p_i)$.
    Therefore, $T$ preserves the distances for all pairs in $\mathcal{P}'$, and hence is a feasible solution for $I_{\pwshort}$ with total length at most $k=k'$.

    To see the other direction, assume that $I_{\pwshort}=(\mathcal{P}', k')$ admits a pairwise distance preserver $H$ of total length at most $k'$.

    By~\autoref{obs:all-edges-needed}, every edge $e \in E(H)$ lies on at least one shortest path from $p_0$ to some terminal $p_i$, with $i\in [n]$.
    We orient each edge along such a path, from $p_0$ towards $p_i$.
    This defines an implicit direction for every edge of $H$.

    \begin{claim}
        \label{claim:no-down-left}
        Every edge of $H$ is oriented either bottom-to-top or left-to-right.
    \end{claim}
    
    \begin{proof}
        Conversely, suppose $e$ is an edge in $H$ that is oriented top-to-bottom.
        By~\autoref{obs:all-edges-needed}, $e$ lies on some shortest path.
        If $e$ lies on a shortest path $\pi$ from $p_0$ to $p_i$, then
        \[
        d_{H}(p_0, p_i) \ge x_i + y_i + 1 = d(p_0, p_i)+1,
        \]
        contradicting the fact that $H$ preserves the shortest distance between $p_0$ and $p_i$.  
        Hence, $e$ does not lie on any shortest path and can be removed without affecting shortest distances and connectivity.   
        The argument for existence of a right-to-left edge is similar.
    \end{proof}
    
    \begin{corollary}
        Every edge of $H$ has a unique implicit direction, as determined by the shortest paths from $p_0$ to the terminals.
    \end{corollary}

    By Claim~\ref{claim:no-down-left}, we may assume that $H$ contains only left-to-right and bottom-to-top edges.
    Consequently, every path from $p_0$ to any vertex in $H$ is \emph{monotone}, that is, it contains no top-to-bottom or right-to-left steps.
    
    \begin{claim}
        \label{claim:no-cycle}
        If $H$ contains a cycle $C$, then there exists a pairwise distance preserver $H'$ with $|E(H')| \le |E(H)|$ and strictly fewer cycles.
    \end{claim}
    \begin{proof}
        Let $C$ be a cycle in $H$.
        By Claim~\ref{claim:no-down-left}, all edges of $C$ are oriented left-to-right or  bottom-to-top, and hence $C$ forms an orthogonal polygon.
        Let $a=(x_a,y_a)$ and $b=(x_b,y_b)$ denote the bottom-left and top-right corners of $C$, respectively.
        
        The cycle $C$ consists of two internally disjoint monotone paths $C^{\uparrow}$ and $C^{\downarrow}$ connecting $a$ to $b$.
        Since both paths are monotone, each contains exactly $(x_b-x_a)$ rightward and $(y_b-y_a)$ upward edges, and therefore $|C^{\uparrow}|=|C^{\downarrow}|$.

        Let $P_b \subseteq P$ be the set of terminals whose shortest paths from $p_0$ in $H$ pass through $b$.
        Since every path from $p_0$ to a terminal in $H$ is monotone, for every terminal $p=(x_p,y_p) \in P_b$ we have $x_p \ge x_b$ and $y_p \ge y_b$.
        Thus, for each such terminal $p \in P_b$, any shortest path from $p_0$ to $p$ may traverse $C^{\downarrow}$ instead of $C^{\uparrow}$ without increasing its length.
        An illustration of this configuration is given in Figure~\ref{fig:orthogonal-cycle}.
        
        Let $b_1=(x_b-1,y_b)$ be the immediate predecessor of $b$ along $C^{\uparrow}$.
        Removing the edge $b_1b$ preserves all required shortest distances from $p_0$ to all terminals in $P_b$ and eliminates the cycle $C$.
        Since no edges are added, no new cycles are introduced.
        See figure~\ref{fig:orthogonal-cycle-kill} for an illustration.
        
        \begin{figure}[h]
            \centering
            \begin{subfigure}[t]{.49\textwidth}
                \centering
                \includegraphics[width=0.77\textwidth]{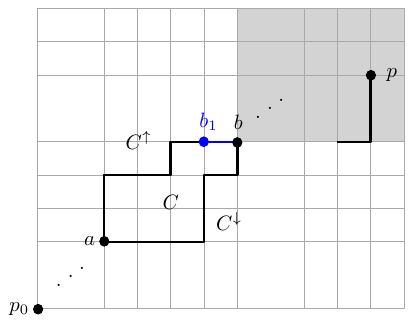}
                \caption{A \pwshort\ $H$ with a cycle $C$, which is an orthogonal polygon.}
                \label{fig:orthogonal-cycle}
            \end{subfigure}
                \hfill
            \begin{subfigure}[t]{.49\textwidth}
                \centering
                \includegraphics[width=0.77\textwidth]{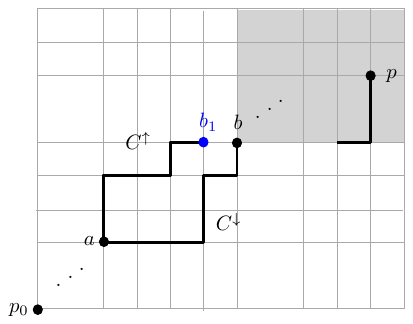}
                \caption{Obtaining another \pwshort\ $H'$ after eliminating cycle $C$ by removing edge $b_1b$.}
            \label{fig:orthogonal-cycle-kill}
            \end{subfigure}
            \caption{The gray area marks the potential positions of all terminals in $P_b$. \textbf{(a)} An orthogonal cycle $C$ in $H$, with bottom-left corner $a$, top-right corner $b$ and the two monotone subpaths $C^{\uparrow}$ and $C^{\downarrow}$ with $|C^{\uparrow}|=|C^{\downarrow}|$.
            Vertex  $b_1$ is the predecessor of $b$ along $C^{\uparrow}$.
            \textbf{(b)} $H$ after removing the edge $b_1b$; shortest paths from $p_0$ to terminals in $P_b$ are rerouted via $C^{\downarrow}$.}
            \label{fig:rsa-orthogonal-cycle}
        \end{figure}
    \end{proof}

    By repeatedly applying ~\autoref{claim:no-down-left} and~\autoref{claim:no-cycle}, we obtain a pairwise distance preserver $H$ of total length at most $k'$ that contains no top-to-bottom edges, no left-to-right edges, and no cycles.
    As a result, $H$ is a tree consisting only of bottom-to-top and left-to-right edges, and it connects $p_0$ to every terminal in $P$ via shortest paths. Hence, $H$ is a valid \rsashort of size at most $k' = k$.
    Finally, $\pwshort \in \mathrm{NP}$, since a candidate solution of size at most $k$ can be verified in polynomial time by checking that all required terminal distances are preserved.
    \end{proof}
\section{FPT by treewidth and the number of terminals}\label{sec:tw}
In this section we prove that \pwshort\ is fixed-parameter tractable parameterized by the treewidth of the input graph and the number of terminals.
More formally, we prove Theorem~\ref{thm:pw-tw-k-fpt}, restated next.
\pairwiseFPTbyTWK*

Intuitively, a graph of treewidth at most $\tau$ can be decomposed in a tree-like fashion,  where at any point only a small set of at most $\tau+1$ vertices—the \emph{bag}—needs to be considered. 
This bag acts as a separator between the portion of the graph already processed (the ``past'')  and the portion yet to be explored (the ``future''), so that all connections between past and future must go through the current bag. 

For the \pwshort\ instance, this corresponds to the property that terminals in the past can interact with the terminals in the future only via the vertices in the bag.

Our algorithm leverages this structure by performing dynamic programming over the tree decomposition, storing at each node a set of records that summarize the relevant connectivity and distance information for the vertices in the current bag.

We first recall the definition of a nice tree decomposition, which is a standard structure for performing dynamic programming in the setting of bounded treewidth.
\subparagraph*{\textbf{Nice Tree Decomposition.}}
A \emph{nice tree decomposition} of a graph $G=(V, E)$ is a pair $(\mathcal{T},\mathcal{X})$ where $\mathcal{T}$ is a rooted tree with root $r$ and $\mathcal{X}$ is a mapping that assigns to each node $t \in \mathcal{T}$ a set $X_t \subseteq V(G)$, referred to as the bag at node $t$.
A nice tree decomposition satisfies the following properties for each $t \in \mathcal{T}$:
\vspace{-0.3em}
\begin{enumerate}
    \setlength{\itemsep}{0.5em}
	\item $X_r = \emptyset$ and for every leaf $\ell$, it holds that $X_{\ell} = \emptyset$. \label{tw:leafs}
	\item For every $uv \in E$, there exists a node $t \in \mathcal{T}$ such that both $u \in X_t$ and $v \in X_t$. \label{tw:edges}
	\item For every $v \in V$, the set of nodes $t \in \mathcal{T}$ such that $v \in X_t$, induces a connected subtree in $\mathcal{T}$. \label{tw:connected}
	\item There are four kinds of nodes (aside from root and the leaves) in $\mathcal{T}$:
    \vspace{0.3em}        
	\begin{enumerate}
        \setlength{\itemsep}{0.5em}
		\item \textbf{Introduce vertex node}: a node $t$ with only one child $t'$ such that there is a vertex $v \notin X_{t'}$ satisfying $X_t = X_{t'} \cup \{v\}$. 
        We call $v$, the introduced vertex.
        \item \textbf{Introduce edge node}: a node $t$, labeled with an edge $uv \in E(G)$ such that $u, v \in X_t$, and with exactly one child $t'$ such that $X_t = X_{t'}$. 
        We say that edge $uv$ is introduced at $t$.
		\item \textbf{Forget node}: a $t$ with only one child $t'$ such that there is a vertex $v \notin X_t$ satisfying $X_t = X_{t'} \!\setminus\! \{v\}$.
        When a vertex is forgotten, then also all of its incident edges.
        We call $v$, the forgotten vertex.
		\item \textbf{Join node}: a node $t$ with exactly two children $t_1$ and $t_2$ such that $X_t=X_{t_1}=X_{t_2}$.
	\end{enumerate}
\end{enumerate}

Note that by property 3, while traversing from leaves to the root, a vertex $v \in V(G)$ cannot be introduced again after it has already been forgotten.
Otherwise the subtree of $\mathcal{T}$ induced by the nodes whose bags contain $v$ will be disconnected.
The \emph{width} of a nice tree decomposition $(\mathcal{T},\mathcal{X})$ is defined as $\max_{t \in \mathcal{T}} \bigl(|X_t|-1 \bigr)$ and the \emph{treewidth} of a graph $G$ is defined to be the smallest width of a nice tree decomposition of $G$ and is denoted by $\tw(G)$. 

One can use fixed-parameter tractable algorithms described in~\cite{bodlaender1993linear},~\cite{bodlaender2016c} and~\cite{kloks1994treewidth} to obtain a nice tree decomposition with almost optimal width and linearly many nodes.

Now we proceed with the formal proof.
Let $(G, \mathcal{P})$ be an instance of \pwshort, and assume that the treewidth of $G$ is at most $\tau$.
Let $S = \bigcup \mathcal{P} \subseteq V(G)$ denote the set of terminal vertices.

Let $\mathcal{T} = (T,\{X_t\}_{t\in V(T)})$ be a nice tree decomposition of $G$ of width $\tau$.
For a node $t\in T$, denote the subtree of $T$ rooted at $t$ by $T_t$.
We write $G_t$ for the subgraph of $G$ on all vertices and edges that appear in bags of $T_t$.
Furthermore, by $S_t^{\downarrow}$ we denote the set of all terminal vertices that appear in bags of $T_t$.
For each node $t\in T$, let $X_t$ denote its bag and define $U_t := X_t \cup S_t^{\downarrow}.$

A key property of tree decompositions is that every bag $X_t$ separates the subgraph $G_t$ from the rest of $G$.
In particular, any path that starts in $G_t \setminus X_t$ and leaves $G_t$ must pass through vertices of $X_t$.
Since every subpath of a shortest path is itself a shortest path, it follows that for any terminal vertex in $S_t^{\downarrow} \setminus X_t$,
every shortest path connecting it to terminals outside $G_t$ must intersect $X_t$.
Consequently, when processing node $t$, it suffices to remember which vertices in $U_t$ are mutually connected by shortest path within the partial solution restricted to $G_t$.
This information will be encoded by a connectivity table, together with the minimum number of edges required to realize it.

We formalize the above intuition as follows.
\begin{definition}[Connectivity Table]
For a node $t\in T$, let $\Sigma_t$ be the set of all symmetric functions $\sigma : U_t \times U_t \to \{0,1\},$
that is $\sigma(i,j)=\sigma(j,i)$ for all $i,j\in U_t$.
\end{definition}
Intuitively, a table $\sigma \in \Sigma_t$ has one row and one column for each vertex in $X_t$ and for each terminal vertex encountered so far.
The entry $\sigma(i,j)=1$ indicates that $i$ and $j$ are connected by a shortest path within the partial solution inside $G_t$,
while $\sigma(i,j)=0$ indicates that this is not the case.

Let $U$ and $U'$ be vertex sets with $U' \subseteq U$.
For a table $\sigma : U \times U \to \{0,1\}$, we denote by $\sigma\!\restriction_{U'}$ its restriction to $U'$, defined by
$ \sigma\!\restriction_{U'}(i,j) := \sigma(i,j) $
for all  $i,j \in U'$ .
Conversely, we say that a table $\sigma : U \times U \to \{0,1\}$ \emph{extends} a table $\sigma' : U' \times U' \to \{0,1\}$ if $\sigma\!\restriction_{U'} = \sigma'$.

\medskip
For a node $t$ and a table $\sigma\in\Sigma_t$, we define $w[t,\sigma]$ as the minimum number of edges in a subgraph $H_t \subseteq G_t$ such that for all $i,j\in U_t$:
\[
    \sigma(i,j) =
    \begin{cases}
        1 &\text{if and only if } d_{H_t}(i,j)=d_G(i,j),\\
        0 &\text{otherwise}.
    \end{cases}
\]
If no such subgraph exists, we set $w[t,\sigma]=+\infty$.

Any subgraph $H_t$ satisfying the above condition is said to be \emph{compatible} with $(t,\sigma)$ and to \emph{realize} the table $\sigma$.

By definition, every table $\sigma\in\Sigma_t$ is symmetric.
Throughout the remainder of this section, an entry $\sigma(i,j)$ implicitly represents both $(i,j)$ and $(j,i)$.
Accordingly, any operation on $\sigma(i,j)$ is assumed to apply symmetrically to $\sigma(j,i)$ without further mention.
\medskip
We now describe how to compute $w[t,\sigma]$ bottom-up via dynamic programming.

\begin{itemize}
    \setlength{\itemsep}{0.45em}
    \item \textbf{Leaf node:} 
        Let $t$ be a leaf, by definition $X_t=S_t^{\downarrow}=\emptyset$, so $U_t=\emptyset$.
        As $U_t$ is empty, there is only a unique table $\sigma=\emptyset$ defined at this node.
        The only subgraph compatible with $(t, \emptyset)$ is the empty graph.
        So we set $w[t,\emptyset]=0.$

    \item \textbf{Introduce vertex node:} 
        Let $t$ introduce vertex $v$ with child $t'$ so that $X_t=X_{t'}\cup\{v\}$.
        Recall that $U_t = X_t \cup S_t^{\downarrow}$ and that $U_{t'} \subseteq U_t$.
        For every $\sigma\in\Sigma_t$, we set
        \[
        w[t,\sigma] =
        \begin{cases}
        w[t',\sigma\!\restriction_{U_{t'}}],
        & \text{if } \sigma(v,u)=0 \text{ for all } u \in U_t \text{ with }u \neq v,\\[1mm]
        +\infty, & \text{otherwise}.
        \end{cases}
        \]
        By construction of a nice tree decomposition, no edge incident to $v$ appears in $G_t$.
        Hence, $v$ cannot be connected by a shortest path to any previously introduced vertex or terminal, and all corresponding entries involving $v$ must be zero.
        The remaining part of the table must agree with the state of the child.

    \item \textbf{Introduce edge node:} 
        Let $t$ introduce an edge $uv$ with child $t'$.
        We have $X_t = X_{t'}$, $S_t^{\downarrow} = S_{t'}^{\downarrow}$, and $d_G(u,v)=1$.
        Define
        \[
        U_{uv} := \bigl\{ \{p,q\} \subseteq U_t \mid d_G(p,q) = d_G(p,u) + 1 + d_G(v,q) \bigr\},
        \]
        the set of vertex pairs whose shortest path \emph{may} be realized using the new edge $uv$.  
        
        We say that a table $\sigma \in \Sigma_t$ is \emph{edge-invalid} at an introduce edge node $t$ if
        \[
            \sigma(u,v)=1 \quad \text{and there exists } \{p,q\} \in U_{uv} \text{ with } 
            \sigma(p,q)=0 \text{ and } \sigma(p,u)=\sigma(q,v)=1.
        \] 

        \medskip
        Let $\Sigma_t^{\mathrm{edge-inv}}$ denote the set of all edge-invalid tables at introduce edge node $t$.

        For each $\sigma \in \Sigma_t$, define
            \[
                \Sigma' := \bigl\{ \sigma' \in \Sigma_{t'} \mid 
                \sigma'(u,v)=0 \text{ and } \sigma'(i,j) = \sigma(i,j) \text{ for all pairs} \{i,j\} \notin U_{uv} \bigr\}.
            \]

        Then the update is
        \[
            w[t,\sigma] =
            \begin{cases}
                +\infty, & \text{if } \sigma \in \Sigma_t^{\mathrm{edge-inv}},\\[1mm]
                w[t',\sigma], & \text{if } \sigma(u,v)=0,\\[2mm]
                \displaystyle \min_{\substack{\sigma'\in \Sigma'}} w[t',\sigma'] + 1, & \text{otherwise}.
            \end{cases}
        \]
        The edge $uv$ may complete shortest paths between pairs in $U_{uv}$.
        If $\sigma$ is edge-invalid, it cannot be realized by any subgraph $H_t$.  
        If $\sigma(u,v)=0$, then the edge $uv$ is not used in $H_t$, so there is no difference between $H_{t'}$ and $H_t$.
        Otherwise if $\sigma(u,v)=1$, we consider all compatible child tables that agree outside $U_{uv}$ and add the edge to extend the solution.
        
    \item \textbf{Forget vertex node:}
        Let $t$ forget vertex $v$ with child $t'$.
        Then $S^{\downarrow}_{t}=S^{\downarrow}_{t'}$, and $X_{t}=X_{t'}\setminus \{v\}$.
        
        We distinguish the following cases:
        \begin{itemize}
            \setlength{\itemsep}{0.4em}
            \item If $v\in S^{\downarrow}_{t}$, and there exists a terminal $u \in S \setminus S^{\downarrow}_{t}$ such that $\{u, v\} \in \mathcal{P}$,
            then $v$ must remain connected to some vertex in $X_t$ via shortest path to allow future connection to $u$.
            Since $U_{t}=U_{t'}$, for every $\sigma \in \Sigma_t$ we set:
            \[
                w[t,\sigma]=
                    \begin{cases}
                    +\infty, & \text{if } \sigma(v,u)=0 \text{ for all } u\in X_t,\\[1mm]
                    w[t',\sigma] & \text{otherwise}.
                    \end{cases}
            \]
            \item If $v \in S_t^{\downarrow}$ but no such terminal $u$ exists, then $v$ has no further connectivity requirements, and for every $\sigma \in \Sigma_t$ we set
                \[
                w[t,\sigma] = w[t',\sigma].
                \]
            \item If $v \notin S_t^{\downarrow}$, then $U_t$ has one row and column fewer than $U_{t'}$ (corresponding to $v$).  
                For every $\sigma \in \Sigma_t$, define
                \[
                w[t,\sigma] = \min_{\substack{\sigma' \in \Sigma_{t'} \\ \sigma'|_{U_t} = \sigma}} w[t',\sigma'].
                \]
            For terminals that are “forgotten” but still need to connect to future terminals, we ensure they remain connected to the current bag.  
            Non-terminal vertices that are forgotten do not need to appear in future tables, so their rows and columns are removed. 
            The DP takes the minimum over all compatible child tables to preserve the minimum solution cost.
        \end{itemize}
        
    \medskip   
    \item \textbf{Join node:}
    Let $t$ be a join node with children $t_1,t_2$.
    Then $X_t = X_{t_1} = X_{t_2}, S_t^{\downarrow} = S_{t_1}^{\downarrow} \cup S_{t_2}^{\downarrow},$ and $\quad U_t = U_{t_1} \cup U_{t_2}$.
    Define $S^{\cap} = S_{t_1}^{\downarrow} \cap S_{t_2}^{\downarrow}$.
    
    For $\sigma\in\Sigma_t$, let $\sigma_i\in\Sigma_{t_i}$ be its restriction on $U_{t_i}$ for $i=1,2$, that is $\sigma_i = \sigma \restriction \! U_{t_i}$.   
    By definition, these restrictions agree with $\sigma$ on all indices in $X_t \cup S^{\cap}$:
    \[
        \sigma_1(i,j) = \sigma_2(i,j) = \sigma(i,j), \quad \forall i,j \in X_t \cup S^{\cap}.
    \]
    
    We say that $\sigma$ is \emph{join-invalid} at join node $t$ if there exist terminals $a\in S_{t_1}^{\downarrow} \setminus S_{t_2}^{\downarrow} $ and
    $b\in S_{t_2}^{\downarrow} \setminus S_{t_1}^{\downarrow} $ with $\{a, b\} \in \mathcal{P}$, and a vertex $v \in X_t$ such that:
    \[
    \sigma(a,b) = 0, \quad 
    \sigma_1(a,v) = \sigma_2(b,v) = 1, \quad
    d_G(a,b) = d_G(a,v) + d_G(v,b).
    \] 
    Let \(\Sigma_t^{\mathrm{join\text{-}inv}}\) denote the set of all such invalid tables at join node at $t$.
    
    To avoid double counting edges in the partial solution conforming with $(t, \sigma)$, define
    \[
        \delta(\sigma) := \bigl| \{ uv \in E(G[X_t]) \mid \sigma(u,v) = \sigma_1(u,v) = \sigma_2(u,v) = 1 \} \bigr|.
    \]
    
    Finally, for every $\sigma \in \Sigma_t$, we set: 
        \[
            w[t,\sigma] =
                    \begin{cases}
                            +\infty, & \text{if } \sigma \in \Sigma_{t}^{\mathrm{join\text{-}inv}},\\
                            w[t_1,\sigma_1] + w[t_2,\sigma_2] - \delta(\sigma),
                            & \text{otherwise}.
                    \end{cases}
    \]

    At a join node, we merge the partial solutions from the two subtrees.  
    A table \(\sigma\) is invalid if some terminal pair with one terminal in each subtree could form a shortest path via the bag $X_t$, but the table \(\sigma\) does not mark them as connected.  
    Otherwise, we sum the edge counts from the two children and subtract \(\delta(\sigma)\) to avoid double-counting edges entirely contained in the bag $X_t$.

    \medskip
    \item \textbf{Root node $r$: } At the root, $X_r=\emptyset$ and $S_r^{\downarrow} = S$, since all terminals have been visited.  
    We require that all terminal pairs in $\mathcal{P}$ are connected via a shortest path.
    Let
    \[
    \Sigma^* := \bigl\{ \sigma \in \Sigma_r \mid \sigma(u,v) = 1 \text{ for all } \{u,v\} \in \mathcal{P} \bigr\}.
    \]  
    The optimum size of a \pwshort\ for $(G, \mathcal{P})$ is then
    \[
    \min_{\sigma \in \Sigma^*} w[r,\sigma].
    \]
    At the root, all partial solutions have been combined, so we simply select the table that realizes shortest paths for all terminal pairs, giving the minimum number of edges in the final solution.
\end{itemize}

The dynamic programming described above computes the size of an optimal distance preserver.
By storing, for each table entry $(t,\sigma)$ with finite value, a witness choice attaining the minimum in the recurrence, one can reconstruct a corresponding subgraph by a standard backtracking procedure over the tree decomposition.
We therefore obtain not only the optimal value but also an optimal solution.

Having completed the description of the table entries and transition rules, it remains to show that the dynamic programming computes the value of an optimal solution to \pwshort.
The proof is based on a carefully maintained invariant that relates each table entry $w[t,\sigma]$ to subgraphs of $G_t$ realizing the connectivity encoded by $\sigma$.
We first restate this invariant,
\paragraph*{Dynamic Programming Invariant.}
    For every node $t$ of the tree decomposition and every table $\sigma \in \Sigma_t$ with $w[t,\sigma] < +\infty$, there exists a subgraph $H_t \subseteq G_t$ with $|E(H_t)| = w[t,\sigma]$, such that for all $i,j \in U_t$,
    \[
    \sigma(i,j) = 1 \quad \iff \quad d_{H_t}(i,j) = d_G(i,j).
    \]
Moreover, every subgraph $H_t \subseteq G_t$ satisfying this condition corresponds to exactly one table $\sigma$.  

We prove this invariant by induction over the nodes of the tree decomposition, following the bottom--up traversal used by the dynamic programming.
The induction shows that the invariant is preserved by each type of node and each transition rule.
It forms the basis for establishing both soundness (namely that every finite
table entry corresponds to a realizable partial solution in the graph) and completeness (every valid partial solution induces a table entry that survives the dynamic programming with the same cost).
Together, these properties ensure that the value returned at the root is
optimal.
\begin{lemma}[Soundness]
\label{lemma:soundness}
For every node $t \in T$ and every table $\sigma \in \Sigma_t$ with $w[t,\sigma] < +\infty$, there exists a subgraph
$H_t \subseteq G_t$ such that $|E(H_t)| = w[t,\sigma],$ and for all $i,j \in U_t$:
    \[
    \sigma(i,j) = 1 \quad \iff \quad d_{H_t}(i,j) = d_G(i,j).
    \]
\end{lemma}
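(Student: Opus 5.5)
We argue by induction over $T$ in bottom-up order. At each node $t$ we assume the statement holds for the children. We then take a table $\sigma$ with $w[t,\sigma]<+\infty$. The recurrence that produced $w[t,\sigma]$ points to a child table (or two), and the induction hypothesis gives a realizing subgraph for it. From that subgraph we build $H_t$ and check two things: the edge count matches $w[t,\sigma]$, and $\sigma(i,j)=1 \iff d_{H_t}(i,j)=d_G(i,j)$ holds for every $i,j\in U_t$. The leaf case is immediate, with the empty graph realizing the empty table.

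\textbf{Introduce vertex, forget, root.} At an introduce vertex node take $H_t:=H_{t'}$. Then $v$ is isolated in $H_t$, so $d_{H_t}(v,u)=\infty$ and the zero row of $v$ is exactly right. All other entries are inherited, since $\sigma\!\restriction_{U_{t'}}$ is realized by $H_{t'}$. At a forget node again take $H_t:=H_{t'}$. The $+\infty$ filter only removes tables, so it cannot hurt soundness. Restricting $\sigma'$ to $U_t$ preserves the equivalence because $U_t\subseteq U_{t'}$ and distances in $H_t$ do not change. The root only selects among tables, so it is trivial.

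\textbf{Introduce edge.} If $\sigma(u,v)=0$, take $H_t:=H_{t'}$. If $\sigma(u,v)=1$, take $H_t:=H_{t'}+uv$ with the minimizing $\sigma'$, which satisfies $\sigma'(u,v)=0$, so $uv\notin H_{t'}$ and the count is $w[t',\sigma']+1$. For pairs $\{p,q\}\notin U_{uv}$, the edge cannot lie on a shortest $p$--$q$ path of $G$. Adding it therefore cannot create a new distance-preserving connection, so $\sigma'(p,q)=\sigma(p,q)$ is still correct. For pairs in $U_{uv}$ the entry may flip to $1$. The edge-invalidity test is what forces consistency here: if $\sigma(p,u)=\sigma(q,v)=1$, then $p$--$u$--$v$--$q$ is a shortest path in $H_t$, so $\sigma(p,q)$ must be $1$. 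I will need to check the converse as well. A new shortest $p$--$q$ path through $uv$ decomposes into shortest $p$--$u$ and $v$--$q$ subpaths, possibly in either orientation. These subpaths lie in $H_{t'}$, so they are recorded as $1$ in $\sigma'$, and their entries agree with $\sigma$ since $\{p,u\}\notin U_{uv}$ in the relevant sense. This is where care is needed.

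\textbf{Join node.} Take $H_t:=H_{t_1}\cup H_{t_2}$. The edge count is $w[t_1,\sigma_1]+w[t_2,\sigma_2]$ minus the edges counted twice. The subtle point is that the only edges shared between $G_{t_1}$ and $G_{t_2}$ lie inside $G[X_t]$, and the amount $\delta(\sigma)$ must equal the size of the shared set. I expect the main obstacle here: the union can create shortest paths that zig-zag between the two sides through several bag vertices. Such paths would make $d_{H_t}(i,j)=d_G(i,j)$ while the tables record $0$. The join-invalidity test only covers a single crossing through one vertex $v$, and only for terminal pairs in $\mathcal{P}$.

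My first attempt is to split any shortest path in $H_t$ at its bag vertices into segments, each lying in one side. Each segment is a shortest path between points of $U_{t_1}$ or $U_{t_2}$, so it is encoded as $1$ in the corresponding child table. The entries for pairs in $X_t\cup S^{\cap}$ must then be closed under this concatenation, which I will argue via the agreement of $\sigma_1,\sigma_2,\sigma$ on these indices. If that argument fails, the fallback is to read $\sigma$ as recording membership of realized pairs in a way robust under union: strengthen the invariant to ``$\sigma(i,j)=1$ implies preservation, and preservation for pairs in $\mathcal{P}$ and the bag is captured.'' I would then prove the one-directional version, which suffices for the root optimum.
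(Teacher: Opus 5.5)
Your construction of $H_t$ at each node is the same as the paper's, and the leaf, introduce-vertex and forget cases are fine. The gap is the direction $\sigma(i,j)=1\Rightarrow d_{H_t}(i,j)=d_G(i,j)$ for the entries that the recurrences leave free. These are the pairs $\{p,q\}\in U_{uv}$ at an introduce-edge node, on which $\Sigma'$ imposes nothing, and the cross pairs $a\in U_{t_1}\setminus X_t$, $b\in U_{t_2}\setminus X_t$ at a join node, which neither $\sigma_1$ nor $\sigma_2$ sees. Everything you sketch there concerns the opposite direction. That is also all edge- and join-invalidity can give, because both tests only forbid zeros.

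Your introduce-edge argument is incomplete even in that direction. A pair of $U_{uv}$ already preserved in $H_{t'}$ may be reset to $0$ unless $\sigma(p,u)=\sigma(q,v)=1$.

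The missing direction is in fact false for the recurrences as written, so neither your first attempt nor your one-directional fallback can be completed. Take the path with edges $ax,xy,yb$ and $\mathcal{P}=\{\{a,b\}\}$. Use a nice decomposition whose join node $t$ has bag $\{x,y\}$, with $a$ and $ax$ only in the subtree of $t_1$, $b$ and $yb$ only in that of $t_2$, and $xy$ introduced above $t$.
\begin{itemize}
\item The child tables whose only nonzero off-diagonal entries are $\sigma_1(a,x)=1$ and $\sigma_2(b,y)=1$ have value $1$ each and pass the forget tests for $a$ and $b$.
\item Their common extension with $\sigma(a,b)=1$ is not join-invalid, since that requires $\sigma(a,b)=0$. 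So $w[t,\sigma]=2$, although $G_t$ has no $a$--$b$ path at all.
\item Keeping $xy$ unused above $t$ gives a root value of at most $2$, below the optimum $3$.
\end{itemize}
Similarly, suppose $xy$ is the first edge introduced into a bag $\{a,x,y,b\}$. The table with $\sigma(x,y)=\sigma(a,b)=1$ and all other off-diagonal entries $0$ gets value $1$, while $G_t$ contains only the edge $xy$. Here $\{a,b\}\in U_{xy}$ is unconstrained, and edge-invalidity does not fire because $\sigma(a,x)=0$. The paper's proof has the same hole: it only asserts that adding $uv$ ``realizes all shortest paths for pairs in $U_{uv}$'' and that join-invalidity makes $\sigma$ correct.

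The missing idea is that the recurrences must \emph{determine} these entries.
\begin{itemize}
\item At an introduce-edge node with $\sigma(u,v)=1$, put $\sigma(p,q)=1$ exactly when $\sigma'(p,q)=1$, or when $\sigma'(p,u)=\sigma'(v,q)=1$ and $d_G(p,q)=d_G(p,u)+1+d_G(v,q)$, for either orientation of $uv$. This is exact because a geodesic of $H_{t'}+uv$ uses $uv$ once and $u,v\in U_{t'}$.
\item At a join node, put a cross entry $\sigma(a,b)=1$ exactly when some $y\in X_t$ with $d_G(a,b)=d_G(a,y)+d_G(y,b)$ has $\sigma_1(a,y)=\sigma_2(y,b)=1$.
\end{itemize}
With this, your segment-splitting argument goes through, and the zig-zag paths you expected to be the main obstacle are harmless:
\begin{itemize}
\item Switch points lie in $V(G_{t_1})\cap V(G_{t_2})=X_t$.
\item Each segment is a $G$-geodesic, hence recorded in its side's table.
\item $\sigma_1$ and $\sigma_2$ agree on $X_t$ (note $S^{\cap}\subseteq X_t$) and are realized by actual graphs. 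So every pair of $U_{t_1}$ preserved in $H_{t_1}\cup H_{t_2}$ is already preserved in $H_{t_1}$, and symmetrically for $U_{t_2}$.
\end{itemize}
A geodesic between cross vertices $a,b$ therefore collapses to a single crossing at its last switch vertex $y$, which is precisely the join condition above.
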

\begin{proof}
We prove soundness by induction on the nodes of the nice tree decomposition.
\begin{description}
    \item[Leaf node.]
        Let $t$ be a leaf.
        Then $U_t=\emptyset$ and $G_t$ contains no vertices.
        Consequently, $\Sigma_t$ consists of only one table which is the empty table $\sigma=\emptyset$.
        By how we fill the DP table it holds that $w[t,\sigma]=0$.
        We define $H_t:=\emptyset$ to be the empty graph, which realizes $\sigma = \emptyset$ and trivially $|E(H_t)|=0=w[t,\emptyset]$.
    \item[Introduce vertex node.]
        Let $t$ introduce vertex $v$ with child $t'$; that is, $X_t = X_{t'} \cup \{v\}$.
        Consider any table $\sigma \in \Sigma_t$ with $w[t,\sigma] < +\infty$.  
        By the recurrence, its restriction $\sigma'=\sigma\!\restriction_{U_{t'}}$ satisfies $w[t',\sigma']<+\infty$, as we set $w[t,\sigma]=w[t',\sigma'].$
        By the induction hypothesis, there exists a subgraph $H_{t'} \subseteq G_{t'}$ realizing $\sigma'$ with $|H_{t'}|=w[t',\sigma']$.
        Since $v$ has no incident edges in $G_t$ at this stage, we define
        \[
            H_t := \bigl(V(H_{t'}) \cup \{v\},\, E(H_{t'})\bigr) \subseteq G_t.
        \]
        So $|E(H_t)| = |E(H_{t'})|$.
        Moreover, the DP only allows tables $\sigma$ with $ \sigma(u,v) = 0 $ for all $ u \in U_t,$
        implying that $v$ is isolated.
        Hence $H_t$ realizes $\sigma$, as the distances in $H_t$ agree with $\sigma$ for all $i,j \in U_t$, and
        \[
         |E(H_t)|= |E(H_{t'})|  = w[t',\sigma'] = w[t,\sigma].
        \]
        So the invariant holds.
        \item[Introduce edge node.]
            Let $t$ introduce an edge $uv$ with child $t'$.  
            Consider a table $\sigma \in \Sigma_t$ with $w[t,\sigma] < +\infty$.  
            By the DP recurrence, $\sigma \notin \Sigma_t^{\mathrm{edge-inv}}$.
            We distinguish two cases.

            \medskip
            \textbf{Case 1:} $\sigma(u,v) = 0$.  
            There exists a table $\sigma' \in \Sigma_{t'}$ such that $\sigma' = \sigma$ and $w[t',\sigma'] < +\infty$ as $w[t,\sigma]=w[t',\sigma']$ by the recurrence rule.
            By induction, there is a subgraph $H_{t'} \subseteq G_{t'}$ realizing $\sigma'$ with $|E(H_{t'})| = w[t',\sigma'].$  
            Since the edge $uv$ is not used, we define
            \[
            H_t := \bigl(V(H_{t'}), E(H_{t'})\bigr) \subseteq G_t.
            \]  
            Consequently $|E(H_t)| = |E(H_{t'})|$ and $H_t$ realizes $\sigma$, so the invariant holds as
            \[
                |E(H_t)| = |E(H_{t'})|= w[t',\sigma'] =w[t,\sigma].
            \]

            \medskip
            \textbf{Case 2:} $\sigma(u,v) = 1$.  
            By the DP update rule, there exists $\sigma' \in \Sigma_{t'}$ such that $\sigma$ is obtained from $\sigma'$ by possibly setting $\sigma(i,j) = 1$ for pairs $\{i,j\} \in U_{uv}$, and $\sigma(i,j) = \sigma'(i,j)$ elsewhere.  
            Let $\sigma'$ be the table minimizing $w[t',\sigma']$ among these choices.  
            By induction, there exists $H_{t'} \subseteq G_{t'}$ realizing $\sigma'$ with $|E(H_{t'})| = w[t',\sigma']$.
            
            We then define
            \[
            H_t := \bigl(V(H_{t'}), E(H_{t'}) \cup \{uv\} \bigr) \subseteq G_t,
            \]  
            which gives $|E(H_t)| = |E(H_{t'})| + 1$.
            Adding $uv$ realizes all shortest paths for pairs in $U_{uv}$, while $H_{t'}$ realizes all other distances.  
            Hence $H_t$ realizes $\sigma$ and
            \[
            |E(H_t)| = |E(H_{t'})| + 1  = w[t',\sigma'] + 1 = w[t,\sigma].
            \]

            Note the introduce-edge invalidity conditions ensure that no other shortest-path relations are created or violated; thus $H_t$ exactly realizes the connectivity table $\sigma$, preserving the invariant.

            \item[Forget vertex node.] 
                Let $t$ forget vertex $v$ with child $t'$, and let $\sigma \in \Sigma_t$ satisfy $w[t,\sigma] < +\infty$.  
                Recall that $X_t = X_{t'} \setminus \{v\}$ and $S_t^{\downarrow} = S_{t'}^{\downarrow}$.
                We distinguish two cases.
                
                \medskip
                \textbf{Case 1:} $v \in S_t^{\downarrow}$.  
                Then $U_t = U_{t'}$.  
                By the DP recurrence, any table $\sigma \in \Sigma_t$ with $w[t,\sigma] < +\infty$ is also a valid table at $t'$, that is, there exists $\sigma' \in \Sigma_{t'}$ with $\sigma' = \sigma$ and $w[t',\sigma'] < +\infty$.
                
                By the induction hypothesis, there exists a subgraph $H_{t'} \subseteq G_{t'}$ realizing $\sigma'$ with
                \[
                |E(H_{t'})| = w[t',\sigma'].
                \]
                We define $H_t := H_{t'}$.  
                Since distances between vertices in $U_t$ are unchanged, $H_t$ realizes $\sigma$, and
                \[
                 |E(H_t)| = |E(H_{t'})| = w[t',\sigma']  = w[t,\sigma].
                \]
                
                \medskip
                \textbf{Case 2:} $v \notin S_t^{\downarrow}$.  
                Then $U_t = U_{t'} \setminus \{v\}$.  
                By the DP recurrence, there exists an extension $\sigma' \in \Sigma_{t'}$ such that
                \[
                \sigma'\!\restriction_{U_t} = \sigma
                \quad \text{and} \quad
                w[t',\sigma'] < +\infty.
                \]
                By the induction hypothesis, there exists a subgraph $H_{t'} \subseteq G_{t'}$ realizing $\sigma'$ with
                \[
                |E(H_{t'})| = w[t',\sigma'].
                \]
                We again define $H_t := H_{t'}$.  
                Since $H_{t'}$ realizes $\sigma'$, it also realizes its restriction $\sigma$, and
                \[
                 |E(H_t)| = |E(H_{t'})| = w[t',\sigma']  = w[t,\sigma].
                \]
                Thus, in all cases, the invariant holds at forget nodes.

                \item[Join node.]  
                Let $t$ be a join node with children $t_1$ and $t_2$.
                Let $\sigma \in \Sigma_t$ satisfy $w[t,\sigma] < +\infty$, and let $\sigma_i := \sigma\!\restriction_{U_{t_i}} \in \Sigma_{t_i}$ for all $i=1,2.$

                By the DP recurrence, $\sigma \notin \Sigma_t^{\mathrm{join\text{-}inv}}$ and
                $w[t_i,\sigma_i] < +\infty$ for both $i=1,2$.
                
                By the induction hypothesis, there exist subgraphs
                $H_{t_i} \subseteq G_{t_i}$ realizing $\sigma_i$ such that
                \[
                |E(H_{t_i})| = w[t_i,\sigma_i]
                \quad \text{for } i=1,2.
                \]
                We define
                \[
                H_t := (V(H_{t_1}) \cup V(H_{t_2}),\; E(H_{t_1}) \cup E(H_{t_2})) .
                \]
                Since $G_{t_1}, G_{t_2} \subseteq G_t$, it follows that $H_t \subseteq G_t$.

                \medskip
                To see the realization of $\sigma$, first note that distances between pairs of vertices entirely contained in $U_{t_i}$ are preserved by $H_{t_i}$ and hence by $H_t$.
                Moreover, for pairs $a \in U_{t_1} \setminus U_{t_2}$ and $b \in U_{t_2} \setminus U_{t_1}$, the join invalidity condition guarantees that whenever $a$ and $b$ can be connected by a shortest path via some vertex $v \in X_t$,
                this is correctly encoded by $\sigma$.
                Thus $H_t$ realizes $\sigma$.

                \medskip
                For the edge count, note that an edge can belong to both $H_{t_1}$ and $H_{t_2}$ only if both its endpoints lie in $X_t$.
                Such edges are counted twice in $|E(H_{t_1})| + |E(H_{t_2})|$.
                The term $\delta(\sigma)$ subtracts exactly these edges, and therefore
                \begin{align}
                    |E(H_t)| &= |E(H_{t_1})| + |E(H_{t_2})| - |E(H_{t_1}) \cap E(H_{t_2})| \notag \\ 
                             &= w[t_1,\sigma_1] + w[t_2,\sigma_2] - \delta(\sigma) = w[t,\sigma].
                 \end{align}
                Hence the invariant holds at join nodes.
\end{description}
\end{proof}

\begin{lemma}[Completeness]
\label{lemma:completeness}
Let $H^*$ be an optimal solution to \pwshort\ for $(G, \mathcal{P}, k, \tau)$.
For every node $t \in T$, let $H^*_t := H^* \cap G_t .$
Then there exists a table $\sigma^* \in \Sigma_t$ such that $H^*_t$ realizes $\sigma^*$ and
\[
w[t,\sigma^*] \le |E(H^*_t)|.
\]
\end{lemma}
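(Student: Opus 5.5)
Fix an optimal solution $H^*$ and, for every node $t$, let $H^*_t := H^*\cap G_t$. Let $\sigma^*_t\in\Sigma_t$ be the table that $H^*_t$ realizes: $\sigma^*_t(i,j)=1$ if and only if $d_{H^*_t}(i,j)=d_G(i,j)$ for $i,j\in U_t$. This table is uniquely determined by $H^*_t$, so $H^*_t$ realizes it by definition. We prove $w[t,\sigma^*_t]\le |E(H^*_t)|$ by bottom-up induction over the nice tree decomposition, mirroring the case analysis in the proof of Lemma~\ref{lemma:soundness}. At each node we do two things. First, we check that $\sigma^*_t$ is not excluded by any of the $+\infty$ conditions of the recurrence. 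Second, we exhibit the child table(s) from which the recurrence can reach $\sigma^*_t$, namely the tables $\sigma^*_{t'}$ (respectively $\sigma^*_{t_1},\sigma^*_{t_2}$) that are realized by $H^*_{t'}$ (respectively $H^*_{t_1},H^*_{t_2}$). We then apply the induction hypothesis to those child tables.

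The leaf case is immediate. For an introduce vertex node, $v$ has no incident edge in $G_t$, so $v$ is isolated in $H^*_t$ and all entries $\sigma^*_t(v,u)$ are $0$. Moreover $\sigma^*_t\!\restriction_{U_{t'}}=\sigma^*_{t'}$, since $H^*_t$ and $H^*_{t'}$ have the same edges. For a forget node, $H^*_t=H^*_{t'}$, so $\sigma^*_t$ is the restriction of $\sigma^*_{t'}$ and the minimum in the recurrence is at most $w[t',\sigma^*_{t'}]$. In the terminal case with a pending partner $u\notin S_t^{\downarrow}$, we need that $v$ has some $x\in X_t$ with $\sigma^*_t(v,x)=1$. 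This holds because the shortest $v$–$u$ path in $H^*$ must cross $X_t$ by the separation property. Its prefix up to the first such bag vertex lies in $G_t$ and is a shortest path. For a join node, we use $H^*_t=H^*_{t_1}\cup H^*_{t_2}$ and observe that the only shared edges are edges inside $X_t$. The term $\delta(\sigma^*_t)$ then accounts for them, provided that every edge of $H^*$ inside the bag is marked $1$ in both children. This is true, since an edge present in a subgraph is a shortest $u$–$v$ path there. We also need to check that the restrictions of $\sigma^*_t$ equal $\sigma^*_{t_i}$. Restricting to $U_{t_i}$ is where some care is needed, because distances in the union may be shorter than in each part. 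For join-invalidity, we note that if $a$ and $b$ are each shortest-connected to a common $v\in X_t$ lying on a shortest $a$–$b$ path, then concatenation gives $d_{H^*_t}(a,b)=d_G(a,b)$, so $\sigma^*_t(a,b)=1$ and the table is not invalid.

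The introduce edge node with $uv\in H^*$ is the main obstacle. Here we must show two things. The first is that $\sigma^*_{t'}$ with $\sigma^*_{t'}(u,v)$ forced to $0$ lies in $\Sigma'$, i.e.\ it agrees with $\sigma^*_t$ outside $U_{uv}$. The second is that $\sigma^*_t$ is not edge-invalid. The second part is concatenation again: if $\sigma^*_t(p,u)=\sigma^*_t(q,v)=1$ and $\{p,q\}\in U_{uv}$, then joining the two shortest paths through $uv$ yields a shortest $p$–$q$ walk in $H^*_t$. For the first part, a pair whose status changes when $uv$ is added must have a new shortest path using $uv$, so $d_G(p,q)=d_G(p,u)+1+d_G(v,q)$ in some orientation. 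Here we have to interpret $U_{uv}$ symmetrically in $u$ and $v$; I expect to need this reading of the definition. The pair $\{u,v\}$ itself is handled by the forced $0$ in $\sigma'$. I would treat this case first and most carefully, since it is where the recurrence's local conditions must match the global distance behaviour. Finally, at the root we note $\sigma^*_r\in\Sigma^*$ because $H^*$ preserves every pair in $\mathcal{P}$.
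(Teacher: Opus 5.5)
Your plan matches the paper's: take $\sigma^*_t$ to be the table realized by $H^*_t$ and check the recurrence node by node. The leaf, introduce-vertex, forget and root cases go through as you describe.

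The gap is at the join node, exactly at the step you deferred. The recurrence evaluates $w[t_i,\sigma^*_t\!\restriction_{U_{t_i}}]$, but the induction hypothesis only bounds $w[t_i,\sigma^*_{t_i}]$, and $\sigma^*_t\!\restriction_{U_{t_i}}=\sigma^*_{t_i}$ is false in general. A shortest path in $H^*_{t_1}\cup H^*_{t_2}$ between two vertices of $U_{t_1}$ may switch sides at bag vertices, or lie entirely in $H^*_{t_2}$. Moreover, the recurrence forces $\sigma_1,\sigma_2$ to agree on $X_t$. So a bag pair that is shortest-path connected only through one branch must still be marked $1$ in the other branch.

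Concretely, let $G$ be a single edge $pq$ with $\mathcal{P}=\{\{p,q\}\}$. Let $t$ be a join node with bag $\{p,q\}$, whose child $t_1$ only introduces $p$ and $q$, while the branch below $t_2$ also introduces the edge $pq$. Then $\sigma^*_t(p,q)=1$. But every finite table at $t_1$ has $(p,q)$-entry $0$, by the introduce-vertex rule and because no edge is introduced in that branch. Hence $w[t,\sigma^*_t]=+\infty>1=|E(H^*_t)|$, and the DP even returns $+\infty$ at the root. So the statement cannot be proved for the recurrence as written. The paper's own proof asserts $\sigma_i=\sigma^*\!\restriction_{U_{t_i}}$ ``by construction'' and has the same hole.

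What is missing is a different join rule, not a sharper argument. The table $\sigma$ has to be the shortest-path closure of $\sigma_1$ and $\sigma_2$: $\sigma(i,j)=1$ iff there are $i=x_0,x_1,\dots,x_m=j$ such that
\begin{itemize}
\item $x_1,\dots,x_{m-1}\in X_t$,
\item each consecutive pair is marked $1$ in $\sigma_1$ or in $\sigma_2$, and
\item $\sum_{l} d_G(x_l,x_{l+1})=d_G(i,j)$.
\end{itemize}
With this rule your induction works. Cut a shortest $i$--$j$ path of $H^*_t$ at its vertices in $X_t$. By the separation property, each piece lies in $H^*_{t_1}$ or in $H^*_{t_2}$. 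As a subpath of a shortest path it is shortest in $G$, so it is recorded in $\sigma^*_{t_1}$ or $\sigma^*_{t_2}$. Concatenation gives the converse.

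A smaller point in your introduce-edge case: the concatenation argument against edge-invalidity needs $d_G(p,q)=d_G(p,u)+1+d_G(v,q)$ in the same labelling in which $\sigma(p,u)=\sigma(q,v)=1$. If you read membership in $U_{uv}$ symmetrically, you must tie the labelling in the invalidity condition to the one witnessing membership. Otherwise the walk $p\to u\to v\to q$ need not be shortest.
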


\begin{proof}
    We prove the lemma by induction on the nodes of the nice tree decomposition.
    Fix a node $t \in T$.
    Define a table $\sigma^* : U_t \times U_t \to \{0,1\}$ by setting
    \[
        \sigma^*(i,j) = \sigma^*(j,i) = 1
        \quad \text{if and only if} \quad
        d_{H^*_t}(i,j) = d_G(i,j),
    \]
    for all $i,j \in U_t$.
    By construction, $\sigma^*$ is symmetric and thus $\sigma^* \in \Sigma_t$.
    Moreover, $H^*_t$ realizes $\sigma^*$.
    
    We show that $\sigma^*$ satisfies all validity conditions of the DP recurrence and that
    \[
    w[t,\sigma^*] \le |E(H^*_t)|.
    \]
\begin{description}
    \item[Leaf node.] 
        If $t$ is a leaf, then $G_t = \emptyset$ and hence $H^*_t = \emptyset$.
        Thus $U_t = \emptyset$ and $\sigma^* = \emptyset$.
        By the DP recurrence,
        \[
        w[t,\sigma^*] = 0 \le |E(H^*_t)|.
        \]
    \item[Introduce vertex node.]
    Let $v$ be the vertex introduced at node $t$, and let $t'$ be its child.
    Since $v$ is isolated in $G_t$, it is also isolated in $H^*_t$.
    Hence, by construction, for all $u \in U_t$, we have $\sigma^*(u,v)=0$, and $\sigma^*$ satisfies the validity conditions at node $t$.
    So $w[t, \sigma^*]<+\infty$.
    Let $H^*_{t'} = H^* \cap G_{t'}$.
    By the induction hypothesis, there exists a table $\sigma' \in \Sigma_{t'}$ such that
    $H^*_{t'}$ realizes $\sigma'$ and $w[t',\sigma'] \le |E(H^*_{t'})|$.
    Since $v$ is isolated, $H^*_t$ differs from $H^*_{t'}$ only by the addition of an isolated vertex,
    and thus
    \[
    |E(H^*_t)| = |E(H^*_{t'})|.
    \]
    Moreover, all shortest-path relations between vertices in $U_{t'}$ are identical in $H^*_t$ and $H^*_{t'}$, implying that
    \[
    \sigma' = \sigma^*\!\restriction_{U_{t'}}.
    \]
    By the recurrence for introduce-vertex nodes, we have
    \[
    w[t,\sigma^*] = w[t',\sigma'] \le |E(H^*_{t'})| = |E(H^*_t)|.
    \]
    \item[Introduce edge node.]
        Let $t$ be a node with child $t'$ that introduces the edge $uv$, and let $H^*_{t'} = H^* \cap G_{t'}$.
        
        \medskip
        \textbf{Case 1:} $uv \notin H^*_t$.
        Then $H^*_t = H^*_{t'}$.
        By the induction hypothesis, there exists a table $\sigma' \in \Sigma_{t'}$ such that $H^*_{t'}$ realizes $\sigma'$ and $w[t',\sigma'] \le |E(H^*_{t'})|$.
        Since $uv \notin G_{t'}$, we have $\sigma'(u,v)=0$.
        Moreover, the shortest-path relations in $H^*_t$ and $H^*_{t'}$ agree, so $\sigma^* = \sigma'$.
        By the recurrence, $w[t,\sigma^*]=w[t',\sigma']$, and hence
        \[
        w[t,\sigma^*] \le |E(H^*_{t'})| = |E(H^*_t)|.
        \]
        \medskip
        \textbf{Case 2:} $uv \in H^*_t$.
        Then $H^*_t = H^*_{t'} \cup \{uv\}$ and $|E(H^*_t)| = |E(H^*_{t'})|+1$.
        Moreover, $\sigma^*(u,v)=1$.
        The addition of the edge $uv$ can only create new shortest paths for pairs
        $\{i,j\}$ whose shortest path in $G$ uses $uv$, and are already shortest-path connected to either $u$ or $v$.
        By definition, these are exactly the pairs contained in $U_{uv}$.
        Consequently, $\sigma^*$ and $\sigma'$ may differ only on entries $(i,j)$ with $\{i,j\} \in U_{uv}$.
        Thus $\sigma'$ is one of the tables considered in the recurrence when $\sigma(u,v)=1$, and $\sigma^* \notin \Sigma_t^{\mathrm{edge-inv}}$.
        By the recurrence,
        \[
        w[t,\sigma^*] \le w[t',\sigma'] + 1 \le |E(H^*_{t'})| + 1 = |E(H^*_t)|.
        \]
    \item[Forget vertex node.]
        Let $t$ be a node with child $t'$ that forgets vertex $v$, and let $H^*_{t'} = H^* \cap G_{t'}$.
        Since $T_t = T_{t'}$, it follows that $G_t = G_{t'}$, and hence $H^*_t = H^*_{t'}$ by definition.
        By the induction hypothesis, there exists a table $\sigma' \in \Sigma_{t'}$ such that $H^*_{t'}$ realizes $\sigma'$ and $w[t',\sigma'] \le |E(H^*_{t'})|$.
        
        \medskip
        \textbf{Case 1:} $v \in S_t^{\downarrow}$.
        Then $U_t = U_{t'}$, and by construction $\sigma^*=\sigma'$.
        Since $H^*$ is a valid \pwshort solution, for every terminal $u \in S \setminus S_t^{\downarrow}$ with $\{u,v\} \in \mathcal{P}$, vertex $v$ must remain shortest-path connected to the remainder of the graph.
        As $X_t$ separates $G_t$ from the rest of the graph, any shortest path from $v$ to such a terminal must pass through some vertex $x \in X_t$, implying that
        $\sigma^*(v,x)=1$ for some $x \in X_t$.
        Hence $\sigma^*$ satisfies the validity condition of the recurrence at $t$, and
        \[
            w[t,\sigma^*] = w[t',\sigma'] \le |E(H^*_{t'})| = |E(H^*_t)|.
        \]

        \medskip
        \textbf{Case 2:} $v \notin S_t^{\downarrow}$.
        Then $U_{t'}$ contains one additional row and column corresponding to $v$.
        By construction, $\sigma^* = \sigma'\!\restriction_{U_t}$.
        Since the recurrence takes the minimum over all such extensions, we obtain
        \[
        w[t,\sigma^*] \le w[t',\sigma'] \le |E(H^*_{t'})| = |E(H^*_t)|.
        \]
    \item[Join node.]
        Let $t$ be a join node with children $t_1$ and $t_2$.
        Let $H^*_{t_i} = H^* \cap G_{t_i}$ for $i=1,2$.
        By induction, for each $i \in \{1,2\}$ there exists a table $\sigma_i \in \Sigma_{t_i}$ such that $H^*_{t_i}$ realizes $\sigma_i$ and $w[t_i,\sigma_i] \le |E(H^*_{t_i})|$.

        Since $U_t = U_{t_1} \cup U_{t_2}$ and $H^*_t = H^*_{t_1} \cup H^*_{t_2}$, it follows by construction that $\sigma_i = \sigma^*\!\restriction_{U_{t_i}}$ for $i=1,2$.
        Moreover,
                \[
                |E(H^*_t)|
                =
                |E(H^*_{t_1})| + |E(H^*_{t_2})|
                - |E(H^*_{t_1}) \cap E(H^*_{t_2})|.
                \]
        Edges that appear in both $H^*_{t_1}$ and $H^*_{t_2}$ are exactly those edges $xy \in E(G)$ with $x,y \in X_t$ (equivalently, $d_G(x,y)=1$) that are realized in both tables $\sigma_1$ and $\sigma_2$.
        Hence,
        \[
        \delta(\sigma^*) = |E(H^*_{t_1}) \cap E(H^*_{t_2})|.
        \]
        Since $H^*$ is a \pwshort solution, for any terminal pair $\{a,b\} \in \mathcal{P}$ with
        $a \in S_{t_1}^{\downarrow} \setminus S_{t_2}^{\downarrow}$ and
        $b \in S_{t_2}^{\downarrow} \setminus S_{t_1}^{\downarrow}$,
        there exists a vertex $x \in X_t$ such that both $a$ and $b$ are shortest-path connected to $x$ in $H^*_t$.
        Therefore $\sigma^* \notin \Sigma_t^{\mathrm{join\text{-}inv}}$.

        By the recurrence,
        \begin{align}
            w[t,\sigma^*] &= w[t_1,\sigma_1] + w[t_2,\sigma_2] - \delta(\sigma^*) \notag \\ \notag
            &\leq |E(H^*_{t_1})| + |E(H^*_{t_2})| - \bigl|E(H^*_{t_1}) \cap E(H^*_{t_2})\bigr| \\ \notag
            &= |E(H^*_{t})|.
        \end{align}
\end{description}
\end{proof}

With soundness and completeness established, we now turn to the main theorem.
For clarity, we restate it here before presenting its proof.
\pairwiseFPTbyTWK*
\begin{proof}
Let $(G,\mathcal{P})$ be an instance of \pwshort\ with treewidth $\tau$.
Let $S:=\bigcup\mathcal{P}$ denote the set of terminals.
Let $\mathcal{T} = (T, \{X_t\}_{t\in V(T)})$ be a nice tree decomposition of $G$ of width at most $\tau$.

We apply the dynamic programming algorithm described above over $\mathcal{T}$.
For each node $t \in T$ and each table $\sigma \in \Sigma_t$, we compute $w[t,\sigma]$, the minimum number of edges of a subgraph $H_t \subseteq G_t$ realizing the connectivity table $\sigma$.

By \Cref{lemma:soundness}, for every table $\sigma$ with $w[t,\sigma]<+\infty$, there exists a subgraph $H_t$ realizing $\sigma$ with exactly $w[t,\sigma]$ edges.
By \Cref{lemma:completeness}, for an optimal pairwise distance preserver $H^*$, at every node $t$ there exists a table $\sigma^* \in \Sigma_t$ such that $H^*_t = H^* \cap G_t$ realizes $\sigma^*$ and
\[
w[t,\sigma^*] \le |E(H^*_t)|.
\]

At the root node $r$, we have $X_r = \emptyset$ and $S^{\downarrow}_r = S$, so that every terminal is represented in $U_r$. By definition, the set \(\Sigma^* \subseteq \Sigma_r\) contains all tables \(\sigma\) that connect every terminal pair in \(\mathcal{P}\) via shortest path. 
The dynamic programming recurrence guarantees that
\[
\min_{\sigma \in \Sigma^*} w[r,\sigma]
\]
equals the number of edges in a minimum pairwise distance preserver, by soundness and completeness.

Hence, the DP computes an optimal solution size.
By a standard backtracking over the tree decomposition, using the witness choices stored in each table, one can reconstruct an optimal subgraph.

Finally, the running time of the algorithm is
\[
|\Sigma_t| \cdot |V(T)|^{O(1)}
\le 2^{O((\tau+|S|)^2)} \cdot |V(G)|^{O(1)},
\]
since each table is a symmetric binary matrix over at most \(|U_t| \le (\tau+1) + |S|\) vertices and $|T| = O(|V(G)|)$.
The additional operations performed at each node — including checking shortest-path connectivity for sets like $U_{uv}$, updating table entries, and checking validity — are dominated by the running time of the introduce-edge node, which is $O\bigl((\tau+|S|)^2 (n+m)\bigr)$ for $n=|V(G)|$ and $m=|E(G)|$.
Hence the overall running time is $2^{O((\tau+|S|)^2)}\cdot |V(G)|^{O(1)}$, as claimed.
\end{proof}

\begin{remark}
    It is possible to further optimize the dynamic programming state by storing directly the connectivity information for the relevant terminal pairs.  
    In particular, for constant treewidth, this reduces the running time to $2^{O(|\mathcal{P}|)}$.
\end{remark}

By inclusion, we immediately get the same result for \swshort.
\begin{restatable}{corollary}{subsetFPTbyTWK}
\label{cor:sw-tw-k-fpt}
        \swshort admits an algorithm with running time $2^{O\bigl((\tw+|S|)^2\bigr)} \cdot |V(G)|^{O(1)}$, where $\tw$ is the treewidth of $G$.
\end{restatable}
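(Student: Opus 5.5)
The corollary is a direct reduction to Theorem~\ref{thm:pw-tw-k-fpt}. Given an instance $(G,S,k)$ of \swshort, we build the \pwshort instance $(G,\mathcal{P},k)$ on the same graph with the same budget, where
\[
\mathcal{P}=\bigl\{\{u,v\}\,\bigm|\,u,v\in S,\ u\neq v\bigr\}.
\]
This is the embedding noted in the preliminaries, where \swshort is identified as a special case of \pwshort.

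The first step is to check that the two instances have exactly the same feasible solutions. A subgraph $H\subseteq G$ preserves $d_G(u,v)$ for every pair in $\mathcal{P}$ exactly when it preserves all pairwise distances inside $S$. The edge budget $k$ is unchanged, so the optimum values coincide. The terminal set of the new instance is $\bigcup\mathcal{P}=S$, provided $|S|\ge 2$. If $|S|\le 1$, the empty subgraph is optimal and we answer immediately.

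The second step is to bound the cost of the construction and the parameters. We have $|\mathcal{P}|=\binom{|S|}{2}$, so building the instance takes polynomial time. The graph is unchanged, so $\tw$ is the same. The number of terminals $|\bigcup\mathcal{P}|$ equals $|S|$.

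Finally, we run the algorithm of Theorem~\ref{thm:pw-tw-k-fpt} on $(G,\mathcal{P},k)$. Its running time is $2^{O((\tw+|S|)^2)}\cdot|V(G)|^{O(1)}$, since that bound depends only on $\tw$ and the number of terminals, not on $|\mathcal{P}|$. Adding the polynomial preprocessing gives the claimed running time for \swshort.

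There is no real obstacle. The only point to verify is that the parameter in Theorem~\ref{thm:pw-tw-k-fpt} is the size of the terminal set $\bigcup\mathcal{P}$ and not the number of pairs. It is: the DP tables $\Sigma_t$ are indexed by $U_t=X_t\cup S_t^{\downarrow}$, with $|U_t|\le \tau+1+|S|$.
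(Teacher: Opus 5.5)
Your proposal is correct and follows the paper's proof, which obtains the corollary ``by inclusion'': you instantiate \pwshort with all pairs from $S$, observe that the feasible solutions, the graph (hence its treewidth) and the terminal set $\bigcup\mathcal{P}=S$ are all unchanged, and then invoke Theorem~\ref{thm:pw-tw-k-fpt}. Your extra checks, namely the trivial case $|S|\le 1$ and that the running time depends on $|S|$ rather than on $|\mathcal{P}|$, are details the paper leaves implicit.
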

\section{Vertex cover}\label{sec:vc}
In this section, we focus on the results where the parameter is the vertex cover of the graph. We start by showing the NP-hardness of \pwshort on graphs of vertex cover $3$ (Theorem~\ref{thm:pdp-vc-paraNphardness}), and then move on to the FPT algorithm for \swshort parameterized by vertex cover (Theorem~\ref{thm:sw-vc-fpt-thm}).

\subsection{PDP is NP-hard for vertex cover 3}
Here, we prove Theorem~\ref{thm:pdp-vc-paraNphardness}, restated next for convenience.
\VcParaNpHardness*

Our proof proceeds via a sequence of polynomial-time reductions.
We first reduce from \mwcfull\ (\mwcshort) to an auxiliary problem, which we denote by \alcfull\ (\alcshort), and subsequently reduce from \alcshort\ to \pwshort.

We begin by formally defining the two problems used in the reduction.
\defbox{\alcfull}{\alcshort}
{ A graph $G=(V,E)$, a set of allowed colors $\ell_v \subseteq \mathbb{N}$ with $|\ell_v| \leq 3$ for each vertex $v \in V$, and an integer $k$.}
{Is there an assignment $\mathcal{X}= \{X_v\subseteq \ell_{v} \mid v\in V(G),\ X_v\neq \emptyset\}$ such that for every edge $uv \in E(G)$, 
$X_u \cap X_v \neq \emptyset$ and the total number of assigned colors is bounded by $k$, i.e., $\sum_{v \in V} |X_v| \le k$\,?}

We call an assignment of colors $\mathcal{X}=(X_v)_{v \in V(G)}$ that satisfies these conditions a \emph{valid color assignment} for an instance $\bigl(G=(V, E), (\ell_v)_{v \in V}\bigr)$ of \alcshort.
If no valid color assignment exists, the instance is infeasible.
Throughout, we use “list” and “set” interchangeably to refer to the allowed colors of a vertex.

Finally, for a graph $G=(V,E)$ and a subset of edges $F \subseteq E$, we denote by $G-F$ the graph $(V, E \setminus F)$.
\defbox{\mwcfull}{\mwcshort}
{A connected graph $G=(V,E)$, three distinct terminals $s_1, s_2, s_3 \in V$, and an integer $k$.}
{Does there exist a set of edges $F \subseteq E$ with $|F| \le k$ such that in the graph $G - F$, each terminal is disconnected from the other two?}

Our starting point is the classical NP-hardness of \mwcshort due to Dahlhaus et al.

\begin{restatable}[\cite{3waycut}]{theorem}{ThreeWayCutHardness}
    Minimum \mwcshort\ is \nph\ even on unweighted graphs.
\end{restatable}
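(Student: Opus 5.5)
This theorem is the classical result of Dahlhaus et al.~\cite{3waycut}. If I had to reprove it rather than cite it, I would do it in two stages. First, show that \mwcshort\ with positive integer edge weights is \nph\ by a reduction from \textsc{Simple Max Cut}. Second, remove the weights by a standard replacement that does not change cut values.

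\textbf{Weighted reduction.} Let $H=(V,E)$ be a \textsc{Simple Max Cut} instance. I would build a weighted graph $G$ with terminals $s_1,s_2,s_3$ as follows.
\begin{itemize}
\item Every vertex $v\in V(H)$ becomes a vertex of $G$.
\item Each $v$ is attached to $s_1$ and to $s_2$ by edges of a large weight $M\cdot \deg_H(v)$, for a sufficiently large $M$. In any cheap multiway cut, each $v$ then falls into the component of exactly one of $s_1,s_2$, and never into that of $s_3$. This gives a bipartition $(A,B)$ of $V(H)$.
\item For every edge $uv\in E(H)$, a constant-size edge gadget is attached to $u$, $v$ and $s_3$. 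Its internal vertices are free to join any of the three terminal components.
\end{itemize}
The gadget must have the following property: its cheapest local cut has cost exactly $c$ when $u,v$ lie on the same side, and $c-1$ when they lie on different sides, for one fixed constant $c$. Given this, the minimum 3-way cut of $G$ equals
\[
\sum_v M\deg_H(v) + c|E| - \mathrm{maxcut}(H),
\]
so a threshold on the cut is equivalent to a threshold on the maximum cut.

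\textbf{Main obstacle: designing the edge gadget.} Take an internal vertex $x$ adjacent to $u$, $v$ and $s_3$; it is placed in whichever terminal component is cheapest, and with weights $1,1,2$ the local cost does not change between the two cases. The naive single-vertex gadget therefore fails. To break the symmetry I would use two internal vertices $x,y$ with suitable weights on the edges $ux$, $vy$, $xy$, $xs_3$, $ys_3$. I would then verify the required property by exhausting the $3^2$ placements of $x,y$ in each of the two cases (same side, different sides). I would tune the weights until the gap is exactly one unit, and I would follow the Dahlhaus et al.\ gadget if my first attempt fails. I expect this case analysis, together with checking that the heavy $s_1$/$s_2$ edges really force the intended bipartition, to be the only delicate part.

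\textbf{Removing weights.} A weight-$w$ edge $ab$ is replaced by $w$ internally disjoint paths $a$--$z_i$--$b$ of length $2$. Separating $a$ from $b$ then costs exactly $w$ unweighted edges, since one edge per path suffices and is necessary. If $a,b$ end up in the same component, no edge of the bundle needs to be cut, because each $z_i$ simply joins that component. Hence minimum cut values are preserved exactly. The blow-up is polynomial, since $M$ and all gadget weights are polynomial in $|V(H)|$. The resulting graph is connected, which completes the hardness of unweighted \mwcshort.
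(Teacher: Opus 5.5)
The step you flag as ``the only delicate part'' cannot be completed within the framework you fixed. No gadget attached only to $u$, $v$ and $s_3$ can make the different-sides case cheaper, whatever its size and weights.

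Encode a local cut as a labelling of the gadget by $\{1,2,3\}$ with $s_3\mapsto 3$. Its cost is the weight of the bichromatic edges. Let $f(a,b)$ be the optimum when $u\mapsto a$ and $v\mapsto b$. Take an optimal labelling for $(1,2)$ and recolour every vertex of label $2$ with label $1$. The only terminal the gadget sees is untouched, the result is feasible for $(1,1)$, and merging two colours never turns a monochromatic edge into a bichromatic one. Hence $f(1,1)\le f(1,2)$. Recolouring $1\to 2$ instead gives $f(2,2)\le f(1,2)$, and the same holds with $f(2,1)$ in place of $f(1,2)$.

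Your single-vertex failure with weights $1,1,2$ is an instance of this. The two-vertex gadget on $ux,vy,xy,xs_3,ys_3$ fails for the same reason, so no tuning of weights will produce the one-unit gap. Worse, every edge of $H$ then contributes at least $\min\{f(1,1),f(2,2)\}$. So putting all of $V(H)$ on one side is optimal, and the minimum 3-way cut of your $G$ is a constant that carries no information about $\mathrm{maxcut}(H)$.

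The same recolouring argument shows what a working gadget needs. Any gadget with $f(1,1)=f(2,2)=c$ and $f(1,2)=f(2,1)=c-1$ must be adjacent to both $s_1$ and $s_2$, so that merging colours would move a terminal. It must also be adjacent to $s_3$. Otherwise label $3$ can be merged away and you are left with a two-terminal cut problem. Partially minimising a cut function over the internal vertices keeps it submodular, which gives $f(1,1)+f(2,2)\le f(1,2)+f(2,1)$, contradicting the required values.

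So the design has to be changed, not tuned. Falling back on the Dahlhaus et al.\ gadget means abandoning the attachment pattern you set up. Either way, the gadget and its case analysis are the real content of the theorem, and the proposal does not supply them. Your heavy-edge forcing and your replacement of a weight-$w$ edge by $w$ internally disjoint paths of length $2$ are fine. Note also that the paper does not reprove this statement; it simply cites Dahlhaus et al., which is the appropriate justification here.
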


To prove~\autoref{thm:pdp-vc-paraNphardness}, which is the main result of this section, we combine two polynomial-time reductions.
First, we give a polynomial-time reduction from \mwcshort\ to \alcshort. 
Then, we reduce from \alcshort\ to a restricted version of \pwshort, in which the graph is bipartite, has vertex cover of at most $3$, and the terminal pairs satisfy a specific structure. 
Together, these two reductions establish that \pwshort\ is at least as hard as \mwcshort, which is \nph. 

For clarity and readability, each reduction is presented in a separate unnumbered subsection as a lemma (Lemmas~\ref{lem:mwc-to-alc} and~\ref{lem:alc-to-bip}), 
including the corresponding construction. 
Finally, the main theorem is restated and proved by combining the results of these lemmas.

\paragraph*{Reduction from \mwcshort\ to \alcshort}
We begin by reducing \mwcshort\ to \alcshort.
Let $I_{\mwcshort}=\bigl(G, \{s_1, s_2, s_3\}, k\bigr)$ be an instance of \mwcshort.
We construct a corresponding instance of \alcshort, $I_{\alcshort}=\bigl(G', (\ell_{v'})_{v' \in V(G')}, k'\bigr)$ where $k'=(n+1)m+k$ and $|\ell_{v'}| \leq 3$ for all $v' \in V(G')$, as follows.

Let $n=|V(G)|$ and $m=|E(G)|$ denote the number of vertices and edges in $G$.
For each vertex $v \in V(G)$, we add to $G'$ a clique $K_v =\{x_{v,1}, x_{v,2}, \dots, x_{v,m}\}$ of size $m$.
Next, for each edge $e = uv \in E(G)$, we introduce a new vertex $w'_e \in V(G')$ and connect it to every vertex in both $K_u$ and $K_v$.
See \autoref{fig:mwc-to-alc:edge-gadget} for an illustration of this construction.
\begin{figure}
    \centering
    \begin{subfigure}[b]{.64\textwidth}
          \centering
          \includegraphics[width=.95\textwidth, center]{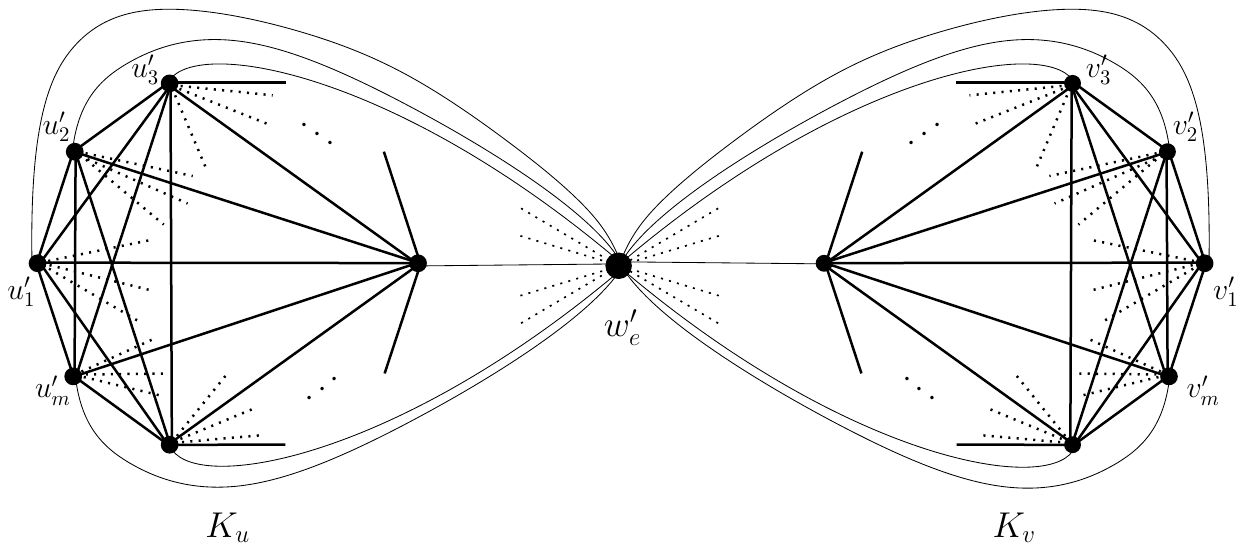}
          \caption{The construction in $G'$ corresponding to an edge $e = (u,v)$ of $G$.}
          \label{fig:mwc-to-alc:edge-gadget-cliques}
        \end{subfigure}
        \hfill
        \begin{subfigure}[b]{.33\textwidth}
          \centering
          \includegraphics[width=1\textwidth]{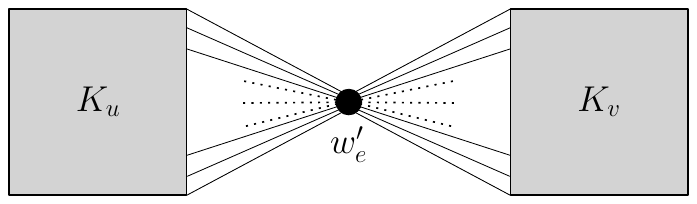}
          \caption{A schematic representation of the same gadget, with cliques depicted as squares.}
          \label{fig:mwc-to-alc:edge-gadget-square}
    \end{subfigure}
    \caption{ In (a), the vertices $u$ and $v$ are replaced by cliques $K_u$ and $K_v$, each consisting of $m$ vertices, and a vertex $w'_e$ is introduced adjacent to all vertices in both cliques. In (b), the same construction is shown schematically, with cliques represented by squares; this representation is used in subsequent figures for clarity.}
    \label{fig:mwc-to-alc:edge-gadget}
\end{figure}

We now define the lists of allowed colors for the vertices of $G'$.
Initially, set $\ell_{v'} = \emptyset$ for all $v' \in V(G')$.
For each terminal $s_i$ with $i \in [3]$, assign $\ell_{v'} = \{i\}$ to every vertex $v' \in K_{s_i}$.  
For all the remaining vertices $v' \in V(G') \setminus \bigcup_{i=1}^3 V(K_{s_i})$, set $\ell_{v'}=\{1,2,3\}$.

Thus, in $G'$, each clique $K_{s_i}$ is forced to use color $i$, while all other vertices may use any of the three colors.
An example illustrating this construction is shown in \autoref{fig:mwc-to-alc}.

\begin{restatable}{lemma}{MwcToAlc}
\label{lem:mwc-to-alc}
Let $I_{\mwcshort}=\bigl(G=(V, E), \{s_1, s_2, s_3\}, k\bigr)$ be an instance of \mwcshort,\ where $G$ has $n$ vertices and $m$ edges.
Let $I_{\alcshort}=\bigl(G'=(V', E'),(\ell_{v'})_{v' \in V(G')}, k'\bigr)$ be the instance of \alcshort\ constructed as described above.
Then $I_{\mwcshort}$ admits a \mwcshort\ of at most $k$ edges if and only if $I_{\alcshort}$ admits a valid color assignment of total size at most $k'=(n+1)m+k$.
\end{restatable}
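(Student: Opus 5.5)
The proof goes in two directions. The key quantity is the number of colors the clique vertices receive; the huge factor $m$ in the clique size makes it too expensive ever to give a clique vertex two colors.

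\emph{Forward direction.} Let $F$ be a 3-way cut with $|F|\le k$, and let $C_1,C_2,C_3$ be a partition of $V(G)$ such that $s_i\in C_i$ and every edge between different parts lies in $F$. Such a partition exists: put each component of $G-F$ that contains $s_i$ into $C_i$, and put all other components into $C_1$. Define the assignment as follows.
\begin{itemize}
\item For $v\in C_i$, every vertex of $K_v$ receives $\{i\}$. This is allowed, since $K_{s_i}$ has list $\{i\}$.
\item For an edge $e=uv$ with $u,v\in C_i$, set $X_{w'_e}=\{i\}$.
\item For an edge $e=uv$ with $u\in C_i$ and $v\in C_j$, $i\ne j$, set $X_{w'_e}=\{i,j\}$.
\end{itemize}
Now check that every edge of $G'$ is satisfied. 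Edges inside a clique join vertices with the same singleton. An edge $w'_e x$ with $x\in K_u$ or $x\in K_v$ is satisfied because $X_{w'_e}$ contains the color of each endpoint's part. The total is $nm$ for the cliques plus $m$ for the $w'$ vertices plus one extra color for each cut edge, so at most $nm+m+|F|\le (n+1)m+k$.

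\emph{Backward direction.} Take a valid assignment $\mathcal X$ of total size at most $(n+1)m+k$. Every vertex receives at least one color, so the total is at least $nm+m$, and the surplus $\sum_{v'}(|X_{v'}|-1)$ is at most $k$.

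We may assume $k<m$. Otherwise the instance of \mwcshort\ is trivially a yes-instance: deleting all edges incident to $s_1$ and $s_2$ works, and we can assume $k\ge m$ makes this fit, or simply cap $k$ at $m-1$ before running the reduction.

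With $k<m$, each clique $K_v$ has at least one vertex $x_v$ with $|X_{x_v}|=1$, because otherwise the surplus would be at least $m>k$. Let $c(v)$ be the single color of $x_v$. For the terminals, $c(s_i)=i$ is forced by the lists.

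Define $C_i=\{v: c(v)=i\}$ and let $F$ be the set of edges $e=uv$ with $c(u)\ne c(v)$. Consider such an edge $e$. The vertex $w'_e$ is adjacent to both $x_u$ and $x_v$, so $X_{w'_e}$ must contain both $c(u)$ and $c(v)$, hence $|X_{w'_e}|\ge 2$. Therefore $|F|\le$ surplus $\le k$. Finally, $G-F$ has no edge between different classes $C_i$, so the $s_i$ lie in distinct parts and $F$ is a 3-way cut.

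The main obstacle is the counting in the backward direction. The argument depends on the clique size $m$ exceeding the budget slack $k$; this is what forces every clique to contain a singleton vertex, and it requires the reduction of the case $k\ge m$ to $k<m$. The remaining steps are routine verifications.
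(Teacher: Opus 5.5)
Your proof is correct. The forward direction matches the paper's: the components of $G-F$ containing $s_i$ get color $i$, everything else gets color $1$, $w'_e$ receives the union of its endpoints' colors, and only edges of $F$ pay an extra color.

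The backward direction takes a different route.
\begin{itemize}
\item The paper defines $F$ as the set of edges $e$ with $|X_{w'_e}|\ge 2$, which gives $|F|\le k$ at once. It then proves separation by contradiction. Along an $s_a$--$s_b$ path in $G-F$, every $w'_{e_j}$ is a singleton, starting at $\{a\}$ and ending at $\{b\}$. So some clique $K_{v_t}$ must have every vertex carry two colors, which pushes the total to $(n+2)m$.
\item You use the same fact up front, namely that a whole clique of size $m$ cannot be multi-colored. Applied as a pigeonhole on the surplus, it gives a singleton representative $x_v$ in every clique. This yields an explicit $3$-partition $c$ of $V(G)$ with $c(s_i)=i$. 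Taking $F$ to be the bichromatic edges makes separation immediate. The bound $|F|\le k$ then holds because each bichromatic edge forces its $w'_e$ to hold two colors.
\end{itemize}

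Your version avoids the path-tracing argument and exhibits the partition directly. It also makes explicit the hypothesis $k<m$, which the paper's final inequality $(n+2)m>(n+1)m+k$ silently requires. Your observation closes that case: when $k\ge m$ the \mwcshort\ instance is trivially a yes-instance, so the backward implication holds vacuously.

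One caution: your alternative of capping $k$ at $m-1$ is not valid. The minimum $3$-way cut can equal $m$, for example on a triangle on $s_1,s_2,s_3$, so capping would turn a yes-instance into a no-instance. It is also unnecessary, since the lemma concerns the given $k$ and the trivial-yes argument already suffices.
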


\begin{proof}
We first prove the forward direction.
Assume that $I_{\mwcshort}=\bigl(G=(V, E), \{s_1, s_2, s_3\}, k\bigr)$ admits a \mwcshort\ $F \subseteq E(G)$ of size at most $k$.
From $F$, we construct  a valid \alcshort\ assignment $\mathcal{X}=(X_{v'})_{v' \in G'}$ for the instance $I_{\alcshort}$ with total size at most $k'=(n+1)m+k$.

Let $\mathcal{C} = \{C_1, C_2, \dots, C_p\}$ be the connected components of $G - F$, and let $C_{s_i}$ denote the component containing terminal $s_i$ for $i \in [3]$.

For every vertex $v \in V(G)$ and for every vertex $u \in K_v$ in the corresponding clique of $G'$, define the assigned color set $X_{u}$ as follows:
\[
    X_{u} =
        \begin{cases}
             \{i\}, & v \in C_{s_i} \text{ for some } i \in [3], \\
            \{1\}, & v \in C \text{ with } C \in \mathcal{C} \setminus \{C_{s_1}, C_{s_2}, C_{s_3}\}
        \end{cases}
\]
For each clique $K_v \subseteq V(G')$, we define
\[
    \gamma(K_v) := \bigcup_{u \in K_v} X_u,
\]
which we refer to as the \emph{color of the clique $K_v$}.  
By construction, all vertices in a clique $K_v$ receive the same color, so $\gamma(K_v)$ is always a singleton set.  

Using this notation, for each edge $e = uv \in E(G)$, we assign to the corresponding vertex $w'_e \in V(G')$ the color set
\[
    X_{w'_e} := \gamma(K_u) \cup \gamma(K_v).
\]
This completes the construction of the color assignment $\mathcal{X}$.\\
We now show that $\mathcal{X}$ is a valid color assignment of colors for $I_{\alcshort}$.

First, we verify that $X_{v'} \subseteq \ell_{v'}$ for all $v' \in V(G')$.
If $v' \in K_{s_i}$ for some $i \in [3]$, then $X_{v'} = \{i\} = \ell_{v'}$ by construction.
Otherwise, $v' \notin \bigcup_i K_{s_i}$, so $\ell_{v'} = \{1,2,3\}$ and $X_{v'} \subseteq \{1,2,3\}$.

Next, let $W'_E := \{w'_e \in V(G') \mid e \in E(G)\}$ be the set of vertices corresponding to edges of $G$.
Consider the edges of $G'$. 
For edges entirely within a clique $K_u$, all vertices are assigned the same color $\gamma(K_u)$, so for any edge $e' = u'_i u'_j \in K_u$, we have $X_{u'_i} \cap X_{u'_j} = \gamma(K_u) \neq \emptyset.$

The remaining edges of $G'$ have one endpoint in $W'_E$ and the other in a clique.
Consider a vertex $w'_e \in W'_E$ corresponding to an edge $e = uv \in E(G)$.
By construction, $w'_e$ is adjacent exactly to the vertices of the two cliques $K_u$ and $K_v$, that is, $N_{G'}(w'_e) = V(K_u) \cup V(K_v)$.
According to the color assignment described earlier, $X_{w'_e} = \gamma(K_u) \cup \gamma(K_v)$.
Moreover, all vertices of $K_u$ (resp., $K_v$) are assigned the color $\gamma(K_u)$ (resp., $\gamma(K_v)$).
Therefore, for every $z \in K_u \cup K_v$ it follows that:
\[
    X_{w'_e} \cap X_{z} \neq \emptyset
\]
Thus, all adjacency constraints are satisfied, and $\mathcal{X}$ is a valid color assignment for $I_{\alcshort}$.

To bound the total size of the color assignment, $\sum_{v' \in V(G')} |X_{v'}|$, we proceed as follows:
\[
\sum_{v' \in V(G')} |X_{v'}| = \sum_{K_v \in V(G')} \sum_{v' \in K_v} |X_{v'}| + \sum_{w'_e \in V'_{E}} |X_{w'_e}| 
\]
By construction, for every clique $K_v \subseteq G'$, all of its vertices are assigned the same color $\gamma(K_v)$.  
Hence, for each $v' \in K_v$, we have $|X_{v'}| = 1$.  
Since each clique has $m$ vertices and there are $n$ cliques, the total contribution of all clique vertices is $\sum_{K_v \in V(G')} \sum_{v' \in K_v} |X_{v'}| = n\cdot m.$

For each edge $e = (u,v) \in E(G)$, the corresponding vertex $w'_e \in V(G')$ is assigned the union of the colors of its endpoints’ cliques:
$X_{w'_e} = \gamma(K_u) \cup \gamma(K_v), \quad \text{so } |X_{w'_e}| \le 2.$
If $u$ and $v$ belong to the same connected component in $G - F$, then $\gamma(K_u) = \gamma(K_v)$ and $|X_{w'_e}| = 1$.
Otherwise, $\gamma(K_u) \neq \gamma(K_v)$ and $|X_{w'_e}| = 2$.

Recall that $|F| \le k$.
By definition of a \mwcshort, since $G$ is connected, all edges connecting distinct connected components in $G - F$ must belong to the cut $F$.
So there are at most $k$ edges connecting different connected components in $G - F$.  
Each of these edges contributes at most $2$ colors, while the remaining $m - k$ edges lie entirely within a component, each contributing exactly $1$ color.
Hence, the total contribution from the edge vertices satisfies
\[\sum_{w'_e \in W'_E} |X_{w'_e}| \le 2|F| + (m - |F|) = m + |F| \le m + k.\]

Combining the contributions from the cliques and the edge vertices, we obtain
\[
\sum_{v' \in V(G')} |X_{v'}| \le n \cdot m + (m + k) = (n+1)\cdot m + k = k',
\]
which gives the desired upper bound on the total size of the color assignment.

\begin{figure}
    \centering
    \begin{subfigure}[b]{.35\textwidth}
          \centering
          \includegraphics[width=1\textwidth, center]{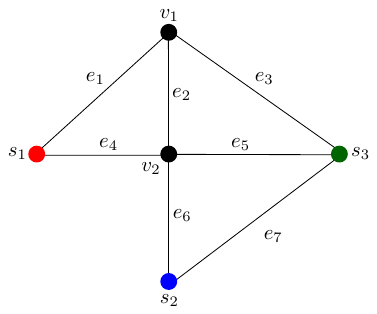}
          \caption{An instance of \mwcshort\ $I_{\mwcshort}=\bigl(G=(V, E), \{s_1, s_2, s_3\}, k\bigr)$.}
          \label{fig:mwc}
        \end{subfigure}
        \hfill
        \begin{subfigure}[b]{.6\textwidth}
          \centering
          \includegraphics[width=.9\textwidth, height=.34\textheight]{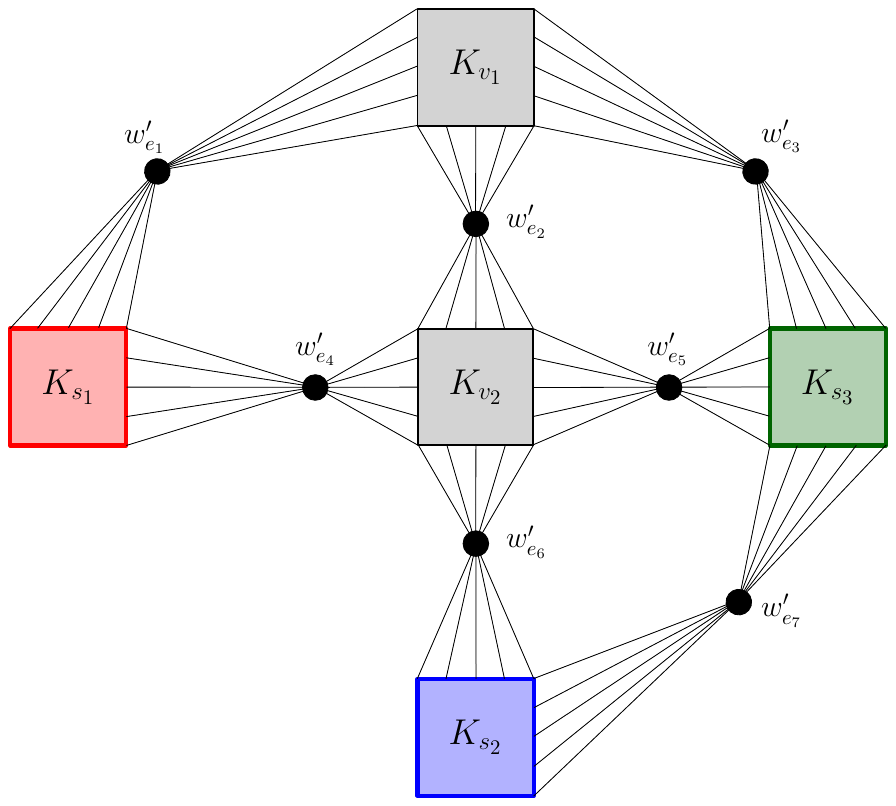}
          \caption{The corresponding instance of \alcshort\ $I_{\alcshort}=(G'=(V', E'),(\ell_{v'})_{v' \in V(G')}, k')$ constructed from the instance in (a)}
          \label{fig:alc1}
    \end{subfigure}
    \caption{Each vertex of $G$ is replaced by a clique of size $m=7$, illustrated as squares. For every edge $e=(u,v)$ in $G$, we introduce a vertex $w'_e$ that is adjacent to all vertices in the cliques $K_u$ and $K_v$. Colors red, blue, and green represent the three permissible colors $1, 2$, and $3$ respectively. The lists of allowed colors for vertices in $K_{s_1}$, $K_{s_2}$ and $K_{s_3}$ are restricted only to red, blue, and green respectively, while the remaining vertices can have all the three colors.}
    \label{fig:mwc-to-alc}
\end{figure}

To see the other direction, assume that $I_{\alcshort}$ admits a valid color assignment $\mathcal{X}=(X_{v'})_{v' \in G'}$ of total size $k'=(n+1)m+k$.
From $\mathcal{X}$ we construct a solution $F \subseteq E(G)$ for the instance $I_{\mwcshort} = \bigl(G=(V, E),\{s_1,s_2,s_3\}, k\bigr)$ of size at most $k$.
For every $w'_e \in V(G')$ with $|X_{w'_e}|\geq 2$, include its corresponding edge $e \in E(G)$ into the cut, that is $F := F \cup \{e\}$.

We show that $F$ is a valid \mwcshort\ by contradiction.  
Assume, for the sake of contradiction, that there exist distinct $a,b \in [3]$ such that
$s_a$ and $s_b$ are connected in $G - F$.  
Let $Q = (s_a, e_1, v_1, \dots, v_{q-1}, e_q, s_b)$ be a path connecting $s_a$ and $s_b$ in $G - F$.

By construction of $G'$, the existence of $Q$ implies that:  
all vertices of $K_{s_a}$ and $K_{v_1}$ are adjacent to $w'_{e_1}$,  
all vertices of $K_{v_1}$ and $K_{v_2}$ are adjacent to $w'_{e_2}$,  
and so on, until all vertices of $K_{v_{q-1}}$ and $K_{s_b}$ are adjacent to $w'_{e_q}$.

By the color assignment constraints, every vertex $v' \in K_{s_a}$ is assigned the unique color $a$, i.e., $X_{v'} = \{a\}$, and every vertex $u' \in K_{s_b}$ is assigned the unique color $b$, i.e., $X_{u'} = \{b\}$.  
Moreover, since none of the edges $e_1, \dots, e_q$ belong to $F$, we have $|X_{w'_{e_i}}| = 1$ for all $i \in [q]$.

Hence, for the color assignment $\mathcal{X}$ to be valid, there must exist some clique $K_{v_t}$ along the path where the vertices are assigned a set containing both colors $a$ and $b$.  
This implies that $K_{v_t}$ contributes $2m$ to the cardinality of the assignment $\mathcal{X}$, and the remaining vertices of $G'$ each contribute at least $1$ color to the cardinality of $\mathcal{X}$.
Thus, since there are $n-1$ cliques each with $m$ vertices and $|W'_{E}|=m$, the total size of the color assignment satisfies
\[
\sum_{v' \in G'} |X_{v'}| \geq 2m+(n-1) \cdot m + m = (n+2) \cdot m > (n+1)\cdot m+k
\]
which contradicts the assumption that $\sum_{v' \in V(G')} |X_{v'}| \le k' = (n+1)\cdot m + k$.
Therefore, such a path cannot exist, and $F$ is indeed a \mwcshort\ for $I_{\mwcshort} = \bigl(G=(V, E),\{s_1,s_2,s_3\}, k\bigr)$.

It remains to bound the size of $F$.  
This reduces to bounding the number of vertices $w'_e \in V(G')$ with $|X_{w'_e}| \ge 2$.  
The graph $G'$ contains $(n+1)\cdot m$ vertices, and each vertex is assigned at least one color.  
Since the total size of $\mathcal{X}$ is at most $(n+1)\cdot m + k$, there can be at most $k$ vertices with color sets of size greater than $1$.  
By construction, each such vertex corresponds to an edge in $F$, so we conclude that $|F| \le k$.
\end{proof}

\paragraph*{Reduction from \alcshort\ to \pwshort}

As the second step of our reduction, we introduce a restricted variant of \pwshort\ and reduce from \alcshort\ to this variant.

\defboxsmalltitle{\bipwfull}{\bipwshort}
{An instance of \pwshort consisting of a bipartite graph $G=(V,E)$ with bipartition $V = L \cup R$, where $|L| = 3$, together with a set of $t$ terminal pairs and an integer $k$, where:
\[
\p = \Bigl\{ p_i = \{u_i,v_i\} \mid i \in [t],\ u_i, v_i \in R,\ d_G(u_i,v_i) = 2 \Bigr\}.
\]
}
{Is there a \pwshort\ for $\p$ with at most $k$ edges?}

Observe that in \bipwshort, the input graph admits a vertex cover of size at most~$3$, namely the set $L$, since $G$ is bipartite. 

The reduction from \alcshort\ to \bipwshort\ constructs, for each instance of \alcshort, a bipartite graph with vertex cover 3 and terminal pairs at distance 2, such that a minimum \pwshort\ corresponds exactly to a minimum valid color assignment in the original \alcshort\ instance.

Let $I_{\alcshort}=\bigl(G, (\ell_{v})_{v \in V(G)}, k\bigr)$ be an instance of \alcshort\ such that $\ell_v \subseteq \{1, 2, 3\}$ and $\ell_v \neq \emptyset$ for every vertex $v \in V(G)$.
We construct an instance of \bipwshort, $I_{\textsc{VC-3-Bip}}=(G', \p', k)$ as follows.

The graph $G'$ is bipartite with partitions $L'$ and $R'$.
We add three vertices $a_1, a_2, a_3$ to $L'$, which correspond to the three colors.
For each vertex $v \in V(G)$, we add a corresponding vertex $v'$ to $R'$ and connect $v'$ to $a_i \in L'$ if and only if $i \in \ell_v$.
Intuitively, the neighborhood of $v'$ in $L'$ encodes the set of allowed colors of $v$.
Thus, each vertex in $R'$ is adjacent exactly to those vertices in $L'$ that represent the allowed colors of its corresponding vertex in $G$.
Finally, for every edge $uv \in E(G)$, we add the terminal pair $(u',v')$ to $\p'$.
Hence, the number of terminal pairs in $\p'$ equals $|E(G)|$.
Refer to \autoref{fig:alc-to-pair} for an illustration of the construction.
\begin{figure}[h]
    \centering
    \begin{subfigure}[b]{.48\textwidth}
          \centering
          \includegraphics[width=.9\textwidth, center]{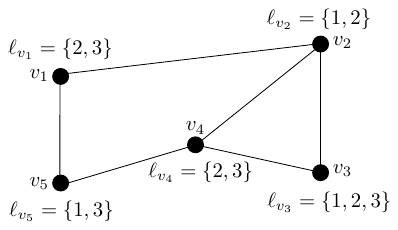}
          \caption{An instance of \alcshort\ where each vertex is assigned a list of at most three allowed colors.}
          \label{fig:alc2}
        \end{subfigure}
        \hfill
        \begin{subfigure}[b]{.48\textwidth}
          \centering
          \includegraphics[width=.5\textwidth]{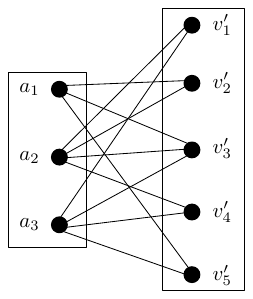}
          \caption{The corresponding instance of \bipwshort\ constructed from the \alcshort\ instance in (a)}
          \label{fig:bipw}
    \end{subfigure}
    \caption{Illustration of the reduction from an instance of \alcshort\ $\bigl(G, (\ell_v)_{v \in V(G)}, k\bigr)$ to an instance of \bipwshort\ $\bigl(G'=(L', R', E'), \p', k\bigr)$. Each vertex in $G$ is mapped to a vertex in the right partition of $G'$, while the three vertices in the left partition, $\{a_1, a_2, a_3\}$, represent the available colors. Terminal pairs $\p'= \bigl\{(v'_1, v'_2), (v'_1, v'_5), (v'_2, v'_3), (v'_2, v'_4), (v'_3, v'_4), (v'_4, v'_5) \bigr\}$ correspond to edges of $G$.}
    \label{fig:alc-to-pair}
\end{figure}
\begin{restatable}{lemma}{AlcToBip}
\label{lem:alc-to-bip}
Let $I_{\alcshort}=\bigl(G,(\ell_v)_{v\in V(G)}, k\bigr)$ be an instance of \alcshort\ with $\ell_v\subseteq\{1,2,3\}$ and $\ell_v\neq\emptyset$ for all $v\in V(G)$.
Let $I_{\textsc{VC-3-Bip}}=(G'=(L',R',E'),\p', k)$ be the instance of \bipwshort\ constructed as above.
Then $I_{\alcshort}$ has a valid color assignment of total size at most $k$ if and only if $I_{\textsc{VC-3-Bip}}$ has a \pwshort with at most $k$ edges.
\end{restatable}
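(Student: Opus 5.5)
The plan is to set up a direct correspondence between color assignments and edge sets of $G'$, namely $H \leftrightarrow (X_v)_{v}$ with $X_v = \{ i \mid v'a_i \in E(H)\}$. Every edge of $G'$ joins some $v' \in R'$ to some $a_i \in L'$, and $v'a_i \in E'$ exactly when $i \in \ell_v$. So subgraphs $H \subseteq G'$ (up to isolated vertices) are in bijection with families $(X_v)_{v}$ satisfying $X_v \subseteq \ell_v$, and $|E(H)| = \sum_v |X_v|$.

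The key observation concerns the terminal pairs. For each edge $uv \in E(G)$ with $d_{G'}(u',v')=2$, a shortest $u'$--$v'$ path in $H$ has the form $u' a_i v'$ for some $i$. Hence $H$ preserves the distance of the pair $\{u',v'\}$ if and only if $u'$ and $v'$ have a common neighbor in $H$, i.e., $X_u \cap X_v \neq \emptyset$. (If $\ell_u \cap \ell_v = \emptyset$, the pair has distance larger than $2$ in $G'$; in that case the \alcshort\ instance is infeasible. I would handle this degenerate case separately, either by assuming it away or by noting that $G'$ as defined then falls outside \bipwshort, so we may assume $\ell_u \cap \ell_v \ne \emptyset$ for every edge.)

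For the forward direction, I would take a valid assignment $\mathcal{X}$ of size at most $k$ and let $H$ consist of the edges $v'a_i$ with $i \in X_v$. Then $|E(H)| = \sum_v |X_v| \le k$, and each pair $\{u',v'\}$ is preserved through any $a_i$ with $i \in X_u \cap X_v$.

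For the backward direction, I would take a preserver $H$ with at most $k$ edges and define $X_v$ as above. The edge condition $X_u \cap X_v \ne \emptyset$ for each $uv \in E(G)$ follows from the observation. The remaining requirement is $X_v \neq \emptyset$ for all $v$. Every vertex incident to an edge of $G$ is a terminal and therefore has a neighbor in $H$. An isolated vertex $v$ of $G$, however, might receive $X_v = \emptyset$; this is the main obstacle. I would fix it either by adding one arbitrary edge $v'a_i$ with $i \in \ell_v$ for each isolated $v$ and adjusting the budget to $k$ minus the number of isolated vertices on the \alcshort\ side, or more simply by assuming without loss of generality that $G$ has no isolated vertices. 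The instances produced by Lemma~\ref{lem:mwc-to-alc} have this property, since every clique vertex is adjacent to some $w'_e$, using that $G$ is connected. With that assumption both directions match exactly and the sizes coincide, which proves the lemma.
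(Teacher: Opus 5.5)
Your proof is correct and is essentially the paper's argument: you use the same correspondence between edges $v'a_i$ of $H$ and colors $i\in X_v$ in both directions. The one difference is in the backward direction, where you take $X_v=\{i \mid v'a_i\in E(H)\}$ outright, whereas the paper collects only the common neighbours $N_H(u')\cap N_H(v')$ over terminal pairs; both give $\sum_v |X_v|\le |E(H)|$. Your point about isolated vertices of $G$ is a genuine one that the paper's proof passes over: such a vertex forces a color on the \alcshort\ side but costs nothing on the \pwshort\ side. So the equivalence can actually fail unless they are excluded, as you do, justified by the instances coming from Lemma~\ref{lem:mwc-to-alc}.
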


\begin{proof}
To see the forward direction, assume that $I_{\alcshort}$ admits a valid color assignment $\mathcal{X}=(X_v)_{v\in V(G)}$ with $\sum_{v \in V} |X_v| = k$.

We construct a \pwshort $H \subseteq G'$ for the terminal pairs in $\p'$ with $k$ edges as follows.
For each vertex $v \in V(G)$ and each color $i \in X_v$, we include the edge $v'a_i$ in $H$.
Note that since $i \in \ell_v$, by construction, edge $v'a_i$ exists in $E(G')$. 

We now show that $H$ is a \pwshort.
Consider any terminal pair $(u', v') \in \p'$.
By definition of $\p'$, every terminal pair in $G'$ corresponds to an edge in $G$, so the vertices $u$ and $v$ are adjacent in $G$.
Since $\mathcal{X}$ is a valid color assignment, it holds that $(X_u \cap X_v )\neq \emptyset$.
Let $i \in (X_u \cap X_v)$.
By the construction of $G'$, both $u'$ and $v'$ are adjacent to $a_i$, so $d_{G'}(u', v') = 2$.
As a result, it is enough to prove that, there is a path of length $2$ between $u'$ and $v'$ in $H$.
Since $i \in (X_u \cap X_v)$, the edges $u' a_i$ and $v' a_i$ are included in $H$.
Therefore, $H$ contains a path of length $2$ between $u'$ and $v'$, and thus preserves at least one shortest path between pair $(u', v')$.

It remains to bound the size of $H$.
For each vertex $v' \in V(H)$ corresponding to $v \in V(G)$, we include into $H$ exactly $|X_v|$ edges incident to it.
Hence, $|E(H)| = \sum_{v \in V(G)} |X_v| = k.$
Since $G'$ (and consequently $H$) is bipartite and all terminal pairs lie in the right partition, every edge of $H$ is counted exactly once in this sum.
Thus, $H$ is a \pwshort for $\p'$ with $k$ edges.

To prove the reverse direction, assume that $I_{\textsc{VC-3-Bip}}$ has a \pwshort $H$ with $k$ edges.
We construct a valid color assignment $\mathcal{X}=(X_v)_{v \in V(G)}$ for $I_{\alcshort}$ such that $\sum_{v \in V(G)} |X_v| = k$.

We first handle a degenerate case.
If there exists a terminal pair $(u',v') \in \p'$ such that $ N_{G'}(u') \cap N_{G'}(v') = \emptyset$, then $I_{\alcshort}$ is a trivial \no-instance.
Indeed, the pair $(u',v')$ corresponds to an edge $uv \in E(G)$, and the condition $N_{G'}(u') \cap N_{G'}(v') = \emptyset$ implies $(\ell_u \cap \ell_v) = \emptyset$.
Hence, no valid color assignment of $G$ exists.

We may therefore assume that for every terminal pair $(u',v') \in \p'$ it holds that
$d_H(u',v') = d_{G'}(u',v') =2$.
Since $G'$ is bipartite, every such path has the form $u' - a_i - v'$ for some $i \in [3]$.

For each terminal pair $(u',v') \in \p'$, add every color $i \in \bigl(N_{H}(u') \cap N_{H}(v')\bigr)$ to both $X_u$ and $X_v$ only if it is not already present. 
That is, set $X_w := X_w \cup \{i\}$ for $w \in \{u,v\},$ if $i \notin X_w.$ 

We now show that $\mathcal{X}$ is a valid color assignment.
It suffices to prove that for every edge $(u,v) \in E(G)$, we have $X_u \subseteq \ell_u$, $X_v \subseteq \ell_v$, and $(X_u \cap X_v) \neq \emptyset$.

Consider any edge $uv\in E(G)$.
The corresponding terminal pair $(u',v')$ satisfies $d_H(u',v') = 2$, and hence there exists at least one $i \in [3]$ such that both $u'a_i$ and $v'a_i$ belong to $E(H)$.
By the construction of $G'$ and since $E(H) \subseteq E(G')$, this implies $i \in (\ell_u \cap \ell_v)$.
Since $i$ is added to both $X_u$ and $X_v$, we obtain $i \in (X_u \cap X_v) \neq \emptyset$.
Moreover, colors are added to $X_u$ and $X_v$ only when they belong to the respective allowed lists.
Therefore, $X_u \subseteq \ell_u$ and $X_v \subseteq \ell_v$.
Thus, $\mathcal{X}$ is a valid color assignment.

It remains to bound the total size of the color assignment.
For each color $i \in [3]$, let $\alpha(i) = \bigl|\{v \in V(G) \mid i \in X_v\}\bigr|$ denote the number of vertices in $G$ whose assigned color set contains color~$i$.
By double counting, we have $\sum_{v \in V(G)} |X_v| = \sum_{i \in [3]} \alpha(i)$.

Note that by the construction of $\mathcal{X}$, color $i$ is added to the set $X_v$ of vertex $v$ only if there exists a terminal pair $(u', v')$ with $u' \in R'$ such that $a_i \in \bigl(N_H(v') \cap N_H(u') \bigr)$.
Moreover, each color is added to any set $X_v$ at most once.
Observe that $a_i \in N_H(v')$ holds if and only if $v'a_i \in E(H)$.

Consequently, each occurrence of color $i$ in some set $X_v$ corresponds uniquely to one edge of $H$ incident to $a_i$, which is $(a_i, v')$.
Conversely, every edge incident to $a_i$ gives rise to exactly one such occurrence.
Therefore, the number of vertices whose assigned sets contain color $i$ is precisely the degree of $a_i$ in $H$, that is,
\[\sum_{v \in V(G)} |X_v| = \sum_{i \in [3]} \alpha(i) = \sum_{i \in [3]} | N_H(a_i)| = |E(H)| = k.\]
Thus, $I_{\alcshort}$ admits a valid color assignment of total size at most $k$, completing the proof.

Note that, instead of adding all colors in $N_{G'}(u') \cap N_{G'}(v')$, one may choose a single index $i \in N_H(u') \cap N_H(v')$ and add only this color. 
The analysis remains unchanged, as the bound depends solely on the degrees of the vertices $a_i$ in $H$. 

Finally, since we have shown that a \bipwshort\ is \nph, it immediately follows that the general \pwshort\ problem is also \nph.
\end{proof}

We restate the main theorem of this section and give a concise proof using the reductions established in the previous subsections.

\VcParaNpHardness*
\begin{proof}
Let $I_{\mwcshort}$ be an instance of \mwcshort.  
By Lemma~\ref{lem:mwc-to-alc}, $I_{\mwcshort}$ can be transformed in polynomial time into an instance $I_{\alcshort}$ of \alcshort\ such that $I_{\mwcshort}$ is a yes-instance if and only if $I_{\alcshort}$ admits a valid color assignment of bounded size.  

Next, by Lemma~\ref{lem:alc-to-bip}, $I_{\alcshort}$ can be transformed in polynomial time into a restricted instance $I_{\bipwshort}$ of \pwshort\ on a bipartite graph with vertex cover 3, such that $I_{\alcshort}$ is a yes-instance if and only if $I_{\bipwshort}$ admits a pairwise distance preserver of bounded size.  

Since $I_{\bipwshort}$ is a restricted version of \pwshort, this chain of reductions implies that \pwshort\ is at least as hard as \mwcshort.  
Because \mwcshort\ is known to be \nph~\cite{3waycut}, it follows that Minimum \pwshort\ is \nph even on bipartite graphs with vertex cover 3.   
\end{proof}
\subsection{SDP is FPT by vertex cover}
In this section, we establish that \swfull\ (\swshort) is fixed-parameter tractable (FPT) when parameterized by the vertex cover size of the input graph. 
Formally, we prove Theorem~\ref{thm:sw-vc-fpt-thm}, restated next for convenience.

\SwVCFPT*

It is a well-known result that a minimum vertex cover $C$ of size at most $k$ can be computed in $O(1.2738^k + kn)$ time~\cite{ChenKX06}.

Let $(G,S,k)$ be an instance of \swshort\ such that $G$ admits a vertex cover of size at most $k$.
Fix a vertex cover $C \subseteq V(G)$ with $|C| \le k$, and let $I := V(G) \setminus C$.
Note that $I$ is an independent set; hence, for every non-isolated vertex $v \in I$, all neighbors of $v$ must lie in $C$, that is, $N_G(v) \subseteq C$.

We partition $I$ into equivalence classes $\{I_1, \dots, I_q\}$ based on the equivalence relation
\[ u \sim v \iff N_G(u) = N_G(v). \]
Note that  the number of such classes is bounded by $q \le 2^k$. 
For each $i \in [q]$, let $N_G(I_i)$ denote the common neighborhood of the vertices in $I_i$ in $G$, and let $S_i := S \cap I_i$ denote the set of terminals contained in $I_i$.

The following lemma demonstrates that there exists an optimal solution where all terminals in the same equivalence class are connected 
symmetrically.
\begin{restatable}{lemma}{sw-vc-fpt-same-connection}
\label{lemma:sw-vc-fpt-same-connection}
    For every \swshort\ $H$ of size at most $h$, there exists a \swshort\ $H'$ of size at most $h$ such that for every equivalence class $I_i$,  all terminals $t \in S_i$ satisfy $N_{H'}(t) = X_i$ for some $X_i \subseteq N_G(I_i)$.
\end{restatable}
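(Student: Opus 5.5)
Fix a \swshort\ $H$ of size at most $h$ and an equivalence class $I_i$ with $|S_i|\ge 2$ (if $|S_i|\le 1$ there is nothing to do). The plan is an exchange argument applied to one class at a time. In each class we pick a single terminal and copy its neighbourhood to all other terminals of the class. Since $I$ is independent, the edges of $H$ incident to a terminal $t\in S_i$ are exactly the edges $tc$ with $c\in N_H(t)\subseteq N_G(I_i)$. These edge sets are pairwise disjoint for distinct $t\in I$. Hence rewiring the terminals of $S_i$ touches no other edges of $H$, and the total cost contributed by $S_i$ is $\sum_{t\in S_i}|N_H(t)|$.

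For the choice of terminal, take $t^*\in S_i$ minimizing $|N_H(t^*)|$ and set $X_i:=N_H(t^*)$. Define $H'$ from $H$ by deleting all edges between $S_i\setminus\{t^*\}$ and $C$, and adding the edges $tc$ for all $t\in S_i$ and $c\in X_i$. These edges exist in $G$ because all vertices of $I_i$ have the same neighbourhood $N_G(I_i)\supseteq X_i$. By the minimality of $|X_i|$, the size does not increase. Processing the classes one after another is consistent: a later step only changes edges at terminals of its own class, so the neighbourhoods already fixed in earlier classes remain unchanged.

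The main step is feasibility. Two facts are used repeatedly.
\begin{itemize}
\item In $G$, each $t\in S_i$ has the same distances to every vertex outside $\{t,t^*\}$ as $t^*$ has, because shortest paths leave $t$ through some neighbour in $N_G(I_i)$.
\item For any $x\ne t,t^*$, $d_{H'}(t,x)=d_{H'}(t^*,x)$, because $t$ and $t^*$ are twins in $H'$.
\end{itemize}
We distinguish three kinds of pairs.
\begin{itemize}
\item A pair $\{x,y\}$ of terminals outside $S_i$. A shortest $x$--$y$ path in $H$ never needs an internal vertex of $S_i\setminus\{t^*\}$: if it passes through such a $t$ via neighbours $c,c'$, then $c,c'\in N_G(I_i)$ and $d_G(c,c')\le 2$. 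Preservation of the pair $\{t^*,x\}$ gives $d_H(t^*,c)=d_G(t^*,c)=1$ whenever $c$ lies on a shortest $t^*$--$x$ path, which is not immediate here. We therefore argue directly by replacement. We will prove the stronger claim that $H'$ preserves all distances in $H$ among vertices of $V(G)\setminus(S_i\setminus\{t^*\})$. A path segment $c\,t\,c'$ is replaced by $c\,t^*\,c'$ when $c,c'\in X_i$. Otherwise we use $d_H(c,c')\le 2$, if that holds, or we reroute through the shortest path to a terminal. This replacement is the delicate point.
\item Pairs $\{t,x\}$ with $t\in S_i$. These reduce via the twin facts to the pair $\{t^*,x\}$, which is preserved.
\item Pairs $\{t,t'\}\subseteq S_i$. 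Here $d_G=2$ and $d_{H'}=2$, since $X_i\neq\emptyset$ because $t^*$ must reach other terminals.
\end{itemize}

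The hard part is the first kind of pair: showing that deleting the other terminals' edges cannot destroy shortest paths between outside terminals that routed through them. My first attempt would be to strengthen the choice of $X_i$. Instead of taking a minimum-degree terminal, I would first reduce to a minimal solution, so that by Observation~\ref{obs:all-edges-needed} every edge lies on a shortest terminal path. I would then use that every path through $t\in S_i$ of the form $c\,t\,c'$ with $c\ne c'$ can be replaced by $c\,s\,c'$ for any $s\in I_i$ adjacent to both. If necessary, I would enlarge $X_i$ to the union of the relevant pairs while recharging the cost against the deleted edges. The counting there is what needs care.
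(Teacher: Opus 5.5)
Your construction (give every terminal of $S_i$ the neighbourhood $X_i$ of a minimum-degree terminal $t^*\in S_i$), the size count and the class-by-class processing are fine. The gap is the feasibility of $H'$, which is the actual content of the lemma: you leave it open, and the route you sketch does not work. There are two holes.

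First, for pairs $\{t,x\}$ you reduce to $\{t^*,x\}$ ``which is preserved'', but you never show $d_{H'}(t^*,x)=d_H(t^*,x)$. The edges at $t^*$ are kept, yet a shortest $t^*$--$x$ path of $H$ might a priori use edges at some $t\in S_i\setminus\{t^*\}$ that you deleted.

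Second, for pairs of outside terminals you only announce a plan, and the stronger claim you intend to prove is false, because distances between non-terminals need not survive. Take $C=\{a,b,c\}$, $I_i=S=\{t_1,t_2\}$ and $E(H)=\{t_1a,t_2a,t_2b,t_2c\}$. Then $t^*=t_1$ and $d_H(b,c)=2$, but $d_{H'}(b,c)=\infty$. For the same reason the local repair of a segment $c\,t\,c'$ has nothing to fall back on when $c,c'\notin X_i$. Enlarging $X_i$ costs $|S_i|$ edges per added vertex, and you give no accounting showing the deleted edges pay for it.

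The missing idea is to reroute globally through $t^*$ rather than repairing segments locally.

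(i) Let $t^*=v_0,v_1,\dots,v_m=x$ be a shortest path in $G$ with $x\notin I_i$. It contains no other vertex of $I_i$: if $v_j\in I_i$ for some $1\le j<m$, then $v_{j+1}\in N_G(v_j)=N_G(t^*)$, and $t^*v_{j+1}\cdots v_m$ is shorter. A shortest $t^*$--$x$ path in $H$ has length $d_G(t^*,x)$, so it is such a path. It therefore uses no edge incident to $S_i\setminus\{t^*\}$ and lies in $H'$. Hence $d_{H'}(t^*,x)=d_G(t^*,x)$ for every terminal $x\notin S_i$, which settles your second kind of pair via the twin property.

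(ii) Let $x,y\notin S_i$ be terminals and take a shortest $x$--$y$ path of $H$. If it avoids $S_i\setminus\{t^*\}$, it lies in $H'$. Otherwise it passes through some $t\in S_i\setminus\{t^*\}$. Being $G$-shortest, it gives $d_G(x,y)=d_G(x,t)+d_G(t,y)=d_G(x,t^*)+d_G(t^*,y)$. By (i), $d_{H'}(x,y)\le d_{H'}(x,t^*)+d_{H'}(t^*,y)=d_G(x,y)$.

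No minimality assumption, no enlargement of $X_i$ and no claim about non-terminal distances is needed. For comparison, the paper swaps one terminal at a time, setting $N_{H'}(t_2):=N_H(t_1)$ when $|N_H(t_1)|\le|N_H(t_2)|$, and verifies only distances from $t_2$. Observations (i) and (ii) are what is needed to justify the remaining pairs there as well.
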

\begin{proof}

    Let $H$ be a \swshort\ of size at most $h$. 
    If for every equivalence class $I_i$, all terminals in $S_i$ already have identical neighborhoods in $H$, then we are done.
    Otherwise, suppose there exist $t_1, t_2 \in S_i$  such that $N_H(t_1) \neq N_H(t_2)$.
    Assume $|N_H(t_1)| \le |N_H(t_2)|$ without 
    loss of generality.
    
    We construct $H'$ by setting $N_{H'}(t_2) = N_H(t_1)$ and $N_{H'}(v) = N_H(v)$ for all $v \neq t_2$. 
    That is, we remove all edges incident to $t_2$ in $H$ and add edges between $t_2$ and every vertex in $N_H(t_1)$.
    Since $N_H(t_1) \subseteq N_G(t_1) = N_G(t_2)$, it follows that $H' \subseteq G$.
    
    \medskip

    We claim that $H'$ is a valid \swshort.
    Fix an arbitrary terminal $t_j \in S$.
    
    If $t_j \in N_G(I_i)$, then $d_G(t_1, t_j)=d_G(t_2, t_j)=1$.
    Since $H$ is a \swshort, both edges $t_1t_j$ and $t_2t_j$ belong to $H$, and hence $t_j \in N_H(t_1) \cap N_H(t_2)$.
    By construction, these edges are preserved in $H'$, implying $d_{H'}(t_1,t_j) = d_{H'}(t_2,t_j) = 1$ and that $H'$ is a valid \swshort. 
    
    If $t_j \in S \setminus N_G(I_i)$, since $H$ is a \swshort\ and $I$ is an independent set, there exist vertices $u \in N_H(t_2)$ and $v \in N_H(t_1)$ such that
    \[
    d_H(t_j,t_2) = d_H(t_j,u) + 1 = d_G(t_j,t_2) \qquad \text{ and }\qquad d_H(t_j,t_1) = d_H(t_j,v) + 1 = d_G(t_j,t_1).
    \]    
    Suppose for contradiction that $d_H(t_j,u) < d_H(t_j,v)$.
    Then connecting $t_1$ to $u$ instead of $v$ would yield a strictly shorter path between $t_1$ and $t_j$ in $H$, contradicting the fact that $H$ preserves shortest-path distances.
    Therefore, $d_H(t_j,u) = d_H(t_j,v),$ and replacing the edge $t_2u$ by the edge $t_2v$ preserves the distance between $t_2$ and $t_j$.
    
    Since $t_1 \sim t_2$, we have $N_G(t_1) = N_G(t_2)$, and hence $v \in N_G(t_2)$.
    Therefore, the edge $t_2v$ exists in $G$ and is admissible in the construction.
    
    If either $u$ or $v$ is a terminal, then it is adjacent to $t_1$ in $H$.
    By construction, the same adjacency is introduced for $t_2$ in $H'$, and the
    argument continues to hold without modification.
    
    \medskip
    Finally, since $|N_H(t_1)| \le |N_H(t_2)|$, we obtain
    \[
    |E(H')| = |E(H)| - |N_H(t_2)| + |N_H(t_1)| \le |E(H)|.
    \]
    Therefore, $H'$ is a \swshort of size at most $h$.
    By exhaustively applying this transformation, we obtain a graph in which all terminals belonging to the same equivalence class have identical neighborhoods.
    See Figure~\ref{fig:sw-vc-fpt-same-connection} for an illustration of the transformation from $H$ to $H'$.
\end{proof}

\begin{figure}[h]
    \centering
    \begin{subfigure}[t]{.49\textwidth}
      \centering
      \includegraphics[width=0.62\textwidth]{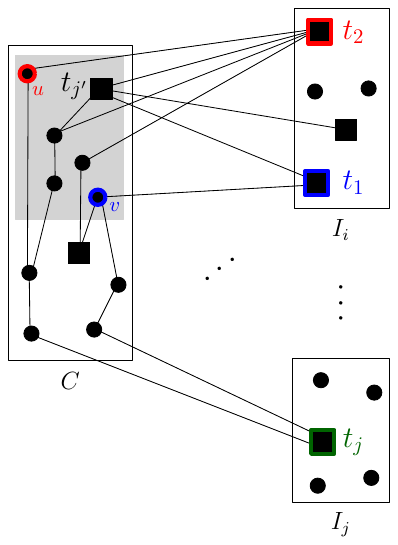}
      \caption{\swshort\ $H$ where terminals $t_1 \sim t_2$, but have different neighborhoods, i.e., $N_H(t_1) \neq N_H(t_2)$.  }
      \label{fig:sw-vc-fpt-same-connection-1}
    \end{subfigure}
    \hfill
    \begin{subfigure}[t]{.49\textwidth}
      \centering
      \includegraphics[width=0.62\textwidth]{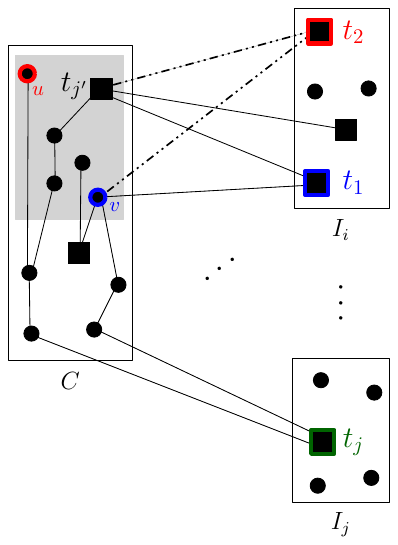}
      \caption{\swshort\ $H'$ obtained by removing all edges in $\delta_H(t_2)$ and adding edges between $t_2$ and all vertices in $N_H(t_1)$ (dotted edges). }
      \label{fig:sw-vc-fpt-same-connection-2}
    \end{subfigure}
    \caption{
        Illustration of the proof of Lemma~\ref{lemma:sw-vc-fpt-same-connection}.
        For clarity, not all vertices and edges are shown.
        In each figure, vertex cover $C$ is depicted as a rectangle on the left, and the equivalence classes on the right. 
        Terminals are represented by squares and non-terminals by circles. 
        The shaded region indicates the common neighborhood $N_G(I_i)$. 
        The edges belong to the \swshort\ $H$ in (a) and \swshort\ $H'$ in (b) respectively.
        All possible relevant placements of terminals in $S \setminus \{t_1,t_2\}$ are depicted.
        Vertices $v$ and $u$ lie on shortest paths between $t_j$ and $t_1$ and between $t_j$ and $t_2$, respectively.
        \textbf{(a)} Terminals in $N_G(I_i)$ are adjacent to both $t_1$ and $t_2$ in $H$ and remain so in $H'$.
        \textbf{(b)} Replacing the edges incident to $t_2$ preserves all shortest-path distances between terminals.
    }
    \label{fig:sw-vc-fpt-same-connection}
\end{figure}

By \autoref{lemma:sw-vc-fpt-same-connection}, we may therefore assume without loss of generality that there exists an optimal \swshort\ in which all terminals in the same equivalence class have identical connections.

The next lemma shows that, for each equivalence class, an optimal \swshort\ can be assumed to contain at most one non-terminal vertex.

\begin{restatable}{lemma}{sw-vc-fpt-one-non-terminal}
\label{lemma:sw-vc-fpt-one-non-terminal}
For every \swshort\ $H$ of size at most $h$, there exists a \swshort\ $H'$ of size at most $h$ such that for every $i \in [q]$, it holds that 
$\bigl|V(H') \cap (I_i \setminus S_i)\bigr| \le 1.$
\end{restatable}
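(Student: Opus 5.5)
Fix an equivalence class $I_i$ and suppose $H$ contains two distinct non-terminal vertices $x_1,x_2 \in I_i\setminus S_i$ that are used by $H$. The plan is to merge them, processing one class at a time; the modification never touches vertices outside $I_i$, so the classes can be handled independently. First, by Observation~\ref{obs:all-edges-needed} and Observation~\ref{obs:remove-nonterminals-not-on-shortest-path}, we may assume every edge of $H$ lies on some shortest terminal path in $H$. Since $x_j$ is a non-terminal in the independent set $I$, every such path passes through $x_j$ as an internal vertex. It therefore enters and leaves through two neighbours $c,c'\in N_H(x_j)\subseteq N_G(I_i)$, and the subpath $c\,x_j\,c'$ is a shortest path in $G$. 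In particular $d_G(c,c')=2$. Note that $c,c'$ are also neighbours of $x_{3-j}$ in $G$, because both vertices have the same neighbourhood $N_G(I_i)$.

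The construction is as follows. Build $H'$ by deleting $x_2$ together with all its incident edges, and adding the edges $x_1c$ for every $c\in N_H(x_2)\setminus N_H(x_1)$. All these edges exist in $G$, since $N_G(x_1)=N_G(x_2)$. This changes the edge count by $|N_H(x_2)\setminus N_H(x_1)|-|N_H(x_2)|\le 0$. To see that $H'$ still preserves all terminal distances, take any shortest terminal path $\pi$ in $H$ that uses $x_2$. Its passage through $x_2$ is a subpath $c\,x_2\,c'$ with $c,c'\in N_H(x_2)$. In $H'$ both $c$ and $c'$ are adjacent to $x_1$, so we replace this subpath by $c\,x_1\,c'$, which has the same length. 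The result is a walk of the same length in $H'$, and hence a path of at most that length exists.

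The main obstacle is that a shortest path might pass through $x_2$ more than once, or might pass through both $x_1$ and $x_2$, so that after rerouting the walk visits $x_1$ twice. This is harmless: shortcutting a walk only shortens it, and a walk of length $d_G(s,t)$ between terminals $s,t$ still certifies $d_{H'}(s,t)\le d_G(s,t)$. The reverse inequality holds automatically because $H'\subseteq G$. I will also check that terminals lying in $N_G(I_i)$ need no special treatment. Their edges to $x_2$ were only ever used as internal edges of paths through $x_2$, so the rerouting argument above already covers them.

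Finally, iterate the merge until at most one non-terminal of each class remains in the preserver. Each step removes one vertex of $I\setminus S$ and never increases the size, so the process terminates. It also does not disturb the invariant of Lemma~\ref{lemma:sw-vc-fpt-same-connection}, because terminal neighbourhoods are left unchanged. The resulting $H'$ therefore has size at most $h$ and satisfies $\bigl|V(H')\cap(I_i\setminus S_i)\bigr|\le 1$ for all $i\in[q]$.
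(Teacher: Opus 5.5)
Your proposal is correct and follows the paper's own argument. Like the paper, you delete one non-terminal $x_2$ of the class and hand its $H$-neighbourhood to the other non-terminal $x_1$, which is legal since $N_G(x_1)=N_G(x_2)$. You then reroute each passage $c\,x_2\,c'$ of a shortest terminal path as $c\,x_1\,c'$, so the size does not grow and all terminal distances are kept. Your walk-based justification, where a same-length walk in $H'$ certifies $d_{H'}(s,t)\le d_G(s,t)$, is if anything a bit cleaner than the paper's displayed distance equation.
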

\begin{proof}
    Let $H$ be a \swshort\ of size at most $h$.
    Suppose that there exists an index $i \in [q]$ such that $|V(H) \cap (I_i \setminus S_i)| \ge 2$.
    Let $u,v \in V(H) \cap (I_i \setminus S_i)$ be two distinct non-terminals from this set.

    \medskip
    
    We now construct $H'$ as follows.
    Delete the vertex $u$ together with all edges incident to it in $H$.
    For every vertex $x \in N_H(u)$, add the edge $vx$.
    All other vertices and edges of $H$ remain unchanged.
    More formally, we define $H' = (V(H'), E(H'))$ where
    \[
    V(H') := V(H) \setminus \{u\}
    \quad\text{and}\quad
    E(H') := \bigl(E(H) \setminus \delta_H(u)\bigr) \cup \{vx \mid x \in N_H(u)\}.
    \]

    \medskip
    We claim that $H'$ is a valid \swshort.
    By ~\autoref{obs:remove-nonterminals-not-on-shortest-path}, without loss of generality we may assume  that every vertex in $V(H) \cap (I_i \setminus S_i)$ lies on at least one shortest path between a pair of terminals.
    Let $T_u := \bigl\{(t_i, t_j) \mid t_i, t_j \in S,\ i \neq j,  \text{ and } u \text{ belongs to some } t_i-t_j \text{ shortest path}  \bigr\}$ 
    be the set of terminal pairs such that $u$ lies on their shortest path in $H$. 
    Define $T_v$ similarly. 
    
   \medskip

    \medskip
    
    Fix an arbitrary terminal pair $(t_1^u,t_2^u) \in T_u$, and consider a shortest path between them in $H$.
    Since $u$ is a non-terminal and $u \in I_i$, this path enters and leaves $u$ through two distinct neighbors $u_1,u_2 \in N_H(u) \subseteq C$, and hence
    \begin{align}
    \label{eq:sw-vc-fpt-via-u}
        d_H(t_1^u,t_2^u)
        = d_H(t_1^u,u_1) + d_H(u_1,u) + d_H(u,u_2) + d_H(u_2,t_2^u) 
        = d_H(t_1^u,u) + 2 + d_H(u_2,t_2^u).
    \end{align}
    
    \medskip

    Since $u,v \in I_i$, it holds that $N_G(u) = N_G(v).$
    In particular, $u_1,u_2 \in N_G(v)$ and $d_G(u_1, v) = d_G(v, u_2) = 1$.
    Therefore, adding the edges $vu_1$ and $vu_2$ yields a $t_1^u$--$t_2^u$ path in $H'$ of the same length:
    \begin{align}
    \label{eq:sw-vc-fpt-via-v}
    d_{H'}(t_1^u,t_2^u)
    &= d_{H'}(t_1^u,u_1) + d_{H'}(u_1,v) + d_{H'}(v,u_2) + d_{H'}(u_2,t_2^u) \notag \\
    &= d_H(t_1^u,u) + 2 + d_H(u_2,t_2^u) \overset{(\ref{eq:sw-vc-fpt-via-u})}{=} d_H(t_1^u,t_2^u).
    \end{align}
    Thus, the shortest-path distance between $t_1^u$ and $t_2^u$ is preserved in $H'$.
    
    \medskip
    
    Finally, note that no edge incident to $v$ was removed in the construction of $H'$.
    Hence, every terminal pair whose shortest path in $H$ used $v$ still has a shortest path of the same length in $H'$.
    All other distances remain unchanged.
    
    \medskip
    
    We conclude that $H'$ is a \swshort\ of size at most:
    \[
        |E(H')| = |E(H)| - |\delta_H(u)| + |\delta_H(u)| = |E(H)| \leq h.
    \]
    By repeating this operation exhaustively, we obtain a \swshort\ in which each equivalence class $I_i$ contains at most one non-terminal.
    Refer to~\autoref{fig:sw-vc-fpt-one-non-terminal} for an illustration.

\end{proof}

\begin{figure}[h]
    \centering
    \begin{subfigure}[t]{.49\textwidth}
      \centering
      \includegraphics[width=0.62\textwidth]{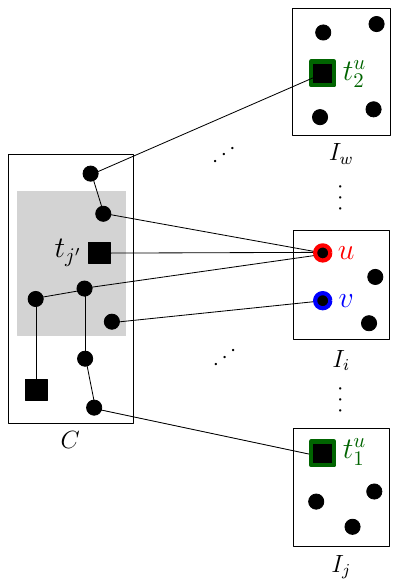}
      \caption{\swshort\ $H$ where at least two non-terminals $u \sim v$ from the same equivalence class  are taken into the solution.}
      \label{fig:sw-vc-fpt-one-non-terminal-1}
    \end{subfigure}
    \hfill
    \begin{subfigure}[t]{.49\textwidth}
      \centering
      \includegraphics[width=0.62\textwidth]{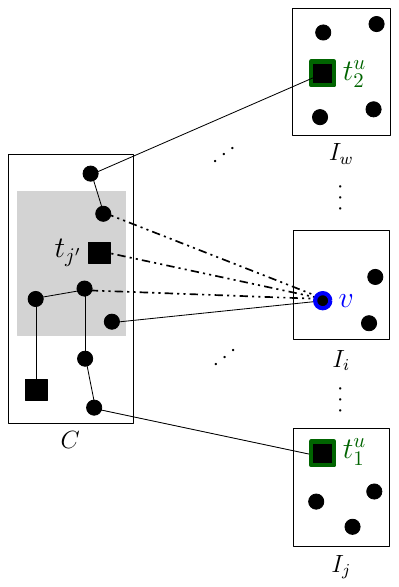}
      \caption{\swshort\ $H'$ obtained by removing $u$ and all its adjacent edges and connecting $v$ to all vertices in $N_H(u)$ (dotted edges). }
      \label{fig:sw-vc-fpt-one-non-terminal-2}
    \end{subfigure}
    \caption{
        Illustration of the proof of Lemma~\ref{lemma:sw-vc-fpt-one-non-terminal}.
        \textbf{(a)} Two non-terminal vertices $u$ and $v$ from the same equivalence class $I_i$ are used in the solution.
        \textbf{(b)} The construction removes $u$ and connects $v$ to all vertices in $N_H(u)$.
        All terminal-to-terminal shortest-path distances are preserved.
    }
    \label{fig:sw-vc-fpt-one-non-terminal}
\end{figure}

The next lemma shows that if an equivalence class contains a terminal, then there exists an optimal \swshort\ that uses only terminals from this class.

\begin{lemma}
\label{lemma:sw-vc-remove-nonterminals}
For every \swshort\ $H$ of size at most $h$, if there exists an index $i \in [q]$ with $S_i \neq \emptyset$ such that $V(H) \cap I_i \not\subseteq S_i$, then there exists a \swshort\ $H'$ of size at most $h$ satisfying $V(H') \cap I_i \subseteq S_i$.
\end{lemma}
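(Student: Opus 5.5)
Fix a terminal $t \in S_i$ and a non-terminal $u \in V(H) \cap (I_i \setminus S_i)$. The plan is to reroute everything that passes through $u$ so that it passes through $t$ instead, and then delete $u$. This mirrors the proof of Lemma~\ref{lemma:sw-vc-fpt-one-non-terminal}, with $t$ taking the role of the surviving vertex $v$. The argument is applied once for each non-terminal of $I_i$ used by $H$; by Lemma~\ref{lemma:sw-vc-fpt-one-non-terminal} we may even assume there is at most one such vertex.

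The construction is as follows. Define $H'$ by $V(H') := V(H) \setminus \{u\}$ and
\[
E(H') := \bigl(E(H) \setminus \delta_H(u)\bigr) \cup \{tx \mid x \in N_H(u)\}.
\]
Since $N_G(u) = N_G(t) = N_G(I_i)$, every added edge $tx$ exists in $G$. No edge incident to $t$ is removed. The size does not increase, because we delete $|\delta_H(u)|$ edges and add at most that many; edges $tx$ already present in $H$ are not counted twice.

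For feasibility, take any terminal pair whose shortest path in $H$ visits $u$. Since $u$ is not a terminal and $I$ is independent, the path enters and leaves $u$ through two neighbours $u_1,u_2 \in N_H(u) \subseteq C$. Replacing the segment $u_1 u u_2$ by $u_1 t u_2$ gives a walk in $H'$ of the same length, exactly as in equations~\eqref{eq:sw-vc-fpt-via-u} and~\eqref{eq:sw-vc-fpt-via-v}. A walk of shortest-path length between the two terminals contains a path of at most that length, so it certifies $d_{H'} \le d_G$ for that pair, and $d_{H'} \ge d_G$ holds trivially. Pairs whose chosen shortest path avoids $u$ keep it unchanged in $H'$, since only edges at $u$ were deleted.

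The main obstacle is that the rerouted path now visits $t$, which is itself an endpoint of pairs. This causes no harm, because passing through a terminal is allowed, and the rerouted walk may repeat $t$ only when one endpoint is $t$ itself; in that case shortcutting only shortens it. After rerouting, $u$ is no longer needed, so $V(H') \cap I_i$ loses $u$. Repeating this for every non-terminal of $I_i$ in $H$ (each step keeps the size at most $h$) yields $V(H') \cap I_i \subseteq S_i$.
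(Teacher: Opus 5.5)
Your proposal is correct and follows essentially the same route as the paper: delete the non-terminal $u$ of $I_i$, attach a terminal $t \in S_i$ to all of $N_H(u)$ (legal since $N_G(u)=N_G(t)$), and reroute each shortest path through $u_1 t u_2$ exactly as in Lemma~\ref{lemma:sw-vc-fpt-one-non-terminal}. Your write-up is slightly more careful than the paper's, with an explicit edge count when some $tx$ already lie in $H$ and the walk-to-path remark, and iterating over all non-terminals of $I_i$ is equivalent to the paper's appeal to Lemma~\ref{lemma:sw-vc-fpt-one-non-terminal}.
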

\begin{proof}
    Let $H$ be a \swshort\ of size at most $h$, where $S_i \neq \emptyset$ for some $i\in[q]$ and $V(H) \cap I_i \not\subseteq S_i$.
    By~\autoref{lemma:sw-vc-fpt-one-non-terminal}, we may assume $|V(H) \cap I_i \not\subseteq S_i| = 1$.
    Let $v \in V(H) \cap I_i $ be the non-terminal vertex from equivalence class $I_i$ that is part of $H$.
    Let $t_i \in S_i$ be an arbitrary terminal in $I_i$.

    We construct $H'$ by removing $v$ and all the edges in $\delta_H(v)$ and connecting $t_i$ to all the vertices in $N_H(v)$.

    The argument that all terminal distances are preserved in $H'$ are identical to those used in the proof of Lemma~\ref{lemma:sw-vc-fpt-one-non-terminal}.
    The only difference is that $v$ is replaced by a terminal $t_i \in S_i$, which belongs to the same equivalence class and therefore has the same neighborhood in $G$.
\end{proof}
Note that ~\autoref{lemma:sw-vc-fpt-one-non-terminal} shows that each equivalence class needs at most one non-terminal vertex in an optimal solution.
~\autoref{lemma:sw-vc-remove-nonterminals} strengthens this statement by showing that if a class contains a terminal, then no non-terminal from that class is required.

We introduce two reduction rules based on the structural properties established in these lemmas. 
For each equivalence class $I_i$ where $i \in [q]$, we define the following:

\begin{itemize}
    \item[\textbf{(i)}] \textbf{Non-terminal Deletion}: If $I_i \cap S \neq \emptyset$, discard all 
    non-terminal vertices in $I_i$. 
    \label{rule:non-terminal-deletion}
    \item [\textbf{(ii)}] \textbf{Representative Selection}: If $I_i \cap S = \emptyset$, retain an arbitrary representative non-terminal vertex in $I_i$ and remove all other vertices of the class. \label{rule:representative}
\end{itemize}

With all the properties and definitions set, we prove the main theorem of this section, which we restate for convenience:
\SwVCFPT*
\begin{proof}
Let $(G, S, k)$ be an instance of \swshort\ such that $G$ admits a vertex cover of size at most $k$.

We describe a branching algorithm that constructs all candidate solutions consistent with the structural properties established
in Lemmata~\ref{lemma:sw-vc-fpt-same-connection}, \ref{lemma:sw-vc-fpt-one-non-terminal}, and \ref{lemma:sw-vc-remove-nonterminals} and chooses the one with the smallest size.

First, using a standard FPT algorithm for \textsc{Vertex Cover}, we compute a vertex cover $C \subseteq V(G)$ with $|C| \le k$.
Let $I := V(G) \setminus C$.
Then, we partition $I$ into equivalence classes $\{I_1,\dots,I_q\}$ with $q \in \N$ according to their neighborhoods in $C$, where $u \sim v$ if and only if $N_G(u) = N_G(v)$.

Next, we exhaustively apply the reduction rules \textbf{Non-terminal Deletion} and \textbf{Representative Selection}.
After these reductions, for every equivalence class $I_i$, either $I_i \subseteq S$ or $V(H) \cap I_i$ contains exactly one
non-terminal vertex.

Then, we enumerate all possible subgraphs induced by the vertex cover.
Let $G[C]$ denote the subgraph of $G$ induced by $C$, and let $\mathcal{H}_C$ be the family of all subgraphs of $G[C]$.
For each $H_C \in \mathcal{H}_C$, we construct candidate solutions that extend $H_C$ to the vertices in $I$.

For each equivalence class $I_i$, all vertices in $I_i$ may only connect to vertices in $N_G(I_i) \subseteq C$.
We branch on all possible subsets of $N_G(I_i)$, including the empty set $\emptyset$.
Formally, define
\[
\mathcal{H}_i := \{ X \mid X \subseteq N_G(I_i) \}.
\]
For each $X \in \mathcal{H}_i$, the branch corresponds to choosing $X$ as the neighborhood in $C$ of the equivalence class $I_i$ in the final solution.
That is, every vertex of $I_i$ that is present in the solution graph is connected exactly to the vertices in $X$.

A leaf of the branching process corresponds to a graph
\[
H = H_C \cup \bigcup_{i=1}^q E_i,
\]
where for each equivalence class $I_i$ and each chosen subset $X_i \in \mathcal{H}_i$ along the branch, we define
\[
E_i := \{ vx \mid v \in I_i,\ x \in X_i \}.
\]

Let $r_i$ denote the number of vertices of $I_i$ that appear in $H$ after the exhaustive application of the reduction rules.
By Lemmas~\ref{lemma:sw-vc-fpt-one-non-terminal} and~\ref{lemma:sw-vc-remove-nonterminals}, we have $r_i = |S_i|$ if $S_i \neq \emptyset$, and $r_i \in \{0,1\}$ otherwise. 
Only the edges incident to these $r_i$ vertices are retained in $H$.
Consequently, the contribution of $I_i$ to the edge set of $H$ is $|E_i| = r_i \cdot |X_i|.$
Thus, the total size of $H$ is
\[
|E(H)| = |E(H_C)| + \sum_{i=1}^q r_i \cdot |X_i|.
\]

Hence, although the branching decision is made once per equivalence class, the resulting graph $H$ contains all edges induced by this decision for every vertex of $I_i$, which allows us to correctly evaluate both the \swshort\ property and the size of $H$.
Among all such graphs, we select one of minimum size that preserves all pairwise shortest-path distances between terminals.

Equivalently, for each fixed subgraph $H_C \subseteq G[C]$, the branching process can be viewed as a rooted search tree of depth $q$, where the node at depth $i$ branches over all subsets in $\mathcal{H}_i$, corresponding to the possible choices of connections between the equivalence class $I_i$ and the vertex cover.

We argue that the algorithm constructs an optimal solution.

\medskip
\noindent
\textbf{Completeness.}
Let $H^\star$ be an optimal \swshort\ of $G$.
By Lemma~\ref{lemma:sw-vc-fpt-same-connection}, we may assume that all
terminals belonging to the same equivalence class have identical
neighborhoods in $H^\star$.
By Lemma~\ref{lemma:sw-vc-fpt-one-non-terminal}, each equivalence class
contains at most one non-terminal vertex in $H^\star$.
Finally, by Lemma~\ref{lemma:sw-vc-remove-nonterminals}, if $S_i \neq
\emptyset$, then $H^\star$ contains no non-terminal vertex from $I_i$.

Therefore, $H^\star$ is fully described by:
(i) its restriction to the vertex cover $C$, and
(ii) for each equivalence class $I_i$, the subset of $N_G(I_i)$ to which
the vertices of $I_i$ are connected.
Hence, $H^\star$ corresponds to one of the branches considered by the
algorithm.

\medskip
\noindent
\textbf{Optimality.}
For every candidate graph $H$ constructed by the algorithm, we explicitly
verify whether it preserves all pairwise shortest-path distances between
terminals.
Among all valid candidates, the algorithm outputs one of minimum size.
Since every optimal \swshort\ satisfying the above structural properties
is considered, the returned solution is optimal.

Next, we bound the running time.
There are at most $2^{|C|}$ choices for $H_C$.
For each equivalence class $I_i$, we branch into at most $2^{|N_G(I_i)|} \le 2^{|C|}$ possibilities.
Since $q \le 2^{|C|}$, the total number of candidates is at most $2^{|C| \cdot 2^{|C|}}.$
For each candidate graph, we can verify the \swshort\ property by computing all-pairs distances between terminals in $O((n+m)|S|^2)$ time.
Thus, the total running time is $O(2^{|C|(2^{|C|}+1)})\cdot O((n+m)|S|^2) = 2^{O(2^{|C|})}\cdot O(k^2(n+m))$. This concludes the proof of Theorem~\ref{thm:sw-vc-fpt-thm}.
\end{proof}

\section{\texorpdfstring{\wh{1}}{W[1]-hard}ness by solution size} \label{sec:solution_size}
In this section, we show that \swshort\ parameterized by solution size is \wh{1}.
We prove this via a parameterized reduction from \mccfull\ parameterized by solution size.

We first recall the problem.

\defbox{\mccfull}{\mccshort}
{A graph $G=(V, E)$, an integer $k$, and a partition of $V$ into $k$ independent sets $(V_1, V_2, \dotso, V_k)$.}
{Does there exist a subset of the vertices $X \subseteq V$ such that $|X \cap V_i| = 1$ for every $i \in [k]$ and $G[X]$ is a clique?}

The \mccshort\ problem parameterized by solution size is known to be \wh{1}~\cite{CyganFKLMPPS15}.
We show a parameterized reduction from \mccshort\ parameterized by solution size to \swshort\ and obtain the result of Theorem~\ref{thm:sw-w1-hard-soln}, restated next.
\SwSolnWHard*
\begin{proof}
    Let $I_{\mccshort} = \bigl(G=(V,E),(V_1,\dots,V_k), k\bigr)$ be an instance of \mccshort, where $k$ is the parameter.
    We construct an instance of \swshort\ denoted by $I_{\swshort}=\bigl(G'=(V',E'), S, k' \bigr)$, where $k'=k+\binom{k}{2}$.

    First we construct the graph $G'=(V',E')$ as follows.
    For each $i \in [k]$, create a copy of vertex set $V_i$, denoted by $V'_i$.
    For every vertex $v \in V_i$, let $v' \in V'_i$ denote its copy.
    For every edge $uv \in E(G)$ with $u \in V_i$ and $v \in V_j$, add the edge $u'v'$ to $E'$. 
    For each $i \in [k]$, add a terminal vertex $t_i$ to $V'_i$ and connect $t_i$ to all vertices in $V'_i$.
    Let $S=\{t_i \mid i \in [k]\}$ be the set of terminals.
    $(G'=(V',E'),S,k')$ is the constructed instance of \swshort.
    Refer to~\autoref{fig:mcc-to-subsetwise} for an illustration of this reduction.

    \begin{figure}[h]
        \centering
        \begin{subfigure}[t]{.49\textwidth}
          \centering
          \includegraphics[width=.6\textwidth, ]{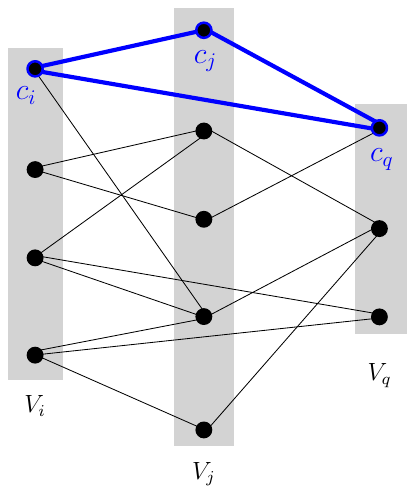}
          \caption{An instance of \mccshort.}
          \label{fig:kclique}
        \end{subfigure}
        \hfill
        \begin{subfigure}[t]{.49\textwidth}
          \centering
          \includegraphics[width=.85\textwidth]{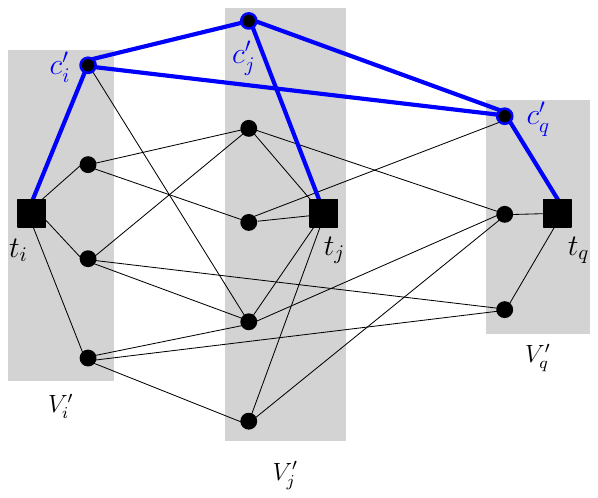}
          \caption{The corresponding instance of \swshort.}
          \label{fig:subsetwise}
        \end{subfigure}
        \caption{Illustration of the reduction from an instance of \mccshort\  $\bigl(G=(V, E),(V_1,\dots,V_k),k\bigr)$ to an instance of \swshort\ $\bigl(G'=(V', E'), S, k'\bigr)$.
        The vertex sets \(V_i, V_j,\) and \(V_q\) in \(G\) are highlighted in gray, each forming an independent set.
        For every vertex set \(V_i\) in \(G\), we create a corresponding copy \(V'_i\) in \(G'\) and introduce a terminal \(t_i\).
        Each terminal \(t_i\) is connected to all vertices in \(V'_i\) via edges.
        The squares represent the terminals \(S=\{t_i, t_j, t_q\}\) added to \(G'\), and the circles denote non-terminals.
        In (a), the blue edges and vertices indicate the solution subgraph for the \mccshort\ instance, while in (b), the blue edges and vertices indicate the solution subgraph for the \swshort\ instance.}
        \label{fig:mcc-to-subsetwise}
    \end{figure}

    We prove that $I_{\mccshort}$ is a \yes-instance if and only if $I_{\swshort}$ is a \yes-instance.
    Equivalently, $I_{\mccshort}$ has a solution of size at most $k$ if and only if $I_{\swshort}$ has a solution of size at most $k'=k+\binom{k}{2}$.

    To see the forward direction, assume that $G$ contains a multicolored clique $C$ with $k$ vertices, where $V(C)$ and $E(C)$ denote the set of vertices and edges of $C$, respectively.
    For each $i \in [k]$, by definition, we have $|V(C) \cap V_i| = 1$.
    Let $c_i$ denote the unique vertex in $V(C) \cap V_i$.
    Then for every $i,j \in [k]$ with $i \neq j$, $C$ contains the edge $c_ic_j$, which is the unique edge of $C$ between $V_i$ and $V_j$.
    
    We construct a \sw\ distance preserver $H$ for $(G',S)$ as follows.
    
    For each $c_i \in V(C)$, let $c'_i \in V'_i$ be its corresponding vertex in $G'$.
    Set $V(H)= S \cup \{c'_1,c'_2,\ldots,c'_k\}$.
    For each edge $c_ic_j \in E(C)$, add the corresponding edge $c'_ic'_j$ to $E(H)$.
    Moreover, for each $i \in [k]$, add the edge $c'_it_i$ to $H$.
    Equivalently,
    \[
        E(H)= \{c'_it_i \mid i \in [k] \} \cup \{c'_ic'_j \mid i,j\in [k] \text{ with } i \neq j\}.
    \]
    This concludes the construction of $H$.

    To see that $H$ is a valid \sw\ distance preserver, first observe that for any $i,j \in [k]$ with $i \neq j$, the shortest path in $G'$ 
    between $t_i$ and $t_j$ is $\pi_{G'}(t_i, t_j) = \langle t_i, c'_i, c'_j, t_j \rangle$, so $d_{G'}(t_i, t_j) = 3$.
    Since all these edges are included in $H$, we have $d_H(t_i, t_j) = 3 = d_{G'}(t_i, t_j)$, so $H$ is a valid \sw\ distance preserver.

    To bound the size of $H$, note that $H$ contains one edge $c'_it_i$ for each $i \in [k]$ and one edge $c'_ic'_j$ for each $i,j \in [k]$ with $i \neq j$, so $|E(H)|=k+\binom{k}{2}=k'$.

    To see the other direction, assume that $G'$ has a \sw\ distance preserver $H$ with $k+\binom{k}{2}$ edges.
    First note that for every $i \neq j$, any path between $t_i$ and $t_j$ in $G'$, must use at least one edge incident to $t_i$, an edge incident to $t_j$, and an edge between $V'_i$ and $V'_j$, so $d_H(t_i,t_j) \ge 3$.
    On the other hand, if $d_H(t_i,t_j) \ge 4$, then $H$ contains no edge between the vertex sets $V'_i$ and $V'_j$.
    By the construction of $G'$, this implies that there is no edge between $V_i$ and $V_j$ in $G$, and therefore $G$ is a \no-instance for \mccshort.
    Since in $G'$ each terminal $t_i$ is adjacent to every vertex in its corresponding set $V'_i$, the existence of an edge between $V'_i$ and $V'_j$ would yield a path of length $3$ between $t_i$ and $t_j$ in $G'$, and, as $H$ is a \sw\ distance preserver, also in $H$.
    Hence, we may restrict attention to the case where $d_H(t_i,t_j)=3$ for all $i,j \in [k]$ with $i \neq j$.

    To achieve distance $3$, $H$ must contain at least one edge between each pair $V'_i$ and $V'_j$, yielding at least $\binom{k}{2}$ edges.
    Since $|E(H)|=k+\binom{k}{2}$, it should hold that each terminal $t_i$ is incident to exactly one edge in $H$.
    Let $c'_i$ be the unique neighbor of $t_i$ in $V'_i$, and let $c_i$ be the corresponding vertex in $V_i$.
    Define $C=\{c_i : i \in [k]\}$.
    Because $H$ preserves distances, for every $i \neq j$ there is an edge $c'_ic'_j$ in $H$, which implies the existence of an edge $c_ic_j$ in $G$.
    Thus $C$ induces a clique with exactly one vertex from each $V_i$, and therefore $G$ contains a multicolored clique with $k$ vertices.
\end{proof}
Since \pwshort\ generalizes \swshort, we obtain the following.
\begin{corollary}
\label{coll:pw-w2-hard}
\pwshort\ is \wh{1} parameterized by the solution size.
\end{corollary}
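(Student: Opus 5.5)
The corollary should follow from Theorem~\ref{thm:sw-w1-hard-soln} by an identity parameterized reduction from \swshort\ to \pwshort, as in Corollary~\ref{coll:pw-w1-hard}. Given an instance $(G,S,k)$ of \swshort, I map it to the \pwshort\ instance $(G,\mathcal{P},k)$ with
\[
\mathcal{P}=\bigl\{\{u,v\} \bigm| u,v\in S,\ u\neq v\bigr\}.
\]
Graph and budget are unchanged, and the output is computed in time $O(|S|^2)$ plus the size of $G$.

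Next I check equivalence. A subgraph $H\subseteq G$ preserves $d_G(u,v)$ for every $\{u,v\}\in\mathcal{P}$ exactly when it preserves all distances between pairs of vertices of $S$. This is just the remark in the preliminaries that \swshort\ is the special case of \pwshort\ in which $\mathcal{P}$ is the clique on $S$. Hence the feasible solutions of the two instances are identical, and one has a solution with at most $k$ edges if and only if the other does.

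Finally I check the parameter. The parameter is the solution size $k$, and it is carried over unchanged, so $k'=k\le g(k)$ with $g$ the identity. This is a valid parameterized reduction, and W[1]-hardness of \swshort\ under this parameter transfers to \pwshort.

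The only point needing care is that the constructed instance stays polynomial. If $|S|$ were large, this could in principle be an issue, but $|\mathcal{P}|\le\binom{|S|}{2}$ is polynomial in the input size, so it is fine. I therefore expect no real obstacle.
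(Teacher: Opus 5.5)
Your proposal is correct and matches the paper's argument. The corollary follows from Theorem~\ref{thm:sw-w1-hard-soln} because \swshort\ is the special case of \pwshort\ with $\mathcal{P}$ the set of all pairs in $S$, which gives a reduction that leaves the parameter unchanged; your checks that the budget is preserved and that $|\mathcal{P}|\le\binom{|S|}{2}$ keeps the instance polynomial only spell out what the paper leaves implicit.
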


\section{Conclusion}\label{sec:conclusion}

In this work, we investigated the parameterized complexity of finding minimum distance preservers, based on the number of terminals and the structural properties of the input graph. In addition to complementing the existing literature on upper and lower bounds for distance preservers, our results draw interesting parallels with known disjoint-path problems.

First, finding $k$ shortest paths between specified terminal pairs to maximize the resulting intersection---the \pwshort problem parameterized by $|\mathcal{P}|$---is mirrored by \textsc{$k$-Disjoint Shortest Paths}, which is also W[1]-hard and XP when parameterized by the number of terminal pairs. The algorithm for the latter problem is considerably more involved~\cite{Lochet21}, having stayed an open problem for a while. On the other hand, our hardness result for \pwshort and even \swshort holds in a very restricted setting, showcasing the non-trivial structure of distance preservers on grids and aligning with the known bottlenecks for distance-preserving minors~\cite{Krauthgamer2014}.

Second, \textsc{NP}-hardness of \pwshort on graphs of vertex cover $3$ can be compared to the known NP-hardness of \textsc{Edge Disjoint Paths} on the same graph class~\cite{Fleszar2018}, where in both cases the arbitrary structure of terminal pairs renders even the very restricted graph structure unhelpful. The reduction for \pwshort turns out to be more challenging however, as the desired paths have to be shortest, but need not be disjoint.

The discussion above leads also naturally to the open questions that we leave for future work. While we concluded that both \pwshort and \swshort are FPT on the grid when parameterized by the number of terminals, and that \pwshort is \nph\ on the grid, we are aware of no such result for \swshort. Hence, the status of \swshort in this case remains one of the open questions concerning the arrangements of shortest paths on the grid.

Another open question arises from the gap between \pwshort and \swshort on graphs with small separators. While our negative result for \pwshort rules out FPT or XP algorithms for other most common decompositional parameters such as treewidth, for \swshort we are only able to provide an FPT algorithm parameterized by vertex cover. The status of \swshort with respect to parameters that dominate vertex cover remains open---including treewidth, but also the parameters between treewidth and vertex cover, for example feedback vertex set or tree-depth. Drawing on the parallel with disjoint-path problems, we observe that such results are not known there either---\swshort would correspond to \textsc{Edge Disjoint Paths} with the additional restriction that the set of terminal pairs contains all pairs from the set of terminals. To the best of our knowledge, this problem has not been investigated with respect to the decompositional structure of the graph.

Finally, while our work focuses on finding the exact minimum distance preservers, it would be curious to explore FPT approximation for \pwshort and \swshort. While polynomial-time approximation for distance preservers hits hard inapproximability barriers~\cite{CDKL20,abdolmaleki2020minimum}, exponential time in the parameter and/or restricted graph structure may be helpful in combination with approximation. Especially if one aims to approximate both the size of the subgraph and the length of the shortest paths, which would correspond to FPT approximation for general spanners.

\bibliographystyle{plainurl}
\bibliography{biblio}

@inproceedings{BBC2003,
author = {Bollob\'{a}s, B\'{e}la and Coppersmith, Don and Elkin, Michael},
title = {Sparse distance preservers and additive spanners},
year = {2003},
isbn = {0898715385},
publisher = {Society for Industrial and Applied Mathematics},
address = {USA},
booktitle = {Proceedings of the Fourteenth Annual ACM-SIAM Symposium on Discrete Algorithms},
pages = {414–423},
numpages = {10},
location = {Baltimore, Maryland},
series = {SODA '03}
}

@article{CD2006,
author = {Coppersmith, Don and Elkin, Michael},
title = {Sparse Sourcewise and Pairwise Distance Preservers},
journal = {SIAM Journal on Discrete Mathematics},
volume = {20},
number = {2},
pages = {463-501},
year = {2006},
doi = {10.1137/050630696}
}

@inproceedings{bodwin2016better,
  title={Better distance preservers and additive spanners},
  author={Bodwin, Greg and Williams, Virginia Vassilevska},
  booktitle={Proceedings of the twenty-seventh annual ACM-SIAM symposium on Discrete algorithms},
  pages={855--872},
  year={2016},
  organization={SIAM}
}

@InProceedings{gajjar2017,
  author    = {Gajjar, Kshitij and Radhakrishnan, Jaikumar},
  title     = {Distance-Preserving Subgraphs of Interval Graphs},
  booktitle = {25th Annual European Symposium on Algorithms (ESA 2017)},
  pages     = {39:1--39:13},
  series    = {Leibniz International Proceedings in Informatics (LIPIcs)},
  year      = {2017},
  volume    = {87},
  publisher = {Schloss Dagstuhl -- Leibniz-Zentrum f{\"u}r Informatik},
  address   = {Dagstuhl, Germany},
  doi       = {10.4230/LIPIcs.ESA.2017.39},
  URL       = {https://drops.dagstuhl.de/entities/document/10.4230/LIPIcs.ESA.2017.39}
}

@article{abdolmaleki2020minimum,
  title={Minimum Weight Pairwise Distance Preservers},
  author={Abdolmaleki, Mojtaba and Yin, Yafeng and Masoud, Neda},
  journal={arXiv preprint arXiv:2007.07554},
  year={2020}
}

@article{bodwin2021new,
  title={New results on linear size distance preservers},
  author={Bodwin, Greg},
  journal={SIAM Journal on Computing},
  volume={50},
  number={2},
  pages={662--673},
  year={2021},
  publisher={SIAM}
}

@inproceedings{kogan2025having,
  title={Having Hope in Missing Spanners: New Distance Preservers and Light Hopsets},
  author={Kogan, Shimon and Parter, Merav},
  booktitle={Proceedings of the 2025 Annual ACM-SIAM Symposium on Discrete Algorithms (SODA)},
  pages={4352--4374},
  year={2025},
  organization={SIAM}
}

@inproceedings{KoganP22,
  author       = {Shimon Kogan and
                  Merav Parter},
  title        = {Having Hope in Hops: New Spanners, Preservers and Lower Bounds for
                  Hopsets},
  booktitle    = {63rd {IEEE} Annual Symposium on Foundations of Computer Science, {FOCS}
                  2022, Denver, CO, USA, October 31 - November 3, 2022},
  pages        = {766--777},
  publisher    = {{IEEE}},
  year         = {2022},
  url          = {https://doi.org/10.1109/FOCS54457.2022.00078},
  doi          = {10.1109/FOCS54457.2022.00078},
  bibsource    = {dblp computer science bibliography, https://dblp.org}
}

@article{Krauthgamer2014,
author = {Krauthgamer, Robert and Nguy\~{\^e}n, Huy L. and Zondiner, Tamar},
title = {Preserving Terminal Distances Using Minors},
journal = {SIAM Journal on Discrete Mathematics},
volume = {28},
number = {1},
pages = {127-141},
year = {2014},
doi = {10.1137/120888843},
}

@article{KOBAYASHI201888,
    title = {NP-hardness and fixed-parameter tractability of the minimum spanner problem},
    journal = {Theoretical Computer Science},
    volume = {746},
    pages = {88-97},
    year = {2018},
    issn = {0304-3975},
    doi = {https://doi.org/10.1016/j.tcs.2018.06.031},
    url = {https://www.sciencedirect.com/science/article/pii/S030439751830447X},
    author = {Yusuke Kobayashi}    
}

@article{AHMED2020100253,
title = {Graph spanners: A tutorial review},
journal = {Computer Science Review},
volume = {37},
pages = {100253},
year = {2020},
issn = {1574-0137},
doi = {https://doi.org/10.1016/j.cosrev.2020.100253},
url = {https://www.sciencedirect.com/science/article/pii/S1574013719302539},
author = {Reyan Ahmed and Greg Bodwin and Faryad Darabi Sahneh and Keaton Hamm and Mohammad Javad {Latifi Jebelli} and Stephen Kobourov and Richard Spence}
}

@article{FominGLMSS22,
  author       = {Fedor V. Fomin and
                  Petr A. Golovach and
                  William Lochet and
                  Pranabendu Misra and
                  Saket Saurabh and
                  Roohani Sharma},
  title        = {Parameterized Complexity of Directed Spanner Problems},
  journal      = {Algorithmica},
  volume       = {84},
  number       = {8},
  pages        = {2292--2308},
  year         = {2022},
  url          = {https://doi.org/10.1007/s00453-021-00911-x},
  doi          = {10.1007/S00453-021-00911-X},
  bibsource    = {dblp computer science bibliography, https://dblp.org}
}

@inproceedings{Lochet21,
  author       = {William Lochet},
  editor       = {D{\'{a}}niel Marx},
  title        = {A Polynomial Time Algorithm for the \emph{k}-Disjoint Shortest Paths
                  Problem},
  booktitle    = {Proceedings of the 2021 {ACM-SIAM} Symposium on Discrete Algorithms,
                  {SODA} 2021, Virtual Conference, January 10 - 13, 2021},
  pages        = {169--178},
  publisher    = {{SIAM}},
  year         = {2021},
  url          = {https://doi.org/10.1137/1.9781611976465.12},
  doi          = {10.1137/1.9781611976465.12},
  bibsource    = {dblp computer science bibliography, https://dblp.org}
}

@article{BentertFGKLPRSS26,
  author       = {Matthias Bentert and
                  Fedor V. Fomin and
                  Petr A. Golovach and
                  Tuukka Korhonen and
                  William Lochet and
                  Fahad Panolan and
                  M. S. Ramanujan and
                  Saket Saurabh and
                  Kirill Simonov},
  title        = {Packing Short Cycles},
  journal      = {{ACM} Trans. Algorithms},
  volume       = {22},
  number       = {1},
  pages        = {8:1--8:35},
  year         = {2026},
  url          = {https://doi.org/10.1145/3765285},
  doi          = {10.1145/3765285},
  bibsource    = {dblp computer science bibliography, https://dblp.org}
}

@article{RSA,
author = {Shi, Weiping and Su, Chen},
title = {The Rectilinear Steiner Arborescence Problem Is NP-Complete},
journal = {SIAM Journal on Computing},
volume = {35},
number = {3},
pages = {729-740},
year = {2005},
doi = {10.1137/S0097539704371353},
URL = {https://doi.org/10.1137/S0097539704371353},
}

@article{3waycut,
author = {Dahlhaus, E. and Johnson, D. S. and Papadimitriou, C. H. and Seymour, P. D. and Yannakakis, M.},
title = {The Complexity of Multiterminal Cuts},
journal = {SIAM Journal on Computing},
volume = {23},
number = {4},
pages = {864-894},
year = {1994},
doi = {10.1137/S0097539792225297},
URL = { https://doi.org/10.1137/S0097539792225297 },
eprint = { https://doi.org/10.1137/S0097539792225297},
}

@InProceedings{ChenKX06,
    author="Chen, Jianer
    and Kanj, Iyad A.
    and Xia, Ge",
    title="Improved Parameterized Upper Bounds for Vertex Cover",
    booktitle="Mathematical Foundations of Computer Science 2006",
    year="2006",
    publisher="Springer Berlin Heidelberg",
    pages="238--249",
}

@article{bodlaender1993linear,
  author       = {Hans L. Bodlaender},
  title        = {A Linear-Time Algorithm for Finding Tree-Decompositions of Small Treewidth},
  journal      = {{SIAM} J. Comput.},
  volume       = {25},
  pages        = {1305--1317},
  year         = {1996},
  doi          = {10.1137/S0097539793251219},
}

@book{kloks1994treewidth,
  author       = {Ton Kloks},
  title        = {Treewidth, Computations and Approximations},
  series       = {Lecture Notes in Computer Science},
  volume       = {842},
  publisher    = {Springer},
  year         = {1994},
  doi          = {10.1007/BFB0045375},
}

@article{bodlaender2016c,
	title={A $c^{k}n$ 5-approximation algorithm for treewidth},
	author={Bodlaender, Hans L and Drange, P{\aa}l Gr{\o}n{\aa}s and Dregi, Markus S and Fomin, Fedor V and Lokshtanov, Daniel and Pilipczuk, Micha{\l}},
	journal={SIAM Journal on Computing},
	volume={45},
	pages={317--378},
	year={2016},
	publisher={SIAM}
}

@Book{CyganFKLMPPS15,
  author	= {Marek Cygan and Fedor V. Fomin and Lukasz Kowalik and
		  Daniel Lokshtanov and D{\'{a}}niel Marx and Marcin
		  Pilipczuk and Michal Pilipczuk and Saket Saurabh},
  doi		= {10.1007/978-3-319-21275-3},
  publisher	= {Springer},
  title		= {Parameterized Algorithms},
  year		= {2015},
}

@article{ChekuriC16,
    author = {Chekuri, Chandra and Chuzhoy, Julia},
    title = {Polynomial Bounds for the Grid-Minor Theorem},
    year = {2016},
    issue_date = {December 2016},
    publisher = {Association for Computing Machinery},
    address = {New York, NY, USA},
    volume = {63},
    number = {5},
    issn = {0004-5411},
    url = {https://doi.org/10.1145/2820609},
    doi = {10.1145/2820609},
    journal = {J. ACM},
    month = dec,
    articleno = {40},
    numpages = {65}
}

@article{CDKL20,
    author = {Chlamt\'{a}\v{c}, Eden and Dinitz, Michael and Kortsarz, Guy and Laekhanukit, Bundit},
    title = {Approximating Spanners and Directed Steiner Forest: Upper and Lower Bounds},
    year = {2020},
    issue_date = {July 2020},
    publisher = {Association for Computing Machinery},
    address = {New York, NY, USA},
    volume = {16},
    number = {3},
    issn = {1549-6325},
    url = {https://doi.org/10.1145/3381451},
    doi = {10.1145/3381451},
    journal = {ACM Trans. Algorithms},
    month = jun,
    articleno = {33},
    numpages = {31}
}

@article{PelegS89,
    author = {Peleg, David and Schäffer, Alejandro A.},
    title = {Graph spanners},
    journal = {Journal of Graph Theory},
    volume = {13},
    number = {1},
    pages = {99-116},
    doi = {https://doi.org/10.1002/jgt.3190130114},
    url = {https://onlinelibrary.wiley.com/doi/abs/10.1002/jgt.3190130114},
    eprint = {https://onlinelibrary.wiley.com/doi/pdf/10.1002/jgt.3190130114},
    year = {1989}
}

@article{PaulChew89,
    title = {There are planar graphs almost as good as the complete graph},
    journal = {Journal of Computer and System Sciences},
    volume = {39},
    number = {2},
    pages = {205-219},
    year = {1989},
    issn = {0022-0000},
    doi = {https://doi.org/10.1016/0022-0000(89)90044-5},
    url = {https://www.sciencedirect.com/science/article/pii/0022000089900445},
    author = {L. {Paul Chew}},
}

@Article{Dobkin1990,
    author={Dobkin, David P.
    and Friedman, Steven J.
    and Supowit, Kenneth J.},
    title={Delaunay graphs are almost as good as complete graphs},
    journal={Discrete {\&} Computational Geometry},
    year={1990},
    month={Aug},
    day={01},
    volume={5},
    number={4},
    pages={399-407},
    issn={1432-0444},
    doi={10.1007/BF02187801},
    url={https://doi.org/10.1007/BF02187801}
}

@article{Pet09,
    author = {Pettie, Seth},
    title = {Low distortion spanners},
    year = {2010},
    issue_date = {December 2009},
    publisher = {Association for Computing Machinery},
    address = {New York, NY, USA},
    volume = {6},
    number = {1},
    issn = {1549-6325},
    url = {https://doi.org/10.1145/1644015.1644022},
    doi = {10.1145/1644015.1644022},
    journal = {ACM Trans. Algorithms},
    month = dec,
    articleno = {7},
    numpages = {22}
}

@article{ElkinFN17,
  author       = {Michael Elkin and
                  Arnold Filtser and
                  Ofer Neiman},
  title        = {Terminal embeddings},
  journal      = {Theor. Comput. Sci.},
  volume       = {697},
  pages        = {1--36},
  year         = {2017},
  url          = {https://doi.org/10.1016/j.tcs.2017.06.021},
  doi          = {10.1016/J.TCS.2017.06.021},
  bibsource    = {dblp computer science bibliography, https://dblp.org}
}

@article{ElkinS16,
    author = {Elkin, Michael and Pettie, Seth},
    title = {A Linear-Size Logarithmic Stretch Path-Reporting Distance Oracle for General Graphs},
    year = {2016},
    issue_date = {September 2016},
    publisher = {Association for Computing Machinery},
    address = {New York, NY, USA},
    volume = {12},
    number = {4},
    issn = {1549-6325},
    url = {https://doi.org/10.1145/2888397},
    doi = {10.1145/2888397},
    journal = {ACM Trans. Algorithms},
    month = aug,
    articleno = {50},
    numpages = {31}
}

@inproceedings{ElkinS23,
  author       = {Michael Elkin and
                  Idan Shabat},
  title        = {Path-Reporting Distance Oracles with Logarithmic Stretch and Size
                  O(n log log n)},
  booktitle    = {64th {IEEE} Annual Symposium on Foundations of Computer Science, {FOCS}
                  2023, Santa Cruz, CA, USA, November 6-9, 2023},
  pages        = {2278--2311},
  publisher    = {{IEEE}},
  year         = {2023},
  url          = {https://doi.org/10.1109/FOCS57990.2023.00141},
  doi          = {10.1109/FOCS57990.2023.00141},
  bibsource    = {dblp computer science bibliography, https://dblp.org}
}

@article{AwerbuchBCP98,
    author = {Awerbuch, Baruch and Berger, Bonnie and Cowen, Lenore and Peleg, David},
    title = {Near-Linear Time Construction of Sparse Neighborhood Covers},
    journal = {SIAM Journal on Computing},
    volume = {28},
    number = {1},
    pages = {263-277},
    year = {1998},
    doi = {10.1137/S0097539794271898},
    URL = {https://doi.org/10.1137/S0097539794271898},
    eprint = {https://doi.org/10.1137/S0097539794271898},
}

@article{Elkin05,
    author = {Elkin, Michael},
    title = {Computing almost shortest paths},
    year = {2005},
    issue_date = {October 2005},
    publisher = {Association for Computing Machinery},
    address = {New York, NY, USA},
    volume = {1},
    number = {2},
    issn = {1549-6325},
    url = {https://doi.org/10.1145/1103963.1103968},
    doi = {10.1145/1103963.1103968},
    journal = {ACM Trans. Algorithms},
    month = oct,
    pages = {283–323},
    numpages = {41}
}

@article{Cohen00,
author = {Cohen, Edith},
title = {Polylog-time and near-linear work approximation scheme for undirected shortest paths},
year = {2000},
issue_date = {Jan. 2000},
publisher = {Association for Computing Machinery},
address = {New York, NY, USA},
volume = {47},
number = {1},
issn = {0004-5411},
url = {https://doi.org/10.1145/331605.331610},
doi = {10.1145/331605.331610},
journal = {J. ACM},
month = jan,
pages = {132–166},
numpages = {35}
}

@article{Cai94,
title = {NP-completeness of minimum spanner problems},
journal = {Discrete Applied Mathematics},
volume = {48},
number = {2},
pages = {187-194},
year = {1994},
issn = {0166-218X},
doi = {https://doi.org/10.1016/0166-218X(94)90073-6},
url = {https://www.sciencedirect.com/science/article/pii/0166218X94900736},
author = {Leizhen Cai}
}

@inproceedings{BrandesH97,
  author       = {Ulrik Brandes and
                  Dagmar Handke},
  editor       = {Rolf H. M{\"{o}}hring},
  title        = {NP-Completness Results for Minimum Planar Spanners},
  booktitle    = {Graph-Theoretic Concepts in Computer Science, 23rd International Workshop,
                  {WG} '97, Berlin, Germany, June 18-20, 1997, Proceedings},
  series       = {Lecture Notes in Computer Science},
  volume       = {1335},
  pages        = {85--99},
  publisher    = {Springer},
  year         = {1997},
  url          = {https://doi.org/10.1007/BFb0024490},
  doi          = {10.1007/BFB0024490},
  bibsource    = {dblp computer science bibliography, https://dblp.org}
}

@article{VenkatesanRMMP97,
title = {Restrictions of Minimum Spanner Problems},
journal = {Information and Computation},
volume = {136},
number = {2},
pages = {143-164},
year = {1997},
issn = {0890-5401},
doi = {https://doi.org/10.1006/inco.1997.2641},
url = {https://www.sciencedirect.com/science/article/pii/S0890540197926419},
author = {G. Venkatesan and U. Rotics and M.S. Madanlal and J.A. Makowsky and C.Pandu Rangan}
}

@Article{Fleszar2018,
author={Fleszar, Krzysztof
and Mnich, Matthias
and Spoerhase, Joachim},
title={New algorithms for maximum disjoint paths based on tree-likeness},
journal={Mathematical Programming},
year={2018},
month={Sep},
day={01},
volume={171},
number={1},
pages={433-461},
issn={1436-4646},
doi={10.1007/s10107-017-1199-3},
url={https://doi.org/10.1007/s10107-017-1199-3}
}

@article{Hanan66,
author = {Hanan, M.},
title = {On Steiner’s Problem with Rectilinear Distance},
journal = {SIAM Journal on Applied Mathematics},
volume = {14},
number = {2},
pages = {255-265},
year = {1966},
doi = {10.1137/0114025},
URL = {https://doi.org/10.1137/0114025},
eprint = {https://doi.org/10.1137/0114025}
}
\end{document}